\theoremstyle{plain}
\useunder{\uline}{\ul}{}
\titlespacing\subsection{0pt}{5pt plus 2pt minus 2pt}{3pt plus 2pt minus 2pt}
\def\x{\bm{x}}
\def\w{\bm{w}}
\long\def\comment#1{}
\def\ie{$i.e.$}
\def\eg{$e.g.$}
\def\wrt{$w.r.t.$}
\def\etal{$et~al.$}
\def\blue#1{\textcolor{black}{#1}}
\def\bluetable{\color{black}}
\def\bluetwo#1{\textcolor{black}{#1}}
\begin{document}

\newtheorem{theorem}{\textbf{Theorem}}
\newtheorem{definition}{Definition} 
\newtheorem{lemma}{\textbf{Lemma}} 
\newtheorem{corollary}{\textbf{Corollary}}
\newtheorem{example}{\textbf{Example}}
\newtheorem{proposition}{\textbf{Proposition}}
\newtheorem{remark}{\textbf{Remark}}

\title{Seeking Flat Minima over Diverse Surrogates for Improved Adversarial Transferability: A Theoretical Framework and Algorithmic Instantiation}

\author{
Meixi Zheng, Kehan Wu, Yanbo Fan, Rui Huang, \IEEEmembership{Member,~IEEE}, Baoyuan Wu\textsuperscript{\dag}, \IEEEmembership{Senior Member,~IEEE}
\\

\IEEEcompsocitemizethanks{
\IEEEcompsocthanksitem 
Meixi Zheng, Kehan Wu and Baoyuan Wu are with School of Artificial Intelligence, The Chinese University of Hong Kong, Shenzhen, Guangdong, 518172, P.R. China.
Rui Huang is with School of Science and Engineering, The Chinese University of Hong Kong, Shenzhen, Guangdong, 518172, P.R. China.
Yanbo Fan is with Nanjing University (Suzhou Campus), China.
\IEEEcompsocthanksitem 
Email: meixizheng1@link.cuhk.edu.cn, kehanwu1@link.cuhk.edu.cn, fanyanbo0124@gmail.com, ruihuang@cuhk.edu.cn, wubaoyuan@cuhk.edu.cn.
\IEEEcompsocthanksitem 
\textsuperscript{\dag}Corresponding author: Baoyuan Wu (wubaoyuan@cuhk.edu.cn).
}
}

\markboth{Journal of \LaTeX\ Class Files,~Vol.~xx, No.~xx, August~xxxx}%
{Shell \MakeLowercase{\textit{et al.}}: A Sample Article Using IEEEtran.cls for IEEE Journals}


\maketitle

\begin{abstract}
The transfer-based black-box adversarial attack setting poses the challenge of crafting an adversarial example (AE) on known surrogate models that \bluetwo{remains} effective against unseen target models. Due to the practical importance of this task, numerous methods have been proposed to address this challenge. However, most previous methods are heuristically designed and intuitively justified, lacking a theoretical foundation.
To bridge this gap, we derive a novel transferability bound that offers provable guarantees for adversarial transferability. Our theoretical analysis has the advantages of \textit{(i)} deepening our understanding of previous methods by building a general attack framework and \textit{(ii)} providing guidance for designing an effective attack algorithm.
Our theoretical results demonstrate that optimizing AEs toward flat minima over the surrogate model set, while controlling the surrogate-target model shift measured by the adversarial model discrepancy, yields a comprehensive guarantee for AE transferability.
The results further lead to a general transfer-based attack framework, within which we observe that previous methods consider only partial factors contributing to the transferability. 
Algorithmically, inspired by our theoretical results, we first elaborately construct the surrogate model set in which models exhibit diverse adversarial vulnerabilities with respect to AEs to narrow the instantiated adversarial model discrepancy. Then, a \textit{model-Diversity-compatible Reverse Adversarial Perturbation} (DRAP) is generated to effectively promote the flatness of AEs over diverse surrogate models to improve transferability. 
Extensive experiments on NIPS2017 and CIFAR-10 datasets against various target models demonstrate the effectiveness of our proposed attack.
The code is publicly available\footnote{Code: \url{https://github.com/SCLBD/blackboxbench}}.
\end{abstract}

\begin{IEEEkeywords}
Black-box adversarial attack, adversarial transferability, flatness, model discrepancy.
\end{IEEEkeywords}

\section{Introduction}

\IEEEPARstart{D}{eep} neural networks (DNNs) are vulnerable to adversarial examples (AEs), where attackers add imperceptible perturbations to benign examples but make a model produce erroneous predictions \cite{cgattack, mcg, cisa, prgf, attacksurvey, diffattack}. Under the black-box setting, attackers have no information regarding possible future target models, and the adversarial transferability matters since it allows attackers to attack target models by alternatively generating AEs from the surrogate models. However, as the attacker \bluetwo{cannot} access information of the target model, a potentially unmatched surrogate model may lead to rather limited attack capability of the transferred AE against the target model.

Previous works \cite{nisi} attributed the unsatisfactory transferability to the overfitting of AEs to the surrogate models. In turn, it is essential that such AEs be optimized using methods that ensure that crafted perturbations do in fact transfer beyond the surrogate models. A plethora of transfer-based black-box adversarial attack methods have been proposed \cite{difgsm, nisi, mifgsm, vt, cwa, tsea, lpm}. There \bluetwo{has been substantial} advances in improving AEs' transferability from optimization, feature, input-transformation, and model perspectives. Despite the progress made, the adversarial transferability suffers from a lack of a general theoretical understanding. As a result, the literature relies heavily on empirical heuristics, without theoretical guarantees. \textit{Can we build the theoretical foundation to deepen our understanding of transfer-based attacks?}

To tackle this problem, in this paper we present a novel theoretical analysis of transfer-based attacks towards generalizing previous works and explicitly guiding algorithm design by deriving a transferability bound. 
We start by formalizing the task of interest as crafting an AE that successfully attacks on the target model distribution. This idea consists of minimizing a \textit{target adversarial risk}, which corresponds to the expected error of an AE over the target model distribution. We then decompose it into a \textit{surrogate adversarial risk} and a \textit{transferability gap}. 
The surrogate adversarial risk measures the expected error over the surrogate model distribution and could be upper bound estimated by its empirical version and the loss landscape sharpness at the AE (cf. Theorem \ref{theorem_pac}). 
The transferability gap accounts for the discrepancy between surrogate and target model distributions and could be upper bounded in terms of a novel discrepancy, the adversarial model discrepancy, which is based on a variational representation that lower bounds $\phi$-divergences \cite{birrell} and is tailored to capture ``adversarially significant'' distribution differences (cf. Theorem \ref{theorem1}). 
Combining the two bounds, we derive a transferability bound on target adversarial risk which provides a theoretical guarantee on the adversarial transferability (cf. Theorem \ref{theorem_main}), which demonstrates that the transferability can be expected if one seeks a flat minimum of empirical surrogate adversarial risk, where the surrogate distribution is built to control the adversarial model discrepancy.
This bound further implies that the adversarial transferability of AEs has a positive correlation with three key factors simultaneously: (1) the white-box attack performance of the AE, (2) the regularization for surrogate models, and (3) the $\phi$-divergences between the surrogate and target model distributions, resulting in an attack framework generalizing previously popular attacks as special cases (cf. Equation \ref{equ_framework}).
By comparing our bound with these methods through the lens of this framework, we find they typically control only one or two key factors of this framework, neither of which is desirable nor sufficient to achieve satisfactory transferability. However, our bound considers the surrogate adversarial risk and transferability gap jointly and properly, providing a more comprehensive guarantee on the transferability of AEs.

From an algorithmic perspective, \blue{we design a novel transfer-based adversarial attack inspired by our theoretical results. In particular, by instantiating the derived bound with total variation (TV), Kullback-Leibler (KL) and $\chi^2$ divergences (cf. Corollaries \ref{lemma_tv},\ref{corollary:KL},\ref{corollary_chi2}),} we first propose to diversify the adversarial vulnerabilities in surrogate models by accounting for both between-distribution diversity and within-distribution diversity, thus controlling the surrogate-target shift. We then propose to inject a model-\textbf{D}iversity-compatible \textbf{R}everse \textbf{A}dversarial \textbf{P}erturbation (DRAP) into the attack procedure to effectively optimize the loss landscape flatness at the AE over a set of diverse surrogate models. 
We conduct extensive experiments to evaluate DRAP on NIPS2017 and CIFAR-10 datasets, covering untargeted and targeted attacks against both standard and adversarially trained models and show that (1) compared with \blue{23} state-of-the-art baseline attacks, DRAP achieves significant improvements in attack success rates; (2) DRAP is scalable to be combined with previous methods to further boost transferability; (3) \bluetwo{b}oth optimization signals \bluetwo{in DRAP}, seeking flat minima and improving diversity, contribute to the transferability, corroborating our theoretical findings.

\textbf{Contributions} This work is an extension of our previous conference paper \cite{rap}, compared to which the most significant updates and contributions are three-fold:
\begin{itemize}[leftmargin=15pt,itemsep=1pt,topsep=0pt]
    \item Theoretically, we prove that the difference between target adversarial risk and empirical surrogate adversarial risk is upper bounded by a sharpness penalty and a model discrepancy penalty. This result provides a theoretical foundation for the assumed relationship between flatness and transferability in RAP and further points out considering loss landscape flatness and model diversity in adversarial vulnerability simultaneously is exactly when this paper will bring the original RAP from the lab to the real world. 
    \item Algorithmically, we propose a theory-guided attack strategy DRAP as a correction of RAP. It generates reverse adversarial perturbations tailored to each of the diverse surrogate models, which are selected based on two dimensions of diversity, to effectively find flat local minima among them.
    \item Empirically, we demonstrate the soundness of our attack by conducting comprehensive experiments on NIPS2017 and CIFAR-10 datasets against various target models. We also perform ablative studies to further understand the contribution of the two optimization signals and to verify the relevance of our theoretical findings.
\end{itemize}

\section{Related Work}\label{sec:Related work}
Transfer-based attacks are motivated by the observation that AEs generated to deceive the surrogate model can also deceive the target model, even when their architectures differ significantly, as long as both models are solving the same task \cite{papernot2016transferability}. One of the seminal \bluetwo{works}, Iterative Fast Gradient Sign Method (I-FGSM)\cite{ifgsm}, generates adversarial examples by iteratively performing the fast gradient step, establishing a solid foundation for this area of research. However, it has been shown that I-FGSM often converges to poor local minima, resulting in low transferability \cite{difgsm}.

\textbf{Optimization-based attacks} To improve transferability, better optimization algorithms are proposed to escape from poor local minima and yield AEs with better transferability, such as MI-FGSM\cite{mifgsm}, NI-FGSM\cite{nisi} and PI-FGSM\cite{pifgsm}.
Recently, the connection between loss landscape flatness and transferability has been extensively studied empirically \cite{rap, pgn, cwa, tpa, mef, fem}. Unfortunately, few works build a clear theoretical relationship between them. Our previous work, RAP \cite{rap}, is the seminal work pursuing flatness of loss landscape for AEs. This idea is further formulated as a min-max bi-level optimization problem. PGN \cite{pgn} also intuitively assumes that AEs at flat local regions tend to have better transferability and penalizes the gradient norm. CWA\cite{cwa} derives an optimization objective involving minimization of Hessian matrix’s F-norm, thus \bluetwo{pursuing} flatness to boost transferability through a SAM-like strategy \cite{sam}.

\textbf{Feature-based attacks}
Methods from this perspective distort intermediate layer features by designing a new loss function. ILA \cite{ila} aims to use the suboptimal perturbation found by a basic attack as a proxy, deviating from it to increase the perturbation norm. Since increasing the norm in the image space is perceptible, ILA opts to increase the norm in the feature space instead. FIA \cite{fia} generates AEs by distorting object-related features, where the feature importance is defined by gradient\bluetwo{s}. Beyond FIA, NAA \cite{naa} provides more accurate measures of neuron importance.

\textbf{Input-transformation-based attacks} 
Relatedly, a wide range of methods aim to simulate diverse models by applying input transformations on benign images, thus mitigating overfitting to surrogate models. For instance, DI2-FGSM \cite{difgsm} applies random resizing and padding with a certain probability. SI-FGSM \cite{nisi} enhances transferability by scaling. Admix \cite{admix} incorporates information from images in other classes by combining two images in a master-slave manner. TI-FGSM \cite{ti} utilizes translational shifts on the input image. SSA \cite{ssa} generates diverse spectrum saliency maps to augment models, while SIA \cite{sia} applies local transformations across different regions of input to generate more diverse transformed images.

\textbf{Model-based attacks} Meanwhile, several methods have been proposed to enhance transferability from the model-centric perspective. One primary category focuses on model tuning. For instance, DRA \cite{dra} trains a score network to estimate ground-truth data score and \bluetwo{uses} the estimated score to update AE through Langevin dynamics. GhostNet \cite{ghostnet} dynamically generates a vast number of ghost networks by applying erosion to specific intermediate structures of the base network. Bayesian attack \cite{bayesian} models the Bayesian posterior of the surrogate model, enabling an ensemble of infinitely many models. Another category emphasizes fusion strategies, which aim to reconcile gradients from multiple surrogate models to better capture intrinsic transfer information \cite{cwa, adaea, smer}. Notable methods include Logit-ensemble \cite{mifgsm}, which attacks multiple models simultaneously by fusing their logit outputs and SVRE \cite{svre}, which reduces gradient variance within ensemble.

\section{Preliminaries}
\subsection{Transfer-Based Adversarial Attack} \label{sec:Notation and Attack Setup}
We first introduce the preliminaries about adversarial examples and specify a threat model under the naive transfer-based black-box setting. Let $\mathcal{X} \subseteq \mathbb{R}^d$ and $\mathcal{Y} \subseteq \mathbb{R}^k$ be the original feature space and the label space. Let $\mathcal{F}: \mathcal{X} \rightarrow \mathcal{Y}$ be the set of possible image classifiers for a given task, where each $f(\x,\w) \in \mathcal{F}$ is a classifier mapping $\mathcal{X}$ to $\mathcal{Y}$, parameterized by $\w \in \mathcal{W}$.

Consider the general setting in a black-box adversarial attack against the target model $\mathcal{M}_\mathcal{T}\in \mathcal{F}$. For a benign image $\x \in \mathcal{X}$ and its ground truth label $y\in \mathcal{Y}$, the objective of the adversary is to find a perturbation $\bm{\xi} \in \mathbb{R}^d$, leading to an adversarial example, \ie,  $\hat{\bm{x}}=\x+\bm{\xi}$, such that $\mathcal{M}_\mathcal{T}(\hat{\bm{x}})\neq y$ (untargeted attacks) or $\mathcal{M}_\mathcal{T}(\hat{\bm{x}})= y_t$ (targeted attacks) with $y_t\in \mathcal{Y}\backslash\{y\}$. Besides, due to the \bluetwo{requirement of imperceptibility}, $\hat{\bm{x}}$ should be constructed within the neighborhood of an input image $\x$, \ie, $\hat{\mathcal{X}}_{\x, \gamma} =\left\{\hat{\bm{x}}:\left\|\hat{\x}-\x\right\|_{\infty} \leq \gamma \right\}$, dubbed \blue{\textit{$L_{\infty}$-norm ball centred at $\x$ of radius $\gamma$}}. For clarity, hereafter we denote it as $\hat{\mathcal{X}}$. $\gamma \geq 0$ is a pre-defined perturbation norm budget, and $\|\cdot\|_{\infty}$ denotes the $L_{\infty}$-norm. Among all adversarial attack strategies, transfer-based attacks stem from the observation that adversarial samples crafted to deceive a white-box surrogate model set $\mathcal{M}_\mathcal{S}\subset \mathcal{F}$ have the capability to deceive a black-box target model $\mathcal{M}_\mathcal{T}$, provided that they are engaged in solving identical tasks. Generally, naive transfer-based attacks choose a single or a subset of arbitrary DNNs as surrogate models. \bluetwo{Letting} $\ell$ be the adversarial loss function, one can seek the AE by solving the constrained optimization problem on $\mathcal{M}_\mathcal{S}$:
\begin{equation}\label{base_attack}
    \underset{\hat{\bm{x}}}{\arg \min }\frac{1}{|\mathcal{M}_\mathcal{S}|} \sum_{f_i(\cdot,\w_i) \in \mathcal{M}_\mathcal{S}} \hspace{-1.2em} \ell\left(f_i(\hat{\bm{x}},\w_i), y\right), \text { s.t. }\left\|\hat{\bm{x}}-\x\right\|_{\infty} \leq \gamma.
\end{equation}
The above $\ell(\cdot, \cdot)$ is often instantiated as the negative cross-entropy function for untargeted attacks, while the cross-entropy function \wrt~the target label $y_t$ for targeted attacks.

\subsection{PAC-Bayes Bound} \label{sec:PAC-Bayes Bound}
We then introduce the PAC model. We assume a distribution $\mathcal{D}$ from which the training instances $\x_1, \x_2,\ldots, \x_n$ are independently sampled to form a set $\mathcal{M}$, a prior distribution $\mathcal{P}$ on an arbitrary concept $c \in C$ which is independent of the training set $\mathcal{M}$, and a posterior distribution $\mathcal{Q}$ on $c$ which depends on $\mathcal{M}$. Given any instance $\x$ and concept $c$, the loss function of $\x$ on $c$ is given by $\ell(\x,c)\in [0,1]$. We define risk $\ell(c)$ to be the expectation over sampling $\x$ of $\ell(\x,c)$, \ie, $\mathbb{E}_{\x \sim \mathcal{D}}[\ell(\x,c)]$, and empirical risk $\hat{\ell}(c)$ to be $\frac{1}{n}\sum_{i=1}^n \ell(\x_i,c)$.

\begin{theorem}\label{theorem_ori_pac}
    \textbf{(PAC-Bayes \cite{pac,intriguing})} For any prior distribution $\mathcal{P}$ on the concept $c$, $0<\delta<1$, with probability $1-\delta$ over the draw of training set $\mathcal{M}$ with size $n \in \mathbb{N}$, for any \bluetwo{distribution} $\mathcal{Q}$ on $c$, the following bound holds:
\begin{equation}\label{equ_ori_pac}
    \begin{aligned}
    \mathbb{E}_{\mathcal{Q}}[\ell(c)] \leq \mathbb{E}_{\mathcal{Q}}[\hat{\ell}(c)] +\sqrt{\frac{KL(\mathcal{Q}||\mathcal{P})+\log\frac{n}{\delta}}{2(n-1)}}.
\end{aligned}
\end{equation}
\end{theorem}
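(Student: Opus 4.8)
The plan is to reproduce McAllester's classical argument, whose engine is the \emph{change-of-measure} (Donsker--Varadhan) inequality coupled with a single-hypothesis exponential-moment bound, the two being glued together by Markov's inequality. The structural observation that makes a statement holding \emph{simultaneously for all posteriors} $\mathcal{Q}$ possible is that the prior $\mathcal{P}$ is fixed before $\mathcal{M}$ is drawn; consequently the ``bad event'' over the draw of $\mathcal{M}$ can be controlled using $\mathcal{P}$ alone, after which the resulting bound transfers to every $\mathcal{Q}$ at no extra cost. Writing $\Delta(c) = \ell(c) - \hat{\ell}(c)$, the whole proof is about upper bounding $\mathbb{E}_{\mathcal{Q}}[\Delta(c)]$.

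First I would invoke the change-of-measure inequality: for any measurable $\phi$,
\[
\mathbb{E}_{c\sim\mathcal{Q}}[\phi(c)] \le KL(\mathcal{Q}\|\mathcal{P}) + \log \mathbb{E}_{c\sim\mathcal{P}}\big[e^{\phi(c)}\big],
\]
which is merely the nonnegativity of $KL\big(\mathcal{Q}\,\|\,\mathcal{P}^{\phi}\big)$ for the $\phi$-tilted measure $\mathcal{P}^{\phi}$. Choosing $\phi(c) = 2(n-1)\Delta(c)^2$ and then applying Jensen's inequality $\mathbb{E}_{\mathcal{Q}}[\Delta]^2 \le \mathbb{E}_{\mathcal{Q}}[\Delta^2]$ yields
\[
2(n-1)\big(\mathbb{E}_{\mathcal{Q}}[\ell(c)] - \mathbb{E}_{\mathcal{Q}}[\hat{\ell}(c)]\big)^2 \le KL(\mathcal{Q}\|\mathcal{P}) + \log \mathbb{E}_{c\sim\mathcal{P}}\big[e^{2(n-1)\Delta(c)^2}\big].
\]

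The main obstacle is controlling the exponential moment $\mathbb{E}_{c\sim\mathcal{P}}[e^{2(n-1)\Delta(c)^2}]$, and here the order of operations is essential. Because $\mathcal{P}$ is independent of $\mathcal{M}$, Fubini permits swapping expectations, $\mathbb{E}_{\mathcal{M}}\mathbb{E}_{\mathcal{P}}[\,\cdot\,] = \mathbb{E}_{\mathcal{P}}\mathbb{E}_{\mathcal{M}}[\,\cdot\,]$. For each fixed $c$, $\hat{\ell}(c)$ is an average of $n$ i.i.d.\ $[0,1]$-valued losses with mean $\ell(c)$, and the crux is the concentration lemma
\[
\mathbb{E}_{\mathcal{M}}\big[e^{2(n-1)(\ell(c)-\hat{\ell}(c))^2}\big] \le n,
\]
which I would obtain from Hoeffding's inequality by integrating the resulting sub-Gaussian tail of $\hat{\ell}(c)$ (the constants $2(n-1)$ and $n$ being exactly what McAllester's computation produces); the same estimate also follows, up to constants, from Maurer's binary-KL moment bound $\mathbb{E}[e^{n\,\mathrm{kl}(\hat{\ell}\|\ell)}]\le 2\sqrt{n}$ together with Pinsker's inequality $\mathrm{kl}(q\|p)\ge 2(q-p)^2$. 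Taking $\mathbb{E}_{\mathcal{P}}$ on both sides and using the swap gives $\mathbb{E}_{\mathcal{M}}\mathbb{E}_{\mathcal{P}}[e^{2(n-1)\Delta^2}] \le n$.

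Finally, Markov's inequality applied to the nonnegative random variable $\mathbb{E}_{\mathcal{P}}[e^{2(n-1)\Delta(c)^2}]$ guarantees that, with probability at least $1-\delta$ over the draw of $\mathcal{M}$, this quantity is at most $n/\delta$; since this event depends only on $\mathcal{P}$ and $\mathcal{M}$, it holds for every posterior $\mathcal{Q}$ at once. Substituting $\log(n/\delta)$ into the displayed inequality, rearranging, and taking the square root produces precisely the claimed bound. I expect the only genuinely delicate point to be the concentration lemma together with the independence/Fubini argument that lets the single-hypothesis estimate be lifted to a uniform-over-$\mathcal{Q}$ guarantee; the change-of-measure and Jensen steps are routine.
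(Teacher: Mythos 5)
The paper never proves this statement: Theorem~1 is background material, imported by citation (McAllester's bound, restated later in the appendix as the PAC-Bayes lemma underlying the surrogate-risk bound), so there is no in-paper proof to compare against. Judged on its own, your proposal reproduces the classical argument correctly in structure: change of measure (Donsker--Varadhan) with $\phi(c)=2(n-1)\Delta(c)^2$, Jensen to pass from $\left(\mathbb{E}_{\mathcal{Q}}[\Delta]\right)^2$ to $\mathbb{E}_{\mathcal{Q}}[\Delta^2]$, a per-concept exponential-moment bound lifted through the prior by Fubini (legitimate precisely because $\mathcal{P}$ is fixed independently of $\mathcal{M}$), and Markov to obtain a high-probability event that is uniform over all posteriors $\mathcal{Q}$. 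All of these glue steps are sound, and the boundedness $\ell(\x,c)\in[0,1]$ assumed in the paper's setup is what licenses the concentration tools.

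The one genuine soft spot is the concentration lemma itself. Integrating the two-sided Hoeffding tail $P(|\Delta|\ge t)\le 2e^{-2nt^2}$ against $e^{\lambda \Delta^2}$ does \emph{not} give $\mathbb{E}_{\mathcal{M}}\bigl[e^{2(n-1)\Delta^2}\bigr]\le n$: the computation yields $1+\tfrac{2\lambda}{2n-\lambda}$, which at $\lambda=2(n-1)$ equals $2n-1$, so your primary route only proves the theorem with $\log\tfrac{2n-1}{\delta}$ in place of $\log\tfrac{n}{\delta}$. To get the stated constant you must rely on what you present as the fallback: Maurer's moment bound $\mathbb{E}\bigl[e^{n\,\mathrm{kl}(\hat{\ell}\|\ell)}\bigr]\le 2\sqrt{n}$ combined with Pinsker, giving $\mathbb{E}\bigl[e^{2(n-1)\Delta^2}\bigr]\le 2\sqrt{n}\le n$ once $n\ge 4$; the cases $n=2,3$ need to be verified directly (they do hold, but not by either of your cited mechanisms), and $n=1$ is vacuous since the right-hand side is infinite. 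So: architecture correct, conclusion correct, but the claim that Hoeffding integration ``produces exactly'' McAllester's constants should be dropped, and the Maurer--Pinsker estimate promoted from an aside to the actual engine of the concentration step.
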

The PAC-Bayes theorem can bound the generalization error between the test loss $\ell(c)$ and the training loss $\hat{\ell}(c)$ of a distribution $\mathcal{Q}$ on the concept $c$ that depends on the training set, in terms of the KL divergence between $\mathcal{P}$ and $\mathcal{Q}$. 
In transfer-based adversarial attacks, it may be tempting to directly use the PAC-Bayes theorem to derive the transferability bound. However, one of the cornerstone assumptions underlying the PAC's success is that ``test'' samples should share the same distribution as ``training'' samples. \bluetwo{Unfortunately}, the independent and identically distributed (i.i.d.) assumption does not generally hold in the black-box setting. 
For instance, consider using ResNet-50 as a surrogate model and ViT as a target model: although both are trained on the same dataset, they differ substantially in architecture and training strategies. These differences induce a surrogate–target distribution shift at the model level, violating the i.i.d. assumption required by standard PAC-Bayes analysis. This model-level non-i.i.d. discrepancy contributes directly to the transferability gap and complicates the theoretical analysis. Consequently, naively applying PAC-Bayes under the i.i.d. assumption risks producing bounds that are misleading in the black-box setting.

\subsection{$\phi$-divergence} 
In light of unseen target models, we reformulate another inducement of AEs' unsatisfactory transferability as the surrogate-target model shift. A successful AE should hopefully behave robustly under the shift. A key component in tackling the shift is to study the difference between surrogate and target models. In our work, we define a new discrepancy between surrogate and target model distributions based on the variational representation of $\phi$-divergences. Here we review the definition of $\phi$-divergence and its variational representation.

\begin{definition}[\textbf{$\phi$-divergence\cite{csiszar1967information}}] \label{def_f}
Consider two probability distributions $\mu$ and $\nu$ with $\mu$ absolutely continuous \wrt~$\nu$. Assume both distributions are absolutely continuous \wrt~measure $d\w$, with densities $p_\mu$ and $p_\nu$, respectively, on domain $\mathcal{W} \subset \mathbb{R}^{|\w|}$. Let $\phi: \mathbb{R}_{+} \rightarrow \mathbb{R}$ be a convex, lower semi-continuous function satisfying $\phi(1)=0$. The $\phi$-divergence $D_\phi$ is defined as:
\begin{equation}\label{equ_f}
    D_\phi(\mu \| \nu)=\int p_\nu(\w) \phi\left(\frac{p_\mu(\w)}{p_\nu(\w)}\right) d\w.
\end{equation}
\end{definition}
$\phi$-divergence measures the difference between two given probability distributions. A large class of popular statistical divergences could be recovered from $\phi$-divergences as special cases of Equation (\ref{equ_f}). For example, $\phi(\bluetwo{x})=\frac{1}{2}|\bluetwo{x}-1|$ recovers the TV distance, \ie, $\text{TV}(\mu \| \nu)=\frac{1}{2} \int|p_\mu(\w)-p_\nu(\w)| d\w$. $\phi(\bluetwo{x})={\left(\bluetwo{x}-1\right)}^2$ recovers the $\chi^2$ divergence, \ie, $\chi^2(\mu \| \nu)=\int \frac{(p_\mu(\w)-p_\nu(\w))^2}{p_\nu(\w)} d\w$ \cite{manyf}.
Note that $\phi$-divergence also has a variational representation formula which converts its calculation into an optimization problem over a function space, offering a valuable mathematical view for the similarity between probability distributions \cite{birrell,agrawal}.

\begin{lemma}\label{lemma_f_var}
    \textbf{(Variational formula of $\phi$-divergences, Theorem 1 \cite{birrell})} Let $\phi^*$ be the Fenchel conjugate function of $\phi$, \ie, $\phi^*(t)=\sup _{x \in\mathrm{dom} \phi}\left\{xt-\phi(x)\right\})$. With ${G}$ encompassing all bounded measurable functions, let $\mathcal{G}$ be the family of functions with 
\begin{equation}
    {G} \subset \mathcal{G} \subset L^1(\mu).
\end{equation}
For any family of transformations
\begin{equation}
    \mathcal{T} \subset\left\{T=T(g), \text{ such that } T: \mathcal{G} \mapsto L^1(\mu)\right\}.
\end{equation}
Then any $\phi$-divergence can be written as:
\begin{equation}\label{equ_f_var}
    \begin{aligned}
D_\phi(\mu \| \nu) & = \sup _{g \in \mathcal{G}}\{ \sup _{T \in \mathcal{T}}\{ \mathbb{E}_{\w \sim \mu}[T\left(g(\w)\right)] \\ & -\mathbb{E}_{\w \sim \nu}\left[\phi^*\left(T\left(g(\w)\right)\right)\right]\}\}.
\end{aligned}
\end{equation}
\end{lemma}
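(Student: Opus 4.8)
The plan is to prove the identity as a pair of matching inequalities, leaning on the Fenchel--Young inequality and the biconjugate theorem for the convex, lower semi-continuous $\phi$. Throughout I write $r(\w)=p_\mu(\w)/p_\nu(\w)$ for the density ratio, so that $D_\phi(\mu\|\nu)=\int p_\nu(\w)\,\phi(r(\w))\,d\w$.

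\textbf{Upper bound (the easy direction).} First I would show that every admissible pair $(g,T)$ produces a value no larger than $D_\phi$. Fix $g\in\mathcal{G}$ and $T\in\mathcal{T}$ and set $h=T(g)\in L^1(\mu)$. Since $\phi^*(t)=\sup_x\{xt-\phi(x)\}$, the conjugate (Fenchel--Young) inequality $\phi(x)\ge xt-\phi^*(t)$ holds for all $x,t$; applied pointwise with $x=r(\w)$ and $t=h(\w)$ it gives $r(\w)h(\w)-\phi^*(h(\w))\le\phi(r(\w))$. Multiplying by $p_\nu(\w)$ and integrating,
\[
\mathbb{E}_{\w\sim\mu}[h]-\mathbb{E}_{\w\sim\nu}[\phi^*(h)]=\int p_\nu(\w)\Bigl[r(\w)h(\w)-\phi^*(h(\w))\Bigr]\,d\w\le D_\phi(\mu\|\nu).
\]
As this holds for every $g$ and $T$, taking the double supremum preserves the inequality, so the variational expression is $\le D_\phi(\mu\|\nu)$. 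Note that this direction uses nothing about the richness of $\mathcal{G}$ or $\mathcal{T}$.

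\textbf{Lower bound (tightness).} The harder direction is to exhibit admissible test functions that nearly attain $D_\phi$. Because $\phi$ is convex and lower semi-continuous, the biconjugate theorem gives $\phi=\phi^{**}$, so equality holds in Fenchel--Young precisely when $t\in\partial\phi(x)$. I would therefore pick a measurable selection $h^\star(\w)\in\partial\phi(r(\w))$, for which $r(\w)h^\star(\w)-\phi^*(h^\star(\w))=\phi(r(\w))$ pointwise and hence $\mathbb{E}_{\w\sim\mu}[h^\star]-\mathbb{E}_{\w\sim\nu}[\phi^*(h^\star)]=D_\phi(\mu\|\nu)$. Since $h^\star$ may be unbounded, I would approximate it by its truncations $h^\star_m=\max\{-m,\min\{m,h^\star\}\}$, which are bounded and measurable, hence lie in $G\subset\mathcal{G}$. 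Letting $m\to\infty$ and using monotone/dominated convergence to pass the limit through both $\mathbb{E}_{\w\sim\mu}[\cdot]$ and $\mathbb{E}_{\w\sim\nu}[\phi^*(\cdot)]$ shows that the supremum over $g\in\mathcal{G}$ alone already recovers $D_\phi(\mu\|\nu)$. The inner supremum over $T\in\mathcal{T}$ can only retain these approximants (any transformation that reproduces the chosen base function suffices), while by the upper bound it never exceeds $D_\phi$; combining the two bounds yields the claimed equality.

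\textbf{Main obstacle.} The delicate part is exactly the realizability-and-limit argument in the lower bound. The optimizer $h^\star$ is generically unbounded (neither the density ratio nor $\partial\phi$ need be bounded), so it cannot simply be substituted; the approximation by bounded functions in $G$, the verification that the approximants remain in $L^1(\mu)$ with $\phi^*(h^\star_m)\in L^1(\nu)$, and the justification of interchanging limit and integral are the measure-theoretic crux. Continuity of $\phi^*$ on the interior of its domain and the lower semi-continuity of $\phi$ are what make these passages valid, and the structural hypotheses $G\subset\mathcal{G}\subset L^1(\mu)$ together with the transformation family $\mathcal{T}$ are precisely what guarantee the approximating functions stay inside the admissible set.
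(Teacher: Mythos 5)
The paper never proves this lemma: it is imported verbatim as Theorem~1 of \cite{birrell}, so there is no internal proof to compare against, and your attempt must be judged as a from-scratch argument. Your two-sided duality structure is the standard route to the classical (unrestricted) variational formula. The upper bound via Fenchel--Young is correct as written, and your lower-bound skeleton --- measurable selection $h^\star(\w)\in\partial\phi(r(\w))$, truncation to bounded functions, limit passage --- is exactly how the equality $D_\phi(\mu \| \nu)=\sup_{h\,\text{bounded}}\{\mathbb{E}_{\mu}[h]-\mathbb{E}_{\nu}[\phi^*(h)]\}$ is established in the literature, modulo the usual care: $\partial\phi$ can be empty at the endpoints of $\mathrm{dom}\,\phi$ (e.g.\ where $r(\w)=0$); the truncations must be kept inside $\mathrm{dom}\,\phi^*$ (for TV, $\phi^*$ is $+\infty$ outside a bounded interval, so clipping at level $m$ is not automatically innocuous, though it works because $0$ and $h^\star$ both lie in the convex set $\mathrm{dom}\,\phi^*$); and the monotone/dominated convergence step needs $\inf\phi>-\infty$ so the integrands admit a common integrable minorant. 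You flag these issues, which is acceptable for a sketch.

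The genuine gap is your treatment of the transformation family $\mathcal{T}$. The lemma as stated asserts equality for \emph{any} family of transformations, and your proof claims the inner supremum over $T\in\mathcal{T}$ ``can only retain these approximants.'' That step is not available: nothing in the hypotheses guarantees that your truncated subgradients $h^\star_m$ are realizable as $T(g)$ for some admissible pair $(g,T)$. If $\mathcal{T}$ is genuinely arbitrary --- say $\mathcal{T}=\{T\equiv 0\}$, which maps $\mathcal{G}$ into $L^1(\mu)$ --- the right-hand side collapses to $-\phi^*(0)\le 0$, and the claimed equality fails whenever $D_\phi(\mu \| \nu)>0$. The correct statement (and the one actually proved in \cite{birrell}) carries a richness/compatibility hypothesis on $\mathcal{T}$: for instance that the identity lies in $\mathcal{T}$, or more generally that $\{T(g): g\in\mathcal{G},\, T\in\mathcal{T}\}$ still contains the bounded measurable functions, so that each $T(g)$ cannot push the objective above $D_\phi$ (your upper bound) while the family jointly still reaches it (your lower bound). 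The paper itself concedes the point immediately after the lemma, noting that restricted $\mathcal{G}$ and $\mathcal{T}$ yield only a \emph{lower} bound. Your proof closes once you make that hypothesis explicit and verify that your approximants $h^\star_m$ belong to the realizable set $\{T(g)\}$; without it, the lower-bound direction cannot be completed.
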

Taking the affine transformation as an example, $T_{\alpha,t}=tg+\alpha$ with $ \alpha,t \in \mathbb{R}$, leads to the variational formula:
\begin{equation}
    \begin{aligned}
D_\phi(\mu \| \nu) & = \sup _{g \in \mathcal{G}, t\in \mathbb{R}} \mathbb{E}_{\w \sim \mu}[tg(\w)] \\ & -\inf _{\alpha \in \mathbb{R}}\left\{\mathbb{E}_{\w \sim \nu}\left[\phi^*(tg(\w)+\alpha)\right]-\alpha\right\}.
\end{aligned}
\end{equation}
The variational representation in Lemma \ref{lemma_f_var} yields a lower bound of $\phi$-divergence when $\mathcal{G}$ and $\mathcal{T}$ contain only a subset of all possible functions.

\section{A Theoretical Guarantee on Adversarial Transferability}\label{main_section}
In this section, we warm up by formalizing the transfer-based attack as a target adversarial risk minimization problem (Section \ref{sec:Formalizing Transfer-Based Attacks}). Through decomposing the target risk into the surrogate adversarial risk and the transferability gap, and deriving the bounds for each part (\bluetwo{Sections} \ref{sec: Model-Discrepancy-Based Bound} and \ref{sec: PAC-Bayesian Bound on Surrogate Risk}), we derive a transferability bound that provides a theoretical guarantee on the adversarial transferability (Section \ref{sec:Transferability Guarantees for Transfer-Based Attacks}). Finally, we establish an attack framework from our bound that generalizes previous works as special cases (Section \ref{sec:Comparison with Others}). 
In the following, we mainly focus on discussing the interpretations and implications of the theorems, and we refer readers to \textit{Appendix} \ref{app: proof} for proof details.

\subsection{Formalizing Transfer-Based Attacks}\label{sec:Formalizing Transfer-Based Attacks}

We start by defining the model distributions and the notion of risks we are concerned with.

\begin{definition}[\textbf{Model distribution}]\label{def_model_dist}
Let $\mathcal{F}$ be the set of possible model architectures for a given task, each function $\hat{f}\in\mathcal{F}$ is a parametric family of models, where $\hat{f}(\cdot,\hat{\w}): \mathcal{X}\rightarrow \mathcal{Y}$ is an example with parameter $\hat{\w} \in\mathbb{R}^{|\hat{\w}|}$. The parameter space induced by $\hat{f}$ is $\hat{\mathcal{W}}=\{\hat{\w}: \hat{\w}\in \mathbb{R}^{|\hat{\w}|}\}$.
We define a model distribution by a distribution over function $P({\hat{f}}(\cdot,\hat{\w}))$, induced by a generic distribution ${P}(\hat{\w})$ over parameters $\hat{\w}$ combined with a model architecture $\hat{f}(\cdot,\hat{\w})$. Typically, different model distributions may have different architectures. We assume that there exists a function $f$ with parameter space $\mathcal{W}\subseteq \mathbb{R}^{|\w|}$ so that arbitrary $P({\hat{f}}(\cdot,\hat{\w}))$ can fit into its architecture with a converted parameter distribution $P(\w)$, \ie, $P(f(\cdot,\w))=P(\hat{f}(\cdot,\hat{\w}))$. By doing so, we remark that $P(f(\cdot,\w))$ is sufficiently general so as to define any model distribution on a common space $\mathcal{W}$, such that any model distribution is absolutely continuous \wrt~measure $d\w$, with density function $p(\w)$. For the sake of clarity, we hereinafter omit the function form $f$ from $P(f(\cdot,\w))$ and instead use the distribution on the underlying parametrization $P(\w)$ to describe a model distribution.
\end{definition}

In the context of transfer-based attacks, once the attacker builds the surrogate model set, the {\textit{surrogate model distribution}} $P_\mathcal{S}$ is observed, with density $p_\mathcal{S}$. Since attackers could customize surrogate models, $P_\mathcal{S}$ could be defined as a set of distributional components, \ie, $P_\mathcal{S}=\left\{P_{\mathcal{S}_i}\right\}_{i=1}^I$, \bluetwo{where} $I$ is the total number of surrogate components owned by the attacker. In Section \ref{sec: Narrow the surrogate-target discrepancy}, we will show that multiple surrogate components help to produce better attack performance. For clarity, and without loss of generality, in this section we consider $P_\mathcal{S}$ integrally.
At test time, the attacker is facing any possible target \bluetwo{model} from the unobserved {\textit{target model distribution}} $P_\mathcal{T}$, with density $p_\mathcal{T}$.

\begin{definition}[\textbf{Adversarial risk and empirical adversarial risk}]\label{def_risk}
Consider a loss $\ell: \mathcal{Y} \times \mathcal{Y} \rightarrow \mathbb{R}_0^{+}$.  Let $P_\mathcal{D}$ be a model distribution. Assuming the AE $\hat{\bm{x}}$ as defined in Section \ref{sec:Notation and Attack Setup}, we can define its adversarial risk on $P_\mathcal{D}$ by:
\begin{equation}
    R_{\mathcal{D}}(\hat{\bm{x}})=\mathbb{E}_{\w \sim P_\mathcal{D}}[\ell(f(\hat{\bm{x}}, \w), y)].
\end{equation}
$R_{\mathcal{D}}(\hat{\bm{x}})$ characterizes the attack failure of an AE on $P_\mathcal{D}$. 
We sample $K$ i.i.d. models $\left\{{\w_i}\right\}_{i=1}^K\sim P_\mathcal{D}$, forming a set $\mathcal{M}_\mathcal{D}$ of size $K$. Given $\mathcal{M}_\mathcal{D}$, we can define an empirical adversarial risk for $\hat{\bm{x}}$ by:
\begin{equation}
    R_{\hat{\mathcal{D}}}(\hat{\bm{x}})=\frac{1}{K} \sum_{\w_i \in \mathcal{M}_\mathcal{D}} \ell\left(f\left(\hat{\bm{x}}, \w_i\right), y\right).
\end{equation}
\end{definition}
The adversarial risk measures the expected attack error that an AE \bluetwo{makes} according to the model distribution. For both adversarial risk and empirical adversarial risk, higher values indicate worse attack performance. 

The task of transfer-based attacks is to find an AE $\hat{\bm{x}}$ that successfully attacks target models drawn from $P_\mathcal{T}$, \ie, to minimize its attack failures.
We formalize untargeted transfer-based attacks as a risk minimization problem under $P_\mathcal{T}$, \ie, seeking \bluetwo{an} $\hat{\bm{x}} \in \hat{\mathcal{X}}$ that minimizes the \textit{target adversarial risk} defined as follows:
\begin{align}
\min_{\hat{\bm{x}} \in \hat{\mathcal{X}}}&R_\mathcal{T}(\hat{\bm{x}}),\\
\text{{where }} R_\mathcal{T}(\hat{\bm{x}})=\mathbb{E}_{\w \sim P_\mathcal{T}}&[\ell(f(\hat{\bm{x}},\w),y)].& \rule{5.1em}{0em}
\end{align}
\noindent The risk definition for targeted attacks is analogously obtained by substituting $y$ with the target label $y_t$. By unifying targeted and untargeted attacks within a single risk minimization framework, we restrict our following analysis to the untargeted case without loss of generality, and the analysis for targeted attacks can be trivially recovered by adopting the targeted risk. 

Under the black-box setting, no information about $P_\mathcal{T}$ is available during the attack, making it impossible to optimize $R_\mathcal{T}(\hat{\bm{x}})$. In practice, attackers commonly resort to an alternative risk minimization, \ie, minimizing the \textit{surrogate adversarial risk} $R_{\mathcal{S}}(\hat{\bm{x}})$, which is measured over the self-chosen surrogate distribution $P_\mathcal{S}$, with the expectation of achieving good transferability. $R_{\mathcal{S}}(\hat{\bm{x}})$ is defined as follows:
\begin{equation}
    R_{\mathcal{S}}(\hat{\bm{x}})= \mathbb{E}_{\w \sim P_{\mathcal{S}}}[\ell(f(\hat{\bm{x}}, \w), y)].
\end{equation}
However, the surrogate model distribution and the inaccessible target model distribution may differ significantly. As a result, the transfer gap between the target adversarial risk and the empirical surrogate adversarial risk becomes even worse due to this distribution shift, making the attack performance unsatisfactory.

\textbf{Risk Decomposition} To derive a bound on the transferability to the target model distribution of an AE optimized under the surrogate model distribution, we first decompose the target adversarial risk as follows:
\begin{equation}\label{eq_decompose}
    R_{\mathcal{T}}(\hat{\bm{x}})=\underbrace{R_{\mathcal{T}}(\hat{\bm{x}})-R_{{\mathcal{S}}}(\hat{\bm{x}})}_{\mathcal{E}_{\text {trans}}\left(\hat{\bm{x}}\right)}+R_{\mathcal{S}}(\hat{\bm{x}}).
\end{equation}
According to the above decomposition, it is clear that solely minimizing the surrogate adversarial risk using some attack strategies cannot guarantee \bluetwo{a decrease in} the target adversarial risk. The transferability gap $\mathcal{E}_{\text {trans}}$, which captures the dissimilarity between the surrogate and target model distributions relevant to the context of adversarial transferability, should also be taken into account. 
Thus, in the following, we will derive an upper bound of the target adversarial risk through \bluetwo{separately} deriving the upper bounds of the transferability gap and the surrogate adversarial risk.

\subsection{Model-Discrepancy-Based Bound on Transferability Gap}\label{sec: Model-Discrepancy-Based Bound}

Equation \ref{eq_decompose} tells that the transferability gap $\mathcal{E}_{\text {trans}}$ depends on the discrepancy between $P_{\mathcal{S}}$ and $P_{\mathcal{T}}$. Thus, we first define a model discrepancy tailored to comparing model distributions in the context of transfer-based adversarial attacks, which is crucial for deriving the subsequent bound on the transferability gap and consequently designing our attack strategy. Specifically, according to the variational formula of $\phi$-divergences (cf. Lemma \ref{lemma_f_var}), we introduce the adversarial model discrepancy $\mathrm{D}_\phi^{\hat{\mathcal{X}_{r}}}$, as follows.

\begin{definition}[\textbf{Adversarial model discrepancy}] \label{def_our_f}
For any surrogate model distribution $P_\mathcal{S}$, target model distribution $P_\mathcal{T}$, and any ${r}\geq0$, the \textit{localized adversarial space} $\hat{\mathcal{X}_{r}}$ is defined as:
\begin{equation}
    \hat{\mathcal{X}_{r}}=\left\{\hat{\bm{x}} \in \hat{\mathcal{X}} \mid R_{\mathcal{S}}\left(\hat{\bm{x}}\right) \leq r\right\}.
\end{equation}
Based on $\hat{\mathcal{X}_{r}}$, let $\hat{\mathcal{G}_r}$ be a set of measurable functions, \ie, $\hat{\mathcal{G}_r}=\{ \ell\left(f\left(\hat{\bm{x}}^{\prime}, \w\right), y\right): \hat{\bm{x}}^{\prime} \in \hat{\mathcal{X}}_r\}$. We define the \textit{adversarial model discrepancy} $\mathrm{D}_\phi^{\hat{\mathcal{X}_{r}}}$ between $P_\mathcal{S}$ and $P_\mathcal{T}$ as:
\begin{align}
    \mathrm{D}_\phi^{\hat{\mathcal{X}}_r}(P_\mathcal{T} \| P_\mathcal{S}) =  &\sup_{\hat{\bm{x}}^{\prime} \in \hat{\mathcal{X}}_r, t \in \mathbb{R}}  \mathbb{E}_{\w \sim P_\mathcal{T}}[t\ell(f({\hat{\bm{x}}^{\prime},\w}),y)]- \label{equ_our_f}\\
    &\inf_{\alpha \in  \mathbb{R}}\{\mathbb{E}_{\w \sim P_\mathcal{S}}[\phi^*(t\ell(f({\hat{\bm{x}}^{\prime},\w}),y)+\alpha)]-\alpha\}.\nonumber
\end{align}
\end{definition}
$\mathrm{D}_\phi^{\hat{\mathcal{X}}_r}(P_\mathcal{T} \| P_\mathcal{S})$ has some properties. 
(1) Restricting $\mathcal{G}$ to the subset $\hat{\mathcal{G}}_r$, $\mathrm{D}_\phi^{\hat{\mathcal{X}}_r}(P_\mathcal{T} \| P_\mathcal{S})$ discrepancy can be interpreted as a lower bound of a general class of $\phi$-divergences $D_\phi(P_\mathcal{T} \| P_\mathcal{S})$, this property is crucial for deriving a general attack framework in Section \ref{sec:Comparison with Others}. 
(2) It is also easy to see that $\mathrm{D}_\phi^{\hat{\mathcal{X}}_r}(P_\mathcal{T} \| P_\mathcal{S})$ is a monotonically increasing function \wrt~\bluetwo{$r$ over} $0\leq r\leq1$. 
(3) $\mathrm{D}_\phi^{\hat{\mathcal{X}}_r}(P_\mathcal{T} \| P_\mathcal{S}) \geq 0$. To explicitly see this, we consider $t=0$. By Lemma \ref{lemma_phi*}, $\inf \phi^*(\alpha)-\alpha=0$, \bluetwo{which} leads to $\mathbb{E}_{\w \sim P_\mathcal{T}}[t\ell(f({\hat{\bm{x}}^{\prime},\w}),y)]- \inf_{\alpha \in  \mathbb{R}}\{\mathbb{E}_{\w \sim P_\mathcal{S}}[\phi^*(t\ell(f({\hat{\bm{x}}^{\prime},\w}),y)+\alpha)]-\alpha\}=0$ when $t=0$, then we prove the non-negativity of $\mathrm{D}_\phi^{\hat{\mathcal{X}}_r}(P_\mathcal{T} \| P_\mathcal{S})$. 
Moreover, the equality holds when $\ell(f(\hat{\boldsymbol{x}}^{\prime}, \boldsymbol{w}), y)|_{\boldsymbol{w} \sim P_{\mathcal{S}}} \stackrel{d}{=} \ell(f\left(\hat{\boldsymbol{x}}^{\prime}, \boldsymbol{w}), y\right)|_{\boldsymbol{w} \sim P_{\mathcal{T}}}$, a weaker condition than $P_\mathcal{S}=P_\mathcal{T}$. By $\phi^*(\bm{x})\geq \bm{x}$, we have $\mathbb{E}_{\w \sim P_\mathcal{S}}[t\ell(f({\hat{\bm{x}}^{\prime},\w}),y)+\alpha]- \mathbb{E}_{\w \sim P_\mathcal{S}}[\phi^*(t\ell(f({\hat{\bm{x}}^{\prime},\w}),y)+\alpha)]\leq 0$, \bluetwo{which} leads to $\mathrm{D}_\phi^{\hat{\mathcal{X}}_r}(P_\mathcal{T}\| P_\mathcal{S}) = 0$.


\blue{Our adversarial model discrepancy is conceptually related to the localized discrepancies in domain adaptation, where one measures the shift between source and target \emph{data} distributions via an IPM or \bluetwo{a} $\phi$-divergence localized to a good hypothesis subset \cite{localized,f2}. In contrast, our $\mathrm{D}_\phi^{\hat{\mathcal{X}}_r}(P_\mathcal{T} \| P_\mathcal{S})$ compares surrogate vs. target \emph{model} distributions. Moreover, taking the supremum over $\hat{\bm{x}}^{\prime} \in \hat{\mathcal{X}}_r$, our localization restricts to \emph{adversarially significant} inputs (committing low surrogate attack error), \ie, evaluating models only on candidate AEs, thus capturing only discrepancy that matters for adversarial transferability.}

We are now ready to provide a bound on the transferability gap $\mathcal{E}_{\text {trans}}\left(\hat{\bm{x}}\right)$ in terms of the proposed $\mathrm{D}_\phi^{\hat{\mathcal{X}}_r}$ discrepancy.
\begin{theorem}[\textbf{Transferability gap bound}] \label{theorem1}
Define $K_{\mathcal{S}}^{\hat{\bm{x}}}\left(t\right) = \inf _\alpha \left\{\mathbb{E}_{\w \sim P_\mathcal{S}}\left[\phi^*\left(t \ell\left(f\left(\hat{\bm{x}},\w\right),y\right)+\alpha\right)\right]-\alpha\right\} -\mathbb{E}_{\w \sim P_\mathcal{S}}\left[t \ell\left(f\left(\hat{\bm{x}},\w\right),y\right)\right]$. Given the surrogate model distribution $P_\mathcal{S}$ and target model distribution $P_\mathcal{T}$, for any $\hat{\bm{x}} \in \hat{\mathcal{X}_r}$ and \bluetwo{constants} $c_1, c_2 \in [0, +\infty)$ \bluetwo{subject} to the constraint $K_{\mathcal{S}}^{\hat{\bm{x}}}(c_1)\leq c_1c_2\mathbb{E}_{\w \sim P_\mathcal{S}}\left[\ell\left(f(\hat{\bm{x}},\w),y\right)\right]$, we have
\begin{equation}\label{eq_gap_final}
    \mathcal{E}_{\text {trans}}\left(\hat{\bm{x}}\right) \leq 
    \frac{1}{c_1}\mathrm{D}_\phi^{\hat{\mathcal{X}}_r}\left({P_\mathcal{T}} \| {P_\mathcal{S}}\right)+c_2r.
\end{equation}
\end{theorem}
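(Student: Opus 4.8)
The plan is to leverage the fact that the adversarial model discrepancy $\mathrm{D}_\phi^{\hat{\mathcal{X}}_r}(P_\mathcal{T}\|P_\mathcal{S})$ is itself a supremum over feasible pairs $(\hat{\bm{x}}',t)$, so that evaluating its objective at any single admissible point produces a lower bound. First I would specialize the supremum to the very AE $\hat{\bm{x}}$ from the statement --- legitimate because the hypothesis gives $\hat{\bm{x}}\in\hat{\mathcal{X}}_r$ --- and to the scalar $t=c_1\in\mathbb{R}$. Writing $\mathbb{E}_{\w\sim P_\mathcal{T}}[c_1\ell(f(\hat{\bm{x}},\w),y)]=c_1 R_\mathcal{T}(\hat{\bm{x}})$, this immediately yields
\[
\mathrm{D}_\phi^{\hat{\mathcal{X}}_r}(P_\mathcal{T}\|P_\mathcal{S}) \geq c_1 R_\mathcal{T}(\hat{\bm{x}}) - \inf_{\alpha \in \mathbb{R}}\Big\{\mathbb{E}_{\w\sim P_\mathcal{S}}[\phi^*(c_1\ell(f(\hat{\bm{x}},\w),y)+\alpha)]-\alpha\Big\}.
\]

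Second, I would observe that the auxiliary function $K_\mathcal{S}^{\hat{\bm{x}}}$ is defined precisely so as to absorb the troublesome conjugate infimum: rearranging its definition shows that the infimum term above equals $K_\mathcal{S}^{\hat{\bm{x}}}(c_1)+c_1 R_\mathcal{S}(\hat{\bm{x}})$. Substituting this identity and regrouping the two expectation terms into $\mathcal{E}_{\text{trans}}(\hat{\bm{x}})=R_\mathcal{T}(\hat{\bm{x}})-R_\mathcal{S}(\hat{\bm{x}})$ gives the intermediate inequality
\[
c_1\,\mathcal{E}_{\text{trans}}(\hat{\bm{x}}) \leq \mathrm{D}_\phi^{\hat{\mathcal{X}}_r}(P_\mathcal{T}\|P_\mathcal{S}) + K_\mathcal{S}^{\hat{\bm{x}}}(c_1).
\]

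Third, I would invoke the two hypotheses in sequence to control the residual term $K_\mathcal{S}^{\hat{\bm{x}}}(c_1)$. The admissibility constraint on $(c_1,c_2)$ gives $K_\mathcal{S}^{\hat{\bm{x}}}(c_1)\leq c_1 c_2 R_\mathcal{S}(\hat{\bm{x}})$, and the membership $\hat{\bm{x}}\in\hat{\mathcal{X}}_r$ gives $R_\mathcal{S}(\hat{\bm{x}})\leq r$; chaining these bounds $K_\mathcal{S}^{\hat{\bm{x}}}(c_1)$ by $c_1 c_2 r$. Dividing the intermediate inequality through by $c_1$ then recovers the claimed bound $\mathcal{E}_{\text{trans}}(\hat{\bm{x}})\leq \tfrac{1}{c_1}\mathrm{D}_\phi^{\hat{\mathcal{X}}_r}(P_\mathcal{T}\|P_\mathcal{S})+c_2 r$.

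I do not expect a genuine obstacle here; the argument is a short chain of specialization, substitution, and two hypothesis applications, and the real content is conceptual rather than technical: the constraint linking $(c_1,c_2)$ is engineered exactly to trade the uncontrolled conjugate term $K_\mathcal{S}^{\hat{\bm{x}}}(c_1)$ for something proportional to $R_\mathcal{S}(\hat{\bm{x}})$, and hence to the localization radius $r$. The only point requiring care is the division by $c_1$: for $c_1>0$ everything is routine, whereas $c_1=0$ makes the leading term $\tfrac{1}{c_1}\mathrm{D}_\phi^{\hat{\mathcal{X}}_r}$ vacuous (it is $+\infty$ whenever the discrepancy is positive, while the constraint itself degenerates, since $K_\mathcal{S}^{\hat{\bm{x}}}(0)=\inf_\alpha\{\phi^*(\alpha)-\alpha\}=0$ by the identity used to establish non-negativity of the discrepancy), so the stated inequality holds trivially. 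I would therefore carry out the main computation under $c_1>0$ and dispatch the boundary case separately.
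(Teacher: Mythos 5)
Your proposal is correct and is essentially the paper's own argument: both proofs lower-bound the supremum defining $\mathrm{D}_\phi^{\hat{\mathcal{X}}_r}(P_\mathcal{T}\|P_\mathcal{S})$ by evaluating it at the admissible pair $(\hat{\bm{x}}, c_1)$, absorb the conjugate infimum via the definition of $K_{\mathcal{S}}^{\hat{\bm{x}}}$, invoke the constraint together with $R_{\mathcal{S}}(\hat{\bm{x}})\leq r$ to bound $K_{\mathcal{S}}^{\hat{\bm{x}}}(c_1)$ by $c_1c_2r$, and divide by $c_1$. The only differences are cosmetic: the paper routes through the localized supremum $K_\mathcal{S}^r(t)=\sup_{\hat{\bm{x}}'\in\hat{\mathcal{X}}_r}K_{\mathcal{S}}^{\hat{\bm{x}}'}(t)$ and an infimum over $t\geq 0$ before specializing to $t=c_1$, whereas you work pointwise at the given $\hat{\bm{x}}$ — which in fact matches the theorem's pointwise constraint more faithfully and treats the $c_1=0$ boundary case more explicitly than the paper does.
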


Theorem \ref{theorem1} bounds the transferability gap in terms of the adversarial model discrepancy between the surrogate and target model distributions, as well as a constant term related to localized adversarial space parameter $r$. 
As an instantiation of Theorem \ref{theorem1}, we \blue{first consider a simple yet intuitive} case of TV, namely $\mathrm{D}_{\text{TV}}^{\hat{\mathcal{X}}_r}\left({P_\mathcal{T}} \| {P_\mathcal{S}}\right)$. We have the following result:
\begin{corollary}[\blue{\textbf{TV Instantiation of Theorem \ref{theorem1}}}] \label{lemma_tv}
Suppose $\ell: \mathcal{Y} \times \mathcal{Y} \rightarrow [0,1]$. Given the surrogate model distribution $P_\mathcal{S}$ and target model distribution $P_\mathcal{T}$, for any $\hat{\bm{x}} \in \hat{\mathcal{X}_r}$ and constant $c_1$ satisfying $0\leq c_1\leq 1$, we have
\begin{equation}
     \mathcal{E}_{\text {trans}}\left(\hat{\bm{x}}\right) \leq \frac{1}{c_1}\mathrm{D}_\text{TV}^{\hat{\mathcal{X}}_r}(P_\mathcal{T} \| P_\mathcal{S}),
\end{equation} 
where $\mathrm{D}_\text{TV}^{\hat{\mathcal{X}}_r}(P_\mathcal{T} \| P_\mathcal{S})=\sup_{\hat{\bm{x}} \in \hat{\mathcal{X}}_r} \mid \mathbb{E}_{\w \sim P_\mathcal{T}}\left[\ell\left(f\left({\hat{\bm{x}},\w}\right),y\right)\right]-\mathbb{E}_{\w \sim P_\mathcal{S}}\left[\ell\left(f\left({\hat{\bm{x}},\w}\right),y\right)\right]\mid$.
\end{corollary}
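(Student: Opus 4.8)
The plan is to specialize Theorem~\ref{theorem1} to the total variation generator $\phi(x)=\tfrac12|x-1|$ and to exploit that its Fenchel conjugate is \emph{affine (the identity) on a central interval}. First I would compute $\phi^*$ by a short case analysis over $x\ge 0$, obtaining $\phi^*(u)=u$ on $[-\tfrac12,\tfrac12]$, $\phi^*(u)=-\tfrac12$ for $u<-\tfrac12$, and $\phi^*(u)=+\infty$ for $u>\tfrac12$. The only features I will actually use are that $\phi^*$ coincides with the identity on $[-\tfrac12,\tfrac12]$, together with the universal inequality $\phi^*(u)\ge u$ already invoked in the excerpt.

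With this in hand, I would first show that the localization term $c_2r$ in Theorem~\ref{theorem1} can be removed, \ie that $c_2=0$ is feasible. Since $\ell\in[0,1]$ and $0\le c_1\le 1$, the shifted argument $c_1\ell(f(\hat{\bm{x}},\w),y)-\tfrac12$ stays inside $[-\tfrac12,\tfrac12]$; choosing $\alpha=-\tfrac12$ then makes $\phi^*$ act linearly and evaluates the infimum in $K_{\mathcal{S}}^{\hat{\bm{x}}}(c_1)$ exactly, giving $\inf_\alpha\{\mathbb{E}_{\w\sim P_\mathcal{S}}[\phi^*(c_1\ell+\alpha)]-\alpha\}\le c_1\mathbb{E}_{\w\sim P_\mathcal{S}}[\ell]$. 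The matching reverse inequality is precisely $\phi^*(u)\ge u$, so $K_{\mathcal{S}}^{\hat{\bm{x}}}(c_1)=0$. The feasibility constraint $K_{\mathcal{S}}^{\hat{\bm{x}}}(c_1)\le c_1c_2\,\mathbb{E}_{\w\sim P_\mathcal{S}}[\ell]$ then holds with $c_2=0$, and Theorem~\ref{theorem1} reduces to $\mathcal{E}_{\text{trans}}(\hat{\bm{x}})\le\tfrac{1}{c_1}\mathrm{D}_\phi^{\hat{\mathcal{X}}_r}(P_\mathcal{T}\|P_\mathcal{S})$.

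It remains to identify the adversarial model discrepancy of Definition~\ref{def_our_f} with the stated absolute-mean-gap form. The same linear evaluation shows that for any fixed $\hat{\bm{x}}^{\prime}\in\hat{\mathcal{X}}_r$ and any scaling with $|t|\le 1$, the inner infimum over $\alpha$ equals $t\,\mathbb{E}_{\w\sim P_\mathcal{S}}[\ell(f(\hat{\bm{x}}^{\prime},\w),y)]$ (via $\phi^*=\mathrm{id}$ on $[-\tfrac12,\tfrac12]$ for achievability, with a suitable $\alpha$ of either sign, and $\phi^*(u)\ge u$ for the reverse bound), so the scaling and shift cancel and the corresponding slice collapses to $t\big(\mathbb{E}_{\w\sim P_\mathcal{T}}[\ell]-\mathbb{E}_{\w\sim P_\mathcal{S}}[\ell]\big)$. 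Taking $t=\pm 1$ converts this signed gap into its absolute value, and the supremum over $\hat{\bm{x}}^{\prime}$ yields $\mathrm{D}_{\text{TV}}^{\hat{\mathcal{X}}_r}(P_\mathcal{T}\|P_\mathcal{S})=\sup_{\hat{\bm{x}}^{\prime}\in\hat{\mathcal{X}}_r}|\mathbb{E}_{\w\sim P_\mathcal{T}}[\ell]-\mathbb{E}_{\w\sim P_\mathcal{S}}[\ell]|$. Combining with the previous step and $\tfrac{1}{c_1}\ge 1$ closes the bound; as an immediate sanity check, because $\hat{\bm{x}}$ itself lies in $\hat{\mathcal{X}}_r$, its signed gap $\mathcal{E}_{\text{trans}}(\hat{\bm{x}})$ is already dominated by this supremum.

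I expect this last reduction to be the main obstacle. The delicate point is that $\phi^*$ is linear only on $[-\tfrac12,\tfrac12]$: if one optimized $t$ freely over all of $\mathbb{R}$, the effective test function $t\ell+\alpha$ would necessarily exit this interval, $\phi^*$ would engage its clipping regime at $-\tfrac12$, and the variational value could strictly exceed the mean gap. The clean resolution is to read the TV case through its $[0,1]$-valued test-function (integral-probability-metric) representation, equivalently to keep $t\ell+\alpha$ within $[-\tfrac12,\tfrac12]$, which for $\ell\in[0,1]$ forces $|t|\le 1$, precisely the regime in which the inner optimization collapses to the signed mean gap. Verifying that this is exactly the quantity the bound requires, rather than the unrestricted supremum over $t$, is the step demanding the most care, and it is what makes the TV instantiation both correct and tighter than a direct substitution into Theorem~\ref{theorem1}.
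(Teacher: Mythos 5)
Your proof is correct and follows essentially the same route as the paper's: both exploit that the TV conjugate is the identity on a central interval (together with $\phi^*(u)\ge u$) to collapse the inner infimum to $t\,\mathbb{E}_{\w\sim P_\mathcal{S}}[\ell]$, which forces $K_{\mathcal{S}}^{\hat{\bm{x}}}(c_1)=0$ (hence $c_2=0$), and both restrict the affine family to $|t|\le 1$ so that the discrepancy becomes the absolute mean gap via $\sup_{|t|\le 1}ta=|a|$. The ``delicate point'' you flag at the end is handled in the paper by building the restriction $-1\le t\le 1$ directly into its definition of $\mathrm{D}_{\text{TV}}^{\hat{\mathcal{X}}_r}$ (the paper uses $\phi_{\text{TV}}(u)=|u-1|$, so its linearity interval is $[-1,1]$ rather than your $[-\tfrac12,\tfrac12]$), which is exactly the resolution you describe in your final paragraph.
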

{
\bluetable
The TV instantiation yields a clean bound that only depends on the worst-case mean shift of adversarial loss. However, it cannot downweight the discrepancy term via $\frac{1}{c_1}$ and admits no $c_1, c_2$ trade-off. One may expect more instantiations of Theorem \ref{theorem1} to $\phi$-divergence family, \eg, KL and $\chi^2$ divergences, and the corresponding explicit clues for the constraint $K_{\mathcal S}^{\hat{\bm x}}(t)\ \leq c_1 c_2\,\mathbb E_{P_{\mathcal S}}[\ell(\hat{\bm x},\bm w)]$. We defer them to \textit{Appendix} \ref{app:inst}. In short, the KL instantiation yields a similar bound (cf. Corollary \ref{corollary:KL}). When satisfying $c_1>0$,  $c_2 \geq \frac{e^{c_1}-1-c_1}{c_1}$,
\begin{equation}
\mathcal{E}_{\text{trans}}(\hat {\bm{x}})
\le \frac{1}{c_1}D_{\text{KL}}^{\hat{\mathcal X}_r}(P_{\mathcal T} \| P_{\mathcal S})
+c_2\,r,
\end{equation}
where $D_{\text{KL}}^{\hat{\mathcal X}_r}(P_{\mathcal T}\| P_{\mathcal S})=\sup_{\hat{\bm{x}} \in\hat{\mathcal X}_r,\,t\in\mathbb R} t\,\mathbb E_{\bm{w}\sim P_{\mathcal T}}[\ell(f(\hat {\bm{x}},\bm{w}),y)] -\log \mathbb E_{\bm{w}\sim P_{\mathcal S}}[e^{t\ell(f(\hat {\bm{x}},\bm{w}),y)}]$. One may choose $c_1>1$, therefore controlling the discrepancy term with $\frac{1}{c_1}<1$ at the cost of a larger $c_2$. This trade-off yields a potentially tighter bound when $r$ is small (cf. Remark \ref{remark:KL1}). 
Similarly, the $\chi^2$ instantiation yields a bound holds when $c_1>0, c_2 \geq \frac{{c_1}}{4} \cdot \frac{\operatorname{Var}_{P_{\mathcal S}}(\ell (f(\hat{\bm x}, \bm w), y))}{\mathbb{E}_{\bm w \sim  P_{\mathcal S}}[\ell (f(\hat{\bm x}, \bm w), y)]}$ (cf. Corollary \ref{corollary_chi2}):
\begin{align}
\mathcal{E}_{\text{trans}}(\hat {\bm{x}})
\le \frac{1}{c_1}D_{{{\chi}^2}}^{\hat{\mathcal X}_r}(P_{\mathcal T} \| P_{\mathcal S})+c_2\,r,
\end{align}
where $D^{\hat{\mathcal{X}}_r}_{\chi^2}(P_{\mathcal T} \| P_{\mathcal S})=\sup _{\hat{\bm{x}}^{\prime} \in \hat{\mathcal{X}}_r, t \in \mathbb{R}}\{t(\mathbb{E}_{\bm w \sim P_{\mathcal T}}[\ell(f(\hat{\bm x}^{\prime}, \bm w ), y)] \nonumber-\mathbb{E}_{\bm w \sim P_{\mathcal S}}[\ell(f(\hat{\bm x}^{\prime}, \bm w), y)])-\frac{t^2}{4} \operatorname{Var}_{\bm w \sim P_{\mathcal S}}(\ell(f(\hat{\bm x}^{\prime}, \bm w), y))\}$. The feasibility condition for $\chi^2$ yields a more favorable trade-off than KL (cf. Remark \ref{remark:chi2_1})\bluetwo{.}
}

According to risk decomposition, we have $R_{{\mathcal{T}}}(\hat{\bm{x}})=\mathcal{E}_{\text {trans}}+R_{{\mathcal{S}}}(\hat{\bm{x}})$, we need to minimize $\mathcal{E}_{\text {trans}}$ and $R_{{\mathcal{S}}}(\hat{\bm{x}})$ simultaneously to \bluetwo{ensure} the target attack performance. Combining $R_{{\mathcal{S}}}(\hat{\bm{x}})$ with the bound for $\mathcal{E}_{\text {trans}}$ in Theorem \ref{theorem1}, we have the following bound on the target adversarial risk:
\begin{equation}\label{eq_risk_bound}
    \begin{aligned}
        R_{\mathcal{T}}(\hat{\bm{x}}) \leq R_{\mathcal{S}}(\hat{\bm{x}}) + \frac{1}{c_1}\mathrm{D}_\phi^{\hat{\mathcal{X}}_r}\left({P_\mathcal{T}} \| {P_\mathcal{S}}\right)+c_2r.
    \end{aligned}
\end{equation}
The above bound yields the following result: Let $\hat{\bm{x}} \in \hat{\mathcal{X}}_r$ be an AE optimized by minimizing risk on the surrogate mixture $P_\mathcal{S}$. If $\hat{\bm{x}}$ can successfully attack over $P_{\mathcal{S}}$ seen during optimization, then $\hat{\bm{x}}$ has bounded risk over future target model distribution $P_\mathcal{T}$, if $P_\mathcal{T}$ has low adversarial model discrepancy with $P_{\mathcal{S}}$.

\subsection{PAC-Bayesian Bound on Surrogate Adversarial Risk}\label{sec: PAC-Bayesian Bound on Surrogate Risk}

Taking advantage of the above result, we are now able to conduct a transferability-guaranteed attack. In practice, the attacker typically only owns a finite surrogate set $\mathcal{M}_\mathcal{S}$ of size $K$, \ie, $\left\{{\w_k}\right\}_{k=1}^K\sim P_\mathcal{S}$, and the bound in Equation \ref{eq_risk_bound} needs to be estimated empirically. Hence, the next step in this section involves obtaining an empirical version of the bound in Equation \ref{eq_risk_bound}. Note that we present a concentration result solely for $R_{{\mathcal{S}}}(\hat{\bm{x}})$, as the target models are unavailable and the $\mathcal{E}_{\text {trans}}$-related terms are thus intractable. Introducing \bluetwo{their} computation offers no clear benefit. Nevertheless, these terms will serve to offer intuition for efficiently choosing surrogate models to control the transferability gap.

With the surrogate set $\mathcal{M}_\mathcal{S}$, \bluetwo{the objective of} minimizing \textit{empirical surrogate adversarial risk} $R_{\hat{\mathcal{S}}}(\hat{\bm{x}})$,
\begin{equation}
    R_{\hat{\mathcal{S}}}(\hat{\bm{x}})=\frac{1}{K} \sum_{\w_k \in \mathcal{M}_\mathcal{S}} \ell\left(f\left(\hat{\bm{x}}, \w_k\right), y\right),
\end{equation}
can have multiple local minima provide similar white-box attack loss but significantly different generalization on $R_{\mathcal{S}}(\hat{\bm{x}})$, and consequently black-box performance $R_{\mathcal{T}}(\hat{\bm{x}})$. Unfortunately, the typical optimization methods, such as PGD \cite{pgd} and I-FGSM \cite{ifgsm}, often lead to suboptimal transferability \cite{rap}. 

To bound $R_{{\mathcal{S}}}(\hat{\bm{x}})$, our goal suggests that \bluetwo{applying} PAC-Bayes theorem in Theorem \ref{theorem_ori_pac} may be fruitful. In our task, we have the concept is the adversarial example. The instance refers to the model. The risk refers to the adversarial risk. The generalization error measures how well the generated AE transfers from the employed samples to the surrogate model distribution (see these in Lemma \ref{lemma_pac_trans}). Under the PAC-Bayesian framework, we derive a bound for surrogate adversarial risk such that it could be estimated from finite models sampled from $P_\mathcal{S}$:

\begin{theorem}[\textbf{Surrogate risk bound}] \label{theorem_pac}
For any $\rho>0$, $0<\delta<1$, model distribution $P_\mathcal{S}$, and $\hat{\bm{x}} \in \hat{\mathcal{X}_r}$, with probability $1-\delta$ over the choice of surrogate model set $\mathcal{M}_\mathcal{S}\sim P_\mathcal{S}$ with size $K \in \mathbb{N}$, we have
\begin{equation}\label{equ_pac}
\begin{aligned}
    &R_{\mathcal{S}}(\hat{\bm{x}})
     \leq \max_{\|\boldsymbol{\epsilon}\|_2 \leq \rho}R_{\hat{\mathcal{S}}}(\hat{\bm{x}}+\bm{\epsilon})+\\
     &\sqrt{\frac{\frac{d}{2}\log (1+\frac{\gamma^2}{\rho^2}(1+\sqrt{\frac{\log K}{d}})^2)+\log\frac{K}{\delta}+\tilde{\mathcal{O}}(1)}{2(K-1)}}.
\end{aligned}
\end{equation}
where $\tilde{\mathcal{O}}(1)$ term \blue{corresponds} to $\varepsilon=\frac{1}{2}+2\log (2+3d+6r^2K+4d\log(\sqrt{d}+\sqrt{logK}))$.
\end{theorem}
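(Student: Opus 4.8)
The plan is to invoke the generic PAC-Bayes bound of Theorem~\ref{theorem_ori_pac} with the roles reversed relative to its usual reading: the \emph{concept} $c$ is the adversarial example $\hat{\bm{x}}$, while the i.i.d.\ ``instances'' are the surrogate weights $\w_1,\dots,\w_K\sim P_{\mathcal S}$ making up $\mathcal{M}_\mathcal{S}$. Under this identification the population risk $\ell(c)$ becomes $R_{\mathcal{S}}(\hat{\bm{x}})$ and the empirical risk $\hat{\ell}(c)$ becomes $R_{\hat{\mathcal{S}}}(\hat{\bm{x}})$, so Equation~\ref{equ_ori_pac} controls the surrogate generalization gap by $\sqrt{(KL(\mathcal{Q}\|\mathcal{P})+\log\frac{K}{\delta})/(2(K-1))}$ once we fix a (surrogate-independent) prior $\mathcal{P}$ and a posterior $\mathcal{Q}$ over the perturbation space $\mathbb{R}^d$. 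This is the SAM-style route \cite{sam}: placing Gaussians over the perturbation lets us pay a $KL$ penalty in exchange for a controllable flatness term $\max_{\|\boldsymbol{\epsilon}\|_2\le\rho}R_{\hat{\mathcal{S}}}(\hat{\bm{x}}+\boldsymbol{\epsilon})$ in place of the plain posterior-averaged risk.

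Concretely, I would take the posterior $\mathcal{Q}=\mathcal{N}(\hat{\bm{x}},\sigma^2 I_d)$ centred at the optimized AE and the prior $\mathcal{P}=\mathcal{N}(\x,\sigma_P^2 I_d)$ centred at the benign image, so that the mean shift obeys $\|\hat{\bm{x}}-\x\|_2\le\gamma$ by the perturbation budget. The Gaussian--Gaussian divergence then has the closed form $KL(\mathcal{Q}\|\mathcal{P})=\frac12\big[d\log\frac{\sigma_P^2}{\sigma^2}-d+\frac{d\sigma^2+\|\hat{\bm{x}}-\x\|_2^2}{\sigma_P^2}\big]$; minimizing over the prior scale $\sigma_P^2$ makes the residual linear terms cancel and leaves the clean quantity $\frac{d}{2}\log\!\big(1+\frac{\|\hat{\bm{x}}-\x\|_2^2}{d\sigma^2}\big)$. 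Two lemmas then close the bound. First, a \emph{sharpness} step upgrades the posterior-averaged empirical risk to the worst case over the ball, $\mathbb{E}_{\boldsymbol{\epsilon}\sim\mathcal{N}(0,\sigma^2 I)}[R_{\hat{\mathcal{S}}}(\hat{\bm{x}}+\boldsymbol{\epsilon})]\le\max_{\|\boldsymbol{\epsilon}\|_2\le\rho}R_{\hat{\mathcal{S}}}(\hat{\bm{x}}+\boldsymbol{\epsilon})+(\text{escaped mass})$, which is legitimate because $\ell\in[0,1]$ and the overshoot is bounded by the Gaussian mass leaving the radius-$\rho$ ball. Second, a $\chi^2$ concentration bound for $\|\boldsymbol{\epsilon}\|_2^2$ shows that, under the calibration $\rho=\sigma(\sqrt d+\sqrt{\log K})$, equivalently $\sigma=\rho/(\sqrt d\,(1+\sqrt{\log K/d}))$, this escaped mass decays in $K$ and folds into the root term. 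Substituting the calibrated $\sigma$ into $\frac{\|\hat{\bm{x}}-\x\|_2^2}{d\sigma^2}$ and using $\|\hat{\bm{x}}-\x\|_2\le\gamma$ reproduces the logarithmic argument $\frac{\gamma^2}{\rho^2}(1+\sqrt{\frac{\log K}{d}})^2$, with the prefactor $\frac{d}{2}$ inherited from the $d\log(\cdot)$ above.

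The remaining subtlety is that the prior scale $\sigma_P^2$ minimizing the $KL$ depends on the unknown $\|\hat{\bm{x}}-\x\|_2^2+d\sigma^2$, whereas PAC-Bayes requires the prior to be chosen \emph{before} seeing $\mathcal{M}_\mathcal{S}$. I would resolve this as in \cite{sam} by a union bound over a countable grid of candidate prior scales, splitting the confidence budget $\delta$ across grid points and selecting the one nearest the optimizer, which costs only an additive logarithmic slack. The bookkeeping of this grid, together with the range of feasible risk values that on $\hat{\mathcal{X}}_r$ is capped by $r$, is precisely what produces the explicit residual $\varepsilon=\frac12+2\log(2+3d+6r^2K+4d\log(\sqrt d+\sqrt{\log K}))$ collected into $\tilde{\mathcal{O}}(1)$. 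I expect the \emph{main obstacle} to be this two-step sharpness conversion: the posterior scale $\sigma$ must be matched to the radius $\rho$ tightly enough that the escaped Gaussian mass is provably negligible, yet loosely enough that the $\chi^2$ tail does not inflate the logarithm, since a crude tail estimate would either break the $\max_{\|\boldsymbol{\epsilon}\|_2\le\rho}$ identification or destroy the clean $(1+\sqrt{\log K/d})^2$ factor.
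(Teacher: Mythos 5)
Your proposal follows essentially the same route as the paper's proof: the same identification (concept $=$ AE, instances $=$ surrogate weights), the same Gaussian posterior $\mathcal{N}(\hat{\bm{x}},\sigma^2 I)$ and prior $\mathcal{N}(\x,\sigma_P^2 I)$, the same closed-form KL with the prior scale optimized over a countable grid and a union bound splitting $\delta$ across grid points, and the same chi-square calibration $\rho=\sigma(\sqrt{d}+\sqrt{\log K})$ (from Laurent--Massart) to trade the posterior-averaged empirical risk for $\max_{\|\bm{\epsilon}\|_2\le\rho}R_{\hat{\mathcal{S}}}(\hat{\bm{x}}+\bm{\epsilon})$ plus an escaped-mass term of order $1/\sqrt{K}$. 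Even the role of $r$ in the residual $\varepsilon$ (the bound is trivial when the confidence term exceeds $r$, since $R_{\mathcal{S}}(\hat{\bm{x}})\le r$ on $\hat{\mathcal{X}}_r$, which caps $\|\bm{\xi}\|_2^2$ and produces the $6r^2K$ term) matches the paper's bookkeeping.

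There is one genuine omission. PAC-Bayes bounds the \emph{posterior-averaged} population risk $\mathbb{E}_{\bm{\epsilon}\sim\mathcal{N}(0,\sigma^2 I)}[R_{\mathcal{S}}(\hat{\bm{x}}+\bm{\epsilon})]$, not the pointwise quantity $R_{\mathcal{S}}(\hat{\bm{x}})$ appearing in the theorem; your write-up silently identifies the two when you say the population risk ``becomes $R_{\mathcal{S}}(\hat{\bm{x}})$.'' You handle the analogous issue on the empirical side (upgrading the smoothed empirical risk to the worst case over the ball), but on the population side the inequality must go the other way, and it is not automatic. The paper bridges this with an explicit assumption: since $\hat{\bm{x}}\in\hat{\mathcal{X}}_r$ with $r$ small, adding Gaussian noise does not \emph{decrease} the surrogate risk, i.e.\ $R_{\mathcal{S}}(\hat{\bm{x}})\le\mathbb{E}_{\bm{\epsilon}\sim\mathcal{N}(0,\sigma^2 I)}[R_{\mathcal{S}}(\hat{\bm{x}}+\bm{\epsilon})]$ (this is also how the SAM argument you cite starts). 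Without stating this assumption, or some substitute for it, your chain of inequalities only bounds the Gaussian-smoothed surrogate risk, not $R_{\mathcal{S}}(\hat{\bm{x}})$ itself, so you should add it as the first step of the proof.
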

As we can see, this PAC-Bayes bound depends on two terms. The first one is the supremum of empirical surrogate risk over perturbed AE $\hat{\bm{x}}+\bm{\epsilon}$, which denotes the worst-case attack error \bluetwo{within} neighborhood \bluetwo{around} $\hat{\bm{x}}$. The second one is a confidence bound which tells the effect of number of surrogate samples $K$ on transferability bound. If $\hat{\bm{x}}$ is optimized over enough samples, this term can be reduced, and one can use the first term as an upper bound estimator of surrogate risk.

\subsection{Transferability Guarantees for Transfer-Based Attacks}\label{sec:Transferability Guarantees for Transfer-Based Attacks}

Plugging the bounds in Theorem \ref{theorem1} and Theorem \ref{theorem_pac} into Equation \ref{eq_decompose} yields our final transferability PAC bound on target adversarial risk:

\begin{theorem}[\textbf{Transferability PAC bound}] \label{theorem_main}
Given the surrogate model distribution $P_\mathcal{S}$ and target model distribution $P_\mathcal{T}$. For any $\hat{\bm{x}} \in \hat{\mathcal{X}_r}$ and \bluetwo{constants} $c_1, c_2 \in [0, +\infty)$ \bluetwo{subject} to the constraint $K_{\mathcal{S}}^{\hat{\bm{x}}}(c_1)\leq c_1 c_2\mathbb{E}_{\w \sim P_\mathcal{S}}\left[\ell\left(f(\hat{\bm{x}},\w),y\right)\right]$, with probability $1-\delta$ over surrogate model set $\mathcal{M}_{\mathcal{S}}=\left\{\w_j\right\}_{j=1}^{K}$ generated from distribution $P_\mathcal{S}$, we have
    \begin{equation}\label{equ_main}
    \begin{aligned}
    R_{\mathcal{T}}(\hat{\bm{x}})
    \leq \max _{\|\bm{\epsilon}\|_2 \leq \rho} R_{\hat{\mathcal{S}}}(\hat{\bm{x}}+\bm{\epsilon})
    + \frac{1}{c_1}\mathrm{D}_\phi^{\hat{\mathcal{X}}_r}({P_\mathcal{T}} \| {P_\mathcal{S}})+c_2r
    +\varepsilon _{\text{PAC}},
    \end{aligned}
    \end{equation}
    where $\varepsilon _{\text{PAC}}=\sqrt{\frac{\frac{d}{2}\log (1+\frac{\gamma^2}{\rho^2}(1+\sqrt{\frac{\log K}{d}})^2)+\log\frac{K}{\delta}+\tilde{\mathcal{O}}(1)}{2(K-1)}}$.
\end{theorem}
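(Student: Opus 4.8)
The plan is to assemble the final bound by combining three ingredients already established: the exact risk decomposition of Equation \ref{eq_decompose}, the deterministic bound on the transferability gap from Theorem \ref{theorem1}, and the high-probability bound on the surrogate adversarial risk from Theorem \ref{theorem_pac}. The central observation is that these two theorems bound the two summands of the decomposition separately, so the proof reduces to adding them term-by-term while carefully tracking where the randomness enters.

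First I would start from the identity $R_{\mathcal{T}}(\hat{\bm{x}}) = \mathcal{E}_{\text{trans}}(\hat{\bm{x}}) + R_{\mathcal{S}}(\hat{\bm{x}})$, which holds by the very definition $\mathcal{E}_{\text{trans}}(\hat{\bm{x}}) = R_{\mathcal{T}}(\hat{\bm{x}}) - R_{\mathcal{S}}(\hat{\bm{x}})$ and hence needs no verification. Next, since the hypotheses of Theorem \ref{theorem_main} supply exactly the conditions required by Theorem \ref{theorem1} --- namely $\hat{\bm{x}} \in \hat{\mathcal{X}}_r$ together with the feasibility constraint $K_{\mathcal{S}}^{\hat{\bm{x}}}(c_1) \le c_1 c_2 \, \mathbb{E}_{\w \sim P_{\mathcal{S}}}[\ell(f(\hat{\bm{x}},\w),y)]$ --- I would invoke Theorem \ref{theorem1} to replace $\mathcal{E}_{\text{trans}}(\hat{\bm{x}})$ by $\frac{1}{c_1}\mathrm{D}_\phi^{\hat{\mathcal{X}}_r}(P_{\mathcal{T}} \| P_{\mathcal{S}}) + c_2 r$. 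This step is entirely deterministic, because $\mathcal{E}_{\text{trans}}$ is a difference of population risks and carries no dependence on the sampled surrogate set.

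I would then bound the remaining surrogate term. Because $\hat{\bm{x}} \in \hat{\mathcal{X}}_r$, Theorem \ref{theorem_pac} applies verbatim and yields, with probability at least $1-\delta$ over the draw of $\mathcal{M}_{\mathcal{S}} \sim P_{\mathcal{S}}$, the inequality $R_{\mathcal{S}}(\hat{\bm{x}}) \le \max_{\|\bm{\epsilon}\|_2 \le \rho} R_{\hat{\mathcal{S}}}(\hat{\bm{x}}+\bm{\epsilon}) + \varepsilon_{\text{PAC}}$. Substituting both bounds into the decomposition produces Equation \ref{equ_main} directly, with the same $\varepsilon_{\text{PAC}}$ confidence term inherited unchanged from Theorem \ref{theorem_pac}.

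The one point demanding genuine care is the probability bookkeeping. Since the decomposition is an identity and the transferability-gap bound is deterministic, the PAC-Bayes bound of Theorem \ref{theorem_pac} is the \emph{sole} probabilistic statement and the only source of randomness; its failure event alone governs the confidence of the aggregate, so no union bound is needed and the combined bound holds with the same probability $1-\delta$. I would also verify that both component bounds are invoked for the \emph{same} fixed $\hat{\bm{x}} \in \hat{\mathcal{X}}_r$, so that their conclusions may be added pointwise. In this sense there is no substantive obstacle remaining: the analytical difficulty was already discharged in establishing Theorems \ref{theorem1} and \ref{theorem_pac} --- respectively the variational localization argument for the adversarial model discrepancy and the covering/Gaussian-posterior construction underlying $\varepsilon_{\text{PAC}}$ --- and the present statement is their clean aggregation.
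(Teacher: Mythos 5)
Your proposal is correct and follows exactly the paper's own route: the paper derives Theorem \ref{theorem_main} precisely by plugging the deterministic gap bound of Theorem \ref{theorem1} and the high-probability surrogate bound of Theorem \ref{theorem_pac} into the decomposition of Equation \ref{eq_decompose}. Your observation that the PAC-Bayes bound is the sole source of randomness, so the aggregate inherits the probability $1-\delta$ without any union bound, is also the correct (and only nontrivial) bookkeeping point.
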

Rewriting the above bound, we have:
\begin{equation}\label{equ_main_sharp}
    \begin{aligned}
    R_{\mathcal{T}}(\hat{\bm{x}}) & \leq R_{\hat{\mathcal{S}}}(\hat{\bm{x}})+\underbrace{\max _{\|\bm{\epsilon}\|_2 \leq \rho} R_{\hat{\mathcal{S}}}(\hat{\bm{x}}+\bm{\epsilon})-R_{\hat{\mathcal{S}}}(\hat{\bm{x}})}_{\text {sharpness}} \\ 
    & +\frac{1}{c_1}\mathrm{D}_\phi^{\hat{\mathcal{X}}_r}\left({P_\mathcal{T}} \| {P_\mathcal{S}}\right)+c_2r
    +\varepsilon _{\text{PAC}},
\end{aligned}
\end{equation}
where the terms in the curly bracket depict the sharpness of $R_{\hat{\mathcal{S}}}$ at $\hat{\bm{x}}$ as it measures the difference of risk between $\hat{\bm{x}}$ and the worst-case point in the neighborhood of $\hat{\bm{x}}$ \cite{keskar2017large}. A low value of the sharpness term indicates that $\hat{\bm{x}}$ is located \bluetwo{in} the flat region of the loss landscape. 
\blue{For $\mathrm{D}_\phi^{\hat{\mathcal{X}}_r}$ term, one can also turn to instantiations of Theorem \ref{theorem1} (cf. Corollary \ref{lemma_tv},\ref{corollary:KL},\ref{corollary_chi2}) to obtain concrete results specialized for popular choices \bluetwo{of} $\phi$-divergences (such as $\mathrm{D}_{\text{TV}}^{\hat{\mathcal{X}}_r}$) and the corresponding explicit conditions for the constraint.}
As a result, the target adversarial risk of an AE $\hat{\bm{x}}$ can be bounded in terms of (1) the white-box attack performance of $\hat{\bm{x}}$ against $\mathcal{M}_\mathcal{S}$, (2) the sharpness of $R_{\hat{\mathcal{S}}}$ at $\hat{\bm{x}}$, (3) the adversarial model discrepancy, (4) a constant term related to localized adversarial space parameter $r$ and  (5) a confidence bound. 
Finally, we provide a guarantee that $\hat{\bm{x}}$ will ``transfer well'' on target model distribution $P_\mathcal{T}$, even when solely minimizing the empirical risk over the surrogate model set $\mathcal{M}_\mathcal{S}$.
This bound inspires our basic plan of attack: controlling the adversarial model discrepancy, the attacker can expect that finding a flat minimum on empirical surrogate adversarial risk will lead to better AE transferability.


\subsection{A General Transfer-Based Attack Framework}\label{sec:Comparison with Others}

\begin{table*}[]
\centering
\caption{Comparison of bounds for target adversarial risk $R_{\mathcal{T}}(\hat{\bm{x}})$. The ``bound'' means the respective attack tries to find an AE $\hat{\bm{x}}$ which minimizes $R_{\mathcal{T}}(\hat{\bm{x}})$ by alternatively minimizing this objective.}
\label{tab:bounds}
\resizebox{\textwidth}{!}{
\bluetable%
\begin{tabular}{@{}cccc@{}}
\toprule
 & \multicolumn{3}{c}{Target risk $R_{\mathcal{T}}(\hat{\bm{x}}) \leq \underbrace{R_{\hat{\mathcal{S}}}(\hat{\bm{x}})+\mathbf{r}(\mathcal{\mathcal{S}})}_{\text {for bounding } R_{{\mathcal{S}}}(\hat{\bm{x}})}  +\underbrace{\eta \mathrm{D}_\phi\left({P_\mathcal{T}} \| {P_\mathcal{S}}\right)}_{\text{for bounding }\mathcal{E}_{\text {trans}}(\hat{\bm{x}})}+\text{constant}$} \\ \cmidrule(l){2-4}
\multirow{-3}{*}{Method} & \multicolumn{1}{c}{Empirical surrogate risk $R_{\hat{\mathcal{S}}}(\hat{\bm{x}})$} & \multicolumn{1}{c}{Surrogate model regularization $\mathbf{r}(\mathcal{\mathcal{S}})$} & \multicolumn{1}{c}{{$\phi$-divergence $\mathrm{D}_\phi\left({P_\mathcal{T}} \| {P_\mathcal{S}}\right)$}} \\ \midrule
MI\cite{mifgsm},NI\cite{nisi},PI\cite{pifgsm},VT\cite{vt}& \begin{tabular}[c]{@{}c@{}}Design elaborate optimizers to minimize \\ $ \ell\left(f\left(\hat{\bm{x}}, \w\right), y\right)$ \end{tabular} & $\backslash$ & $\backslash$ \\
\midrule
SVRE\cite{svre},AdaEA\cite{adaea},SMER\cite{smer} & \begin{tabular}[c]{@{}c@{}}Design fusion strategies to minimize \\ $\ell\left(\sum_{\w_k \in \mathcal{M}_\mathcal{S}}\omega_k f\left(\hat{\x}, \w_k\right), y\right)$ \end{tabular} & $\backslash$ & $\backslash$ \\
\midrule
ILA\cite{ila} &\begin{tabular}[c]{@{}c@{}} Intermediate Level Attack Projection Loss\\ $-\left(f_l\left(\hat{\bm{x}}^\prime\right)-f_l(\x)\right)\left(f_l\left(\hat{\bm{x}}\right)-f_l(\x)\right)$\end{tabular} & $\backslash$ & $\backslash$ \\
\midrule
FIA\cite{fia} & \begin{tabular}[c]{@{}c@{}} Feature Importance-\bluetwo{A}ware Loss\\ $\sum\left(\bar{\Delta}_l^{\boldsymbol{x}} \odot f_l\left(\hat{\bm{x}}\right)\right)$ \end{tabular} & $\backslash$ & $\backslash$ \\
\midrule
NAA\cite{naa} & \begin{tabular}[c]{@{}c@{}} Neuron Attribution-\bluetwo{B}ased Loss \\ $\sum_{\substack{A_{l_j} \geq 0 \\ f_{l_j} \in f_l}} f_p\left(A_{l_j}\right)-\gamma \cdot \sum_{\substack{A_{l_j}<0 \\ f_{l_j} \in f_l}} f_n\left(-A_{l_j}\right)$ \end{tabular} & $\backslash$ & $\backslash$ \\
\midrule
\multicolumn{1}{c}{\begin{tabular}[c]{@{}c@{}}RAP\cite{rap},\\ MEF\cite{mef}\end{tabular}} & \multicolumn{1}{c}{$\ell\left(f\left(\hat{\bm{x}}, \w\right), y\right)$} & \multicolumn{1}{c}{\begin{tabular}[c]{@{}c@{}} Sharpness (worst-case/smoothed)\\ $\max _{\|\bm{\epsilon}\|_p \leq \rho} \left[\ell\left(f\left(\hat{\bm{x}}+\bm{\epsilon}, \w\right), y\right)-\ell\left(f\left(\hat{\bm{x}}, \w\right), y\right)\right]$,\\ $\max _{\|\epsilon\|_p<\rho} \mathbb{E}_{\bm{\delta} \sim \operatorname{Unif}\left(\mathcal{B}_{\xi}(0)\right)} \left[\ell(f(\hat{\bm{x}}+\bm{\epsilon}+\bm{\delta}, \bm{w}), y)-\ell\left(f\left(\hat{\bm{x}}, \w\right), y\right)\right]$\end{tabular}} & $\backslash$ \\
\midrule
\begin{tabular}[c]{@{}c@{}}PGN\cite{pgn},\\ TPA\cite{tpa}\end{tabular} & $\ell\left(f\left(\hat{\bm{x}}, \w\right), y\right)$ & \begin{tabular}[c]{@{}c@{}} Gradient norm (maximum/expected)\\ $\max _{\|\bm{\epsilon}\|_p \leq \rho}\left\|\nabla \ell\left(f\left(\hat{\bm{x}}+\bm{\epsilon}, \w\right), y\right)\right\|_2$,\\ $\mathbb{E}_{\bm{\epsilon} \sim \operatorname{Unif}\left(\mathcal{B}_\rho(0)\right)}\left[\|\nabla \ell(f(\hat{\bm{x}}+\bm{\epsilon}, \bm{w}), y)\|_2\right]$\end{tabular}  & $\backslash$ \\
\midrule
CWA\cite{cwa}  & $\frac{1}{|\mathcal{M}_\mathcal{S}|} \sum_{\w_k \in \mathcal{M}_\mathcal{S}} \ell\left(f\left(\hat{\bm{x}}, \w_k\right), y\right)$ & Hessian matrix $\frac{1}{|\mathcal{M}_\mathcal{S}|}\sum_{k=1}^{|\mathcal{M}_\mathcal{S}|}\|{\bm H}_k\|_F$ & $\backslash$ \\
\midrule
DI\cite{difgsm},TI\cite{ti},SI\cite{nisi},Admix\cite{admix},SIA\cite{sia},SSA\cite{ssa} & $ \ell\left(f\left(\mathcal{T}(\hat{\bm{x}}), \w\right), y\right)$ & $\backslash$ & Simulate different models with input transformations $\mathcal{T}$ \\
\midrule
GhostNet\cite{ghostnet},Bayesian\cite{bayesian},LGV\cite{lgv} & $ \frac{1}{|\mathcal{M}_\mathcal{S}|} \sum_{\w_k \in \mathcal{M}_\mathcal{S}} \ell\left(f\left(\hat{\bm{x}}, \w_k\right), y\right)$ & $\backslash$ & Generate diverse variants from a base surrogate model\\
\midrule
\rowcolor[HTML]{D9D9D9} 
Our  & $ \frac{1}{|\mathcal{M}_\mathcal{S}|} \sum_{\w_k \in \mathcal{M}_\mathcal{S}} \ell\left(f\left(\hat{\bm{x}}, \w_k\right), y\right)$  & \begin{tabular}[c]{@{}c@{}}Sharpness over $\mathcal{M}_\mathcal{S}=\{\w_k\}_{k=1}^{|\mathcal{M}_\mathcal{S}|} \sim P_\mathcal{S}$ \\ $\frac{1}{|\mathcal{M}_\mathcal{S}|}\max _{\|\bm{\epsilon}\|_p \leq \rho}\sum_{\w_k \in \mathcal{M}_\mathcal{S}} \left[\ell\left(f\left(\hat{\bm{x}}+\bm{\epsilon}, \w_k\right), y\right)-\ell\left(f\left(\hat{\bm{x}}, \w_k\right), y\right)\right]$ \end{tabular} & \begin{tabular}[c]{@{}c@{}} Narrow adversarial model discrepancy \\ $\mathrm{D}^{\hat{\mathcal{X}}_r}_\phi\left({P_\mathcal{T}} \| {P_\mathcal{S}}\right)$ \end{tabular} \\ \bottomrule
\end{tabular}%
}
\vspace{-10pt}
\end{table*}

In this section, we present an attack framework which generalizes previous transfer-based attacks. Through the lens of our framework, we revisit these attacks, especially RAP, and compare them with our transferability bound in Table \ref{tab:bounds}. The analysis shows that, while these attacks improve transferability through either finding better local minima in the surrogate loss landscape or tackling model shift, they do not consider both optimization signals simultaneously to achieve comprehensive transferability. Moreover, their principles of controlling the transferability gap are less tight than our $\mathrm{D}_\phi^{\hat{\mathcal{X}_r}}$ discrepancy and may result in unnecessary overestimation of the target risk bound. Experimental results confirm that considering surrogate adversarial risk and transferability gap simultaneously and properly leads to significant gains (see Tables \ref{tab: main}, \ref{tab:main_t}).

Abstracted from our main result (cf. Equation \ref{equ_main_sharp}), we give \textit{a general framework} for bounding the target adversarial risk:
\begin{equation}\label{equ_framework}
    R_{\mathcal{T}}(\hat{\bm{x}}) \leq \underbrace{R_{\hat{\mathcal{S}}}(\hat{\bm{x}})+\mathbf{r}(\mathcal{\mathcal{S}})}_{\text {for bounding } R_{{\mathcal{S}}}(\hat{\bm{x}})}  +\underbrace{\eta \mathrm{D}_\phi\left({P_\mathcal{T}} \| {P_\mathcal{S}}\right)}_{\text{for bounding }\mathcal{E}_{\text {trans}}(\hat{\bm{x}})}+\text{ constant}.
\end{equation}
\noindent where $\eta$ is a weight that trades off transferability with attack performance on surrogates. Within the first curly bracket, $\mathbf{r}(\mathcal{\mathcal{S}})$ represents some form of regularization for surrogate models, \eg, sharpness, which interacts with the empirical surrogate risk $R_{\hat{\mathcal{S}}}(\hat{\bm{x}})$ to upper bound the surrogate risk $R_{{\mathcal{S}}}(\hat{\bm{x}})$. Within the second curly bracket for bounding $\mathcal{E}_{\text {trans}}\left(\hat{\bm{x}}\right)$, we replace $\mathrm{D}_\phi^{\hat{\mathcal{X}_r}}\left({P_\mathcal{T}} \| {P_\mathcal{S}}\right)$ in Equation \ref{equ_main_sharp} \bluetwo{with} $\mathrm{D}_\phi\left({P_\mathcal{T}} \| {P_\mathcal{S}}\right)$ without violating the bound, as the variational representation $\mathrm{D}_\phi^{\hat{\mathcal{X}_r}}$ is a lower bound of the $\phi$-divergence $\mathrm{D}_\phi$.

Taking a second look at previous attacks within the above framework, some methods (\eg, MI, NI, PI, VT) bound $R_{\mathcal{T}}(\hat{\bm{x}})$ solely via the empirical surrogate risk $R_{\hat{\mathcal{S}}}(\hat{\bm{x}})$ on one or several arbitrarily selected neural networks. The various gradient-based optimization algorithms they introduce to minimize $R_{\hat{\mathcal{S}}}(\hat{\bm{x}})$ could help escape poor minima, thus improving the optimization for $R_{\mathcal{S}}(\hat{\bm{x}})$. \blue{From the model-fusion perspective, methods such as SVRE, AdaEA, \bluetwo{and} SMER also focus on the empirical risk, but better synchronize the optimization directions across surrogates via ensemble strategies such as reweighting.} Similarly, feature-based methods (\eg, ILA, FIA, NAA) focus on optimizing $R_{\hat{\mathcal{S}}}(\hat{\bm{x}})$ by designing alternative loss functions which distort intermediate layer features rather than the final outputs. Going beyond simply accounting for the empirical risk, RAP, \blue{MEF}, PGN, \blue{TPA and CWA} bound $R_{\mathcal{T}}(\hat{\bm{x}})$ by $R_{\hat{\mathcal{S}}}(\hat{\bm{x}})$ in conjunction with specific surrogate model regularizations. However, the above methods \bluetwo{incorrectly} rely on an invalid i.i.d. assumption and overlook the transferability gap $\mathcal{E}_{\text {trans}}\left(\hat{\bm{x}}\right)$ in the target risk bound, thus achieving suboptimal attack performance.

Rather than relying solely on a few surrogate models, input-transformation-based methods (\eg, DI, TI, SI, Admix, SIA, SSA) and model-tuning-based methods (\eg, GhostNet, Bayesian attack, LGV) augment a base surrogate model into an infinitely large set of models, aiming to align with those seen during inference. The common underlying assumption of these attacks is that the transferability of adversarial examples can be improved by attacking \textit{more} models simultaneously, so efforts focus on obtaining as many different surrogate models as possible at a low computational cost \cite{nisi, sia, bayesian}. 
By trying to place point masses at locations given by samples from the target model distribution, we can view these methods as simulating $P_\mathcal{T}$ with $P_\mathcal{S}$, which aligns with minimizing $\mathrm{D}_\phi\left({P_\mathcal{T}} \| {P_\mathcal{S}}\right)$ in our framework, given that $\phi$-divergences measure the difference between two given probability distributions. However, we demonstrate that $\mathrm{D}_\phi$ may overestimate the transferability gap compared to our adversarial model discrepancy $\mathrm{D}_\phi^{\hat{\mathcal{X}_r}}$, which only captures practically significant difference in adversarial vulnerability patterns. Moreover, despite mitigating the surrogate-target shift, these methods have not explored the AE optimizer in depth and typically default to the empirical risk minimization (ERM) approach \bluetwo{in} I-FGSM.

In summary, existing attacks typically control only one of two relevant terms in Equation \ref{equ_framework}, neither of which is desirable nor sufficient to achieve satisfactory transferability. In contrast, our proposed attack accounts for $R_{\mathcal{S}}(\hat{\bm{x}})$ and $\mathcal{E}_{\text {trans}}$ jointly and properly, providing a more comprehensive guarantee on the transferability of AEs.

\textbf{Detailed comparison with RAP \cite{rap}}
The relationship between flatness and transferability \bluetwo{was} initially explored in our prior work, RAP. However, this relationship is only intuitively assumed through an illustration (see Figure 1(b) in the RAP paper) and empirically validated without a rigorous theoretical grounding. In this work, our newly derived bound in Equation \ref{equ_main_sharp} provides a theoretical interpretation of the relationship between flat regions with low loss in loss landscape and the transferability of the AE, offering solid theoretical \bluetwo{support for} the assumption of RAP.
Moreover, inspired by this theoretical insight, our proposed attack plan in this paper advances RAP from two perspectives. First, we highlight the importance of the marginalization of surrogate models and seek flat minima over enough samples from the model distribution. Second, we explicitly require that surrogate models be carefully selected to narrow the adversarial model discrepancy as \bluetwo{much} as possible, as sharpness by itself is insufficient to guarantee high transferability. From this point of view, RAP could be considered a degraded version of our new attack.

\section{Attack Algorithm}
\subsection{From the Bound to an Attack}
We now exploit the above theoretical results to derive a novel practical attack algorithm. We will show how our theoretical analyses help make algorithmic extensions to \bluetwo{the} original RAP. 

Inspired by Theorem \ref{theorem_main}, we first sample the surrogate model set $\mathcal{M}_\mathcal{S}$ from $P_\mathcal{S}$, where $P_\mathcal{S}$ is designed to control the instantiated adversarial model discrepancy $\mathrm{D}_{\phi}^{\hat{\mathcal{X}}_r}(P_\mathcal{T} \| P_\mathcal{S})$ with $P_\mathcal{T}$. Then we propose to minimize the target adversarial risk $R_\mathcal{T}(\hat{\bm{x}})$ by the following optimization problem:
\begin{equation}\label{eq_obj}
 \begin{aligned}
    \min_{\hat{\bm{x}} \in \hat{\mathcal{X}}}\max _{\|\bm{\epsilon}\|_p \leq \rho}\frac{1}{K} \sum_{\w_k \in \mathcal{M}_\mathcal{S}} {\ell}\left(f\left(\hat{\bm{x}}+\bm{\epsilon}, \w_k\right), y\right),
\end{aligned}    
\end{equation}
where $\ell$ is a surrogate loss to minimize the empirical surrogate adversarial risk. We generalize from $L_2$ norm to $L_p$ norm to make it adaptive to the popular constraints (\ie, $L_2$, $L_\infty$) on examples in adversarial attacks. As a result, Theorem \ref{theorem_main} results in a strategy comprising two components: collecting surrogate models guided by the instantiated adversarial model discrepancy term and seeking a flat minimum $\hat{\bm{x}}$ according to a min-max term. We will discuss them separately in the following sections.

\subsection{Narrow the Surrogate-Target Discrepancy}\label{sec: Narrow the surrogate-target discrepancy}

{
\bluetable
To make the attack power of an AE invariant across models, it is crucial to build a surrogate distribution $P_\mathcal{S}$ that narrows the adversarial model discrepancy $\mathrm{D}_{\phi}^{\hat{\mathcal{X}}_r}(P_\mathcal{T} \| P_\mathcal{S})$ with $P_\mathcal{T}$, thereby controlling the transferability gap. Although $P_{\mathcal T}$ is unknown and thus the discrepancy term cannot be optimized directly, its instantiations under popular $\phi$-divergences provide intuitive and actionable guidance on what properties $P_{\mathcal S}$ should satisfy.
 
Our algorithmic building of $P_{\mathcal S}$ is directly inspired by instantiating the adversarial model discrepancy $D_{\phi}^{\hat{\mathcal X}_r}(P_{\mathcal T}\|P_{\mathcal S})$ with TV, KL, and $\chi^2$ divergences in Corollaries \ref{lemma_tv},\ref{corollary:KL},\ref{corollary_chi2}, respectively. The corresponding Remarks \ref{remark:TV2},\ref{remark:KL2},\ref{remark:chi2_2} discuss in detail how these instantiations inspire practical principles for surrogate construction. Below we summarize key takeaways and refer interested readers to \textit{\bluetwo{Appendices}} \ref{app:inst} for full analyses. 
\begin{itemize}[leftmargin=15pt,itemsep=1pt,topsep=0pt]
\item \textbf{TV instantiation} emphasizes the mismatch in the \emph{mean} adversarial vulnerability pattern. Thus $P_{\mathcal S}$ should be constructed by combining multiple components whose mean vulnerability patterns are diverse rather than redundant, so that their aggregate behavior can smooth the risk from potentially unmatched surrogate choices.
\item \textbf{KL instantiation} penalizes the \emph{under-coverage} of vulnerability patterns that appear with non-negligible probability under $P_{\mathcal T}$. The aforementioned multi-component construction enables $P_{\mathcal S}$ to cover more heterogeneous vulnerability patterns than a single relatively concentrated component could, thus better reducing the risk that a plausible target vulnerability pattern is under-represented.
\item \textbf{$\chi^2$ instantiation} can be reduced by encouraging \emph{variance} of surrogate models' adversarial vulnerability at two levels via a variance decomposition. In addition to enlarging the variance across components as under TV and KL, it further motivates ensuring non-trivial variance among surrogate samples drawn from the same component.
\end{itemize}
Taken together, these three instantiations emphasize different risks and thus offer complementary guidance, \ie, a two-dimensional diversity requirement for $P_{\mathcal S}$, which we formalize as \emph{between-distribution} and \emph{within-distribution} diversity. By enforcing them simultaneously, we control the adversarial model discrepancy, thereby improving adversarial transferability.

\textbf{Between-distribution diversity}
We model $P_{\mathcal{S}}$ as a mixture of $I$ surrogate components $P_\mathcal S=\frac{1}{I}\sum_{i=1}^I P_{\mathcal S_i}$, with each component $P_{\mathcal{S}_i}$ being a distribution over model weights.
Between-distribution diversity requires each component in general to exhibit {diverse adversarial vulnerabilities} \wrt~potential AEs.}

Indeed, the adversarial vulnerability of DNNs is shaped by multiple factors, including model architectures and training objectives \cite{blackboxbench}. While input-transformation-based methods can enhance the surrogate model space from a base model by applying transformations to its input, they may also introduce substantial redundancy in model behaviors \wrt~adversarial vulnerability. 
Instead, we prefer to combine multiple posterior distributions over the model weights, each independently trained on different architectures and with different training strategies, as components composing our surrogate model distribution. By doing so, we aim at enriching model behaviors in $P_{\mathcal{S}}$ \wrt~adversarial vulnerability. In particular, we must carefully choose surrogate components so that their vulnerabilities differ significantly and thus contribute efficiently to the overall goal. Empirical analyses have investigated the varying adversarial vulnerabilities across different models. 
From the perspective of model architecture family, \cite{blackboxbench,VIT2CNN} \bluetwo{indicate} that the adversarial vulnerabilities of convolutional neural networks (convnets) and metaformers\footnote{Metaformer \cite{poolformer} is a general architecture abstracted from Transformers and their variants.} differ significantly.
From the perspective of training strategies, \cite{blackboxbench} observes that normally trained and adversarially trained models exhibit distinct types of adversarial vulnerabilities.
These insights suggest that, from an adversarial standpoint, diversity in $P_\mathcal{S}$ can be efficiently obtained by simultaneously including the distributions of models from four prototypical categories (which we call \textit{prototypes}): normal and adversarial versions of both convnet and metaformer. Model distributions across these prototypes exhibit significantly different vulnerabilities, making them ideal surrogate components.


\textbf{Within-distribution diversity} 
{
\bluetable
In practice, attackers need to sample surrogate models from each component to conduct their attacks, forming a finite surrogate set $\mathcal{M}_\mathcal{S}$ of size $K=In$, where $n$ i.i.d. surrogate models are sampled from each component $P_{\mathcal{S}_i}$, \ie, $\{{\w^i_j}\}_{j=1}^n\sim P_{\mathcal{S}_i}$. Within-distribution diversity requires diverse vulnerability patterns within each component and leads to model samples from the same component exhibiting diverse prediction behaviors.
}

As observed in \cite{fge}, the optima in the DNN loss surface are in fact connected, rather than isolated, forming a valley of low loss. This valley contains many high-performing and complementary models, which produce meaningfully different predictions, leading to a diverse variety of prediction behaviors. Therefore, samples from the distribution centered at the loss valley are preferred surrogate models. With constant learning rates, gathering the trajectory of weights traversed by SGD is approximately sampling from the distribution centered at the minimum of the loss \cite{stephan2017stochastic}. Finally, we propose to gather SGD proposals during training models from each of aforementioned four prototypes and use proposals as sampled surrogate models within each component. By doing so, we achieve diversity within each surrogate component, while also maintaining component-level diversity by the prototype mixture.


\subsection{Find a Flat Optimum from a Diverse Set of Surrogates}\label{sec:Find a Flat Optimum from a Diverse Set of Surrogates}
Even with the elaborately designed surrogate model set, simply crafting AEs with an ERM optimizer over these models \bluetwo{cannot} be strong enough. Based on the perspective of loss landscape flatness in Theorem \ref{theorem_main}, we theoretically demonstrate that optimizing the AE's flatness strengthens its transferability. 
{An intuitive interpretation is that when pursuing a flat minimum among diverse models, it is more likely to remain in flat areas when applied to unseen target models. As a result, a small shift in the target model’s loss landscape would not significantly increase the attack loss, making the AE less likely to fail.}
In this section, an optimization strategy, an upgraded version of the original RAP which is more compatible with a set of diverse surrogate models, is proposed to optimize flatness effectively and efficiently.

 \textbf{A general flatness-aware optimization} Original RAP solves the bi-level optimization problem as in Equation \ref{eq_obj},
 \begin{equation}\label{equ_rap_goal}
     \min_{\hat{\bm{x}} \in \hat{\mathcal{X}}}\max _{\|\bm{\epsilon}\|_\infty \leq \rho}\frac{1}{K} \sum_{\w_k \in \mathcal{M}_\mathcal{S}} {\ell}\left(f\left(\hat{\bm{x}}+\bm{\epsilon}, \w_k\right), y\right),
 \end{equation}
 by iteratively optimizing the inner maximization \bluetwo{problem} and the outer minimization problem on the surrogate set $\mathcal{M}_\mathcal{S}$. In particular, at each iteration, fixing AE $\hat{\bm{x}}$, the inner maximization optimizes reverse perturbation $\bm{\epsilon}$ via a $T$-step I-FGSM. At each step, $\bm{\epsilon}$ is updated as follows:
\begin{equation}\label{equ_rap}
    \bm{\epsilon} \leftarrow \bm{\epsilon}+\beta_{\bm{\epsilon}}\cdot \operatorname{sign}(\nabla_{\bm{\epsilon}}\frac{1}{K}\sum_{\w_k \in \mathcal{M}_\mathcal{S}}{\ell}(f(\hat{\bm{x}}+\bm{\epsilon}, \w_k), y)),
\end{equation}
where $|\mathcal{M}_\mathcal{S}|=K$, $\beta_{\bm{\epsilon}}$ is the inner step size and $\bm{\epsilon}$ is initialized \bluetwo{to} 0.
Then, fixing reverse perturbation $\bm{\epsilon}$, the outer minimization \bluetwo{updates} AE $\hat{\bm{x}}$ with the gradient calculated by minimizing the empirical surrogate adversarial risk \wrt~$\hat{\bm{x}}+\bm{\epsilon}$:

\begin{equation}\label{equ_rap_update}
\hat{\bm{x}} \leftarrow \Pi_{\gamma}[\hat{\bm{x}}-\beta_{\hat{\bm{x}}} \cdot \operatorname{sign}(\nabla_{\hat{\bm{x}}}\frac{1}{K}\sum_{\w_k \in \mathcal{M}_\mathcal{S}}{\ell}(f(\hat{\bm{x}}+\bm{\epsilon}, \w_k), y))],
\end{equation}
where $\beta_{\hat{\bm{x}}}$ is the outer step size, $\Pi_{\gamma}(\cdot)$ restricts current AE to be within \blue{$\hat{\mathcal{X}}$}, and $\hat{\bm{x}}$ is initialized \bluetwo{to} the benign image. Note that after optimizing the loss of reversely perturbed AE $\hat{\bm{x}}+\bm{\epsilon}$, we should come back to the center point $\hat{\bm{x}}$ to conduct this update.

 \textbf{Model-specific reverse perturbations} In Section \ref{sec: Narrow the surrogate-target discrepancy}, we enrich the surrogate model space to narrow the surrogate-target discrepancy. When optimizing the flatness over these diverse models, original RAP in practice computes a global reverse perturbation, \ie, the $\bm{\epsilon}$ is maximized on an average of per-model losses and shared over the whole surrogate set $\mathcal{M}_\mathcal{S}$. However, \bluetwo{given} the diversity \wrt~adversarial vulnerability existed in $\mathcal{M}_\mathcal{S}$, each surrogate has its own worst-case reverse perturbation on AE and their optimization paths may conflict \cite{svre}, directly optimizing a common reverse perturbation updated by fusing over a set of independent update directions will result in a weaker reverse perturbation than model-specific reverse perturbations. This motivates us to replace Equation \ref{equ_rap_goal} \bluetwo{with} calculating reverse perturbations of different models separately to improve the effectiveness of RAP:
\begin{equation}\label{equ_newobj}
\min_{\hat{\bm{x}} \in \hat{\mathcal{X}}} \frac{1}{K} \sum_{\w_k \in \mathcal{M}_\mathcal{S}} \max _{\|\bm{\epsilon}_k\|_\infty \leq \rho}{\ell}\left(f\left(\hat{\bm{x}}+\bm{\epsilon}_k, \w_k\right), y\right),
\end{equation}
where $\bm{\epsilon}_k$ is calculated on individual models:
\begin{equation}\label{equ_perturb}
    \bm{\epsilon}_k \leftarrow \bm{\epsilon}_k+\beta_{\bm{\epsilon}} \cdot \operatorname{sign}\left(\nabla_{\bm{\epsilon}}{\ell}\left(f\left(\hat{\bm{x}}+\bm{\epsilon}_k, \w_k\right), y\right)\right),\w_k \in \mathcal{M}_\mathcal{S}.
\end{equation}
We call $\bm{\epsilon}_k$ a model-\textbf{D}iversity-compatible \textbf{R}everse \textbf{A}dversarial \textbf{P}erturbation (DRAP). Once DRAP is obtained, the outer minimization \wrt~$\hat{\bm{x}}$ in Equation \ref{equ_newobj} is performed. Given that $\mathcal{M}_\mathcal{S}$ may contain a large number of surrogate models (cf. Theorem \ref{theorem_pac}), directly computing full gradients over the entire surrogate set at each iteration can be computationally inefficient. To address this, we adopt the longitudinal manner update \cite{ghostnet}, where $\hat{\bm{x}}$ is updated iteratively across models in $\mathcal{M}_\mathcal{S}$, one at a time. This reduces memory and \bluetwo{computational} overhead while maintaining the diversity-aware objective: 
\begin{equation}\label{eq: update_ae}
\hat{\bm{x}} \leftarrow \Pi_{\gamma}\left[\hat{\bm{x}}-\beta_{\hat{\bm{x}}} \cdot \operatorname{sign}\left(\nabla_{\hat{\bm{x}}}{\ell}\left(f\left(\hat{\bm{x}}+\bm{\epsilon}_k, \w_k\right), y\right)\right)\right],
\end{equation}
where each update step corresponds to a single surrogate model $\w_k \in \mathcal{M}_\mathcal{S}$. We alternate between generating $\bm{\epsilon}_k$ and updating $\hat{\bm{x}}$ across the model set, which effectively integrates model-specific reverse perturbations while ensuring scalability.

 \textbf{Improving \bluetwo{optimization} stability} In the first several iterations of generating the AE, solving the min-max problem in Equation \ref{equ_newobj} may hinder the AE \bluetwo{from} efficiently converging to the region of high attack performance \cite{rap}. Evidence in Section \ref{alb_sec: sharp} shows that this phenomenon in RAP also exists in DRAP. A \textit{late-start} strategy has been proposed by RAP that only solves outer minimization w.r.t unperturbed $\hat{\bm{x}}$ at the early stage, and then \bluetwo{starts} RAP to seek flatness and further boost transferability. We also utilize this strategy during our optimization.

In addition to the late-start strategy, considering the diversity in models' loss landscapes inherent in DRAP, a velocity vector is accumulated in the gradient across iterations \cite{momentum}, with each iteration observing distinct surrogates, to stabilize the optimization path. Evidence soon presented in {Section \ref{alb_sec: dis}} demonstrates that, although the idea of momentum is widely employed in previous attacks \cite{nisi,pifgsm,cwa}, our method can avoid the gradient overaccumulation which may hinder the attack \cite{blackboxbench,svre,eval} effectiveness and best benefits from it.

Complete pseudo-code of DRAP is shown in Algorithm \ref{mainalgo}.

\vspace{-7pt}
\begin{algorithm}\label{mainalgo}
	\caption{Model-\textbf{D}iversity-Compatible \textbf{R}everse \textbf{A}dversarial \textbf{P}erturbation (DRAP) Algorithm} 
            \begin{algorithmic}[1]
            \STATE \textbf{Require}: benign data $(\x, y)$, perturbation budget $\gamma$, surrogate model distributions $\left\{P_{\mathcal{S}_i}\right\}_{i=1}^{I}$, number of samples within one component $n$, late start iteration number $n_{LS}$, inner step size $\beta_{\bm{\epsilon}}$, inner iteration number $T$, outer step size $\beta_{\hat{\bm{x}}}$, decay factor $\mu$.
            \STATE Initialize $\hat{\bm{x}} \leftarrow \bm{x},\bm{m}\leftarrow \bm{0}$;
		\FOR {$j=0,...,n-1$}
                \FOR {$i=0,...,I-1$}
                \STATE Sample a surrogate model $\w_k$ from $P_{\mathcal{S}_i}$;
                    \IF{$j \geq n_{LS}$}
                    \STATE \# Inner maximization
                    \STATE Initialize $\bm{\epsilon}_k \leftarrow 0$;
                    \FOR {$t=0,...,T-1$}
                        \STATE Update $\bm{\epsilon}_k$ using Equation \ref{equ_perturb};
                    \ENDFOR
                    \ENDIF
                    \STATE \# Outer minimization
                    \STATE Calculate $\bm{g}=\nabla_{\hat{\bm{x}}} {\ell}\left(f\left(\hat{\bm{x}}+\bm{\epsilon}_k, \w_k\right), y\right);$
                    \STATE Update momentum by $\bm{m}=\mu \cdot \bm{m}+\frac{\bm{g}}{\|\bm{g}\|_1}$;
                    \STATE Update $\hat{\bm{x}}=\Pi_{\gamma}\left[\hat{\bm{x}}-\beta_{\hat{\bm{x}}} \cdot \operatorname{sign}\left(\bm{m}\right)\right]$;
                \ENDFOR
		\ENDFOR
            \RETURN $\hat{\bm{x}}$.
	\end{algorithmic} 
\end{algorithm}
\vspace{-7pt}

\section{Experimental Evaluation}
In this section, we conduct comprehensive evaluations to illustrate the soundness of DRAP. Specifically, our experiments are designed to explore the answers to the following questions:
\begin{enumerate}
    \item \textit{How does DRAP compare to previous ones when conducting untargeted and targeted attacks?}
    \item \textit{As a key property of transfer-based attacks, is DRAP scalable to be combined with input-transformation-based methods to further boost transferability?}
    \item \textit{Which aspect of DRAP's optimization bound, the sharpness penalty or model discrepancy penalty, is \bluetwo{more} important?}
\end{enumerate}

We conduct the evaluations on ImageNet \cite{imagenet} and CIFAR-10 \cite{cifar10}. For ImageNet, we follow previous works \cite{ssa, ghostnet, pgn, ti, rap, fia} and use the ImageNet-compatible dataset \footnote{\url{https://github.com/tensorflow/cleverhans/tree/master/examples/nips17_adversarial_competition/dataset}} in the NIPS 2017 adversarial competition, which contains 1,000 images with a resolution of $299 \times 299 \times 3$. For CIFAR-10, we conduct experiments on its test set with 10,000 images. In the following, we only consider ImageNet experiments. We put CIFAR-10 experiment results and its detailed experimental protocol in \textit{Appendix} \ref{app: cifar}.

\subsection{Main Results}\label{Sec: Main Results}

\begin{table*}[t]
\centering
\caption{{Untargeted attack success rates (\%,$\uparrow$) on ImageNet dataset.} The AEs are crafted from five surrogate models (ResNet-50, ConvNeXt-T, ViT-B, ResNet-50(AT)and XCiT-S(AT)), against 31 target models falling into four prototypes (normally and adversarially trained convnets and metaformers). \textbf{Bold} denotes the best results and \underline{underlined} denotes the second best results. \blue{More \bluetwo{comparisons} to model-tuning\bluetwo{-}based and feature-based attacks \bluetwo{are} provided in Table \ref{tab: main_continued} of Appendix \ref{app: compare_f_m}.}}
\label{tab: main}
\resizebox{\textwidth}{!}{%
\begin{threeparttable}
\begin{tabular}{@{}cc|cccccccccccccccccccc@{}}
\toprule
\multicolumn{2}{c|}{\textbf{Target Model Set}} & \begin{tabular}[c]{@{}c@{}}I-FGSM\\ \cite{ifgsm}\end{tabular} & \begin{tabular}[c]{@{}c@{}}DI2-FGSM\\ \cite{difgsm}\end{tabular} & \begin{tabular}[c]{@{}c@{}}SI-FGSM\\ \cite{nisi}\end{tabular} & \begin{tabular}[c]{@{}c@{}}Admix\\ \cite{admix}\end{tabular} & \begin{tabular}[c]{@{}c@{}}TI-FGSM\\ \cite{ti}\end{tabular} & \begin{tabular}[c]{@{}c@{}}SSA\\ \cite{ssa}\end{tabular} & \begin{tabular}[c]{@{}c@{}}SIA\\ \cite{sia}\end{tabular} & \begin{tabular}[c]{@{}c@{}}MI-FGSM\\ \cite{mifgsm}\end{tabular} & \begin{tabular}[c]{@{}c@{}}PI-FGSM\\ \cite{pifgsm}\end{tabular} & \begin{tabular}[c]{@{}c@{}}VT-FGSM\\ \cite{vt}\end{tabular} & \begin{tabular}[c]{@{}c@{}}PGN\\ \cite{pgn}\end{tabular} & \blue{\begin{tabular}[c]{@{}c@{}}TPA\\ \cite{tpa}\end{tabular}} & \blue{\begin{tabular}[c]{@{}c@{}}FEM\\ \cite{fem}\end{tabular}} & \blue{\begin{tabular}[c]{@{}c@{}}MEF\\ \cite{mef}\end{tabular}} & \begin{tabular}[c]{@{}c@{}}SVRE\\ \cite{svre}\end{tabular} & \blue{\begin{tabular}[c]{@{}c@{}}AdaEA\\ \cite{adaea}\end{tabular}} & \blue{\begin{tabular}[c]{@{}c@{}}SMER\\ \cite{smer}\end{tabular}} & \begin{tabular}[c]{@{}c@{}}CWA\\ \cite{cwa}\end{tabular} & \begin{tabular}[c]{@{}c@{}}RAP\\ \cite{rap}\end{tabular} & \textbf{DRAP} \\ \midrule
\multicolumn{1}{c|}{} & AlexNet\cite{alexnet} & 44.7 & 47.6 & 46.7 & 49.6 & 45.7 & 53.1 & 55.5 & 49.2 & 49.7 & 44.9 & 49.2 & 50.2 & 51.9 & 50.5 & 46.6 & 56.2 & 56.4 & {\ul 57.2} & 46.6 & \textbf{68.5} \\
\multicolumn{1}{c|}{} & VGG-16-BN\cite{vgg} & 52.7 & 66.4 & 66.3 & 81.1 & 57.4 & 75.5 & \textbf{95.6} & 68.1 & 71.8 & 58.3 & 81.7 & 62.7 & 65.2 & 64.6 & 57.9 & 61.7 & 63.4 & 66.7 & 54.4 & {\ul 84.1} \\
\multicolumn{1}{c|}{} & DenseNet-201\cite{densenet} & 40.3 & 56.6 & 57.9 & 71.9 & 46.5 & 61.0 & \textbf{90.8} & 59.5 & 62.2 & 47.1 & 74.4 & 54.6 & 58.7 & 57.3 & 51.0 & 54.4 & 57.8 & 59.7 & 44.5 & {\ul 81.5} \\
\multicolumn{1}{c|}{} & GoogLeNet\cite{googlenet} & 32.2 & 42.8 & 42.2 & 55.0 & 35.3 & 53.7 & {\ul 73.0} & 45.6 & 48.4 & 36.1 & 56.1 & 42.7 & 51.5 & 48.6 & 38.1 & 47.5 & 49.6 & 50.4 & 36.7 & \textbf{73.9} \\
\multicolumn{1}{c|}{} & ShuffleNetV2\cite{shufflenet} & 42.7 & 51.9 & 52.4 & 63.9 & 44.2 & 62.2 & {\ul 77.3} & 54.8 & 56.7 & 45.1 & 63.1 & 51.3 & 58.3 & 56.4 & 48.6 & 57.8 & 59.8 & 61.2 & 46.1 & \textbf{82.1} \\
\multicolumn{1}{c|}{} & MobileNetV2\cite{mobilenetv2} & 47.6 & 62.4 & 61.5 & 75.3 & 52.2 & 70.7 & \textbf{93.6} & 63.4 & 68.6 & 53.9 & 76.3 & 57.9 & 66.3 & 64.5 & 56.4 & 63.6 & 64.5 & 67.3 & 51.7 & {\ul 87.2} \\
\multicolumn{1}{c|}{} & MobileNetV3-L\cite{mobilenetv3} & 33.5 & 49.5 & 45.5 & 60.5 & 37.4 & 65.1 & {\ul 83.1} & 48.5 & 52.8 & 38.0 & 64.8 & 44.4 & 55.9 & 52.7 & 41.4 & 53.8 & 57.0 & 59.6 & 39.2 & \textbf{85.5} \\
\multicolumn{1}{c|}{} & MNASNet\cite{mnasnet} & 42.1 & 57.2 & 56.8 & 72.9 & 46.6 & 68.2 & \textbf{92.3} & 58.8 & 63.8 & 48.8 & 72.1 & 55.2 & 62.2 & 62.0 & 52.2 & 58.1 & 62.7 & 63.8 & 49.0 & {\ul 87.6} \\
\multicolumn{1}{c|}{} & EfficientNet\cite{efficientnet} & 31.5 & 46.8 & 40.7 & 49.6 & 35.0 & 56.2 & \textbf{75.9} & 44.7 & 46.6 & 35.1 & 55.8 & 42.2 & 48.1 & 48.2 & 38.4 & 46.0 & 52.7 & 52.7 & 36.8 & {\ul 74.9} \\
\multicolumn{1}{c|}{} & ConvNeXt-L\cite{convnet} & 36.3 & 50.2 & 45.7 & 66.5 & 36.8 & 68.3 & \textbf{91.4} & 58.0 & 59.1 & 42.7 & 67.8 & 50.8 & 35.1 & 41.5 & 50.9 & 38.5 & 60.8 & 57.6 & 46.7 & {\ul 77.4} \\
\multicolumn{1}{c|}{\multirow{-11}{*}{\textbf{\begin{tabular}[c]{@{}c@{}}ConvNet\\ Set\end{tabular}}}} & \cellcolor[HTML]{D9D9D9}\textit{\textbf{Average}} & \cellcolor[HTML]{D9D9D9}40.4 & \cellcolor[HTML]{D9D9D9}53.1 & \cellcolor[HTML]{D9D9D9}51.6 & \cellcolor[HTML]{D9D9D9}64.6 & \cellcolor[HTML]{D9D9D9}43.7 & \cellcolor[HTML]{D9D9D9}63.4 & \cellcolor[HTML]{D9D9D9}\textbf{82.9} & \cellcolor[HTML]{D9D9D9}55.1 & \cellcolor[HTML]{D9D9D9}58.0 & \cellcolor[HTML]{D9D9D9}45.0 & \cellcolor[HTML]{D9D9D9}66.1 & \cellcolor[HTML]{D9D9D9}51.2 & \cellcolor[HTML]{D9D9D9}55.3 & \cellcolor[HTML]{D9D9D9}54.6 & \cellcolor[HTML]{D9D9D9}48.2 & \cellcolor[HTML]{D9D9D9}53.8 & \cellcolor[HTML]{D9D9D9}58.5 & \cellcolor[HTML]{D9D9D9}59.6 & \cellcolor[HTML]{D9D9D9}45.2 & \cellcolor[HTML]{D9D9D9}{\ul 80.3} \\ \midrule
\multicolumn{1}{c|}{} & ViT-S\cite{vit} & 10.0 & 20.2 & 13.1 & 18.8 & 12.3 & 24.9 & \textbf{40.2} & 19.2 & 20.2 & 11.6 & 22.8 & 16.9 & 18.0 & 22.2 & 16.1 & 17.7 & 22.5 & 22.1 & 16.9 & {\ul 38.8} \\
\multicolumn{1}{c|}{} & DeiT-S\cite{deit} & 14.1 & 26.5 & 17.3 & 23.5 & 17.3 & 32.4 & {\ul 45.2} & 25.2 & 25.1 & 15.6 & 26.9 & 22.4 & 23.7 & 25.8 & 20.1 & 25.0 & 27.8 & 29.4 & 20.0 & \textbf{53.5} \\
\multicolumn{1}{c|}{} & PoolFormer-S\cite{poolformer} & 29.0 & 49.6 & 37.2 & 51.5 & 33.4 & 62.1 & \textbf{86.2} & 46.0 & 49.9 & 33.4 & 58.1 & 39.8 & 44.6 & 45.0 & 40.3 & 38.5 & 46.0 & 46.6 & 36.2 & {\ul 71.8} \\
\multicolumn{1}{c|}{} & TNT-S\cite{tnt} & 13.3 & 26.8 & 17.6 & 25.5 & 16.2 & 36.1 & \textbf{57.3} & 25.5 & 26.2 & 16.5 & 29.3 & 21.2 & 23.3 & 27.5 & 21.5 & 23.4 & 27.0 & 27.8 & 22.2 & {\ul 52.5} \\
\multicolumn{1}{c|}{} & Swin-S\cite{swin} & 8.8 & 19.8 & 11.9 & 17.9 & 11.0 & 26.5 & \textbf{42.9} & 18.1 & 17.6 & 11.0 & 20.7 & 14.5 & 15.2 & 17.9 & 14.0 & 14.8 & 16.6 & 18.4 & 15.4 & {\ul 28.5} \\
\multicolumn{1}{c|}{} & XCiT-S\cite{xcit} & 11.3 & 27.9 & 13.2 & 16.1 & 11.8 & 29.1 & \textbf{43.0} & 18.4 & 18.9 & 12.2 & 19.7 & 16.7 & 17.8 & 19.5 & 16.5 & 17.4 & 19.5 & 20.3 & 16.3 & {\ul 30.6} \\
\multicolumn{1}{c|}{} & CaiT-S\cite{cait} & 5.0 & 19.2 & 6.8 & 8.6 & 6.1 & 17.4 & \textbf{29.4} & 10.1 & 10.4 & 6.2 & 12.2 & 9.2 & 11.3 & 12.0 & 7.8 & 9.8 & 11.0 & 12.5 & 9.5 & {\ul 22.8} \\
\multicolumn{1}{c|}{\multirow{-8}{*}{\textbf{\begin{tabular}[c]{@{}c@{}}Metaformer\\ Set\end{tabular}}}} & \cellcolor[HTML]{D9D9D9}\textit{\textbf{Average}} & \cellcolor[HTML]{D9D9D9}13.1 & \cellcolor[HTML]{D9D9D9}27.1 & \cellcolor[HTML]{D9D9D9}16.7 & \cellcolor[HTML]{D9D9D9}23.1 & \cellcolor[HTML]{D9D9D9}15.4 & \cellcolor[HTML]{D9D9D9}32.6 & \cellcolor[HTML]{D9D9D9}\textbf{49.2} & \cellcolor[HTML]{D9D9D9}23.2 & \cellcolor[HTML]{D9D9D9}24.0 & \cellcolor[HTML]{D9D9D9}15.2 & \cellcolor[HTML]{D9D9D9}27.1 & \cellcolor[HTML]{D9D9D9}20.1 & \cellcolor[HTML]{D9D9D9}22.0 & \cellcolor[HTML]{D9D9D9}24.3 & \cellcolor[HTML]{D9D9D9}19.5 & \cellcolor[HTML]{D9D9D9}20.9 & \cellcolor[HTML]{D9D9D9}24.3 & \cellcolor[HTML]{D9D9D9}25.3 & \cellcolor[HTML]{D9D9D9}19.5 & \cellcolor[HTML]{D9D9D9}{\ul 42.6} \\ \midrule
\multicolumn{1}{c|}{} & RaWideResNet-101-2\cite{peng2023robust} & 17.0 & 17.6 & 17.3 & 17.4 & 17.1 & 19.5 & 17.7 & 19.0 & 18.7 & 17.1 & 17.8 & 19.1 & 18.5 & 18.1 & 17.8 & 25.0 & {\ul 25.0} & 24.2 & 16.3 & \textbf{26.8} \\
\multicolumn{1}{c|}{} & WideResNet-50-2\cite{salman2020adversarially} & 21.9 & 22.6 & 22.1 & 22.1 & 22.3 & 25.3 & 23.1 & 24.0 & 24.0 & 22.0 & 23.5 & 24.5 & 23.9 & 23.4 & 23.3 & 31.7 & 32.2 & {\ul 32.5} & 22.0 & \textbf{34.3} \\
\multicolumn{1}{c|}{} & ResNet-50\cite{wong2020fast} & 39.7 & 40.2 & 39.9 & 40.4 & 40.6 & 43.4 & 41.9 & 41.7 & 41.8 & 39.6 & 41.3 & 42.6 & 42.1 & 41.1 & 41.1 & 47.6 & {\ul 48.6} & 47.6 & 39.7 & \textbf{51.4} \\
\multicolumn{1}{c|}{} & ConvNeXt-L\cite{liu2024comprehensive} & 10.4 & 10.8 & 10.5 & 10.8 & 10.8 & 12.5 & 11.3 & 11.7 & 11.7 & 10.4 & 11.5 & 11.6 & 11.4 & 11.2 & 11.2 & 16.4 & {\ul 16.5} & 16.4 & 10.7 & \textbf{17.6} \\
\multicolumn{1}{c|}{} & ConvNeXt-B\cite{liu2024comprehensive} & 10.6 & 10.9 & 10.4 & 11.0 & 11.1 & 13.3 & 11.2 & 12.3 & 12.4 & 10.7 & 12.0 & 12.6 & 12.1 & 11.9 & 11.3 & 17.7 & 17.3 & {\ul 17.8} & 11.1 & \textbf{18.8} \\
\multicolumn{1}{c|}{} & ConvNeXt-L-ConvStem\cite{singh2024revisiting} & 10.2 & 10.6 & 9.9 & 10.7 & 10.3 & 12.5 & 10.8 & 11.2 & 11.1 & 10.1 & 11.0 & 11.3 & 10.7 & 11.0 & 10.7 & {\ul 16.1} & 15.3 & 15.9 & 10.5 & \textbf{16.8} \\
\multicolumn{1}{c|}{} & ConvNeXt-B-ConvStem\cite{singh2024revisiting} & 11.7 & 11.8 & 11.7 & 12.0 & 12.2 & 14.4 & 12.8 & 12.9 & 12.8 & 11.4 & 12.4 & 13.2 & 12.3 & 12.7 & 12.2 & {\ul 18.7} & {\ul 18.7} & {\ul 18.7} & 11.2 & \textbf{19.8} \\
\multicolumn{1}{c|}{} & Inc-v3$_{\textit{ens3}}$\cite{tf} & 9.7 & 13.7 & 10.3 & 10.9 & 9.8 & {\ul 19.1} & 17.8 & 12.7 & 13.8 & 10.1 & 13.0 & 12.2 & 13.9 & 13.5 & 11.6 & 16.6 & 16.1 & 17.1 & 11.6 & \textbf{24.2} \\
\multicolumn{1}{c|}{} & Inc-v3$_{\textit{ens4}}$\cite{tf} & 11.4 & 15.7 & 12.8 & 14.0 & 11.6 & {\ul 21.5} & 20.0 & 14.8 & 14.9 & 11.0 & 14.9 & 13.4 & 17.6 & 16.8 & 12.0 & 18.3 & 19.4 & 19.0 & 13.3 & \textbf{26.4} \\
\multicolumn{1}{c|}{} & IncRes-v2$_{\textit{ens}}$\cite{tf} & 3.6 & 6.4 & 4.6 & 5.5 & 3.9 & {\ul 10.4} & 8.7 & 6.1 & 6.5 & 4.0 & 7.3 & 5.7 & 7.2 & 7.1 & 4.5 & 8.2 & 9.1 & 9.0 & 5.6 & \textbf{13.2} \\
\multicolumn{1}{c|}{\multirow{-11}{*}{\textbf{\begin{tabular}[c]{@{}c@{}}ConvNet(AT)\\ Set\end{tabular}}}} & \cellcolor[HTML]{D9D9D9}\textit{\textbf{Average}} & \cellcolor[HTML]{D9D9D9}14.6 & \cellcolor[HTML]{D9D9D9}16.0 & \cellcolor[HTML]{D9D9D9}15.0 & \cellcolor[HTML]{D9D9D9}15.5 & \cellcolor[HTML]{D9D9D9}15.0 & \cellcolor[HTML]{D9D9D9}19.2 & \cellcolor[HTML]{D9D9D9}17.5 & \cellcolor[HTML]{D9D9D9}16.6 & \cellcolor[HTML]{D9D9D9}16.8 & \cellcolor[HTML]{D9D9D9}14.6 & \cellcolor[HTML]{D9D9D9}16.5 & \cellcolor[HTML]{D9D9D9}16.6 & \cellcolor[HTML]{D9D9D9}17.0 & \cellcolor[HTML]{D9D9D9}16.7 & \cellcolor[HTML]{D9D9D9}15.6 & \cellcolor[HTML]{D9D9D9}21.6 & \cellcolor[HTML]{D9D9D9}{\ul 21.8} & \cellcolor[HTML]{D9D9D9}{\ul 21.8} & \cellcolor[HTML]{D9D9D9}15.2 & \cellcolor[HTML]{D9D9D9}\textbf{24.9} \\ \midrule
\multicolumn{1}{c|}{} & Swin-B\cite{liu2024comprehensive} & 11.5 & 12.3 & 11.9 & 11.7 & 12.0 & 13.8 & 12.1 & 12.8 & 12.8 & 11.7 & 12.4 & 13.0 & 12.6 & 12.4 & 12.5 & {\ul 17.8} & 17.0 & 17.3 & 11.9 & \textbf{18.6} \\
\multicolumn{1}{c|}{} & Swin-L\cite{liu2024comprehensive} & 9.8 & 10.0 & 10.0 & 9.9 & 9.9 & 11.5 & 10.8 & 10.6 & 10.5 & 10.0 & 10.7 & 11.0 & 10.6 & 10.5 & 10.7 & {\ul 15.0} & 14.6 & 14.8 & 10.1 & \textbf{15.6} \\
\multicolumn{1}{c|}{} & XCiT-L\cite{debenedetti2023light} & 14.9 & 15.5 & 15.0 & 15.3 & 15.2 & 18.4 & 16.0 & 17.2 & 17.2 & 14.9 & 17.2 & 17.3 & 17.3 & 16.2 & 16.0 & {\ul 28.0} & 26.5 & 26.8 & 14.4 & \textbf{28.0} \\
\multicolumn{1}{c|}{} & ViT-B-ConvStem\cite{singh2024revisiting} & 11.2 & 11.6 & 11.5 & 11.6 & 11.3 & 12.9 & 12.1 & 12.4 & 12.2 & 11.4 & 12.4 & 12.5 & 12.2 & 12.1 & 11.8 & {\ul 17.9} & 17.7 & 17.8 & 11.7 & \textbf{18.6} \\
\multicolumn{1}{c|}{\multirow{-5}{*}{\textbf{\begin{tabular}[c]{@{}c@{}}Metaformer(AT)\\ Set\end{tabular}}}} & \cellcolor[HTML]{D9D9D9}\textit{\textbf{Average}} & \cellcolor[HTML]{D9D9D9}11.9 & \cellcolor[HTML]{D9D9D9}12.4 & \cellcolor[HTML]{D9D9D9}12.1 & \cellcolor[HTML]{D9D9D9}12.1 & \cellcolor[HTML]{D9D9D9}12.1 & \cellcolor[HTML]{D9D9D9}14.2 & \cellcolor[HTML]{D9D9D9}12.8 & \cellcolor[HTML]{D9D9D9}13.3 & \cellcolor[HTML]{D9D9D9}13.2 & \cellcolor[HTML]{D9D9D9}12.0 & \cellcolor[HTML]{D9D9D9}13.2 & \cellcolor[HTML]{D9D9D9}13.4 & \cellcolor[HTML]{D9D9D9}13.2 & \cellcolor[HTML]{D9D9D9}12.8 & \cellcolor[HTML]{D9D9D9}12.8 & \cellcolor[HTML]{D9D9D9}{\ul 19.7} & \cellcolor[HTML]{D9D9D9}19.0 & \cellcolor[HTML]{D9D9D9}19.2 & \cellcolor[HTML]{D9D9D9}12.0 & \cellcolor[HTML]{D9D9D9}\textbf{20.2} \\ \midrule
\rowcolor[HTML]{D9D9D9} 
\multicolumn{2}{c|}{\cellcolor[HTML]{D9D9D9}\textit{\textbf{Overall Average}}} & 22.2 & 30.0 & 26.8 & 32.6 & 24.0 & 35.8 & {\ul 45.1} & 30.1 & 31.2 & 24.2 & 34.5 & 28.2 & 30.0 & 30.1 & 26.6 & 31.6 & 33.8 & 34.5 & 25.4 & \textbf{46.2} \\ \bottomrule
\end{tabular}
\end{threeparttable}}
\vspace{-8pt}
\end{table*}

\textbf{Baselines} To answer question 1), \blue{we compare DRAP with baselines from four categories. Input-transformation-based attacks include} DI2-FGSM\cite{difgsm}, SI-FGSM\cite{nisi}, Admix\cite{admix}, TI-FGSM\cite{ti}, SSA\cite{ssa}, SIA\cite{sia}. Optimization-based attacks include MI-FGSM\cite{mifgsm}, PI-FGSM\cite{pifgsm}, VT-FGSM\cite{vt}, RAP\cite{rap}, PGN\cite{pgn}, CWA\cite{cwa}, \blue{TPA\cite{tpa}, FEM\cite{fem}, MFE\cite{mef}. Model-based attacks include tuning-focusing methods GhostNet\cite{ghostnet} and Bayesian attack\cite{bayesian}, and fusing-focusing methods} SVRE\cite{svre}, \blue{AdaEA\cite{adaea} and SMER\cite{smer}. Feature-based attacks include FIA\cite{fia} and NAA\cite{naa}.}

\textbf{Models} 
For surrogate models, \blue{we select surrogate architectures from four prototypes according to training strategy (normally vs. adversarially trained) and model family (convnet vs. metaformer)}: ResNet-50\cite{resnet} and ConvNeXt-T \cite{convnet} from normally trained convnets, ViT-B \cite{vit} from normally trained metaformers, ResNet-50(AT)\cite{resnetat} from adversarially trained convnets and XCiT-S(AT) \cite{debenedetti2023light} from adversarially trained metaformers. ConvNeXt-T is included alongside ResNet-50 because it's a special convnet which follows designs popularized by vision transformers. For DRAP, \blue{these five architectures lead to $I=5$ surrogate components. Within each component,} model samples are gathered as proposals at every epoch during fine-tuning the pretrained model of the corresponding surrogate architecture, which are optimized using their respective training recipes over $n=40$ additional epochs. \blue{In total, this yields $K=I \times n$ surrogate models.}
In order to get more diverse samples, we fine-tune the five pretrained models with relatively larger constant learning rates, specifically 0.05, 0.001, 0.05, 0.5, and 0.001, respectively, for ResNet-50, ConvNeXt-T, ViT-B, ResNet-50(AT), and XCiT-S(AT), while without significantly degrading their clean accuracy. 
\blue{As shown in Algorithm \ref{mainalgo}, DRAP sequentially draws samples from each of the $I$ surrogate components in turn and updates the AE on one surrogate model at a time until all $K$ models have been used, thereby combining within‑distribution and between‑distribution diversity.}
For compared methods, AEs are crafted on the five pretrained models of these five surrogate architectures \blue{by averaging the logits proposed in \cite{mifgsm} standardly following \cite{lpm, difgsm, ti, pgn}}. 
To evaluate the transferability of AEs, we collect 31 target models to ensure comprehensive coverage of diverse model architectures from the four prototypical categories, abbreviated as ConvNet Set, Metaformer Set, ConvNet(AT) Set, and Metaformer(AT) Set, as shown in \bluetwo{Table} \ref{tab: main}.

\textbf{Implementation Details} For the untargeted attack scenario, the adversarial perturbation is bounded by $\gamma=4/255$ with step size $\beta_{\hat{\bm{x}}}=2/255$ for all methods. For the targeted attack scenario, the adversarial perturbation is bounded by $\gamma=16/255$ with step size $\beta_{\hat{\bm{x}}}=8/255$ for all methods.
We set the iteration number \blue{$n_{iter}$} of MI, PI and CWA as 10 when conducting untargeted attacks, as suggested in their original papers, because their performance deteriorates for additional rounds. For RAP, the iteration number \blue{$n_{iter}$} is 400. Otherwise, the iteration number \blue{$n_{iter}$} is set as 200. 
For the hyper-parameters of DRAP, we set the number of samples within one surrogate component $n=40$, inner iteration number $T=5$, late start iteration number $n_{LS}=5$, inner step size $\beta_{\bm{\epsilon}}=0.1/255$, decay factor $\mu=1$. Note that the number of iterations for updating AE in DRAP is the same as others, as $n\times I=200$. For compared methods, we follow the protocol in \href{https://github.com/SCLBD/blackboxbench}{BlackboxBench} benchmark.

\textbf{Results of Untargeted Attacks} We first summarize the untargeted attack results on ImageNet dataset against convnet set, metaformer set, adversarially trained convnet set and adversarially trained metaformer set, as shown in Table \ref{tab: main}. DRAP achieves a substantial improvement in the average attack success rate across all target models compared to \blue{baselines from four categories of transfer-based attack.}
Taking a closer look at the comparison results, we \bluetwo{find} that, equipped with the same surrogate models, the state-of-the-art attack SIA is competitive on the relatively easier-to-attack normally trained target model sets. However, its performance is unsatisfactory when attacking the two adversarially trained target model sets than DRAP. DRAP provides a larger performance gain on attacking models with the defense mechanism while maintaining an acceptable performance on normal models, striking the balance among the whole target model sets. These results suggest that striving for flatness among all surrogate models \bluetwo{while simultaneously} considering model diversity could provide a strong guarantee on the transferability of AEs, regardless of the robustness of the target model sets.
\blue{AdaEA, SMER and }CWA are also shown as promising prior methods which could effectively utilize all diverse surrogate models simultaneously. Taking CWA as an example, by attacking surrogates' common weakness and optimizing the flatness, CWA \bluetwo{achieves} improved attack performance on the challenging adversarially trained model sets. However, it fails to explicitly address the surrogate-target model gap. Furthermore, its use of a universal perturbation across the model ensemble may hinder the optimization of flatness across diverse surrogates, resulting in an overall lower attack success rate compared to DRAP. In {Section \ref{alb_sec: sharp}} Q1-2, we further explore various solvers for optimizing flatness among diverse models. To sum up, we view the consistency with which DRAP outperforms the best prior methods, which change across different model sets, as a major advantage of the proposed method.
Notably, in this experiment, the unseen target models are drawn from the same prototypical model sets as the surrogate models. In the ablative study in {Section \ref{alb_sec: dis}} Q2-2, we consider stricter attack scenarios where the surrogate-target model shifts are more prominent to further evaluate the effectiveness of DRAP.

\textbf{Results of Targeted Attacks} The targeted attack results of baseline attacks and DRAP are shown in Table \ref{tab:main_t}. At first glance, it is evident that targeted attacks pose a more challenging scenario, particularly on the two model sets equipped with defense mechanisms. \blue{This is because, in the targeted setting, a successful AE must lie in a typically small target-class region shared across models, and this feasible region can be further constrained by defense models, making targeted attacks much more sensitive to surrogate–target shift than untargeted attacks.} Despite using a looser perturbation constraint, most methods \blue{without explicit mechanisms to mitigate the shift} will fail to induce even a single misclassification as the target label on the defense models. Among the prior methods, SIA demonstrates competitive performance in the targeted setting. For instance, it achieves relatively larger improvements on the two normal model sets, with success rates of 62.0\% and 41.5\%, respectively. \blue{AdaEA, SMER, and} CWA are the only prior methods that report success on the challenging defense model sets. \blue{However, by encouraging AEs to lie in flat low-loss regions of the loss landscapes across diverse surrogates, DRAP makes the optimized AEs more likely to also reside in flat low-loss regions on unseen targets that resist model shifts, leading to the best attack success rate by a significant margin across various target models. The results indicate that DRAP is also more effective under the targeted attack scenario, highlighting its strong robustness to model shift.}

\begin{table*}[t]
\centering
\caption{{Targeted attack success rates (\%,$\uparrow$) on ImageNet dataset.} The AEs are crafted from five surrogate models (ResNet-50, ConvNeXt-T, ViT-B, ResNet-50(AT) and XCiT-S(AT)), against 31 target models falling into four prototypes (normally and adversarially trained convnets and metaformers). The results are averaged on each model \bluetwo{set}. Full results broken down into each models are shown in {\textit{Appendix} \ref{app: full}}. \textbf{Bold} denotes the best results and \underline{underlined} denotes the second best results.}
\label{tab:main_t}
\resizebox{\textwidth}{!}{%
\begin{tabular}{@{}c|cccccccccccccccccccc@{}}
\toprule
\textbf{Target Model Set} & \begin{tabular}[c]{@{}c@{}}I-FGSM\\ \cite{ifgsm}\end{tabular} & \begin{tabular}[c]{@{}c@{}}DI2-FGSM\\ \cite{difgsm}\end{tabular} & \begin{tabular}[c]{@{}c@{}}SI-FGSM\\ \cite{nisi}\end{tabular} & \begin{tabular}[c]{@{}c@{}}Admix\\ \cite{admix}\end{tabular} & \begin{tabular}[c]{@{}c@{}}TI-FGSM\\ \cite{ti}\end{tabular} & \begin{tabular}[c]{@{}c@{}}SSA\\ \cite{ssa}\end{tabular} & \begin{tabular}[c]{@{}c@{}}SIA\\ \cite{sia}\end{tabular} & \begin{tabular}[c]{@{}c@{}}MI-FGSM\\ \cite{mifgsm}\end{tabular} & \begin{tabular}[c]{@{}c@{}}PI-FGSM\\ \cite{pifgsm}\end{tabular} & \begin{tabular}[c]{@{}c@{}}VT-FGSM\\ \cite{vt}\end{tabular} & \begin{tabular}[c]{@{}c@{}}PGN\\ \cite{pgn}\end{tabular} & \blue{\begin{tabular}[c]{@{}c@{}}TPA\\ \cite{tpa}\end{tabular}} & \blue{\begin{tabular}[c]{@{}c@{}}FEM\\ \cite{fem}\end{tabular}} & \blue{\begin{tabular}[c]{@{}c@{}}MEF\\ \cite{mef}\end{tabular}} & \begin{tabular}[c]{@{}c@{}}SVRE\\ \cite{svre}\end{tabular} & \blue{\begin{tabular}[c]{@{}c@{}}AdaEA\\ \cite{adaea}\end{tabular}} & \blue{\begin{tabular}[c]{@{}c@{}}SMER\\ \cite{smer}\end{tabular}} & \begin{tabular}[c]{@{}c@{}}CWA\\ \cite{cwa}\end{tabular} & \begin{tabular}[c]{@{}c@{}}RAP\\ \cite{rap}\end{tabular} & \textbf{DRAP} \\ \midrule
\textbf{ConvNet Set} & 5.4 & 22.0 & 11.6 & 11.3 & 8.2 & 29.8 & {\ul 62.0} & 5.3 & 9.7 & 6.9 & 37.4 & 1.5 & 6.3 & 41.0 & 18.5 & 34.6 & 21.3 & 36.0 & 14.5 & \textbf{77.6} \\
\textbf{Metaformer Set} & 0.5 & 11.6 & 1.7 & 1.6 & 1.1 & 16.5 & {\ul 41.5} & 1.0 & 1.8 & 1.1 & 13.3 & 0.2 & 1.4 & 20.6 & 10.5 & 18.9 & 12.7 & 21.9 & 3.1 & \textbf{56.4} \\
\textbf{ConvNet(AT) Set} & 0.0 & 0.0 & 0.0 & 0.0 & 0.0 & 0.3 & 0.1 & 0.0 & 0.1 & 0.0 & 0.1 & 0.0 & 0.0 & 0.8 & 0.1 & {\ul 13.3} & 3.5 & 6.4 & 0.1 & \textbf{13.4} \\
\textbf{Metaformer(AT) Set} & 0.0 & 0.0 & 0.0 & 0.0 & 0.0 & 0.1 & 0.0 & 0.0 & 0.0 & 0.0 & 0.1 & 0.0 & 0.0 & 0.1 & 0.0 & \textbf{11.4} & 3.4 & {\ul 7.5} & 0.2 & \textbf{11.4} \\ \midrule
\rowcolor[HTML]{D9D9D9} 
\textit{\textbf{Overall Average}} & 1.8 & 9.7 & 4.1 & 4.0 & 2.9 & 13.4 & {\ul 29.4} & 1.9 & 3.6 & 2.5 & 15.1 & 0.5 & 2.4 & 18.1 & 8.4 & 21.2 & 11.3 & 19.5 & 5.4 & \textbf{43.5} \\ \bottomrule
\end{tabular}
}
\vspace{-10pt}
\end{table*}

\subsection{Composition with Input-Transformation-Based Attacks}
\bluetwo{Besides} improving transferability from an optimization perspective as considered in DRAP, another related panoply of methods \bluetwo{introduces} randomness into the input via various transformations. Prior research has demonstrated that combining these two perspectives could achieve state-of-the-art transferability. As the outer minimization with $\hat{\bm{x}}$ in DRAP, \ie, Equation \ref{eq: update_ae}, could be solved by any off-the-shelf strategies, including the input-transformation-based methods, our method could also be seamlessly combined with them. To answer question 2), we explore the behavior of DRAP, as well as some well-known optimization-based attacks such as MI-FGSM\cite{mifgsm}, PI-FGSM\cite{pifgsm}, VT\cite{vt}, our original RAP\cite{rap}, PGN\cite{pgn} and CWA\cite{cwa}, when combined with input-transformation-based attacks, namely, DI2-FGSM\cite{difgsm}, TI-FGSM\cite{ti}, Admix\cite{admix} and SIA\cite{sia}. 
We omit the combination of PGN and SIA due to the heavy computational demands. The experimental protocol follows the untargeted setting in Section \ref{Sec: Main Results}. 
The resultant composite attack performance is shown in Table \ref{tab:composite}. 
As can be observed, for both convnets and metaformers and for both normally trained and adversarially trained models, combining DRAP with existing input-transformation-based attacks can significantly improve the base version, leading to a new state-of-the-art attack performance. For example, SIA achieves a competitive average attack success rate of 45.1\% (cf. Table \ref{tab: main}), while integrating with DRAP further improves it by a clear margin of 6.0\%. These remarkable improvements validate the scalability of our method when combined with others to further boost adversarial transferability.
Additionally, we view the consistency with which the extensions of DRAP outperform those of prior optimization-based methods, confirming the superiority of DRAP.

\begin{table}[t]
\centering
\caption{Attack success rates (\%, $\uparrow$) of MI, PI, VT, RAP, PGN, CWA and DRAP, when it is integrated with DI, TI, Admix and SIA, respectively. The indentation denotes combination. The results are averaged on each model \bluetwo{set}.}
\label{tab:composite}
\resizebox{0.97\columnwidth}{!}{%
\begin{tabular}{@{}l|cccc|c@{}}
\toprule
\multicolumn{1}{c|}{\textbf{Attack}} & \textbf{\begin{tabular}[c]{@{}c@{}}ConvNet\\ Set\end{tabular}} & \textbf{\begin{tabular}[c]{@{}c@{}}Metaformer\\ Set\end{tabular}} & \textbf{\begin{tabular}[c]{@{}c@{}}ConvNet\\(AT) Set\end{tabular}} & \textbf{\begin{tabular}[c]{@{}c@{}}Metaformer\\(AT) Set\end{tabular}} & \textit{\textbf{\begin{tabular}[c]{@{}c@{}}Overall\\ Average\end{tabular}}} \\ \midrule 
DI2-FGSM & 53.1 & 27.1 & 16.0 & 12.4 & 30.0 
\\
\quad + MI & 75.0 & 50.6 & 18.9 & 13.7 & 43.5 
\\
\quad + PI & 67.1 & 40.0 & 18.5 & 13.7 & 38.4 
\\
\quad + VT & 56.6 & 31.9 & 16.1 & 12.4 & 32.3 
\\
\quad + RAP & 51.8 & 34.0 & 17.5 & 12.5 & 31.6 
\\
\quad + PGN & 65.6 & 27.2 & 16.9 & 13.0 & 34.4 
\\
\quad + CWA & 67.4 & 39.6 & 23.7 & 19.5 & 40.8 
\\
\textbf{\quad + DRAP} & \textbf{83.2} & \textbf{56.6} & \textbf{27.7} & \textbf{20.6} & \textbf{51.2} 
\\ \midrule
TI-FGSM & 43.7 & 15.4 & 15.0 & 12.1 & 24.0 
\\
\quad + MI & 57.8 & 26.5 & 17.6 & 14.1 & 32.1 
\\
\quad + PI & 60.2 & 27.4 & 17.6 & 13.9 & 33.1 
\\
\quad + VT & 48.1 & 17.8 & 15.1 & 12.2 & 26.0 
\\
\quad + RAP & 46.1 & 21.7 & 15.6 & 12.2 & 26.4 
\\
\quad + PGN & 66.4 & 29.7 & 17.7 & 13.6 & 35.6 
\\
\quad + CWA & 60.6 & 26.8 & 22.5 & 19.3 & 35.3 
\\
\textbf{\quad + DRAP} & \textbf{79.2} & \textbf{44.4} & \textbf{25.6} & \textbf{20.2} & \textbf{46.5 }
\\ \midrule
Admix & 64.6 & 23.1 & 15.5 & 12.1 & 32.6 
\\
\quad + MI & 66.4 & 29.2 & 17.3 & 13.5 & 35.4 
\\
\quad + PI & 64.0 & 27.1 & 17.2 & 13.4 & 34.0 
\\
\quad + VT & 58.0 & 21.4 & 15.1 & 12.3 & 30.0 
\\
\quad + RAP & 56.7 & 26.2 & 16.4 & 12.6 & 31.1 
\\
\quad + PGN & 65.3 & 26.9 & 16.5 & 12.7 & 34.1 
\\
\quad + CWA & 64.0 & 25.7 & 22.0 & 19.1 & 36.0 
\\
\textbf{\quad + DRAP} & \textbf{82.4} & \textbf{43.2} & \textbf{24.5} & \textbf{19.7} & \textbf{46.8 }
\\ \midrule
SIA & 82.9 & 49.2 & 17.5 & 12.8 & 45.1 
\\
\quad + MI & 83.0 & 55.5 & 20.4 & 13.9 & 47.7 
\\
\quad + PI & 82.8 & 52.0 & 20.1 & 13.8 & 46.7 
\\
\quad + VT & 83.1 & 51.5 & 18.0 & 12.8 & 45.9 
\\
\quad + RAP & 72.5 & 45.5 & 19.1 & 13.0 & 41.5 
\\
\quad + PGN & - & - & - & - & - 
\\
\quad + CWA & 83.4 & 52.0 & 23.8 & 17.9 & 48.7 
\\
\textbf{\quad + DRAP} & \textbf{86.9} & \textbf{55.3} & \textbf{25.6} & \textbf{18.2} & \textbf{51.1} \\ \bottomrule
\end{tabular}%
}
\vspace{-10pt}
\end{table}

\subsection{Ablative Study}
From the results above, we conclude that DRAP inspired from the theoretical bound could learn an AE with strong transferability toward target models. To answer question 3), we need to gain a deeper insight into the rationale behind its superior attack performance. 
In this subsection, we disentangle the two distinct optimization signals within the bound: a sharpness penalty for pursuing a flat local minimum $\ell_{sharp}=\max _{\|\bm{\epsilon}\|_\infty \leq \rho}R_{\hat{\mathcal{S}}}(\hat{\bm{x}}+\bm{\epsilon})-R_{\hat{\mathcal{S}}}(\hat{\bm{x}})$ and a model discrepancy penalty for narrowing the surrogate-target shift $\ell_{dis}=\mathrm{D}_\phi^{\hat{\mathcal{X}}_r}\left({P_\mathcal{T}} \| P_{\mathcal{S}}\right)$.
We conduct ablative studies to explore the impact of each aspect of DRAP: first to determine the importance of the sharpness penalty term in our bound and the effectiveness of \bluetwo{the} proposed model-diversity-compatible optimization algorithm (cf. Algorithm \ref{mainalgo}), second to determine the importance of the model discrepancy penalty term in our bound and the effectiveness of the strategy to choose surrogate models. We evaluate these ablations on ImageNet using the same untargeted experimental protocol as in Section \ref{Sec: Main Results}.

\subsubsection{On the Sharpness Penalty}\label{alb_sec: sharp}

\begin{table}[t]
\centering
\caption{Ablating the sharpness penalty term ( - $\ell_{sharp}$) and late start strategy ( - late start) from DRAP's combinations with I-FGSM, TI-FGSM, DI2-FGSM and Admix. The results are averaged on each model \bluetwo{set}. \textbf{Bold} denotes the best results and \underline{underlined} denotes the second best results.}
\label{table:alb_sharp}
\resizebox{\columnwidth}{!}{%
\begin{tabular}{@{}l|cccc|c@{}}
\toprule
\multicolumn{1}{c|}{DRAP} & \textbf{\begin{tabular}[c]{@{}c@{}}ConvNet\\ Set\end{tabular}} & \textbf{\begin{tabular}[c]{@{}c@{}}Metaformer\\ Set\end{tabular}} & \textbf{\begin{tabular}[c]{@{}c@{}}ConvNet\\ (AT) Set\end{tabular}} & \textbf{\begin{tabular}[c]{@{}c@{}}Metaformer\\ (AT) Set\end{tabular}} & \textit{\textbf{\begin{tabular}[c]{@{}c@{}}Overall\\ Average\end{tabular}}} \\ \midrule
+ I-FGSM & \textbf{80.2} & \textbf{42.6} & \textbf{24.9} & {\ul 20.2} & \textbf{46.1} \\
\quad - late start & {\ul 78.8} & {\ul 42.2} & \textbf{24.9} & \textbf{20.3} & {\ul 45.6} \\
\quad - $\ell_{sharp}$ & 77.7 & 36.3 & 24.1 & 20.0 & 43.6 \\ \midrule
+ TI-FGSM & \textbf{79.2} & \textbf{44.4} & {\ul 25.6} & \textbf{20.2} & \textbf{46.5} \\
\quad - late start & {\ul 78.1} & {\ul 43.2} & \textbf{25.8} & \textbf{20.2} & {\ul 45.9} \\
\quad - $\ell_{sharp}$ & 77.8 & 38.5 & 24.8 & {\ul 20} & 44.3 \\ \midrule
+ DI2-FGSM & \textbf{83.2} & \textbf{56.6} & {\ul 27.7} & {\ul 20.6} & \textbf{51.2} \\
\quad - late start & 81.3 & {\ul 55.0} & \textbf{28.0} & \textbf{20.7} & {\ul 50.3} \\
\quad - $\ell_{sharp}$ & {\ul 82.7} & 53.2 & 27.1 & 20.5 & 50.1 \\ \midrule
+ Admix & \textbf{82.4} & \textbf{43.2} & \textbf{24.5} & \textbf{19.7} & \textbf{46.8} \\
\quad - late start & 81.4 & {\ul 40.5} & \textbf{24.5} & {\ul 19.6} & {\ul 45.8} \\
\quad - $\ell_{sharp}$ & {\ul 81.7} & 40.1 & {\ul 24.0} & \textbf{19.7} & 45.7 \\ \bottomrule
\end{tabular}%
}
\vspace{-10pt}
\end{table}

\textbf{Q1-1: Is the flatness beneficial for boosting the transferability?} 
First we analyze the importance of optimizing flatness in boosting transferability. We formalize this study as ablating the sharpness penalty term $\ell_{sharp}$ from our optimization objective and evaluating the ablated objective by reporting the attack success rate of I-FGSM, DI2-FGSM, TI-FGSM and Admix combined with DRAP. We use the same untargeted experimental protocol as in Section \ref{Sec: Main Results} but with the ablated objective. The results are presented in Table \ref{table:alb_sharp}. Within each combination, the first row represents our \bluetwo{combined} method with the full optimization objective, applying the sharpness penalty starting at iteration $n_{LS}$. The second row represents the same objective but applies the sharpness penalty from iteration 0, implying an ablation on the late-start strategy. The third row represents the \bluetwo{combined} method without penalizing the sharpness of AE. Across all combinations, DRAP with complete objective consistently outperforms attacks that solely penalize model discrepancy, regardless of whether the late-start strategy is used.
Furthermore, we see a stronger attack performance of DRAP with late start strategy, which helps stabilize convergence.
The results validate our theoretical result from Theorem \ref{theorem_main} that controlling the surrogate-target shift, finding a flat \bluetwo{minimum} of surrogate adversarial risk \bluetwo{leads} to improved transferability.
A detailed parameter study on the impact of late start iteration number $n_{LS}$ is provided in \textit{Appendix} \ref{app:addresult_para}.

\textbf{Q1-2: How to effectively optimize flatness of AE across a diverse set of surrogate models?}
Given a set of diverse surrogate models obtained following Section \ref{sec: Narrow the surrogate-target discrepancy}, aside from our proposed algorithm (cf. Algorithm \ref{mainalgo}), there are other possible solvers to optimize flatness across this set. Here, we consider two implementations inspired by RAP and CWA, both of which aim to generate adversarial examples within flat local regions. We use Flat-RAP and Flat-CWA as the shorthands for optimizing flatness across diverse surrogate models using strategies of RAP and CWA, respectively.

To elaborate, our proposed algorithm boosts the flatness via a min-max bi-level optimization framework. It finds the worst-case reverse perturbation specific to each surrogate model at the inner step (refer to Equation \ref{equ_perturb}) and updates AE toward the point where added with the model-specific perturbation could minimize the attack loss on the one surrogate model at the outer step (refer to Equation \ref{eq: update_ae}).
This algorithm is expected to seek out AEs whose entire neighborhoods have uniformly low empirical surrogate adversarial risk value, \ie, AEs \bluetwo{located} in the flat regions of each of the diverse models.
However, from the perspective of RAP, Flat-RAP applies the inner maximization and outer minimization on the whole surrogate model set, \ie, a global reverse perturbation (refer to Equation \ref{equ_rap}) and global update direction (refer to Equation \ref{equ_rap_update}) are obtained. 
Furthermore, \bluetwo{through} the lens of CWA, Flat-CWA substitutes the outer step of Flat-RAP with successively performing updates using each surrogate model to pursue a common weakness of model ensemble:
\begin{equation}\label{eq: cwaupdate_ae}
\hat{\bm{x}} \leftarrow \Pi_{\gamma}\left[\hat{\bm{x}}-\beta_{\hat{\bm{x}}} \cdot \operatorname{sign}\left(\nabla_{\hat{\bm{x}}} \ell\left(f\left(\hat{\bm{x}}+\bm{\epsilon}, \w_k\right), y\right)\right)\right],\w_k \in \mathcal{M}_\mathcal{S},
\end{equation}
where $\bm{\epsilon}$ is a global reverse perturbation same as in Flat-RAP. 
We provide their pseudocodes and implementation details in \textit{Appendix} \ref{app: flat}. Note that though our derivation of Flat-RAP and Flat-CWA define the objective over the entire model set, in practice, we compute the gradient per-batch.

\begin{table}[t]
\centering
\caption{{Attack success rates (\%, $\uparrow$) of Flat-RAP, Flat-CWA and DRAP.} The results are averaged on each model \bluetwo{set}. \textbf{Bold} denotes the best results and \underline{underlined} denotes the second best results.}
\label{tab:flat}
\resizebox{0.96\columnwidth}{!}{%
\begin{tabular}{@{}c|cccc|c@{}}
\toprule
 \textbf{Attack} & \textbf{\begin{tabular}[c]{@{}c@{}}ConvNet\\ Set\end{tabular}} & \textbf{\begin{tabular}[c]{@{}c@{}}Metaformer\\ Set\end{tabular}} & \textbf{\begin{tabular}[c]{@{}c@{}}ConvNet\\(AT) Set\end{tabular}} & \textbf{\begin{tabular}[c]{@{}c@{}}Metaformer\\ (AT) Set\end{tabular}} & \textit{\textbf{\begin{tabular}[c]{@{}c@{}}Overall\\ Average\end{tabular}}} \\ \midrule
Flat-RAP & 77.4 & \textbf{47.0} & 19.9 & 14.3 & 43.9 \\
Flat-CWA & {\ul 80.2} & 39.0 & {\ul 23.9} & {\ul 19.8} & {\ul 44.9} \\
DRAP & \textbf{80.3} & {\ul 42.6} & \textbf{24.9} & \textbf{20.2} & \textbf{46.2} \\ \bottomrule
\end{tabular}%
}
\vspace{-12pt}
\end{table}

As shown in Table \ref{tab:flat}, we find that our method achieves higher transferability than Flat-RAP and Flat-CWA. It supports our hypothesis that for a set of surrogate models with distinct loss landscapes, a globally calculated $\bm{\epsilon}$ could not help to find a flat local minimum of surrogate adversarial risk, as it fails to orient $\hat{\bm{x}}$ to the real worst-case local neighborhood of each model. Consequently, the crafted $\hat{\bm{x}}$ will \bluetwo{be located in} sharp regions of the target model's landscape, slight changes in the loss landscape will cause a significant increase in attack loss.

\subsubsection{On the Model Discrepancy Penalty}\label{alb_sec: dis}

\textbf{Q2-1: Can within-distribution diversity boost transferability, and if so, how efficiently?} 
To evaluate the impact of incorporating within-distribution diversity when optimizing flatness on narrowing the surrogate-target model discrepancy and controlling the transferability gap, we conduct an ablation study. Specifically, we vary the parameter $n$, which directly controls the number of diverse surrogate models sampled from each surrogate model component during generating AEs, and thus determines the level of within-distribution diversity in the surrogate model set. We range $n$ from 0 to 40 with a granularity of 5, keeping all other hyper-parameters consistent with Section \ref{Sec: Main Results}. When $n=5$, $n_{LS}$ is set to 0.
Larger values of $n$ correspond to stronger within-distribution diversity, whereas smaller values gradually diminish this influence. At $n=0$, within-distribution diversity vanishes entirely, and our method reduces to locating an AE that resides in the flat regions of the loss landscapes of the five pretrained surrogate models.
As shown in Figure \ref{fig:iter}(a), on the whole target model sets, increasing within-distribution diversity consistently improves attack performance over the baseline case ($n=0$), with peak performance observed at $n=40$. This result convincingly validates the idea that \textit{encouraging AEs to locate within flat regions of the loss landscapes across diverse models increases the likelihood of their generalization to flat regions in unseen models. Consequently, slight changes in the loss landscape are less likely to cause a significant increase in attack loss, thereby improving the transferability of the attack}.

\begin{figure*}[t]
 \setlength{\abovecaptionskip}{-0.1cm}
\setlength{\belowcaptionskip}{-0.1cm}
 \centering
\includegraphics[width=0.9\linewidth]{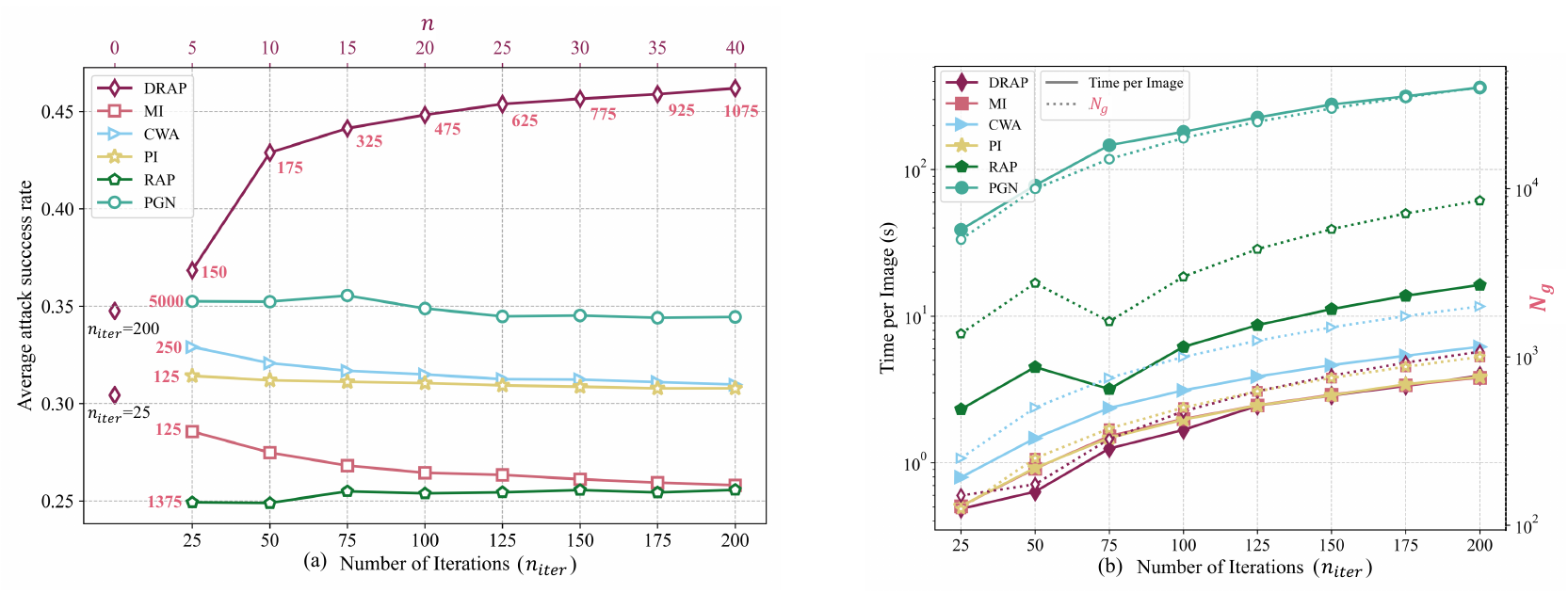}
 \caption{
 \textbf{(a)} The average attack success rate of different attacks \wrt~different \bluetwo{numbers} of iterations. The two isolated points correspond to DRAP without within-distribution diversity, \ie, $n=0$, under two different iteration numbers. The \textcolor[HTML]{D34563}{pink} numbers represent \blue{$N_g$}, the number of gradient calculations required by each method under specific iteration numbers.
 \blue{\textbf{(b)} The empirical wall-clock time (solid curves, left $y$-axis) and the theoretical number of gradient computations $N_g$ (dotted curves, right $y$-axis) required by different attacks to generate one AE \wrt~different \bluetwo{numbers} of iterations. Note that the relative ranking of different attacks in wall-clock time closely matches that implied by $N_g$, demonstrating that $N_g$ can appropriately reflect computational cost. }}\label{fig:iter}
 \vspace{-10pt}
\end{figure*}

\blue{However, to find a reverse perturbation, DRAP takes $T$ additional forward–backward gradient calculations in each iteration on top of the one required to update the AE itself. As a higher within-distribution diversity requires more iterations, one may wonder \bluetwo{about} the computational efficiency of DRAP. 
To conduct a comparison on computational efficiency with other methods, we evaluate our method as well as others under iterations from 25 to 200, and report both the required number of gradient calculations $N_g$ and the achieved attack performance in Figure \ref{fig:iter}(a). 
Here $N_g$ is formally defined as the total number of forward–backward gradient calculation operations, \ie,  $\nabla_{\hat{\bm{x}}} {\ell}\left(f\left(\hat{\bm{x}}, \w\right), y\right)$, which is the most computationally expensive operation in generating AEs and thus theoretically reflects the computational cost of an attack.
Clearer expressions of $N_g$ as a function of iteration numbers $n_{iter}$ for different attacks are summarized in Table \ref{tab:cost} in \textit{Appendix}.
We observe that 25 iterations on DRAP \bluetwo{are} enough to provide a significant performance gain over others, while the required $N_g$ is only slightly higher than MI-FGSM and PI-FGSM but much lower than others. Moreover, when more computational resources are available to run additional iterations, DRAP still outperforms compared attacks with its $N_g$ curve in Figure \ref{fig:iter}(b) remaining relatively low.
The results indicate that equipping surrogate models with within-distribution diversity enables our strategy of encouraging AEs toward flat regions to efficiently improve the transferability, thus enabling a practical trade-off between efficiency and attack performance.}

\blue{To further empirically validate the computational efficiency of DRAP, we also show the runtime per AE and the computational memory of different attacks in Figure \ref{fig:iter}(b) and Table \ref{table:memory}, respectively. In particular, DRAP achieves a competitive runtime, which is in alignment with the theoretical estimation $N_g$. Meanwhile, DRAP requires the lowest peak allocated GPU memory compared with others, as it only backpropagates through one surrogate model at each update rather than the whole surrogate set as designed in Section \ref{sec:Find a Flat Optimum from a Diverse Set of Surrogates}, demonstrating that the within-distribution diversity can effectively and efficiently boost attack performance. 
We note that DRAP requires storing $n$ SGD proposals for each surrogate component when constructing the surrogate set, which incurs an $n$-fold overhead in model storage compared with other attacks. 
The detailed experimental setup for runtime and memory measurements is provided in \textit{Appendix} \ref{app: runtime_setting}.}
 
\begin{table}[t]
\centering
\caption{\blue{The peak allocated GPU memory (mean $\pm$ std over different $n_{iter}$, as the memory usage is nearly invariant to iterations) of different attacks.}}
\label{table:memory}
\resizebox{9cm}{!}{%
\bluetable
\begin{tabular}{@{}l|cccccc@{}}
\toprule
& MI            & PI            & PGN           & CWA           & RAP           & DRAP          \\ \midrule
\textbf{Memory} & 17.47 $\pm$ 0.019 & 17.69 $\pm$ 0.024 & 21.67 $\pm$ 0.011 & 17.87 $\pm$ 0.019 & 17.86 $\pm$ 0.015 & 10.19 $\pm$ 0.009 \\ \bottomrule
\end{tabular}
}
 \vspace{-18pt}
\end{table}

One can observe from Figure \ref{fig:iter}(a) that the attack performance of MI-FGSM, PI-FGSM, and CWA---methods that accumulate gradients at each iteration---will not benefit from more iterations. Careful tuning of the iteration number is crucial for them, as excessive iterations can lead to gradient overaccumulation, which negatively impacts transferability \cite{blackboxbench, svre, eval}. In contrast, momentum better synergizes with our method, stabilizing the update directions that vary significantly across diverse surrogate models. Therefore, more iterations (\ie, more diverse surrogate models), better transferability.

\begin{figure*}[t]
 \setlength{\abovecaptionskip}{-0.1cm}
\setlength{\belowcaptionskip}{-0.1cm}
 \centering
\includegraphics[width=0.97\linewidth]{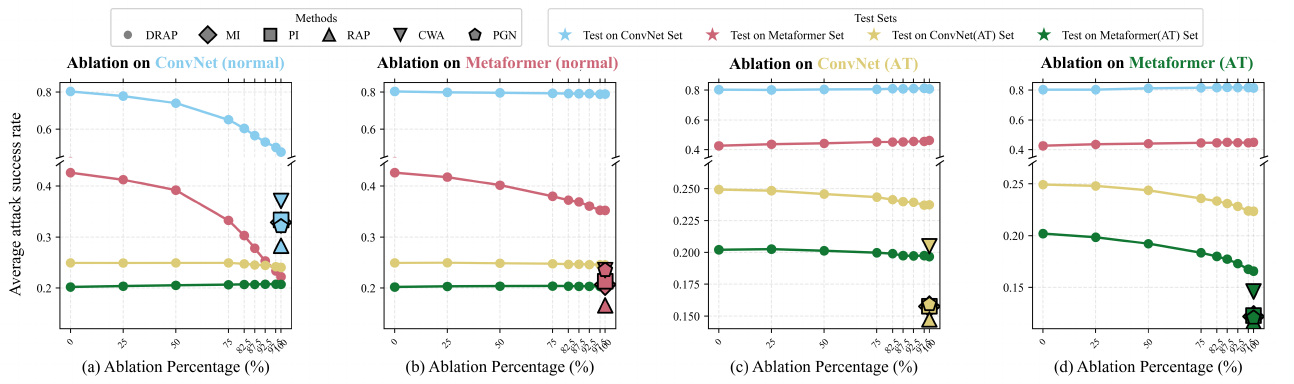}
 \caption{\blue{Attack success rates under prototype ablation. (a)–(d) correspond to ablating normally trained convnet, normally trained metaformer, adversarially trained convnet, and adversarially trained metaformer prototypes from the surrogate model set, respectively. As indicated in the legend, the four colored curves/markers denote attack success rates on the four target sets for each attack. For DRAP, we gradually ablate surrogate models from the selected prototypical set under the percentage on the $x$-axis while fixing the other prototypes, and report attack success rates on the four target sets. For the five baseline attacks, we remove the surrogate models of the selected prototype and report attack success rates on the corresponding target set. Other experimental protocol is the same as the untargeted setting in Section \ref{Sec: Main Results}.}}\label{fig:imbalance}
 \vspace{-10pt}
\end{figure*}

\textbf{Q2-2: Can between-distribution diversity boost transferability, and if so, how efficiently?} 
\blue{We have shown that the high within-distribution diversity could effectively and efficiently improve transferability. Here we further validate the necessity of another dimension of model diversity, between-distribution diversity. 
Specifically, for each prototypical model set in turn, we progressively ablate its surrogate models while fixing the other prototypical sets, and generate AEs with DRAP at different ablation ratios, thereby gradually weakening the between-distribution diversity. We then evaluate these AEs on the four target model sets, with particular focus on the set corresponding to the ablated prototype.
As shown in Figure \ref{fig:imbalance}, for each ablated prototype, the attack success rate on the corresponding target set exhibits a clear decreasing trend as that prototype is gradually removed from surrogate models. When one prototype is completely removed (100\% ablation), the transferability to the corresponding target set is evaluated under a possibly severe surrogate-target shift, and Figure \ref{fig:imbalance} shows that the average attack success rates on the four shifted target sets drop to 47.8\%, 35.2\%, 23.7\% and 16.6\%, which are 32.5\%, 7.4\%, 1.2\% and 3.6\% lower than those obtained when the surrogate distributional components are diverse enough. This demonstrates that the between-distribution diversity achieved by the four proposed prototypes indeed plays an important role in generating more transferable AEs.
Meanwhile, Figure~\ref{fig:imbalance} also shows that DRAP remains robust when the number of surrogate models from one prototype is substantially reduced. When the number of surrogate models from one prototype is ablated to 50\%, the attack performance on both the corresponding target set and the other three target sets is not significantly affected. Even under a severe 75\% ablation, DRAP still performs much better than the extreme case where this prototypical model set is entirely absent. These observations further confirm between-distribution diversity is crucial for improving transferability.}

\blue{However, given the endless evolution of model architectures, achieving the ideal between-distribution diversity with all prototypes being fully covered remains challenging. This raises concerns about whether DRAP can maintain its superiority when AEs are expected to transfer to unseen prototypes with unexpected adversarial vulnerabilities in practice. To investigate this, Figure \ref{fig:imbalance} also reports, as isolated markers at 100\% ablation, the attack success rates of five baseline attacks under the same surrogate-target shift setting. Across all four types of prototype ablations, DRAP consistently outperforms others on the target set corresponding to the ablated prototype, indicating that choosing diverse surrogate \bluetwo{component} distributions, albeit imperfect, is still efficient enough for DRAP to simulate possible surrogate-target shift in loss landscape. Consequently, seeking flat minima over these distributions significantly narrows the transferability gap.}

To analyze the effect of each dimension of model diversity (\ie, within-distribution and between\bluetwo{-}distribution), we ask Q2-1 and Q2-2. From the results, we find that both dimensions are useful for enhancing the transferability of AEs, as they both narrow the surrogate-target model discrepancy term $\ell_{dis}$ from different levels. Hence, by leveraging them simultaneously, our method significantly \bluetwo{decreases} the transferability gap.

\blue{\textbf{Q2-3: How to build a surrogate set for DRAP in practice?} Here we provide a simple decision checklist for practitioners when constructing a surrogate set for DRAP under limited model availability by summarizing the above analyses and supplementing some new findings.}

\begin{itemize}[leftmargin=15pt,itemsep=1pt,topsep=0pt]
\item \blue{\emph{For each surrogate component, use as many models as the budget allows.} Increasing the number of models within one surrogate component $n$ enriches the within-distribution diversity, thus improving transferability, as shown in Q2-1.}

\item \blue{\emph{Keep as many prototypes as possible, even if some prototypes contain far fewer surrogate models than others.} Our prototype ablation in Figure \ref{fig:imbalance} shows that removing any prototype leads to a clear drop in attack performance on the corresponding target set, and even a small number of surrogate models from a prototype remains beneficial. Thus, when adding or pruning surrogates, maintaining coverage over distinct prototypes to ensure the between-distribution diversity is the key point.}

\item \blue{\emph{\bluetwo{Prioritize} training strategy over architecture family.} Figure \ref{fig:imbalance} shows that, within either normal or adversarial training prototypes, ablating one model architecture family usually degrades AE's transferability to the other. However, ablating normally trained models has little impact on AE's transferability to adversarially trained models and vice versa. This indicates that different training strategies induce more distinct types of adversarial vulnerabilities than different model architecture families. Therefore, if practitioners can only afford a few surrogate components, first ensure to mix normally trained and adversarially trained models before mixing convnets and metaformers.}

\item \blue{\emph{Within each prototype, the exact architecture is secondary, but an exception is the hybrid architecture which can be very useful under limited computational resources.} As discussed in Appendix \ref{app:addresult_conv}, the specific surrogate model architecture chosen to represent each prototype is less significant than the between-distribution diversity among prototypes in our method. 
An interesting exception is the hybrid architecture, such as ConvNeXt, which is a special convnet that adopts designs popularized by vision transformers. \bluetwo{Using} a single ConvNeXt-T as the normally trained prototype achieves performance comparable to \bluetwo{that of} using separate convnet and metaformer prototypes, indicating that such hybrid architectures can work as two-in-one surrogates covering both the convnet and metaformer prototypes in resource-limited scenarios.
}
\end{itemize}

\section{Conclusion}
In this paper, we first prove a bound that provides a \bluetwo{guarantee} on transferability error. We show that our bound builds a framework that generalizes previous approaches and presents a fresh avenue for the principled analysis of transfer-based attacks. Within our transferability bound, we justify the relationship between flatness and AE transferability and point out the adversarial model discrepancy as another key component in bounding transferability. We gain theoretical insights from the derived bound and make algorithmic extensions to our prior work, RAP. The proposed DRAP generates reverse adversarial perturbations tailored to each of the diverse surrogate models, which are selected based on two dimensions of diversity. We conduct extensive experiments on two datasets, covering untargeted and targeted attacks against standard and defense models. We also conduct ablative studies to explore the effect of different penalty terms to verify our theoretical findings.

\section*{\bluetwo{Acknowledgments}}
\bluetwo{Baoyuan Wu is supported by Guangdong Basic and Applied Basic Research Foundation (No. 2024B1515020095), Guangdong Provincial Program (No. 2023TQ07A352), Sub-topic of Key R\&D Projects of the Ministry of Science and Technology (No. 2023YFC3304804), Shenzhen Science and Technology Program (No. RCYX20210609103057050 and JCYJ20240813113608011), and Longgang District Key Laboratory of Intelligent Digital Economy Security. Rui Huang is supported by Shenzhen Science and Technology Program (No. JCYJ20220818103006012 and KJZD20240903100202004). Yanbo Fan is supported by New Generation Artificial Intelligence-National Science and Technology Major Project (No. 2025ZD0123502).}

\normalem
\bibliographystyle{IEEEtran}
\bibliography{Ref}


\section{Biography Section}
 \vspace{-25pt}
\begin{IEEEbiography}[{\vspace{-20pt}\includegraphics[width=0.7in,clip,keepaspectratio]{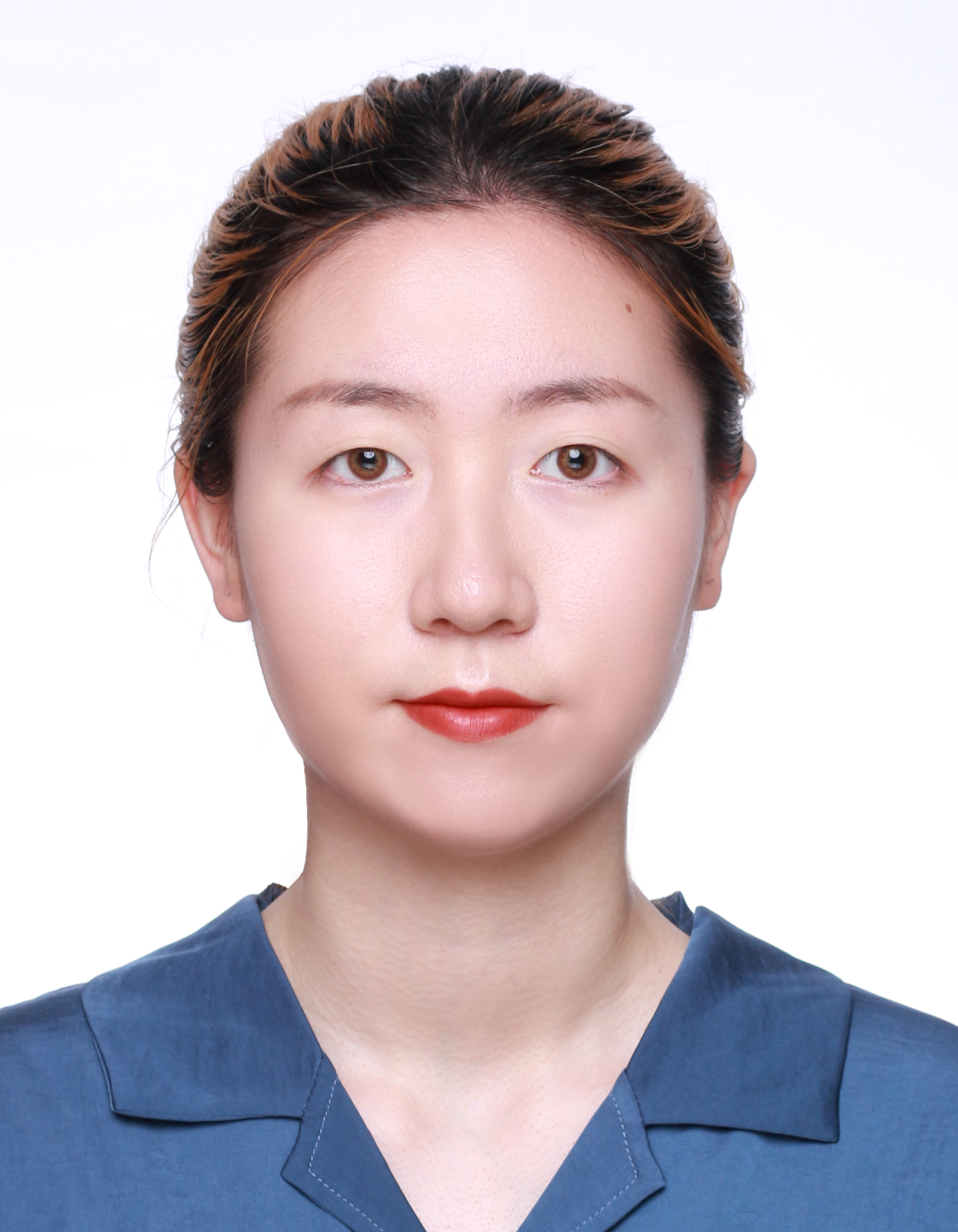}}]{Meixi Zheng}
received the bachelor’s degree and the master's degree from the Xidian University in 2019 and 2022. She is currently working toward the PhD degree with Computer Science Program in The Chinese University of Hong Kong, Shenzhen, supervised by Prof. Baoyuan Wu. Her research interest includes adversarial machine learning.
\end{IEEEbiography}
\vspace{-41pt}
\begin{IEEEbiography}[{\vspace{-20pt}\includegraphics[width=0.7in,clip,keepaspectratio]{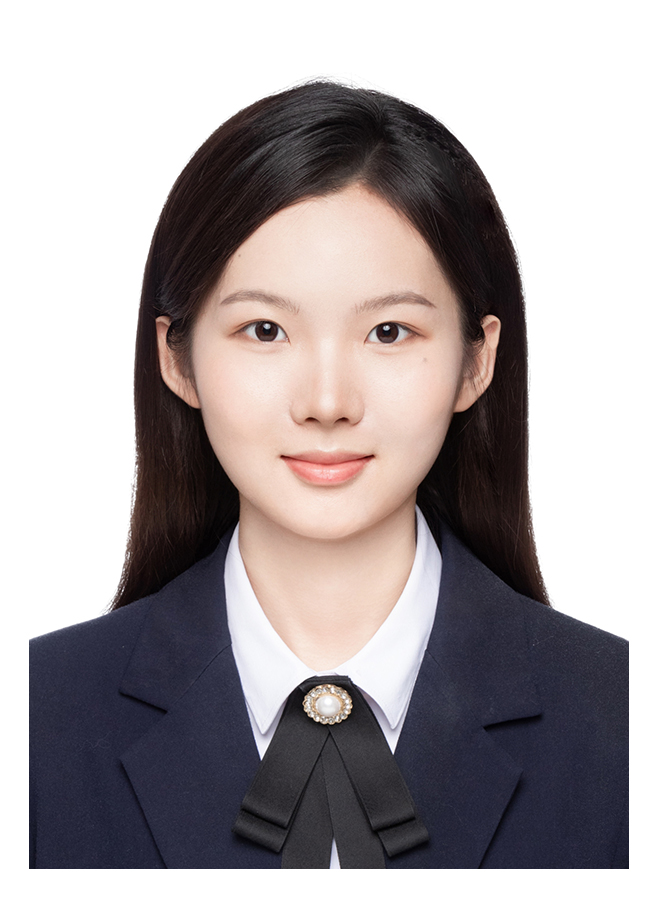}}]{Kehan Wu}
received the bachelor’s degree from Southwest Jiaotong University in 2023 and the Mphil degree from The Chinese University of Hong Kong in 2025 under the supervision of both Prof. Rui Huang and Prof. Baoyuan Wu. Her Mphil research focused on transfer-based black-box adversarial attacks.
\end{IEEEbiography}
\vspace{-41pt}
\begin{IEEEbiography}[{\vspace{-20pt}\includegraphics[width=0.7in,clip,keepaspectratio]{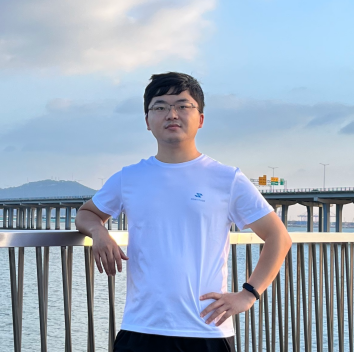}}]{Yanbo Fan}
is currently an Associate Professor of Nanjing University (Suzhou Campus), and he worked as  Senior Research Scientist at Tencent AI Lab and Ant Group, respectively before joining Nanjing University. He received his Ph.D. degree from Institute of Automation, Chinese Academy of Sciences, Beijing, China, in 2018, and his B.S. degree from Hunan University in 2013. His research interests are generative and trustworthy AI.
\end{IEEEbiography}
\vspace{-41pt}
\begin{IEEEbiography}[{\includegraphics[width=0.7in,clip,keepaspectratio]{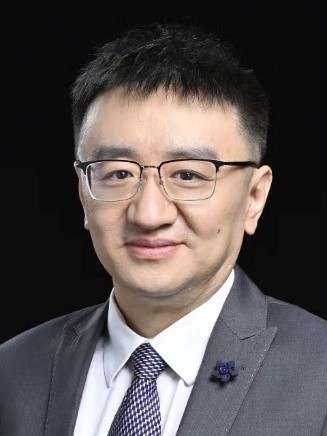}}]{Rui Huang}
(Member, IEEE) received his B.Sc. degree from Peking University in 1999, his M.Eng. from the Chinese Academy of Sciences in 2002, and his Ph.D. from Rutgers University in 2008. In 2012 to 2016, he was a researcher at NEC Laboratories China. He is currently a Tenured Associate Professor at The Chinese University of Hong Kong, Shenzhen. His current research interests focus on video analytics, robotic perception and navigation and autonomous driving. 
\end{IEEEbiography}
\vspace{-31pt}
\begin{IEEEbiography}[{\vspace{-20pt}\includegraphics[width=0.7in,clip,keepaspectratio]{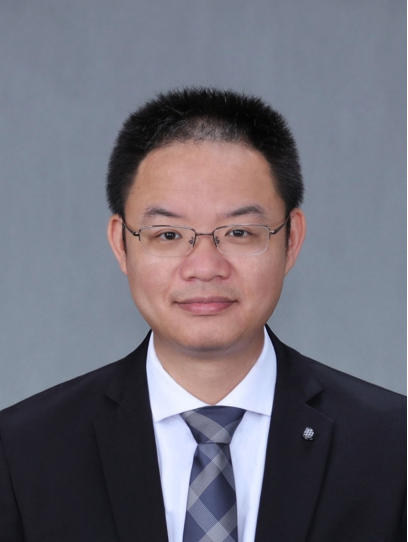}}]{Baoyuan Wu} Dr. Baoyuan Wu is a Tenured Associate Professor of School of Artificial Intelligence, The Chinese University of Hong Kong, Shenzhen, Guangdong, 518172, P.R. China. His research interests are trustworthy and generative AI. He is currently serving as Senior Area Editor of IEEE TIFS. He is IEEE Senior Member.
\end{IEEEbiography}

\clearpage

\setcounter{section}{0}
\setcounter{equation}{0}
\numberwithin{equation}{section}
\setcounter{page}{1}
\long\def\comment#1{}
\renewcommand\thesection{\Alph{section}}

{\appendices
\section{Proofs}\label{app: proof}

\subsection{Technical Lemmas}

\begin{lemma}\label{lemma_phi*}
Let $\phi: \mathbb{R}_{+} \rightarrow \mathbb{R}$ be a convex, lower semi-continuous function satisfying $\phi(1)=0$, and $\phi^*$ be the Fenchel conjugate function of $\phi$, then $\phi^*(\bm{x})\geq \bm{x}$.
\begin{proof}
By definition, $\phi^*(\bm{x})=\sup _{\bm{t} \in\text{dom} \phi}\left\{\bm{t}\bm{x}-\phi(\bm{t})\right\}\geq \bm{t}\bm{x}-\phi(\bm{t})$. When $\phi(1)=0$, we have $\phi^*(\bm{x})\geq 1 \cdot \bm{x}-\phi(1)=\bm{x}$.
\end{proof}
\end{lemma}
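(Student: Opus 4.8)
The plan is to exploit the fact that the Fenchel conjugate is defined as a supremum over the domain of $\phi$, so that evaluating the expression inside the supremum at any single feasible point immediately yields a lower bound on $\phi^*$. Since $\phi: \mathbb{R}_{+} \to \mathbb{R}$ and the hypothesis $\phi(1)=0$ guarantees that $\phi$ takes a finite value at $\bm{t}=1$, the point $\bm{t}=1$ lies in $\mathrm{dom}\,\phi$ and is therefore an admissible index in the supremum.

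Concretely, I would begin from the definition $\phi^*(\bm{x})=\sup_{\bm{t}\in\mathrm{dom}\,\phi}\{\bm{t}\bm{x}-\phi(\bm{t})\}$ and discard the supremum by fixing the single choice $\bm{t}=1$, which gives $\phi^*(\bm{x})\geq 1\cdot\bm{x}-\phi(1)$. Substituting the normalization $\phi(1)=0$ then yields $\phi^*(\bm{x})\geq \bm{x}$, which is exactly the claimed inequality, valid for every $\bm{x}$.

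There is essentially no analytical obstacle here: the statement follows directly from the variational (supremum) definition of the conjugate together with the normalization $\phi(1)=0$. The only point meriting a moment's care is confirming that $\bm{t}=1$ is a legitimate index in the supremum, which is where the convexity and lower semi-continuity of $\phi$ enter implicitly (they ensure $\phi$ is a proper closed convex function so that $\phi^*$ is well defined), while the assumption $\phi(1)=0$ in particular certifies that $1\in\mathrm{dom}\,\phi$. No further structural properties of $\phi$ are required, so the argument is a one-line evaluation of the conjugate at a convenient point.
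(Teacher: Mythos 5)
Your proof is correct and is essentially identical to the paper's: both arguments evaluate the supremum defining $\phi^*$ at the single feasible point $\bm{t}=1$ and invoke the normalization $\phi(1)=0$ to conclude $\phi^*(\bm{x})\geq \bm{x}$. Your additional remark that $\phi(1)=0$ certifies $1\in\mathrm{dom}\,\phi$ is a fine (and valid) bit of extra care, but the substance of the argument is the same.
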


Introducing the PAC model into transfer-based adversarial attacks to bound the surrogate risk, we have:
\begin{lemma}\label{lemma_pac_trans}
    \textbf{(PAC-Bayes \cite{pac,intriguing} )} For any model distribution $P_\mathcal{S}$, prior distribution $\mathcal{P}$ on $\hat{\mathcal{X}}$, $0<\delta<1$, with probability $1-\delta$ over the choice of surrogate model set $\mathcal{M}_\mathcal{S}\sim P_\mathcal{S}$ with size $K \in \mathbb{N}$, for any distributions $\mathcal{Q}$ on $\hat{\mathcal{X}}$, the following bound holds:
\begin{equation}\label{equ_pac_trans}
    \begin{aligned}
    \mathbb{E}_{\hat{\bm{x}} \sim \mathcal{Q}}[R_\mathcal{S}(\hat{\bm{x}})] \leq \mathbb{E}_{\hat{\bm{x}} \sim \mathcal{Q}}[R_{\hat{\mathcal{S}}}(\hat{\bm{x}})] +\sqrt{\frac{\text{KL}(\mathcal{Q}||\mathcal{P})+\log\frac{K}{\delta}}{2(K-1)}}.
\end{aligned}
\end{equation}
\end{lemma}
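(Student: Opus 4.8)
The plan is to obtain Lemma \ref{lemma_pac_trans} as a direct instantiation of the classical PAC-Bayes bound of Theorem \ref{theorem_ori_pac}, via a precise dictionary between the abstract PAC-Bayes objects and the quantities of the transfer-based attack setting. Concretely, I would identify the abstract concept $c$ with an adversarial example $\hat{\bm{x}} \in \hat{\mathcal{X}}$, the abstract instance $\bm{x}$ with a model parameter $\w$, and the instance-generating distribution $\mathcal{D}$ with the surrogate model distribution $P_\mathcal{S}$. Under this identification the per-instance loss $\ell(\bm{x},c)$ becomes $\ell(f(\hat{\bm{x}},\w),y)$, the population risk $\ell(c)$ becomes the surrogate adversarial risk $R_\mathcal{S}(\hat{\bm{x}})$, and the empirical risk $\hat{\ell}(c)$ becomes $R_{\hat{\mathcal{S}}}(\hat{\bm{x}})$ over the surrogate set $\mathcal{M}_\mathcal{S}$; the sample size $n$ is replaced by $K = |\mathcal{M}_\mathcal{S}|$, while the prior $\mathcal{P}$ and posterior $\mathcal{Q}$ now live on $\hat{\mathcal{X}}$ rather than on the abstract concept class.

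Next I would verify that the structural hypotheses of Theorem \ref{theorem_ori_pac} transfer faithfully under this dictionary. First, the surrogate set $\mathcal{M}_\mathcal{S} = \{\w_k\}_{k=1}^K$ is drawn i.i.d. from $P_\mathcal{S}$ by construction (Definition \ref{def_risk}), matching the i.i.d. sampling of training instances from $\mathcal{D}$. Second, the prior $\mathcal{P}$ on $\hat{\mathcal{X}}$ is required to be independent of $\mathcal{M}_\mathcal{S}$, exactly as the prior on $c$ must be independent of the training set; this is a hypothesis we inherit unchanged. Third, the loss must take values in $[0,1]$; here I would invoke the standing assumption $\ell : \mathcal{Y} \times \mathcal{Y} \to [0,1]$, so that each $\ell(f(\hat{\bm{x}},\w),y) \in [0,1]$ provides precisely the boundedness needed for the underlying concentration argument.

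With the dictionary in place and the hypotheses verified, the conclusion follows by substitution: replacing $\mathbb{E}_{\mathcal{Q}}[\ell(c)]$ by $\mathbb{E}_{\hat{\bm{x}} \sim \mathcal{Q}}[R_\mathcal{S}(\hat{\bm{x}})]$, $\mathbb{E}_{\mathcal{Q}}[\hat{\ell}(c)]$ by $\mathbb{E}_{\hat{\bm{x}} \sim \mathcal{Q}}[R_{\hat{\mathcal{S}}}(\hat{\bm{x}})]$, and $n$ by $K$ in Equation \ref{equ_ori_pac} yields exactly Equation \ref{equ_pac_trans}, with the probability $1-\delta$ now taken over the draw $\mathcal{M}_\mathcal{S} \sim P_\mathcal{S}$. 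The only genuinely delicate point, which I expect to be the main obstacle, is justifying the role swap relative to standard supervised learning: here the \emph{model} plays the part of the random data while the \emph{adversarial example} plays the part of the learned hypothesis. I would therefore argue that the derivation behind Theorem \ref{theorem_ori_pac} is agnostic to the semantic meaning of its ``instances'' and ``concepts,'' relying only on i.i.d. sampling, prior-independence, and a $[0,1]$-bounded loss; consequently, treating $\w \sim P_\mathcal{S}$ as the randomness source and $\hat{\bm{x}}$ as the object on which $\mathcal{P}$ and $\mathcal{Q}$ are placed is fully consistent with the theorem, and no i.i.d. assumption on data points is invoked.
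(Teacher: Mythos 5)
Your proposal is correct and follows essentially the same route as the paper: the paper itself justifies Lemma \ref{lemma_pac_trans} purely as an instantiation of Theorem \ref{theorem_ori_pac} under exactly your dictionary (concept $\leftrightarrow$ adversarial example, instance $\leftrightarrow$ model parameter $\w$, $\mathcal{D} \leftrightarrow P_\mathcal{S}$, $n \leftrightarrow K$), stating this correspondence in Section \ref{sec: PAC-Bayesian Bound on Surrogate Risk} rather than giving a separate derivation. If anything, you are slightly more careful than the paper in making explicit the $[0,1]$-boundedness of the loss and the prior's independence from $\mathcal{M}_\mathcal{S}$, both of which the paper leaves implicit.
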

In this bound, the $R_\mathcal{S}(\hat{\bm{x}})$ is the surrogate risk and $R_{\hat{\mathcal{S}}}(\hat{\bm{x}})$ is the empirical surrogate risk.
This PAC-Bayes bound implies \bluetwo{that, assuming the adversary has} enough surrogate model samples, the expected risk of an AE chosen from a distribution $\mathcal{Q}$ can be guaranteed by minimizing the measured loss of distribution $\mathcal{Q}$  and $\frac{KL(\mathcal{Q} \| \mathcal{P})}{n}$, naturally leading to the following optimization method:
\begin{itemize}
    \item[1.] Fix a distribution $\mathcal{P}$.
    \item[2.] Collect enough surrogate model instances from $P_\mathcal{S}$.
    \item[3.] Compute the optimal distribution $\mathcal{Q}$ that minimizes the error bound, the right hand side of Equation \ref{equ_pac_trans}.
    \item[4.] Return the crafted AE given by $\mathcal{Q}$.
\end{itemize}

{
\bluetable
In the below lemma, we use a simple case of the under-coverage event to demonstrate that KL-instantiated discrepancy term $D_{\text{KL}}^{\hat{\mathcal X}_r}(P_{\mathcal T}\|P_{\mathcal S})$ will penalize $P_{\mathcal S}$'s under-coverage of the vulnerability pattern to $P_{\mathcal T}$.

\begin{lemma}\label{lemma:KL-undercoverage}
Let $D_{\text{KL}}^{\hat{\mathcal X}_r}(P_{\mathcal T}\| P_{\mathcal S})=\sup_{\hat{\bm{x}} \in\hat{\mathcal X}_r,\,t\in\mathbb R} t\,\mathbb E_{\bm{w}\sim P_{\mathcal T}}[\ell(f(\hat {\bm{x}},\bm{w}),y)] -\log \mathbb E_{\bm{w}\sim P_{\mathcal S}}[e^{t\ell(f(\hat {\bm{x}},\bm{w}),y)}]$.
Assume there exist $\hat{\bm x}\in\hat{\mathcal X}_r$ and a subset $A\subseteq\mathcal W$ such that
$$
\ell(f(\hat{\bm x},\bm w),y)=
\begin{cases}
\ell_A, & \bm w\in A,\\
\ell_B, & \bm w\notin A,
\end{cases}
\qquad\text{with }\ell_A\neq \ell_B.
$$
Let $p:=P_{\mathcal T}(A)$ and $q:=P_{\mathcal S}(A)$. Then
\begin{align*}
D_{\text{KL}}^{\hat{\mathcal X}_r}(P_{\mathcal T}\|P_{\mathcal S})
&\geq \text{KL}\left(\text{Bern}(p)\ \middle\|\ \text{Bern}(q)\right).\\
\end{align*}
\end{lemma}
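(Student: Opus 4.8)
The plan is to exploit that $D_{\text{KL}}^{\hat{\mathcal X}_r}$ is a supremum over \emph{both} $\hat{\bm x}\in\hat{\mathcal X}_r$ and $t\in\mathbb R$, and to lower-bound it by discarding the outer supremum in favour of the single feasible point $\hat{\bm x}$ supplied by the hypothesis. First I would write $L(\bm w):=\ell(f(\hat{\bm x},\bm w),y)$ for this fixed $\hat{\bm x}$ and note that $L$ is two-valued, equal to $\ell_A$ on $A$ and $\ell_B$ on $A^c$. Both expectations in the objective then collapse to Bernoulli computations, namely $\mathbb E_{P_{\mathcal T}}[L]=p\,\ell_A+(1-p)\,\ell_B$ and $\mathbb E_{P_{\mathcal S}}[e^{tL}]=q\,e^{t\ell_A}+(1-q)\,e^{t\ell_B}$, so keeping only this $\hat{\bm x}$ yields
\begin{equation}
D_{\text{KL}}^{\hat{\mathcal X}_r}(P_{\mathcal T}\|P_{\mathcal S})\;\ge\;\sup_{t\in\mathbb R}\Big\{t\big(p\ell_A+(1-p)\ell_B\big)-\log\big(q e^{t\ell_A}+(1-q)e^{t\ell_B}\big)\Big\}.
\end{equation}

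Then I would identify the right-hand side with the Donsker--Varadhan (Gibbs) variational formula for the Kullback--Leibler divergence between the pushforward laws of $L$. Writing $\tilde P_{\mathcal T}$ and $\tilde P_{\mathcal S}$ for the distributions of $L$ under $P_{\mathcal T}$ and $P_{\mathcal S}$, these are exactly $\text{Bern}(p)$ and $\text{Bern}(q)$ supported on $\{\ell_A,\ell_B\}$, and the full variational representation gives $\text{KL}(\tilde P_{\mathcal T}\|\tilde P_{\mathcal S})=\sup_{\psi}\{\mathbb E_{\tilde P_{\mathcal T}}[\psi]-\log\mathbb E_{\tilde P_{\mathcal S}}[e^{\psi}]\}$, the supremum ranging over all functions $\psi$ on the two-point set $\{\ell_A,\ell_B\}$. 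The objective above is the same expression restricted to the one-parameter family $\psi=tL$, i.e.\ to test functions of the form $(\psi(\ell_A),\psi(\ell_B))=(t\ell_A,t\ell_B)$; a priori this restriction could only lose value, which is the opposite direction from what we need.

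The crux is to argue that no value is in fact lost, and this is where $\ell_A\neq\ell_B$ enters. The Gibbs objective is invariant under adding a constant to $\psi$ (the additive constant cancels between the linear term and the log-partition term), so it depends on $\psi$ only through the single difference $\psi(\ell_A)-\psi(\ell_B)$. For the family $\psi=tL$ this difference equals $t(\ell_A-\ell_B)$, which sweeps out all of $\mathbb R$ as $t$ varies precisely because $\ell_A\neq\ell_B$. Hence the restricted supremum over $t$ coincides with the unrestricted supremum over $\psi$, giving $\sup_t\{\cdots\}=\text{KL}(\text{Bern}(p)\|\text{Bern}(q))$, and combining this equality with the earlier inequality yields the claim. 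As a routine check one can exhibit the optimizer $t^\star=\tfrac{1}{\ell_A-\ell_B}\log\frac{p(1-q)}{q(1-p)}$ and verify the objective evaluates to $p\log\frac pq+(1-p)\log\frac{1-p}{1-q}$. The main obstacle is conceptual rather than computational: recognizing that restricting the variational test functions to be \emph{affine in $L$} is lossless when $L$ is two-valued, so that the single feasible AE $\hat{\bm x}$ already recovers the full binary KL. I would finally dispose of the degenerate cases ($p=q$, or $p,q\in\{0,1\}$) separately, where the inequality holds trivially or with both sides infinite.
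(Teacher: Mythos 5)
Your proof is correct, and it shares the paper's skeleton: both arguments discard the supremum over $\hat{\bm{x}}$ in favour of the single hypothesized point, reduce the two expectations to Bernoulli quantities, and are then left with evaluating $\sup_{t\in\mathbb R}\{t(p\ell_A+(1-p)\ell_B)-\log(qe^{t\ell_A}+(1-q)e^{t\ell_B})\}$. Where you diverge is in how that supremum is identified with the binary KL. The paper does it by bare calculus: it substitutes $s=t(\ell_A-\ell_B)$ (this is where $\ell_A\neq\ell_B$ enters, so that $s$ sweeps $\mathbb R$), differentiates $f(s)=ps-\log((1-q)+qe^s)$, finds the critical point $s^\star=\log\frac{p(1-q)}{q(1-p)}$, and substitutes back. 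You instead invoke the Donsker--Varadhan representation of KL for the pushforward laws of the two-valued loss $L$, and argue that restricting the test functions to the one-parameter family $\psi=tL$ is lossless because the Gibbs objective is invariant under adding constants to $\psi$, hence depends only on $\psi(\ell_A)-\psi(\ell_B)=t(\ell_A-\ell_B)$, which covers all of $\mathbb R$ precisely when $\ell_A\neq\ell_B$; so the restricted supremum equals the full one, which is $\mathrm{KL}(\mathrm{Bern}(p)\,\|\,\mathrm{Bern}(q))$. The two uses of $\ell_A\neq\ell_B$ are the same fact in different clothing. What your route buys: it isolates the conceptual reason the bound is tight (two-valuedness makes affine test functions sufficient), it makes transparent exactly what would be lost if $\ell$ took three or more values, and it accommodates the degenerate endpoints (e.g.\ $q\in\{0,1\}$ with $p\neq q$, where both sides are $+\infty$) for which the paper's interior critical-point computation is not literally valid --- though you sensibly flag those cases for separate treatment anyway. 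What the paper's route buys: it is entirely elementary and self-contained, requiring nothing beyond the definition of $D_{\mathrm{KL}}^{\hat{\mathcal X}_r}$ and one derivative, whereas yours leans on the variational representation as an external lemma.
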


\begin{proof}
Fix the $\hat{\bm x}\in\hat{\mathcal X}_r$ in the assumption and $\ell(\bm w)$ abbreviates $\ell(f(\hat{\bm x},\bm w),y)$. By the definition of $D_{\text{KL}}^{\hat{\mathcal X}_r}$, we have
$$D_{\text{KL}}^{\hat{\mathcal X}_r}(P_{\mathcal T}\|P_{\mathcal S}) \geq \sup_{t\in\mathbb R} t\,\mathbb E_{P_{\mathcal T}}[\ell(\bm w)] -\log \mathbb E_{P_{\mathcal S}}[e^{t\ell(\bm w)}].$$
Under the decomposition of $\ell(\bm w)$ into $\bm w \in A$ and $\bm w \notin A$, we have
$$\mathbb E_{P_{\mathcal T}}[\ell(\bm w)]
= p\ell_A+(1-p)\ell_B,$$
$$\mathbb E_{P_{\mathcal S}}[e^{t\ell(\bm w)}]
= q e^{t\ell_A}+(1-q)e^{t\ell_B}.$$
Therefore, let $F(t):= t\,\mathbb E_{P_{\mathcal T}}[\ell(\bm w)]-\log \mathbb E_{P_{\mathcal S}}[e^{t\ell(\bm w)}]$, for any $t\in\mathbb R$,

\begin{align*}
F(t)
&= t(p\ell_A+(1-p)\ell_B)-\log(q e^{t\ell_A}+(1-q)e^{t\ell_B})\\
&= t(p\ell_A+(1-p)\ell_B)-\log(e^{t\ell_B}[(1-q)+q e^{t(\ell_A-\ell_B)}])\\
&= t\,p(\ell_A-\ell_B) - \log((1-q)+q e^{t(\ell_A-\ell_B)}).
\end{align*}
Let $s:=t\Delta$. Since we assume $\Delta:=\ell_A-\ell_B\neq 0$, as $t$ ranges over $\mathbb R$, so does $s$.
Thus,
$$\sup_{t\in\mathbb R} F(t)=\sup_{s\in\mathbb R}\Big(ps-\log\big((1-q)+q e^s\big)\Big).$$

Let $f(s):=ps-\log((1-q)+q e^s)$. Its derivative is
$$f'(s)=p-\frac{q e^s}{(1-q)+q e^s}.$$
Setting $f'(s)=0$ yields the maximizer
$$s^\star=\log\frac{p(1-q)}{q(1-p)}.$$
Plugging $s^\star$ back, we note that
$$(1-q)+q e^{s^\star}=(1-q)+q\cdot\frac{p(1-q)}{q(1-p)}=\frac{1-q}{1-p},$$
and hence
\begin{align*}
\sup_{s\in\mathbb R} f(s)
&= f(s^\star)
= p\log\frac{p(1-q)}{q(1-p)}-\log\frac{1-q}{1-p}\\
&= p\log\frac{p}{q}+(1-p)\log\frac{1-p}{1-q}\\
&= \text{KL}\left(\text{Bern}(p)\ \middle\|\ \text{Bern}(q)\right).
\end{align*}
Combining the above inequalities gives
$$D_{\text{KL}}^{\hat{\mathcal X}_r}(P_{\mathcal T}\|P_{\mathcal S})
\geq \sup_{t\in\mathbb R}F(t)
= \text{KL}\left(\text{Bern}(p)\ \middle\|\ \text{Bern}(q)\right),$$
which completes the proof.
\end{proof}

\begin{remark}
Intuitively, $A$ is a subset of models representing distinct adversarial vulnerability patterns such that $\ell(f(\hat{\bm x},\bm w),y)$ takes different characteristic levels on $A$ and its complement $A^c$, i.e., $\ell_A\neq \ell_B$.
Consider the case where $q<p$, which exactly corresponds to $P_{\mathcal T}$ placing non-negligible probability mass on a subset of models exhibiting different adversarial vulnerability patterns from what $P_{\mathcal S}$ focuses on, then $\text{KL}\left(\text{Bern}(p) \| \text{Bern}(q)\right)$ increases as $q$ \bluetwo{decreases}. This \bluetwo{demonstrates} that $D_{\text{KL}}^{\hat{\mathcal X}_r}$ can penalize $P_{\mathcal{S}}$'s under-coverage of target vulnerability patterns.
\end{remark}
}

\subsection{Proof of Theorem \ref{theorem1}}

In black-box adversarial attacks, it \bluetwo{is} tempting to establish the error bound for the target model \bluetwo{that includes} a discrepancy measure (in Definition  \ref{def_our_f}). 
Consider a loss function $\ell(y_1, y_2)$, such that $\ell: \mathcal{Y} \times \mathcal{Y} \rightarrow \mathbb{R}_0^{+}$. Then we can define a population risk by $R_{P}(\hat{\bm{x}}):=\mathbb{E}_{\w \sim {P}}[\ell(f(\hat{\bm{x}},\w),y)]$. 
Given two distributions $P_\mathcal{S}$ and $P_\mathcal{T}$, the following \bluetwo{lemma} shows that the difference of risks over $P_\mathcal{S}$ and $P_\mathcal{T}$ can be bounded by the adversarial model discrepancy between $P_\mathcal{S}$ and $P_\mathcal{T}$. The proof technique we used here is inspired from Wang \etal \cite{f2}.

\noindent \textbf{Theorem 2. (Transferability gap bound)} Define $K_{\mathcal{S}}^{\hat{\bm{x}}}\left(t\right) = \inf _\alpha \left\{\mathbb{E}_{\w \sim P_\mathcal{S}}\left[\phi^*\left(t \ell\left(f\left(\hat{\bm{x}},\w\right),y\right)+\alpha\right)\right]-\alpha\right\} -\mathbb{E}_{\w \sim P_\mathcal{S}}\left[t \ell\left(f\left(\hat{\bm{x}},\w\right),y\right)\right]$. Given the surrogate model distribution $P_\mathcal{S}$ and target model distribution $P_\mathcal{T}$, for any $\hat{\bm{x}} \in \hat{\mathcal{X}_r}$ and \bluetwo{constants} $c_1, c_2 \in [0, +\infty)$ \bluetwo{subject} to the constraint $K_{\mathcal{S}}^{\hat{\bm{x}}}(c_1)\leq c_1c_2\mathbb{E}_{\w \sim P_\mathcal{S}}\left[\ell\left(f(\hat{\bm{x}},\w),y\right)\right]$, we have
\begin{equation}\label{eq_them2_7}
    \mathcal{E}_{\text {trans}}\left(\hat{\bm{x}}\right) \leq  \frac{1}{c_1}\mathrm{D}_\phi^{\hat{\mathcal{X}}_r}\left({P_\mathcal{T}} \| {P_\mathcal{S}}\right)+c_2r.
\end{equation}

\begin{proof}
Firstly, $K_{\mathcal{S}}^{\hat{\bm{x}}}(t)$, which depends on both $t \in \mathbb{R}$ and $\hat{\bm{x}}\in \hat{\mathcal{X}}_r$, is defined as follows:
\begin{align*}
    K_{\mathcal{S}}^{\hat{\bm{x}}}(t) =& \inf _\alpha \left\{\mathbb{E}_{\w \sim P_\mathcal{S}}\left[\phi^*\left(t \ell\left(f(\hat{\bm{x}},\w),y\right)+\alpha\right)\right]-\alpha\right\} \\
    &-\mathbb{E}_{\w \sim P_\mathcal{S}}\left[t \ell\left(f(\hat{\bm{x}},\w),y\right)\right],
\end{align*}
and we define
\begin{align}
    K_\mathcal{S}^r(t)=\sup _{\hat{\bm{x}} \in \hat{\mathcal{X}}_r} K_{\mathcal{S}}^{\hat{\bm{x}}}(t).\label{eq_them2_1}
\end{align}

We denote for clarity 
\begin{align*}
    {I}_{\mathcal{S}}^{\hat{\bm{x}}}(t)=\inf _\alpha \left\{\mathbb{E}_{\w \sim P_\mathcal{S}}\left[\phi^*\left(t \ell\left(f(\hat{\bm{x}},\w),y\right)+\alpha\right)\right]-\alpha\right\},
\end{align*}
which is also the second term of $\mathrm{D}_\phi^{\hat{\mathcal{X}}_r}\left({P_\mathcal{T}} \| {P_\mathcal{S}}\right) $. Then $K_{\mathcal{S}}^{\hat{\bm{x}}}(t) = {I}_{\mathcal{S}}^{\hat{\bm{x}}}(t) -\mathbb{E}_{\w \sim P_\mathcal{S}}\left[t \ell\left(f(\hat{\bm{x}},\w),y\right)\right]$.

Therefore, for any $\hat{\bm{x}} \in \hat{\mathcal{X}_r}$ and $t \in \mathbb{R}$, we have the following inequality holds by Equation \ref{eq_them2_1}:
\begin{align}
    {I}_{\mathcal{S}}^{\hat{\bm{x}}}(t)-\mathbb{E}_{\w \sim P_\mathcal{S}}\left[t \ell\left(f(\hat{\bm{x}},\w),y\right)\right] \leq K_\mathcal{S}^r(t).\label{eq_them2_4}
\end{align}

Plugging in $\mathbb{E}_{\mathcal{T}}\left[t\ell\left(f(\hat{\bm{x}},\w),y\right)\right]$ to the both sides of Equation \ref{eq_them2_4} and rearranging terms leads to the following inequality:
\begin{align*}
&t\left(\mathbb{E}_{\w \sim P_\mathcal{T}}\left[\ell\left(f(\hat{\bm{x}},\w),y\right)\right]-\mathbb{E}_{\w \sim P_\mathcal{S}}\left[\ell\left(f(\hat{\bm{x}},\w),y\right)\right]\right)-K_\mathcal{S}^r(t) \\
&\leq t\mathbb{E}_{\w \sim P_\mathcal{T}}\left[\ell\left(f_{\hat{\bm{x}}}, f_{\hat{\bm{x}}}\right)\right]-{I}_{\mathcal{S}}^{\hat{\bm{x}}}(t).
\end{align*}

Since this inequality holds for any $\hat{\bm{x}} \in \hat{\mathcal{X}_r}$ and $t \in \mathbb{R}$, we have
\begin{align*}
&\sup_{t \in \mathbb{R}}t\left(\mathbb{E}_{\w \sim P_\mathcal{T}}\left[\ell\left(f(\hat{\bm{x}},\w),y\right)\right]-\mathbb{E}_{\w \sim P_\mathcal{S}}\left[\ell\left(f(\hat{\bm{x}},\w),y\right)\right]\right)-K_\mathcal{S}^r(t) \\
&\leq \sup_{t \in \mathbb{R}}t\mathbb{E}_{\w \sim P_\mathcal{T}}\left[\ell\left(f(\hat{\bm{x}},\w),y\right)\right]-{I}_{\mathcal{S}}^{\hat{\bm{x}}}(t)\\
&\leq \sup_{t \in \mathbb{R}, \hat{\bm{x}} \in \hat{\mathcal{X}_r}}t\mathbb{E}_{\w \sim P_\mathcal{T}}\left[\ell\left(f(\hat{\bm{x}},\w),y\right)\right]-{I}_{\mathcal{S}}^{\hat{\bm{x}}}(t)\\
&=\mathrm{D}_\phi^{\hat{\mathcal{X}}_r}\left({P_\mathcal{T}} \| {P_\mathcal{S}}\right).
\end{align*}

Notice that $\mathcal{E}_{\text {trans}}\left(\hat{\bm{x}}\right)=\mathbb{E}_{\w \sim P_\mathcal{T}}\left[\ell\left(f\left(\hat{\bm{x}},\w\right),y\right)\right]-\mathbb{E}_{\w \sim P_\mathcal{S}}\left[\ell\left(f\left(\hat{\bm{x}},\w\right),y\right)\right]$. Hence, we have
\begin{align}
    &\sup_{t \in \mathbb{R}}t\mathcal{E}_{\text {trans}}\left(\hat{\bm{x}}\right)-K_\mathcal{S}^r(t) \leq \mathrm{D}_\phi^{\hat{\mathcal{X}}_r}\left({P_\mathcal{T}} \| {P_\mathcal{S}}\right) \nonumber \\
     &t\mathcal{E}_{\text {trans}}\left(\hat{\bm{x}}\right) -K_\mathcal{S}^r(t)\leq \mathrm{D}_\phi^{\hat{\mathcal{X}}_r}\left({P_\mathcal{T}} \| {P_\mathcal{S}}\right) \quad \forall t \in \mathbb{R} \label{eq_them2_2}\\
     &\mathcal{E}_{\text {trans}}\left(\hat{\bm{x}}\right) \leq \frac{\mathrm{D}_\phi^{\hat{\mathcal{X}}_r}\left({P_\mathcal{T}} \| {P_\mathcal{S}}\right)+K_\mathcal{S}^r(t)}{t}\quad \forall t\geq 0 \label{eq_them2_3}\\
    &\mathcal{E}_{\text {trans}}\left(\hat{\bm{x}}\right) \leq  \inf_{t\geq 0} \frac{\mathrm{D}_\phi^{\hat{\mathcal{X}}_r}\left({P_\mathcal{T}} \| {P_\mathcal{S}}\right)+K_\mathcal{S}^r(t)}{t} \label{eq_them2_6},
\end{align}
the derivation from Equation \ref{eq_them2_2} to Equation \ref{eq_them2_3} restricts the range of $t$ to $t\geq 0$.

Since we assume that there exists \bluetwo{constants} $c_1, c_2 \in [0, +\infty)$ such that 
\begin{align}
    K_{\mathcal{S}}^{\hat{\bm{x}}}(c_1)\leq c_1c_2\mathbb{E}_{\w \sim P_\mathcal{S}}\left[\ell\left(f(\hat{\bm{x}},\w),y\right)\right], \label{eq_them2_5}
\end{align}
we have
\begin{align*}
    K_\mathcal{S}^r(c_1)&=\sup _{\hat{\bm{x}} \in \hat{\mathcal{X}}_r} K_{\mathcal{S}}^{\hat{\bm{x}}}(c_1)\\
    &\leq\sup _{\hat{\bm{x}} \in \hat{\mathcal{X}}_r} c_1c_2\mathbb{E}_{\w \sim P_\mathcal{S}}\left[\ell\left(f(\hat{\bm{x}},\w),y\right)\right]\\
    & \leq c_1c_2r,
\end{align*}
where the last inequality holds by the definition of $\hat{\mathcal{X}}_r$.

Substituting the above inequality into Equation \ref{eq_them2_6} and replacing $t$ with $c_1$, we have the following inequality holds for any $c_1, c_2$ subject to the constraint  \ref{eq_them2_5}:
\begin{align*}
    \mathcal{E}_{\text {trans}}\left(\hat{\bm{x}}\right) \leq  \frac{\mathrm{D}_\phi^{\hat{\mathcal{X}}_r}\left({P_\mathcal{T}} \| {P_\mathcal{S}}\right)}{c_1}+c_2r,
\end{align*}
which completes the proof of Equation \ref{eq_them2_7}.

\end{proof}

\subsection{Instantiation of Theorem \ref{theorem1} with Different $\phi$-\bluetwo{divergences}}\label{app:inst}

\blue{Our Theorem \ref{theorem1} provides a more general framework of transferability gap bound that encompasses the family of $\phi$-divergences. Below we provide some special cases under TV distance, KL divergence, and $\chi^2$ divergence.}

\noindent \textbf{Corollary 1. }\textbf{(TV Instantiation of Theorem \ref{theorem1})} Suppose $\ell: \mathcal{Y} \times \mathcal{Y} \rightarrow [0,1]$. Given the surrogate model distribution $P_\mathcal{S}$ and target model distribution $P_\mathcal{T}$, for any $\hat{\bm{x}} \in \hat{\mathcal{X}_r}$ and constant $c_1$ satisfying $0\leq c_1\leq 1$, we have
\begin{align*}
        \mathcal{E}_{\text{trans}}\left(\hat{\bm{x}}\right) \leq \frac{1}{c_1}\mathrm{D}_\text{TV}^{\hat{\mathcal{X}}_r}(P_\mathcal{T} \| P_\mathcal{S}) ,
\end{align*}
where 
\begin{align}
\mathrm{D}_\text{TV}^{\hat{\mathcal{X}}_r}(P_\mathcal{T} \| P_\mathcal{S})&=\sup_{\hat{\bm{x}}^\prime \in \hat{\mathcal{X}}_r} |\mathbb{E}_{\w \sim P_\mathcal{T}}[\ell(f({\hat{\bm{x}}^\prime,\w}),y)] \nonumber\\
&-\mathbb{E}_{\w \sim P_\mathcal{S}}[\ell(f({\hat{\bm{x}}^\prime,\w}),y)]|. \label{eq_tv_def}
\end{align}

\begin{proof}
Here, we instantiate the bound in Theorem \ref{theorem1} with TV distance. Specifically, let $\phi_\text{TV}({u})=\mid u-1\mid$, its convex conjugate function is $\phi^*_\text{TV}(v)=v$, where $v$ takes value in $ \left[-1,1\right]$. 
\begin{align}
&\mathrm{D}_\text{TV}^{\hat{\mathcal{X}}_r}(P_\mathcal{T} \| P_\mathcal{S}) \nonumber\\
=&\sup_{\hat{\bm{x}} \in \hat{\mathcal{X}}_r, -1\leq t \leq 1}  \mathbb{E}_{\w \sim P_\mathcal{T}}\left[t\ell\left(f\left({\hat{\bm{x}},\w}\right),y\right)\right]- \nonumber\\
&\inf_{\alpha \in  \mathbb{R}}\left\{\mathbb{E}_{\w \sim P_\mathcal{S}}\left[\phi_\text{TV}^*\left(t\ell\left(f\left({\hat{\bm{x}},\w}\right),y\right)+\alpha\right)\right]-\alpha\right\} \nonumber \\
\geq  & t\left( \mathbb{E}_{\w \sim P_\mathcal{T}}\left[\ell\left(f\left({\hat{\bm{x}},\w}\right),y\right)\right]- \mathbb{E}_{\w \sim P_\mathcal{S}}\left[\ell\left(f\left({\hat{\bm{x}},\w}\right),y\right)\right]\right) \label{eq_lemma_tv_1}\\
= & t\mathcal{E}_{\text {trans}}\left(\hat{\bm{x}}\right).
\end{align}
The above inequality \ref{eq_lemma_tv_1} holds for any $\hat{\bm{x}} \in \hat{\mathcal{X}}_r, -1\leq t \leq 1$. Since when $t=0$, this holds by the non-negativity of $\mathrm{D}_\phi^{\hat{\mathcal{X}}_r}$ discrepancy, and when $0< t\leq 1$, we have
\begin{align*}
   \mathcal{E}_{\text {trans}}\left(\hat{\bm{x}}\right) \leq \inf_{0< t\leq 1}\frac{\mathrm{D}_\text{TV}^{\hat{\mathcal{X}}_r}(P_\mathcal{T} \| P_\mathcal{S}) }{t}.
\end{align*}
Overall, we have 
\begin{align*}
   \mathcal{E}_{\text {trans}}\left(\hat{\bm{x}}\right) \leq \inf_{0\leq t\leq 1}\frac{\mathrm{D}_\text{TV}^{\hat{\mathcal{X}}_r}(P_\mathcal{T} \| P_\mathcal{S}) }{t}.
\end{align*}
Substituting $t$ with $c_1$ gives the desired results.

We further simplify $\mathrm{D}_\text{TV}^{\hat{\mathcal{X}}_r}(P_\mathcal{T} \| P_\mathcal{S}) \nonumber$ as follows

\begin{align}
&\mathrm{D}_\text{TV}^{\hat{\mathcal{X}}_r}(P_\mathcal{T} \| P_\mathcal{S}) \nonumber\\
=&\sup_{\hat{\bm{x}} \in \hat{\mathcal{X}}_r, -1\leq t \leq 1}  \mathbb{E}_{\w \sim P_\mathcal{T}}\left[t\ell\left(f\left({\hat{\bm{x}},\w}\right),y\right)\right]- \nonumber\\
&\inf_{\alpha \in  \mathbb{R}}\left\{\mathbb{E}_{\w \sim P_\mathcal{S}}\left[\phi_\text{TV}^*\left(t\ell\left(f\left({\hat{\bm{x}},\w}\right),y\right)+\alpha\right)\right]-\alpha\right\} \nonumber \\
= &\sup_{\hat{\bm{x}} \in \hat{\mathcal{X}}_r, -1\leq t \leq 1}  t(\mathbb{E}_{\w \sim P_\mathcal{T}}\left[\ell\left(f\left({\hat{\bm{x}},\w}\right),y\right)\right] \nonumber \\
&-\mathbb{E}_{\w \sim P_\mathcal{S}}\left[\ell\left(f\left({\hat{\bm{x}},\w}\right),y\right)\right]) \label{eq_lemma_tv_2} \\
= & \sup_{\hat{\bm{x}} \in \hat{\mathcal{X}}_r} \mid \mathbb{E}_{\w \sim P_\mathcal{T}}\left[\ell\left(f\left({\hat{\bm{x}},\w}\right),y\right)\right]-\mathbb{E}_{\w \sim P_\mathcal{S}}\left[\ell\left(f\left({\hat{\bm{x}},\w}\right),y\right)\right]\mid.\nonumber
\end{align}

The Equation \ref{eq_lemma_tv_2} is by the fact that $\sup _{t \in[-1,1]} t \cdot a=|a|$. The above derivation recovers the integral probability metric form of TV distance defined by \cite{muller1997integral}.
\end{proof}

\begin{remark}\label{remark:TV1}
The above lemma instantiates the bound in Theorem \ref{theorem1} and gives a clearer clue for the condition $c_1, c_2 \in [0, +\infty)$ \bluetwo{subject} to the constraint $K_{\mathcal{S}}^{\hat{\bm{x}}}(c_1)\leq c_1c_2\mathbb{E}_{\w \sim P_\mathcal{S}}\left[\ell\left(f(\hat{\bm{x}},\w),y\right)\right]$. In the case of TV, we explicitly show that the constraint simplifies to $0\leq c_1\leq 1, c_2=0$. This bound shares a similar form to the domain adaptation bounds in Theorem 2 of \cite{localized} when setting $c_1=1$. While our task does not require a separate ideal joint error term, this term is instead implicitly included within the discrepancy term.
\end{remark}

{
\bluetable
\begin{remark}\label{remark:TV2}
According to Equation \ref{eq_tv_def}, $\mathrm{D}_\text{TV}^{\hat{\mathcal{X}}_r}(P_\mathcal{T} \| P_\mathcal{S})$ quantifies the model discrepancy by identifying an input $\hat{\bm{x}}^\prime $, which, being within $ \hat{\mathcal{X}_r}$, implies its role as a potential AE. This $\hat{\bm{x}}^\prime $ attempts to differentiate between the expected adversarial loss $\mathbb{E}_{\w \sim P_\mathcal{T}}\left[\ell\left(f\left({\hat{\bm{x}}^\prime,\w}\right),y\right)\right]$ and $\mathbb{E}_{\w \sim P_\mathcal{S}}\left[\ell\left(f\left({\hat{\bm{x}}^\prime,\w}\right),y\right)\right]$, that is, separating $P_\mathcal{S}$ from $P_\mathcal{T}$ through the \emph{mean adversarial vulnerability pattern}. 
Ultimately, to efficiently control the transferability gap under TV, the goal of $P_\mathcal{S}$ is to mimic the mean adversarial vulnerability patterns of target models from $P_\mathcal{T}$, rather than faithfully approximate $P_\mathcal{T}$. The latter would require minimizing the TV distance, $\mathrm{D}_\text{TV}(P_\mathcal{T} \| P_\mathcal{S})=\int|p_\mathcal{S}(\w)-p_\mathcal{T}(\w)|d\w$, which provides a less tight bound to control the transferability gap as per in Section \ref{sec:Comparison with Others}.\\
We must carefully represent $P_\mathcal{S}$ in regions that contribute the most to mimic the vulnerability of future target models in $P_\mathcal{T}$. To achieve this, we model $P_\mathcal{S}$ as a mixture of distributional components, \ie, $P_\mathcal S=\frac{1}{I}\sum_{i=1}^I P_{\mathcal S_i}$ with $I$ \bluetwo{being} the total number of attacker-owned surrogate components, then $\mathbb{E}_{P_\mathcal S}(\ell)=\frac{1}{I}\sum_{i=1}^I \mathbb{E}_{P_{\mathcal S_i}}(\ell)$. Hence, reducing $\mathrm{D}_\text{TV}^{\hat{\mathcal X}_r}(P_\mathcal{T}\|P_\mathcal{S})$ amounts to matching the target mean pattern $\mathbb{E}_{P_\mathcal T}(\ell)$ with an average of component-wise surrogate mean patterns $\{\mathbb{E}_{P_{\mathcal S_i}}(\cdot)\}_{i=1}^I$. This motivates constructing $P_\mathcal{S}$ using diverse surrogate components $\left\{P_{\mathcal{S}_i}\right\}_{i=1}^I$ whose mean vulnerability patterns are complementary rather than redundant, dubbed \emph{between-distribution diversity}: if two components generally exhibit similar vulnerability, one will be largely redundant in averaging and contribute minimally to approximating target vulnerability. Moreover, averaging over diverse components helps to smooth the risk from potentially unmatched surrogate choices. 


\end{remark}

\begin{corollary}[\textbf{KL Instantiation of Theorem \ref{theorem1}}]\label{corollary:KL}
Suppose $\ell: \mathcal{Y} \times \mathcal{Y} \rightarrow [0,1]$. Given the surrogate model distribution $P_\mathcal{S}$ and target model distribution $P_\mathcal{T}$, for any $\hat{\bm{x}} \in \hat{\mathcal X}_r$, $c_1>0$, and $c_2 \geq \frac{e^{c_1}-1-c_1}{c_1}$, we have
\begin{align*}
\mathcal{E}_{\text{trans}}(\hat {\bm{x}})
\le \frac{1}{c_1}D_{\text{KL}}^{\hat{\mathcal X}_r}(P_{\mathcal T} \| P_{\mathcal S})
+c_2\,r,
\end{align*}
where $D_{\text{KL}}^{\hat{\mathcal X}_r}(P_{\mathcal T}\| P_{\mathcal S})=\sup_{\hat{\bm{x}} \in\hat{\mathcal X}_r,\,t\in\mathbb R} t\,\mathbb E_{\bm{w}\sim P_{\mathcal T}}[\ell(f(\hat {\bm{x}},\bm{w}),y)] -\log \mathbb E_{\bm{w}\sim P_{\mathcal S}}[e^{t\ell(f(\hat {\bm{x}},\bm{w}),y)}]$.
\end{corollary}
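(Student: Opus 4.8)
The plan is to specialize Theorem~\ref{theorem1} to the generator $\phi_{\text{KL}}(u)=u\log u$, for which $D_{\phi}=\text{KL}$ in Definition~\ref{def_f}. First I would record the Fenchel conjugate $\phi_{\text{KL}}^*(t)=e^{t-1}$, obtained by solving $t-\log u-1=0$ in $\phi^*(t)=\sup_u\{ut-u\log u\}$. Substituting this into both the discrepancy of Definition~\ref{def_our_f} and the quantity $K_\mathcal{S}^{\hat{\bm{x}}}$ of Theorem~\ref{theorem1}, the inner minimization over $\alpha$ can be carried out in closed form: writing $M(t)=\mathbb{E}_{\w\sim P_\mathcal{S}}[e^{t\ell(f(\hat{\bm{x}},\w),y)}]$, the map $\alpha\mapsto e^{\alpha-1}M(t)-\alpha$ is minimized at $\alpha^\star=1-\log M(t)$ with value $\log M(t)$. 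This immediately yields the advertised form $D_{\text{KL}}^{\hat{\mathcal{X}}_r}(P_\mathcal{T}\|P_\mathcal{S})=\sup_{\hat{\bm{x}}\in\hat{\mathcal{X}}_r,\,t\in\mathbb{R}}\,t\,\mathbb{E}_{\w\sim P_\mathcal{T}}[\ell(f(\hat{\bm{x}},\w),y)]-\log M(t)$, and reduces $K_\mathcal{S}^{\hat{\bm{x}}}(t)$ to the centered cumulant generating function $\log M(t)-t\,\mathbb{E}_{\w\sim P_\mathcal{S}}[\ell(f(\hat{\bm{x}},\w),y)]$, which is nonnegative by Jensen.

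The remaining and most delicate step is to convert the abstract feasibility constraint $K_\mathcal{S}^{\hat{\bm{x}}}(c_1)\le c_1 c_2\,\mathbb{E}_{\w\sim P_\mathcal{S}}[\ell(f(\hat{\bm{x}},\w),y)]$ into the explicit condition on $c_2$. The key observation is that $K_\mathcal{S}^{\hat{\bm{x}}}(c_1)$ must be controlled by the \emph{mean loss} rather than by a bare constant, so a crude global bound on the log-MGF will not suffice; I would instead exploit the boundedness $\ell\in[0,1]$ together with the convexity of $x\mapsto e^{c_1 x}$. The chord bound $e^{c_1 x}\le 1+(e^{c_1}-1)x$, valid for $x\in[0,1]$, gives $M(c_1)\le 1+(e^{c_1}-1)\,\mathbb{E}_{\w\sim P_\mathcal{S}}[\ell(f(\hat{\bm{x}},\w),y)]$, and then $\log(1+u)\le u$ yields $\log M(c_1)\le (e^{c_1}-1)\,\mathbb{E}_{\w\sim P_\mathcal{S}}[\ell(f(\hat{\bm{x}},\w),y)]$. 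Subtracting $c_1\,\mathbb{E}_{\w\sim P_\mathcal{S}}[\ell(f(\hat{\bm{x}},\w),y)]$ produces the crucial estimate $K_\mathcal{S}^{\hat{\bm{x}}}(c_1)\le (e^{c_1}-1-c_1)\,\mathbb{E}_{\w\sim P_\mathcal{S}}[\ell(f(\hat{\bm{x}},\w),y)]$, in which the mean loss factors out exactly as required.

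Finally I would match this estimate against the constraint. When $\mathbb{E}_{\w\sim P_\mathcal{S}}[\ell(f(\hat{\bm{x}},\w),y)]>0$, dividing through shows that $c_2\ge (e^{c_1}-1-c_1)/c_1$ is sufficient; the degenerate case of a vanishing mean loss forces $M(c_1)=1$ and hence $K_\mathcal{S}^{\hat{\bm{x}}}(c_1)=0$, so the constraint holds trivially. Because this argument is uniform over every $\hat{\bm{x}}\in\hat{\mathcal{X}}_r$, the hypotheses of Theorem~\ref{theorem1} are satisfied for any $c_1>0$ and $c_2\ge (e^{c_1}-1-c_1)/c_1$, and invoking that theorem delivers $\mathcal{E}_{\text{trans}}(\hat{\bm{x}})\le \frac{1}{c_1}D_{\text{KL}}^{\hat{\mathcal{X}}_r}(P_\mathcal{T}\|P_\mathcal{S})+c_2 r$. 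I expect the only genuine obstacle to be the second step: one must notice that a naive bound on the log-MGF discards the $\mathbb{E}[\ell]$ factor, and that the linear chord bound on $[0,1]$ is precisely the device that restores it so the constraint can be certified.
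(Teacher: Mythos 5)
Your proof is correct and follows essentially the same route as the paper: closed-form minimization over $\alpha$ to obtain the log-MGF form of $D_{\text{KL}}^{\hat{\mathcal X}_r}$ and the centered-CGF form of $K_{\mathcal S}^{\hat{\bm x}}$, then the chord bound $e^{c_1 \ell}\le 1+(e^{c_1}-1)\ell$ on $[0,1]$ followed by $\log(1+z)\le z$ to factor out $\mathbb{E}_{\bm w\sim P_{\mathcal S}}[\ell]$ and certify the constraint of Theorem~\ref{theorem1}. The only (immaterial) difference is your choice of generator $\phi(u)=u\log u$ with conjugate $e^{t-1}$ instead of the paper's $u\log u-u+1$ with conjugate $e^{v}-1$; the affine discrepancy is absorbed by the infimum over $\alpha$, so both yield identical expressions for the discrepancy and for $K_{\mathcal S}^{\hat{\bm x}}$.
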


\begin{proof}
Here, we instantiate the bound in Theorem \ref{theorem1} with KL divergence. Specifically, let $\phi_{\text{KL}}(u)=u\log u-u+1$, its convex conjugate function is $\phi_{\text{KL}}^*(v)=e^v-1$. Substituting $\phi_{\text{KL}}^*$ into the adversarial model discrepancy in Definition \ref{def_our_f} yields:

\begin{align}
&\mathrm{D}_\text{KL}^{\hat{\mathcal{X}}_r}(P_\mathcal{T} \| P_\mathcal{S}) \nonumber\\
=&\sup_{\hat{\bm{x}}^{\prime} \in \hat{\mathcal{X}}_r,  t \in \mathbb{R}}  \mathbb{E}_{\w \sim P_\mathcal{T}}[t\ell(f({\hat{\bm{x}}^{\prime},\w}),y)]- \nonumber\\
&\inf_{\alpha \in  \mathbb{R}}\{\mathbb{E}_{\w \sim P_\mathcal{S}}[\phi_\text{KL}^*(t\ell(f({\hat{\bm{x}}^{\prime},\w}),y)+\alpha)]-\alpha\} \nonumber\\
=&\sup_{\hat{\bm{x}}^{\prime} \in \hat{\mathcal{X}}_r,  t \in \mathbb{R}}  \mathbb{E}_{\w \sim P_\mathcal{T}}[t\ell(f({\hat{\bm{x}}^{\prime},\w}),y)]- \nonumber\\
&\inf_{\alpha \in  \mathbb{R}}\{e^{\alpha}\mathbb{E}_{\w \sim P_\mathcal{S}}[e^{t\ell(f({\hat{\bm{x}}^{\prime},\w}),y)}]-1-\alpha\} \nonumber \\
=&\sup_{\hat{\bm{x}}^{\prime} \in \hat{\mathcal{X}}_r,  t \in \mathbb{R}}  \mathbb{E}_{\w \sim P_\mathcal{T}}[t\ell(f({\hat{\bm{x}}^{\prime},\w}),y)]- \nonumber\\
&\log \mathbb{E}_{\w \sim P_\mathcal{S}}[e^{t\ell(f({\hat{\bm{x}}^{\prime},\w}),y)}], \label{eq_kl1}
\end{align}
where Equation \ref{eq_kl1} holds \bluetwo{since} the solution of $\inf_{\alpha \in  \mathbb{R}}\{\mathbb{E}_{\w \sim P_\mathcal{S}}[e^{t\ell+\alpha}-1]-\alpha\}$ is $\log \mathbb{E}_{\w \sim P_\mathcal{S}}[e^{t\ell}]$, which is obtained at $\alpha^*=-\log \mathbb{E}_{\w \sim P_\mathcal{S}}[e^{t\ell}]$.

Similarly, $K_{\mathcal{S}}^{\hat{\bm{x}}}(t) $ can be written as
\begin{align}
    K_{\mathcal{S}}^{\hat{\bm{x}}}(t) =& \inf _\alpha \left\{\mathbb{E}_{\w \sim P_\mathcal{S}}\left[\phi^*\left(t \ell\left(f(\hat{\bm{x}},\w),y\right)+\alpha\right)\right]-\alpha\right\} \nonumber \\
    &-\mathbb{E}_{\w \sim P_\mathcal{S}}\left[t \ell\left(f(\hat{\bm{x}},\w),y\right)\right] \nonumber\\
    =& \log \mathbb{E}_{\w \sim P_\mathcal{S}}\left[e^{t\ell\left(f\left({\hat{\bm{x}},\w}\right),y\right)}\right]-\mathbb{E}_{\w \sim P_\mathcal{S}}\left[t \ell\left(f(\hat{\bm{x}},\w),y\right)\right],\label{eq_kl2}
\end{align}
which is the mean-removed CGF of the surrogate loss distribution.

Since $\ell: \mathcal{Y} \times \mathcal{Y} \rightarrow [0,1]$, by convexity of $\exp(\cdot)$, we have $e^{t\ell} \leq (1-\ell)e^0+\ell e^t \leq 1+(e^t-1)\ell$.
Taking the expectation and logarithm on both sides, we have
\begin{align*}
    \log \mathbb{E}_{\w \sim P_\mathcal{S}}[e^{t\ell}]\leq \log (1+(e^t-1)\mathbb{E}_{\w \sim P_\mathcal{S}}[\ell]).
\end{align*}
As $\log (1+z)\leq z$ holds when $z > -1$ and $(e^t-1)\mathbb{E}_{\w \sim P_\mathcal{S}}[\ell] \geq 0$, we have
\begin{align}
    \log \mathbb{E}_{\w \sim P_\mathcal{S}}[e^{t\ell}]\leq (e^t-1)\mathbb{E}_{\w \sim P_\mathcal{S}}[\ell]. \label{eq_kl3}
\end{align}

Plugging \bluetwo{Equation} \ref{eq_kl3} into \bluetwo{Equation} \ref{eq_kl2} yields
\begin{align*}
    K_{\mathcal{S}}^{\hat{\bm{x}}}(t) =& \log \mathbb{E}_{\w \sim P_\mathcal{S}}\left[e^{t\ell\left(f\left({\hat{\bm{x}},\w}\right),y\right)}\right] \nonumber\\
    &-\mathbb{E}_{\w \sim P_\mathcal{S}}\left[t \ell\left(f(\hat{\bm{x}},\w),y\right)\right]\\
    &\leq (e^t-1-t)\mathbb{E}_{\w \sim P_\mathcal{S}}[\ell\left(f(\hat{\bm{x}},\w),y\right)].
\end{align*}

Let $t=c_1>0$ and $c_2 \geq \frac{(e^{c_1}-1-{c_1})}{c_1}$ satisfy the constraint $K_{\mathcal{S}}^{\hat{\bm{x}}}(c_1)\leq c_1c_2\mathbb{E}_{\w \sim P_\mathcal{S}}\left[\ell\left(f(\hat{\bm{x}},\w),y\right)\right]$, then we have 
\begin{align*}
\mathcal{E}_{\text{trans}}(\hat {\bm{x}})
\le \frac{1}{c_1}D_{\text{KL}}^{\hat{\mathcal X}_r}(P_{\mathcal T} \| P_{\mathcal S})
+c_2\,r,
\end{align*}
which completes the proof.
\end{proof}

\begin{remark}\label{remark:KL1}
As $c_1 > 0$, the induced $c_2(c_1)=\frac{(e^{c_1}-1-{c_1})}{c_1}$ increases with $c_1$ and grows rapidly for large $c_1$. Meanwhile, by Taylor expansion, $c_2(c_1)=\frac{c_1}{2}+O(c_1^2)$ as $c_1\rightarrow0$. 
Therefore, a large $c_1$ will downweight the model discrepancy with $\frac{1}{c_1}$ term but amplify $r$ through $c_2(c_1)$. 
When $r$ is small which reflects a strong surrogate attack performance, we can choose a larger $c_2$, \eg, $c_2=1$, which is tolerable as it is multiplied by a vanishing $r$. This yields a coefficient of $\frac{1}{c_1}=0.796$ for the KL term.
However, if $r$ is not negligible, we prefer a smaller $c_2$, \eg, $c_2=0.1$, at the cost of a larger coefficient $\frac{1}{c_1}=5.3$ on the discrepancy term.
\end{remark}

\begin{remark}\label{remark:KL2}
Compared to the TV-instantiated discrepancy which matches \textit{means} of the adversarial vulnerability across $P_{\mathcal S}$ and $P_{\mathcal T}$ as discussed in Remark \ref{remark:TV2}, the KL-instantiated discrepancy is inherently \textit{tail-sensitive}. Specifically, Lemma~\ref{lemma:KL-undercoverage} formalizes that $D_{\text{KL}}^{\hat{\mathcal X}_r}$ can penalize $P_{\mathcal{S}}$'s under-coverage of target vulnerability patterns. Therefore, if $P_{\mathcal T}$ places non-negligible probability mass on a subset of models exhibiting different adversarial vulnerability patterns from what $P_{\mathcal S}$ focuses on, but $P_{\mathcal S}$ assigns these models, or their corresponding vulnerability \bluetwo{patterns}, extremely small probability, then $D_{\text{KL}}^{\hat{\mathcal X}_r}(P_{\mathcal T}\|P_{\mathcal S})$ can become very large.\\
Algorithmically, to control the KL-instantiated discrepancy term, the surrogate distribution $P_{\mathcal S}$ should aim to cover a broad set of plausible vulnerability patterns, so that no target pattern is missing. Since a single surrogate component typically captures limited vulnerability patterns, this provides theoretical support for constructing $P_{\mathcal S}$ with high between-distribution diversity, \ie, using a mixture of surrogate components that collectively cover heterogeneous vulnerability patterns. This reduces the risk that a \bluetwo{plausible} target vulnerability pattern is under-covered in $P_{\mathcal S}$, which is exactly the failure mode that the KL-instantiated discrepancy penalizes most strongly.
\end{remark}

\begin{corollary}[\textbf{$\chi^2$ Instantiation of Theorem \ref{theorem1}}]\label{corollary_chi2}
Suppose $\ell: \mathcal{Y} \times \mathcal{Y} \rightarrow [0,1]$. Given the surrogate model distribution $P_\mathcal{S}$ and target model distribution $P_\mathcal{T}$, for any $\hat{\bm{x}} \in \hat{\mathcal X}_r$, $c_1>0$, and $c_2 \geq \frac{{c_1}}{4} \cdot \frac{\operatorname{Var}_{P_{\mathcal S}}(\ell (f(\hat{\bm x}, \bm w), y))}{\mathbb{E}_{\bm w \sim  P_{\mathcal S}}[\ell (f(\hat{\bm x}, \bm w), y)]}$, we have
\begin{align*}
\mathcal{E}_{\text{trans}}(\hat {\bm{x}})
\le \frac{1}{c_1}D_{{{\chi}^2}}^{\hat{\mathcal X}_r}(P_{\mathcal T} \| P_{\mathcal S})+c_2\,r,
\end{align*}
where 
\begin{align}
    &D^{\hat{\mathcal{X}}_r}_{\chi^2}(P_{\mathcal T} \| P_{\mathcal S})=\sup _{\hat{\bm{x}}^{\prime} \in \hat{\mathcal{X}}_r, t \in \mathbb{R}}\{t(\mathbb{E}_{\bm w \sim P_{\mathcal T}}[\ell(f(\hat{\bm x}^{\prime}, \bm w ), y)] \nonumber \\
    &-\mathbb{E}_{\bm w \sim P_{\mathcal S}}[\ell(f(\hat{\bm x}^{\prime}, \bm w), y)])-\frac{t^2}{4} \operatorname{Var}_{\bm w \sim P_{\mathcal S}}(\ell(f(\hat{\bm x}^{\prime}, \bm w), y))\}.\label{eq_chi2_1}
\end{align}

\end{corollary}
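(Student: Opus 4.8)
The plan is to instantiate Theorem~\ref{theorem1} with the $\chi^2$ divergence, following exactly the template already used for the KL case in Corollary~\ref{corollary:KL}. The first step is to record the Fenchel conjugate: for $\phi_{\chi^2}(u)=(u-1)^2$ the stationarity condition $v-2(u-1)=0$ gives the unconstrained maximizer $u=1+v/2$, hence $\phi^*_{\chi^2}(v)=v+\tfrac{v^2}{4}$. Throughout I abbreviate $\ell:=\ell(f(\hat{\bm x}',\bm w),y)$ and work with this closed form, mirroring the paper's convention (the KL proof likewise used $\phi^*_{\text{KL}}(v)=e^v-1$ without dwelling on domain technicalities).

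Next I would substitute $\phi^*_{\chi^2}$ into Definition~\ref{def_our_f} and carry out the inner infimum over $\alpha$. Expanding $\phi^*_{\chi^2}(t\ell+\alpha)=(t\ell+\alpha)+\tfrac14(t\ell+\alpha)^2$ and forming $\mathbb{E}_{\bm w\sim P_{\mathcal S}}[\cdot]-\alpha$, the linear-in-$\alpha$ terms cancel and one is left with $t\,\mathbb{E}_{P_{\mathcal S}}[\ell]+\tfrac14\mathbb{E}_{P_{\mathcal S}}[(t\ell+\alpha)^2]$. This is a quadratic in $\alpha$ minimized at $\alpha^\star=-t\,\mathbb{E}_{P_{\mathcal S}}[\ell]$, whose minimal quadratic term equals $\tfrac{t^2}{4}\operatorname{Var}_{P_{\mathcal S}}(\ell)$. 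Therefore $\inf_\alpha\{\cdots\}=t\,\mathbb{E}_{P_{\mathcal S}}[\ell]+\tfrac{t^2}{4}\operatorname{Var}_{P_{\mathcal S}}(\ell)$, and subtracting this from the $P_{\mathcal T}$ term yields precisely the quadratic-in-$t$ objective of Equation~\ref{eq_chi2_1}, establishing the claimed closed form of $D^{\hat{\mathcal X}_r}_{\chi^2}(P_{\mathcal T}\|P_{\mathcal S})$.

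The third step is to evaluate $K^{\hat{\bm x}}_{\mathcal S}(t)$ and check feasibility. Reusing the same infimum computation (now with the fixed AE $\hat{\bm x}$), $K^{\hat{\bm x}}_{\mathcal S}(t)=t\,\mathbb{E}_{P_{\mathcal S}}[\ell]+\tfrac{t^2}{4}\operatorname{Var}_{P_{\mathcal S}}(\ell)-t\,\mathbb{E}_{P_{\mathcal S}}[\ell]=\tfrac{t^2}{4}\operatorname{Var}_{P_{\mathcal S}}(\ell)$; the mean term cancels and $K$ collapses to a scaled surrogate-loss variance. Setting $t=c_1$, the constraint $K^{\hat{\bm x}}_{\mathcal S}(c_1)\le c_1c_2\,\mathbb{E}_{P_{\mathcal S}}[\ell]$ becomes $\tfrac{c_1^2}{4}\operatorname{Var}_{P_{\mathcal S}}(\ell)\le c_1c_2\,\mathbb{E}_{P_{\mathcal S}}[\ell]$, and dividing by $c_1>0$ gives exactly the feasibility condition $c_2\ge \tfrac{c_1}{4}\cdot\frac{\operatorname{Var}_{P_{\mathcal S}}(\ell)}{\mathbb{E}_{P_{\mathcal S}}[\ell]}$ stated in the corollary. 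Invoking Theorem~\ref{theorem1} with this admissible pair $(c_1,c_2)$ then delivers $\mathcal{E}_{\text{trans}}(\hat{\bm x})\le \tfrac{1}{c_1}D^{\hat{\mathcal X}_r}_{\chi^2}(P_{\mathcal T}\|P_{\mathcal S})+c_2 r$.

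The computation is essentially routine, so the only genuinely delicate point is the domain of the conjugate. Since $\phi_{\chi^2}$ is declared on $\mathbb{R}_+$, the exact conjugate is piecewise ($v+\tfrac{v^2}{4}$ for $v\ge-2$ and the constant $-1$ below), so I would either adopt the extended convention $\phi_{\chi^2}(u)=(u-1)^2$ on all of $\mathbb{R}$ --- which makes $\phi^*_{\chi^2}(v)=v+\tfrac{v^2}{4}$ globally and preserves the variational lower-bound property --- or simply observe that using this closed form consistently in both $K^{\hat{\bm x}}_{\mathcal S}$ and $D^{\hat{\mathcal X}_r}_{\chi^2}$ is self-consistent and is what Equation~\ref{eq_chi2_1} already posits. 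A secondary, harmless caveat is the nondegeneracy assumption $\mathbb{E}_{P_{\mathcal S}}[\ell]>0$ used when rewriting the feasibility bound as a ratio; the case $\mathbb{E}_{P_{\mathcal S}}[\ell]=0$ is a degenerate perfect-attack scenario that can be dispatched separately.
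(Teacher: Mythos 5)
Your proposal is correct and follows essentially the same route as the paper's proof: compute $\phi^*_{\chi^2}(v)=v+\tfrac{v^2}{4}$, carry out the infimum over $\alpha$ at $\alpha^\star=-t\,\mathbb{E}_{P_{\mathcal S}}[\ell]$ to obtain the variance form of $D^{\hat{\mathcal X}_r}_{\chi^2}$, observe that $K^{\hat{\bm x}}_{\mathcal S}(t)=\tfrac{t^2}{4}\operatorname{Var}_{P_{\mathcal S}}(\ell)$, and translate the constraint of Theorem~\ref{theorem1} into the stated condition on $c_2$. Your closing remarks on the piecewise nature of the conjugate over $\mathbb{R}_+$ and the nondegeneracy $\mathbb{E}_{P_{\mathcal S}}[\ell]>0$ are points the paper passes over silently, so they only add rigor rather than change the argument.
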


\begin{proof}
Here, we instantiate the bound in Theorem \ref{theorem1} with Pearson $\chi^2$ divergence. Specifically, let $\phi_{\chi^2}(u)=(u-1)^2$, its convex conjugate function is $\phi_{\chi^2}^{\star}(v)=v+\frac{v^2}{4}$. Substituting $\phi_{\chi^2}^{\star}$ into the adversarial model discrepancy in Definition \ref{def_our_f} yields:
\begin{align*}
   D^{\hat{\mathcal{X}}_r}_{\chi^2}\left(P_{\mathcal T} \| P_{\mathcal S}\right) &=\sup _{\hat{\bm x}^{\prime} \in \hat{\mathcal{X}}_r, t \in \mathbb{R}} \mathbb{E}_{\bm w \sim P_{\mathcal T}}[t \ell]\\
    &-\inf _{\alpha \in \mathbb{R}}\left\{\mathbb{E}_{\bm w \sim P_{\mathcal S}}\left[\phi_{\chi^2}^*(t \ell+\alpha)\right]-\alpha\right\}\\
    &=\sup _{\hat{\bm x}^{\prime} \in \hat{\mathcal{X}}_r, t \in \mathbb{R}} \mathbb{E}_{\bm w \sim P_{\mathcal T}}[t \ell]\\
    &-\inf _{\alpha \in \mathbb{R}}\left\{t\mathbb{E}_{\bm w \sim P_{\mathcal S}}[\ell]+\frac{1}{4}\mathbb{E}_{\bm w \sim P_{\mathcal S}}\left[(t \ell+\alpha)^2\right]\right\},
\end{align*}
where $\ell$ abbreviates $\ell\left(f\left(\hat{\bm x}^{\prime}, \bm w\right), y\right)$.

Since the infimum of $\mathbb{E}_{\bm w \sim P_{\mathcal S}}\left[(t \ell+\alpha)^2\right]$ over $\alpha$ is $t^2 \operatorname{Var}_{P_{\mathcal S}}(\ell)$, obtained at $\alpha^*=-t \mathbb{E}_{\w \sim P_{\mathcal S}}[\ell]$, we have
\begin{align*}
    &D^{\hat{\mathcal{X}}_r}_{\chi^2}\left(P_{\mathcal T} \| P_{\mathcal S}\right)\\
    &=\sup _{\hat{\bm x}^{\prime} \in \hat{\mathcal{X}}_r, t \in \mathbb{R}} \left\{t\left(\mathbb{E}_{\bm w \sim P_{\mathcal T}}[\ell]-\mathbb{E}_{\bm w \sim P_{\mathcal S}}[\ell]\right)-\frac{t^2}{4} \operatorname{Var}_{P_{\mathcal S}}(\ell)\right\} .
\end{align*}

Similarly, $K_{\mathcal{S}}^{\hat{\bm{x}}}(t) $ can be written as
\begin{align}
    K_{\mathcal{S}}^{\hat{\bm{x}}}(t) =& \inf _\alpha \left\{\mathbb{E}_{\w \sim P_\mathcal{S}}\left[\phi_{\chi^2}^*\left(t \ell+\alpha\right)\right]-\alpha\right\}-\mathbb{E}_{\w \sim P_\mathcal{S}}\left[t \ell\right]\nonumber\\
    =&\frac{t^2}{4} \operatorname{Var}_{P_{\mathcal S}}(\ell).\label{eq_chi2_4}
\end{align}

Then the constraint $K^{\hat{\x}}_{\mathcal{S}}\left(c_1\right) \leq c_1 c_2 \mathbb{E}_{\bm w \sim  P_{\mathcal S}}[\ell]$ in Theorem \ref{theorem1} becomes
\begin{align*}
    \frac{{c_1}^2}{4} \operatorname{Var}_{P_{\mathcal S}}(\ell) \leq c_1 c_2 \mathbb{E}_{\bm w \sim  P_{\mathcal S}}[\ell],
\end{align*}
which is satisfied when
\begin{align*}
    c_1 >0,\quad c_2 \geq \frac{{c_1}}{4} \cdot \frac{\operatorname{Var}_{P_{\mathcal S}}(\ell)}{\mathbb{E}_{\bm w \sim  P_{\mathcal S}}[\ell]},
\end{align*}
and applying Theorem \ref{theorem1} completes the proof.
\end{proof}

\begin{remark}\label{remark:chi2_1}
Since $\ell \in[0,1]$, we have $\operatorname{Var}(\ell)=\mathbb{E}\left[\ell^2\right]-(\mathbb{E}[\ell])^2 \leq \mathbb{E}[\ell]-(\mathbb{E}[\ell])^2 \leq \mathbb{E}[\ell]$. Therefore, a simple sufficient condition is $c_2=\frac{c_1}{4}$ for any $c_1>0$. Under $\chi^2$, $c_2$ grows slower with $c_1$ than under KL. For example, $c_2=1$ leads to a coefficient of $c_2=0.25$ for the $\chi^2$ discrepancy term, thus providing a tighter trade-off.
\end{remark}

\begin{remark}\label{remark:chi2_2}
Optimizing over $t$ yields an analytical form of Equation \ref{eq_chi2_1}:
\begin{equation}\label{eq_chi2_2}
    D_{\chi^2}^{\hat{\mathcal{X}}_r}(P_{\mathcal T} \| P_{\mathcal S})=\sup_{\hat{\x}^{\prime} \in \hat{\mathcal{X}}_r} \frac{\Delta^2}{\operatorname{Var}_{P_{\mathcal S}}(\ell)},
\end{equation}
where $\Delta=\mathbb{E}_{\w \sim P_{\mathcal T}}[\ell]-\mathbb{E}_{\w \sim P_{\mathcal S}}[\ell]$. The optimal $t^{\star}$ is obtained at $\frac{2 \Delta}{{\operatorname{Var}_{P_{\mathcal S}}(\ell)}}$. 
In Equation \ref{eq_chi2_2}, the numerator measures the mean shift \wrt~adversarial vulnerability between target and surrogate model distributions, which aligns with the mean-matching view which is also emphasized under the TV instantiation, while the denominator measures the variance of surrogate models' vulnerability on that same AE, rendering $\chi^2$-instantiated discrepancy further redundancy-sensitive. Concretely, even a modest mean mismatch $\Delta$ can be amplified if surrogate models behave almost identically on some
$\hat{\x}'\in\hat{\mathcal X}_r$ so that $\operatorname{Var}_{P_{\mathcal S}}(\ell)$ becomes small.
In contrast, if surrogate models exhibit diverse patterns of adversarial vulnerability on $\hat{\x}'\in\hat{\mathcal{X}}_r$, then $\operatorname{Var}_{{P}_{\mathcal S}}(\ell)$ is larger, reducing $D_{\chi^2}^{\hat{\mathcal{X}}_r}$.\\
In practice, this highlights the principle of building $P_\mathcal{S}$ that it should span different vulnerability patterns. Specifically, building $P_\mathcal{S}$ multi-component explicitly helps here. To explicitly see this, if $P_{\mathcal S}$ is built as a mixture of components $\{{P}_{{\mathcal S}_i}\}_{i=1}^I$, then the surrogate variance admits the decomposition
\begin{align}\label{eq_chi2_3}
\operatorname{Var}_{P_{\mathcal S}}(\ell)
=\operatorname{Var}_{i}(\mathbb{E}_{P_{{\mathcal S}_i}}[\ell])+\mathbb{E}_{i}[\operatorname{Var}_{P_{{\mathcal S}_i}}(\ell)],
\end{align}
Increasing the first term $\operatorname{Var}_{i}(\mathbb{E}_{P_{{\mathcal S}_i}}[\ell])$ explicitly motivates including components with inherently different mean adversarial vulnerability patterns, \ie, between-distribution diversity, which \bluetwo{is} also motivated under TV and KL from different theoretical perspectives. 
Increasing the second term $\mathbb{E}_{i}[\operatorname{Var}_{P_{\mathcal{S}_i}}(\ell)]$ further explicitly emphasizes within-distribution diversity, \ie, for each component $P_{\mathcal{S}_i}$, ensuring non-trivial diversity among surrogate samples drawn from the same component.
In practice, attackers do not optimize AEs over the full distribution $P_{\mathcal S}$ but over a finite surrogate sample set $\mathcal M_{\mathcal S}$. Therefore, increasing the within-distribution diversity directly motivates attackers to collect behaviorally distinct surrogate samples within each component rather than adding behaviorally redundant models, which merely increase computation while contributing little to the within-component variance.
In summary, both dimensions of diversity enlarge the variance and thereby suppress the $\chi^2$-instantiated discrepancy, offering a theoretical rationale for constructing surrogate sets that are diverse both within and between distributional components.
\end{remark}


}

\subsection{Proof of Theorem \ref{theorem_pac}}\label{app: Proof of theorem_pac}
Below, we show an empirical estimation of the surrogate adversarial risk $R_{\mathcal{S}}(\hat{\bm{x}})$. The proof technique we used here is inspired from Foret \etal \cite{sam} and Chatterji \etal \cite{intriguing}.
\noindent \textbf{Theorem 3. (Surrogate risk bound)} For any $\rho>0$, $0<\delta<1$, model distribution $P_\mathcal{S}$, and $\hat{\bm{x}} \in \hat{\mathcal{X}_r}$, with probability $1-\delta$ over the choice of surrogate model set $\mathcal{M}_\mathcal{S}\sim P_\mathcal{S}$ with size $K \in \mathbb{N}$, we have
\begin{equation}\label{eq_theorem3_1}
\begin{aligned}
    &R_{\mathcal{S}}(\hat{\bm{x}})
     \leq \max_{\|\boldsymbol{\epsilon}\|_2 \leq \rho}R_{\hat{\mathcal{S}}}(\hat{\bm{x}}+\bm{\epsilon})+\\
     &\sqrt{\frac{\frac{d}{2}\log (1+\frac{\gamma^2}{\rho^2}(1+\sqrt{\frac{\log K}{d}})^2)+\log\frac{K}{\delta}+\tilde{\mathcal{O}}(1)}{2(K-1)}}.
\end{aligned}
\end{equation}
where $\tilde{\mathcal{O}}(1)$ term \blue{corresponds} to $\varepsilon=\frac{1}{2}+2\log (2+3d+6r^2K+4d\log(\sqrt{d}+\sqrt{logK}))$.
\begin{proof}
For $\hat{\bm{x}} \in \hat{\mathcal{X}_r}$ with $ \hat{\mathcal{X}_{r}}=\left\{\hat{\bm{x}} \in \hat{\mathcal{X}} \mid R_{\mathcal{S}}\left(\hat{\bm{x}}\right) \leq r\right\}$, if $r$ is a relatively small value, then we assume without loss of generality that adding a noise $\bm{\epsilon}$ following the Gaussian distribution, \ie, $\epsilon_i \sim \mathcal{N}(0,\sigma^2)$, on $\hat{\bm{x}}$ will not further reduce its surrogate adversarial risk. That is to say,
\begin{equation}\label{eq_theorem3_15}
     R_{\mathcal{S}}(\hat{\bm{x}})\leq \mathbb{E}_{\epsilon_i \sim \mathcal{N}(0,\sigma^2)}[R_\mathcal{S}(\hat{\bm{x}}+\bm{\epsilon})].
\end{equation}

From Lemma \ref{lemma_pac_trans}, we have that for any model distribution $P_\mathcal{S}$, $K \in \mathbb{N}$, prior distribution $\mathcal{P}$ over AEs, $0<\delta<1$, with probability $1-\delta$ over the choice of set $\mathcal{M}_\mathcal{S}\sim P_\mathcal{S}$ with size $K$, for any posterior distributions $\mathcal{Q}$ over AEs, the following bound holds:
\begin{align}\label{eq_theorem3_6}
     \mathbb{E}_{\hat{\bm{x}} \sim \mathcal{Q}}[R_\mathcal{S}(\hat{\bm{x}})] \leq \mathbb{E}_{\hat{\bm{x}} \sim \mathcal{Q}}[R_{\hat{\mathcal{S}}}(\hat{\bm{x}})] +\sqrt{\frac{\text{KL}(\mathcal{Q}||\mathcal{P})+\log\frac{K}{\delta}}{2(K-1)}}.
\end{align}

We first consider the KL divergence term in Equation \ref{eq_theorem3_6}. Let the prior $\mathcal{P}$ be a Gaussian distribution centered at the benign image $\bm{x}$ with covariance matrix ${\sigma}_p^2\bm{I}$, \ie, $\mathcal{P}=\mathcal{N}(\bm{x}, {\sigma}_p^2\bm{I})$. Let posterior $\mathcal{Q}$ be a Gaussian distribution centered at $\hat{\bm{x}}$ which is the benign image $\bm{x}$ with an additive adversarial perturbation $\bm{\xi}$ ---$\bm{x}+\bm{\xi}$, with covariance matrix ${\sigma}^2\bm{I}$, \ie, $\mathcal{Q}=\mathcal{N}(\bm{x}+\bm{\xi}, {\sigma}^2\bm{I})$, as if we have added the noise $\epsilon_i \sim \mathcal{N}(0,\sigma^2)$. Then KL divergence term can be written as:
\begin{align}\label{eq_theorem3_4}
    \text{KL}(\mathcal{Q} \| \mathcal{P})=\frac{1}{2}\left[\frac{d\sigma^2+\left\|\bm{\xi}\right\|_2^2}{\sigma_p^2}-d+d\log \left(\frac{\sigma_p^2}{\sigma^2}\right)\right].
\end{align}

The variance of prior $\sigma_p$ is selected to minimize the above KL divergence. Given $\sigma$, we differentiate $\text{KL}(\mathcal{Q} \| \mathcal{P})$ with respect to $\sigma_p$ and set the derivative to zero, solving 
\begin{align*}
    {\sigma_p^{*2}}=\sigma^2+{\|\bm{\xi}\|}_2^2/d.
\end{align*}
However, the prior is selected in advance and independent of the training data, which the posterior depends on. Hence, the above solution is not allowed. However, one can optimize the prior standard deviation $\sigma_p$ over a pre-defined set and use a union bound to get the bound for best $\sigma_p$ in this set. Let the pre-defined set be $\{c\exp{((1-j)/d)}|j \in \mathbb{N}\}$. If for each $j$ one chooses $\sigma_p$ to be
\begin{align*}
    \sigma_p=c\exp{((1-j)/d)},
\end{align*}
such that the above PAC-Bayes bound holds with probability $1-\delta_j$ where $\delta_j=\frac{6\delta}{\pi^2j^2}$, then all \bluetwo{bounds} can be combined according to the union bound and hold with probability
\begin{align*}
   1-\sum_{j=1}^\infty\delta_j=1-\sum_{j=1}^\infty\frac{6\delta}{\pi^2j^2}=1-\delta. 
\end{align*}

As the right hand side of Equation \ref{eq_theorem3_1} is lower bounded by $\sqrt{\frac{d \log \left(1+\frac{\|\bm{\xi}\|^2_2}{\rho^2}\right)}{4K}}$ since $\bm{\xi}$ is restricted in the ball of radius $\gamma$.  When $\sqrt{\frac{d \log \left(1+\frac{\|\bm{\xi}\|^2_2}{\rho^2}\right)}{4K}}>r$, this bound holds trivially. Thus we only consider when $\sqrt{\frac{d \log \left(1+\frac{\|\bm{\xi}\|^2_2}{\rho^2}\right)}{4K}}\leq r$, leading to
\begin{align}\label{eq_theorem3_2}
    \|\bm{\xi}\|^2_2 \leq \rho^2\left(\exp{\left(\frac{4r^2K}{d}\right)-1}\right).
\end{align}

Therefore, by Equation \ref{eq_theorem3_2}, we have
\begin{align}
    \sigma^2+{\|\bm{\xi}\|}_2^2/d \nonumber
    &\leq \sigma^2+\frac{\rho^2}{d}\left(\exp{\left(\frac{4r^2K}{d}\right)}-1\right)\nonumber\\
    &\leq \sigma^2+\rho^2\exp{\left(\frac{4r^2K}{d}\right)}.\label{eq_theorem3_5}
\end{align}
As $ \sigma_p=c\exp{((1-j)/d)}$, we consider the case $j=1-d\log(\sigma_p^{2}/c)=\left\lfloor 1-d \log \left(\left(\sigma^2+\|\bm{\xi}\|_2^2 / d\right) / c\right)\right\rfloor$, where $c$ can be set as $\sigma^2+\rho^2\exp{\left(\frac{4r^2K}{d}\right)}$to make sure $j \in \mathbb{N}$ according to Equation \ref{eq_theorem3_5}. 
For this $j$, we have
\begin{align*}
    1-d\log(\sigma_p^{2}/c) &\leq  1-d \log \left(\left(\sigma^2+\|\bm{\xi}\|_2^2 / d\right) / c\right)\nonumber\\
    \sigma_p^{2} &\geq \sigma^2+\|\bm{\xi}\|_2^2 / d.
\end{align*}
Similarly, we can derive an upper bound for $\sigma_p^{2} $,
\begin{align*}
    1-d\log(\sigma_p^{2}/c) &\geq  1-d\log \left(\left(\sigma^2+\|\bm{\xi}\|_2^2 / d\right) / c\right)-1\nonumber\\
    \sigma_p^{2} &\leq \exp(\frac{1}{d})(\sigma^2+\|\bm{\xi}\|_2^2 / d).
\end{align*}

With the above lower bound and upper bound, the KL term in Equation \ref{eq_theorem3_4} can be written as:
\begin{equation}\label{eq_theorem3_8}
\begin{aligned}
    & \text{KL}(\mathcal{Q} \| \mathcal{P})=\frac{1}{2}\left[\frac{d\sigma^2+\left\|\bm{\xi}\right\|_2^2}{\sigma_p^2}-d+d\log \left(\frac{\sigma_p^2}{\sigma^2}\right)\right]\\
    &\leq\frac{1}{2}\left[\frac{d\sigma^2+\left\|\bm{\xi}\right\|_2^2}{\sigma^2+\|\bm{\xi}\|_2^2 / d}-d+d\log \left(\frac{\exp(\frac{1}{d})(\sigma^2+\|\bm{\xi}\|_2^2 / d)}{\sigma^2}\right)\right]\\
    &=\frac{1}{2}\left[1+d\log \left(\frac{d\sigma^2+\|\bm{\xi}\|_2^2 }{d\sigma^2}\right)\right].
\end{aligned}
\end{equation}

We then consider the log term in Equation \ref{eq_theorem3_6}. Under the case of above discussed $j$, the bound holds with probability $1-\delta_j$ where $\delta_j=\frac{6\delta}{\pi^2j^2}$. Therefore, the \bluetwo{remaining} part could be simplified as
\begin{align}
  &\log\frac{K}{\delta_j}=\log\frac{K}{\delta}+\log\frac{\pi^2j^2}{6} \nonumber \\
  & \leq \log\frac{K}{\delta}+\log\frac{\pi^2{\left( 1+d \log \left(c / \left(\sigma^2+\|\bm{\xi}\|_2^2 / d\right)\right)\right)}^2}{6}\label{eq_theorem3_7}\\
  & \leq \log\frac{K}{\delta}+\log\frac{\pi^2{\left( 1+d \log \left(c / \sigma^2\right)\right)}^2}{6}\nonumber\\
  & = \log\frac{K}{\delta}+\log\frac{\pi^2{\left( 1+d \log \left(1+\exp{\left(\log(\frac{\rho^2}{\sigma^2})+\frac{4r^2K}{d}\right)}\right)\right)}^2}{6}\label{eq_theorem3_9}\\
    & \leq \log\frac{K}{\delta}+\log\frac{\pi^2{\left( 1+2d+4r^2K+d\log(\frac{\rho^2}{\sigma^2})\right)}^2}{6}\label{eq_theorem3_10}\\
     & \leq \log\frac{K}{\delta}+2\log (2+3d+6r^2K+2d\log(\frac{\rho^2}{\sigma^2}))\label{eq_theorem3_11},
\end{align}
where Equation \ref{eq_theorem3_7} is by the upper bound of $j$. Equation \ref{eq_theorem3_9} is by the value set for $c$. \ref{eq_theorem3_10} is by the fact that $\log \left(1+{e}^x\right) \leq 2+x$, for every $x \geq 0$.

Substituting the upper bound for KL term in Equation \ref{eq_theorem3_8} and the upper bound for log term in Equation \ref{eq_theorem3_11}, we have the following PAC-Bayes bound
\begin{small}
\begin{equation}\label{eq_theorem3_14}
\begin{aligned}
     &\mathbb{E}_{\epsilon_i \sim \mathcal{N}(0,\sigma^2)}[R_\mathcal{S}(\hat{\bm{x}}+\bm{\epsilon})] \leq \mathbb{E}_{\epsilon_i \sim \mathcal{N}(0,\sigma^2)}[R_{\hat{\mathcal{S}}}(\hat{\bm{x}}+\bm{\epsilon})] +\\
     &\sqrt{\frac{\frac{d}{2}\log \left(\frac{d\sigma^2+\|\bm{\xi}\|_2^2 }{d\sigma^2}\right)+2\log (2+3d+6r^2K+2d\log(\frac{\rho^2}{\sigma^2}))+\varepsilon^\prime}{2(K-1)}},
    \end{aligned}
\end{equation}
\end{small}
where $\varepsilon^\prime=\log\frac{K}{\delta}+\frac{1}{2}$.

As we set $\epsilon_i \sim \mathcal{N}(0,\sigma^2)$, we have 
$$
\|\boldsymbol{\epsilon}\|_2^2=\sum_{i=1}^d \epsilon_i^2=\sigma^2 \chi_d^2,
$$
where $\chi_d^2$ is a chi-square random variable with $d$ degrees of freedom (mean $d$, variance $2 d$ ).
By Lemma 1 in \cite{laurent2000adaptive}, we have 
$$
\operatorname{Pr}\left[\chi_d^2-d \geq 2 \sqrt{d t}+2 t\right] \leq e^{-t}, \quad t>0 .
$$
Multiplying both sides by $\sigma^2$ yields
\begin{align}\label{eq_theorem3_12}
   \operatorname{Pr}\left[\|\boldsymbol{\epsilon}\|_2^2-d \sigma^2 \geq 2 \sigma^2 \sqrt{d t}+2 \sigma^2 t\right] \leq e^{-t}. 
\end{align}
Set $ e^{-t}=1/\sqrt{K}$, then $t=\log{\sqrt{K}}$. Substituting $t$ into Equation \ref{eq_theorem3_12}, we have
\begin{align}
   \operatorname{Pr}\left[\|\boldsymbol{\epsilon}\|_2^2 \leq \sigma^2(d + 2 \sqrt{d \log{\sqrt{K}}}+2  \log{\sqrt{K}})\right] \geq 1-1/\sqrt{K}. 
\end{align}
As $d + 2 \sqrt{d \log{\sqrt{K}}}+2  \log{\sqrt{K}}\leq d\left(1+\sqrt{\frac{\log K}{d}}\right)^2$, we have, with probability $\geq 1-1 / \sqrt{K}$,
\begin{align}\label{eq_theorem3_13}
    \|\boldsymbol{\epsilon}\|_2^2 \leq d\sigma^2\left(1+\sqrt{\frac{\log K}{d}}\right)^2.
\end{align}
By defining $\rho^2=d\sigma^2\left(1+\sqrt{\frac{\log K}{d}}\right)^2$, Equation \ref{eq_theorem3_13} becomes
\begin{align*}
    \|\boldsymbol{\epsilon}\|_2^2 \leq \rho^2, \quad \text{with probability } 1-1/\sqrt{K}.
\end{align*}

Combining Equation \ref{eq_theorem3_14} and \ref{eq_theorem3_15} and splitting the expectation over the two events $\left\{\|\boldsymbol{\epsilon}\|_2 \leq \rho\right\}$ with probability $1-1 / \sqrt{K}$ and $\left\{\|\boldsymbol{\epsilon}\|_2 > \rho\right\}$ with probability $1 / \sqrt{K}$,
\begin{small}
\begin{align*}
    &R_{\mathcal{S}}(\hat{\bm{x}}) \leq \mathbb{E}_{\epsilon_i \sim \mathcal{N}(0,\sigma^2)}[R_{\hat{\mathcal{S}}}(\hat{\bm{x}}+\bm{\epsilon})]+ \\
     &\sqrt{\frac{\frac{d}{2}\log \left(\frac{d\sigma^2+\|\bm{\xi}\|_2^2 }{d\sigma^2}\right)+2\log (2+3d+6r^2K+2d\log(\frac{\rho^2}{\sigma^2}))+\varepsilon^\prime}{2(K-1)}}\\
     &\leq (1-1 / \sqrt{K})\max_{\|\boldsymbol{\epsilon}\|_2 \leq \rho}R_{\hat{\mathcal{S}}}(\hat{\bm{x}}+\bm{\epsilon})+1 / \sqrt{K}+\\
     &\sqrt{\frac{\frac{d}{2}\log (1+\frac{\|\bm{\xi}\|_2^2 }{\rho^2}(1+\sqrt{\frac{\log K}{d}})^2)+\varepsilon^{\prime\prime}+\varepsilon^\prime}{2(K-1)}}\\
     &\leq \max_{\|\boldsymbol{\epsilon}\|_2 \leq \rho}R_{\hat{\mathcal{S}}}(\hat{\bm{x}}+\bm{\epsilon})+\\
     &\sqrt{\frac{\frac{d}{2}\log (1+\frac{\gamma^2}{\rho^2}(1+\sqrt{\frac{\log K}{d}})^2)+\varepsilon^{\prime\prime}+\varepsilon^\prime}{2(K-1)}},
\end{align*}
\end{small}
where $\varepsilon^{\prime\prime}=2\log (2+3d+6r^2K+4d\log(\sqrt{d}+\sqrt{logK}))$. This completes the proof.
\end{proof}

\begin{table*}[t]
\centering
\caption{{Untargeted attack success rates (\%,$\uparrow$) on CIFAR-10 dataset.} The AEs are crafted from three surrogate models (ResNet-50, ViT, and WideResNet-70-16(AT)), against 20 target models falling into three prototypes (normally and adversarially trained convnets and normally trained metaformers). \textbf{Bold} denotes the best results and \underline{underlined} denotes the second best results.}
\label{ut_cifar}
\resizebox{\textwidth}{!}{%
\begin{threeparttable}
\begin{tabular}{@{}cc|ccccccccccccccc@{}}
\toprule
\multicolumn{2}{c|}{\textbf{Target Model Set}} & \begin{tabular}[c]{@{}c@{}}I-FGSM\\ \cite{ifgsm}\end{tabular} & \begin{tabular}[c]{@{}c@{}}DI2-FGSM\\ \cite{difgsm}\end{tabular} & \begin{tabular}[c]{@{}c@{}}SI-FGSM\\ \cite{nisi}\end{tabular} & \begin{tabular}[c]{@{}c@{}}Admix\\ \cite{admix}\end{tabular} & \begin{tabular}[c]{@{}c@{}}TI-FGSM\\ \cite{ti}\end{tabular} & \begin{tabular}[c]{@{}c@{}}SSA\\ \cite{ssa}\end{tabular} & \begin{tabular}[c]{@{}c@{}}SIA\\ \cite{sia}\end{tabular} & \begin{tabular}[c]{@{}c@{}}MI-FGSM\\ \cite{mifgsm}\end{tabular} & \begin{tabular}[c]{@{}c@{}}PI-FGSM\\ \cite{pifgsm}\end{tabular} & \begin{tabular}[c]{@{}c@{}}VT-FGSM\\ \cite{vt}\end{tabular} & \begin{tabular}[c]{@{}c@{}}PGN\\ \cite{pgn}\end{tabular} & \begin{tabular}[c]{@{}c@{}}CWA\\ \cite{cwa}\end{tabular} & \begin{tabular}[c]{@{}c@{}}SVRE\\ \cite{svre}\end{tabular} & \begin{tabular}[c]{@{}c@{}}RAP\\ \cite{rap}\end{tabular} & \textbf{DRAP} \\
\midrule
\multicolumn{1}{c|}{} & AlexNet\cite{alexnet} & 28.4 & 31.0 & 29.0 & 29.5 & 28.6 & 33.4 & 32.5 & 29.4 & 30.2 & 30.3 & 34.0 & \underline{35.2} & 28.2 & 30.9 & \textbf{41.5} \\
\multicolumn{1}{c|}{} & DenseNet\cite{densenet} & 88.1 & 94.8 & 91.7 & 95.2 & 88.2 & 96.3 & \textbf{97.9} & 90.4 & 92.3 & 95.8 & 96.5 & 85.7 & 88.5 & 96.5 & \underline{96.7} \\
\multicolumn{1}{c|}{} & ResNeXt\cite{resnext} & 93.3 & 97.3 & 95.8 & 97.8 & 93.3 & 97.7 & \textbf{99.0} & 94.4 & 94.8 & 97.3 & 98.9 & 88.7 & 92.3 & 97.7 & \underline{97.9} \\
\multicolumn{1}{c|}{} & WRN-28-10-drop\cite{wrn} & 87.6 & 94.4 & 91.3 & 95.1 & 87.9 & 96.7 & \textbf{97.7} & 90.7 & 92.2 & 96.0 & 96.8 & 85.0 & 87.7 & 96.5 & \underline{96.9} \\
\multicolumn{1}{c|}{} & GoogleNet\cite{googlenet} & 89.7 & 95.3 & 92.8 & 95.7 & 89.9 & 96.8 & \textbf{98.5} & 92.2 & 93.5 & 96.3 & 96.4 & 86.8 & 89.3 & 97.0 & \underline{96.9} \\
\multicolumn{1}{c|}{} & MobileNetv2\cite{mobilenetv2} & 89.2 & 96.0 & 92.4 & 96.0 & 89.1 & 97.2 & \textbf{98.2} & 92.3 & 93.1 & 96.2 & 97.3 & 86.0 & 89.1 & 96.7 & \underline{97.7} \\
\multicolumn{1}{c|}{} & PreResNet\cite{preresnet} & 96.7 & 98.7 & 98.0 & 99.0 & 96.5 & 99.0 & 99.1 & 97.4 & 96.8 & 98.3 & 99.5 & 91.8 & 95.8 & 98.5 & \underline{99.2} \\
\multicolumn{1}{c|}{\multirow{-8}{*}{\textbf{ConvNet Set}}} & \cellcolor[HTML]{D9D9D9}\textit{\textbf{Average}} & \cellcolor[HTML]{D9D9D9}81.9 & \cellcolor[HTML]{D9D9D9}86.8 & \cellcolor[HTML]{D9D9D9}84.4 & \cellcolor[HTML]{D9D9D9}86.9 & \cellcolor[HTML]{D9D9D9}81.9 & \cellcolor[HTML]{D9D9D9}88.2 & \cellcolor[HTML]{D9D9D9}\underline{89.0} & \cellcolor[HTML]{D9D9D9}83.8 & \cellcolor[HTML]{D9D9D9}84.7 & \cellcolor[HTML]{D9D9D9}87.2 & \cellcolor[HTML]{D9D9D9}88.5 & \cellcolor[HTML]{D9D9D9}79.9 & \cellcolor[HTML]{D9D9D9}81.6 & \cellcolor[HTML]{D9D9D9}87.7 & \cellcolor[HTML]{D9D9D9}\textbf{89.5} \\
\midrule
\multicolumn{1}{c|}{} & ViT-T\cite{vit} & 73.1 & 87.6 & 77.6 & 85.8 & 72.9 & 84.0 & \textbf{93.9} & 76.3 & 79.9 & 88.1 & \underline{88.7} & 75.9 & 72.4 & 84.8 & 80.4 \\
\multicolumn{1}{c|}{} & Swin-S\cite{swin} & 76.1 & 90.0 & 89.6 & 89.5 & 89.2 & 90.1 & \textbf{95.9} & 89.4 & 82.1 & 89.9 & \underline{92.6} & 77.3 & 76.3 & 90.1 & 83.7 \\
\multicolumn{1}{c|}{} & Swin-B\cite{swin} & 71.4 & 89.9 & 89.6 & 89.4 & 88.9 & 87.0 & \textbf{93.4} & 88.9 & 78.9 & 89.6 & \underline{90.3} & 74.9 & 73.2 & 89.9 & 79.7 \\
\multicolumn{1}{c|}{} & DeiT-T\cite{deit} & 74.9 & 88.8 & 79.6 & 87.9 & 75.2 & 87.5 & \textbf{95.0} & 78.6 & 80.9 & 89.4 & \underline{90.5} & 77.5 & 74.2 & 85.8 & 81.0 \\
\multicolumn{1}{c|}{} & DeiT-B\cite{deit} & 72.0 & 87.7 & 76.5 & 85.1 & 72.1 & 83.1 & \textbf{92.8} & 74.1 & 79.2 & 88.5 & \underline{88.4} & 74.0 & 72.7 & 83.9 & 78.0 \\
\multicolumn{1}{c|}{\multirow{-6}{*}{\textbf{Metaformer Set}}} & \cellcolor[HTML]{D9D9D9}\textit{\textbf{Average}} & \cellcolor[HTML]{D9D9D9}73.5 & \cellcolor[HTML]{D9D9D9}88.8 & \cellcolor[HTML]{D9D9D9}82.6 & \cellcolor[HTML]{D9D9D9}87.5 & \cellcolor[HTML]{D9D9D9}79.7 & \cellcolor[HTML]{D9D9D9}86.3 & \cellcolor[HTML]{D9D9D9}\textbf{94.2} & \cellcolor[HTML]{D9D9D9}81.4 & \cellcolor[HTML]{D9D9D9}80.2 & \cellcolor[HTML]{D9D9D9}89.1 & \cellcolor[HTML]{D9D9D9}\underline{90.1} & \cellcolor[HTML]{D9D9D9}75.9 & \cellcolor[HTML]{D9D9D9}73.8 & \cellcolor[HTML]{D9D9D9}86.9 & \cellcolor[HTML]{D9D9D9}80.6 \\
\midrule
\multicolumn{1}{c|}{} & PreActResNet-18\cite{andriushchenko2020understanding} & 23.6 & 24.4 & 23.6 & 24.1 & 23.7 & 26.1 & 24.7 & 24.3 & 23.9 & 24.4 & 25.9 & \underline{29.2} & 23.3 & 24.7 & \textbf{34.0} \\
\multicolumn{1}{c|}{} & WideResNet-28-10\cite{carmon2019unlabeled} & 14.4 & 15.0 & 14.0 & 14.7 & 14.4 & 15.9 & 14.4 & 15.1 & 14.3 & 15.0 & 16.0 & \underline{21.4} & 14.0 & 15.0 & \textbf{26.2} \\
\multicolumn{1}{c|}{} & WideResNet-28-10\cite{sehwag2020hydra} & 15.2 & 16.0 & 14.9 & 15.5 & 15.2 & 16.7 & 15.0 & 15.8 & 15.2 & 15.8 & 16.8 & \underline{22.0} & 14.6 & 15.8 & \textbf{27.0} \\
\multicolumn{1}{c|}{} & WideResNet-28-10\cite{wang2019improving} & 16.4 & 17.0 & 16.2 & 16.6 & 16.4 & 18.2 & 16.7 & 17.1 & 16.7 & 16.9 & 18.4 & \underline{22.6} & 16.0 & 17.3 & \textbf{27.4} \\
\multicolumn{1}{c|}{} & WideResNet-34-20\cite{rice2020overfitting} & 18.3 & 18.9 & 18.4 & 18.8 & 18.5 & 20.2 & 18.9 & 19.0 & 18.4 & 18.8 & 20.3 & \underline{24.7} & 18.0 & 18.9 & \textbf{29.6} \\
\multicolumn{1}{c|}{} & WideResNet-34-10\cite{zhang2019theoretically} & 18.8 & 19.6 & 18.8 & 19.2 & 18.9 & 20.6 & 19.4 & 19.6 & 19.1 & 19.4 & 20.7 & \underline{25.5} & 18.5 & 19.6 & \textbf{30.7} \\
\multicolumn{1}{c|}{} & ResNet-50\cite{engstrom2019adversarial} & 16.9 & 17.7 & 16.6 & 17.4 & 17.0 & 18.7 & 17.3 & 17.6 & 17.2 & 17.5 & 18.8 & \underline{23.7} & 16.7 & 17.8 & \textbf{29.1} \\
\multicolumn{1}{c|}{} & WideResNet-34-10\cite{huang2020self} & 19.9 & 20.6 & 19.8 & 20.3 & 20.1 & 21.2 & 20.1 & 20.5 & 20.0 & 20.4 & 21.3 & \underline{26.1} & 19.6 & 20.4 & \textbf{30.8} \\
\multicolumn{1}{c|}{\multirow{-9}{*}{\textbf{ConvNet(AT) Set}}} & \cellcolor[HTML]{D9D9D9}\textit{\textbf{Average}} & \cellcolor[HTML]{D9D9D9}17.9 & \cellcolor[HTML]{D9D9D9}18.6 & \cellcolor[HTML]{D9D9D9}17.8 & \cellcolor[HTML]{D9D9D9}18.3 & \cellcolor[HTML]{D9D9D9}18.0 & \cellcolor[HTML]{D9D9D9}19.7 & \cellcolor[HTML]{D9D9D9}18.3 & \cellcolor[HTML]{D9D9D9}18.6 & \cellcolor[HTML]{D9D9D9}18.1 & \cellcolor[HTML]{D9D9D9}18.5 & \cellcolor[HTML]{D9D9D9}19.8 & \cellcolor[HTML]{D9D9D9}\underline{24.4} & \cellcolor[HTML]{D9D9D9}17.6 & \cellcolor[HTML]{D9D9D9}18.7 & \cellcolor[HTML]{D9D9D9}\textbf{29.4} \\
\midrule
\multicolumn{2}{c|}{\cellcolor[HTML]{D9D9D9}\textit{\textbf{Overall Average}}} & \cellcolor[HTML]{D9D9D9}54.2 & \cellcolor[HTML]{D9D9D9}60.0 & \cellcolor[HTML]{D9D9D9}57.3 & \cellcolor[HTML]{D9D9D9}59.6 & \cellcolor[HTML]{D9D9D9}55.8 & \cellcolor[HTML]{D9D9D9}60.3 & \cellcolor[HTML]{D9D9D9}\underline{62.0} & \cellcolor[HTML]{D9D9D9}57.2 & \cellcolor[HTML]{D9D9D9}56.9 & \cellcolor[HTML]{D9D9D9}60.2 & \cellcolor[HTML]{D9D9D9}61.4 & \cellcolor[HTML]{D9D9D9}56.7 & \cellcolor[HTML]{D9D9D9}54.0 & \cellcolor[HTML]{D9D9D9}59.9 & \cellcolor[HTML]{D9D9D9}\textbf{63.2}
 \\
\bottomrule
\end{tabular}%
\end{threeparttable}}
\end{table*}

\section{Experiments on CIFAR-10}\label{app: cifar}

\subsection{Main Results}\label{app: cifar_main}
\textbf{Baselines} For the experiments conducted on CIFAR-10, we adopt the same set of baseline methods used in the ImageNet experiments.

\textbf{Models} For surrogate models on CIFAR-10, we adopt three architectures ($I=3$) from three prototypical model categories: ResNet-50\cite{resnet} from normally trained convnets, ViT\cite{vit} from normally trained metaformers, and WideResNet-70-16 (AT)\cite{wideresnetat} from adversarially trained convnets. Due to the limited availability of adversarially trained metaformer models on CIFAR-10, we exclude them from the surrogate models. 
For DRAP, model samples are gathered as proposals at each epoch during the fine-tuning of the three pretrained models, which are optimized using their respective training strategies over $n=40$ additional epochs. In order to get more diverse samples, we fine-tune the three pretrained models with larger constant learning rates of 0.05, 0.03, and 0.02, respectively, for ResNet-50, ViT, and WideResNet-70-16 (AT), without significantly degrading their clean accuracy.
All compared methods also generate AEs using the logits of the three surrogate models, leveraging the fusion strategy outlined in \cite{mifgsm}. To assess the transferability of these AEs, we evaluate them across a diverse set of target models. These target models are grouped into three categories: ConvNet Set, ConvNet(AT) Set, and Metaformer Set, as shown in Table \ref{ut_cifar}.

\textbf{Implementation Details} For both untargeted and targeted attack scenarios on CIFAR-10, the adversarial perturbation magnitude is constrained to $\gamma=8/255$, with a step size of $\beta_{\hat{\bm{x}}}=8/255/10$ for all methods. The RAP method retains its iteration number of 400 to ensure \bluetwo{a} comparable evaluation. For all other methods, a standard iteration count of 120 is used to maintain consistency across experiments.
For DRAP, the key hyper-parameters are configured as follows: the number of samples per model distribution $n=40$, the inner optimization iteration count $T=5$, the late start iteration number $n_{LS}=5$, the inner step size $\beta_\epsilon=0.1/255$, and the decay factor $\mu=1$. The total number of iterations for generating adversarial examples in DRAP is $n \times I = 120$, aligning with the iteration count of other methods to ensure fairness in comparison. 
For all baseline methods, we adhere to the protocols established in BlackboxBench to maintain consistency across experimental setups.

\textbf{Results of Untargeted Attacks} 
Table \ref{ut_cifar} presents the untargeted attack results on CIFAR-10 across normally trained and adversarially trained target model sets. Overall, our proposed method achieves the highest average attack success rate, surpassing both input-transformation\bluetwo{-}based and other optimization-based methods. This outcome aligns with the ImageNet findings, indicating that our strategy generalizes effectively to a dataset with different characteristics.
Taking a closer look, we observe that SIA remains highly competitive against normally trained target models but underperforms significantly on adversarially trained models compared to DRAP. In contrast, our approach maintains strong performance on adversarially trained models without sacrificing success rates on normally trained ones, striking a well-rounded balance across all target sets. This reinforces the idea that simultaneously pursuing flatness among diverse surrogate models can secure robust adversarial transferability, regardless of the target’s defense mechanisms.
Similar to ImageNet results, CWA continues to exhibit promising effectiveness against adversarially trained targets. However, its overall success rate remains lower, likely due to its finding a suboptimal universal perturbation.

\textbf{Results of Targeted Attacks}  
Table \ref{t_cifar} presents the targeted attack results on CIFAR-10. As with ImageNet, targeted attacks remain particularly challenging, especially on adversarially trained models in the ConvNet(AT) Set. Among all evaluated methods, DRAP demonstrates a clear advantage on the ConvNet(AT) Set, achieving significantly higher success rates compared to other methods, showcasing its effectiveness against robust adversarially trained targets.
Beyond the ConvNet(AT) Set, DRAP also performs strongly on the ConvNet Set and Metaformer Set, achieving competitive success rates that match or surpass other leading methods in many cases. This highlights the adaptability of DRAP across diverse target model categories, maintaining a strong balance between attacking normally trained and adversarially trained models. Overall, these results reinforce the robustness and competitiveness of DRAP in targeted attack scenarios, particularly when faced with defense models.

\begin{table*}[t]
\centering
\caption{{Targeted attack success rates (\%,$\uparrow$) on CIFAR-10 dataset.} The AEs are crafted from three surrogate models (ResNet-50, ViT, and WideResNet-70-16(AT)), against 20 target models falling into three prototypes (normally and adversarially trained convnets and normally trained metaformers). \textbf{Bold} denotes the best results and \underline{underlined} denotes the second best results.}
\label{t_cifar}
\resizebox{\textwidth}{!}{%
\begin{threeparttable}
\begin{tabular}{@{}cc|ccccccccccccccc@{}}
\toprule
\multicolumn{2}{c|}{\textbf{Target Model Set}} & \begin{tabular}[c]{@{}c@{}}I-FGSM\\ \cite{ifgsm}\end{tabular} & \begin{tabular}[c]{@{}c@{}}DI2-FGSM\\ \cite{difgsm}\end{tabular} & \begin{tabular}[c]{@{}c@{}}SI-FGSM\\ \cite{nisi}\end{tabular} & \begin{tabular}[c]{@{}c@{}}Admix\\ \cite{admix}\end{tabular} & \begin{tabular}[c]{@{}c@{}}TI-FGSM\\ \cite{ti}\end{tabular} & \begin{tabular}[c]{@{}c@{}}SSA\\ \cite{ssa}\end{tabular} & \begin{tabular}[c]{@{}c@{}}SIA\\ \cite{sia}\end{tabular} & \begin{tabular}[c]{@{}c@{}}MI-FGSM\\ \cite{mifgsm}\end{tabular} & \begin{tabular}[c]{@{}c@{}}PI-FGSM\\ \cite{pifgsm}\end{tabular} & \begin{tabular}[c]{@{}c@{}}VT-FGSM\\ \cite{vt}\end{tabular} & \begin{tabular}[c]{@{}c@{}}PGN\\ \cite{pgn}\end{tabular} & \begin{tabular}[c]{@{}c@{}}CWA\\ \cite{cwa}\end{tabular} & \begin{tabular}[c]{@{}c@{}}SVRE\\ \cite{svre}\end{tabular} & \begin{tabular}[c]{@{}c@{}}RAP\\ \cite{rap}\end{tabular} & \textbf{DRAP} \\
\midrule
\multicolumn{1}{c|}{} & AlexNet\cite{alexnet} & 3.4 & 4.3 & 3.5 & 3.8 & 3.3 & 5.1 & 4.9 & 3.8 & 4.0 & 3.8 & 5.4 & \underline{6.2} & 3.2 & 4.2 & \textbf{8.6} \\
\multicolumn{1}{c|}{} & DenseNet\cite{densenet} & 53.3 & 74.3 & 64.1 & 79.7 & 53.0 & 83.0 & \textbf{86.0} & 56.6 & 64.3 & 82.3 & 78.2 & 59.8 & 54.6 & 72.5 & \underline{84.8} \\
\multicolumn{1}{c|}{} & ResNeXt\cite{resnext} & 66.1 & 83.4 & 75.4 & 85.7 & 65.9 & 84.8 & \textbf{91.9} & 68.4 & 68.8 & 87.6 & 80.6 & 63.2 & 63.2 & 76.4 & \underline{86.0} \\
\multicolumn{1}{c|}{} & WRN-28-10-drop\cite{wrn} & 55.8 & 75.5 & 67.3 & 81.0 & 55.3 & 81.9 & \textbf{84.6} & 60.9 & 67.3 & 84.3 & 79.8 & 60.5 & 55.6 & 74.3 & \underline{82.1} \\
\multicolumn{1}{c|}{} & GoogleNet\cite{googlenet} & 55.5 & 76.0 & 65.9 & 80.8 & 55.2 & 85.2 & \textbf{91.4} & 60.3 & 66.5 & 84.0 & 80.6 & 60.0 & 55.1 & 73.5 & \underline{86.2} \\
\multicolumn{1}{c|}{} & MobileNetv2\cite{mobilenetv2} & 52.4 & 74.6 & 63.2 & 79.1 & 52.4 & 81.0 & \textbf{84.6} & 57.0 & 65.0 & 81.7 & 80.0 & 59.5 & 53.9 & 72.1 & \underline{82.0} \\
\multicolumn{1}{c|}{} & PreResNet\cite{preresnet} & 73.2 & 87.3 & 83.1 & 92.5 & 73.0 & 89.9 & \textbf{92.6} & 74.1 & 73.6 & 92.1 & 85.4 & 68.0 & 72.3 & 79.4 & \underline{90.6} \\
\multicolumn{1}{c|}{\multirow{-8}{*}{\textbf{ConvNet Set}}} & \cellcolor[HTML]{D9D9D9}\textit{\textbf{Average}} & \cellcolor[HTML]{D9D9D9}51.4 & \cellcolor[HTML]{D9D9D9}67.9 & \cellcolor[HTML]{D9D9D9}60.4 & \cellcolor[HTML]{D9D9D9}71.8 & \cellcolor[HTML]{D9D9D9}51.2 & \cellcolor[HTML]{D9D9D9}73.0 & \cellcolor[HTML]{D9D9D9}\textbf{76.6} & \cellcolor[HTML]{D9D9D9}54.4 & \cellcolor[HTML]{D9D9D9}58.5 & \cellcolor[HTML]{D9D9D9}73.7 & \cellcolor[HTML]{D9D9D9}70.0 & \cellcolor[HTML]{D9D9D9}53.9 & \cellcolor[HTML]{D9D9D9}51.1 & \cellcolor[HTML]{D9D9D9}64.6 & \cellcolor[HTML]{D9D9D9}\underline{74.3} \\
\midrule
\multicolumn{1}{c|}{} & ViT-T\cite{vit} & 33.3 & \underline{57.1} & 40.3 & 56.1 & 33.1 & 47.6 & \textbf{58.2} & 36.7 & 44.9 & 56.4 & 42.1 & 44.4 & 33.1 & 51.9 & 52.9 \\
\multicolumn{1}{c|}{} & Swin-S\cite{swin} & 36.9 & 14.7 & 44.2 & \underline{63.2} & 36.9 & 57.1 & \textbf{66.9} & 12.7 & 46.5 & 14.9 & 51.1 & 47.3 & 38.0 & 14.1 & 58.3 \\
\multicolumn{1}{c|}{} & Swin-B\cite{swin} & 33.0 & 14.3 & 39.3 & \underline{58.2} & 33.0 & 54.3 & \textbf{59.6} & 12.2 & 45.1 & 14.2 & 46.7 & 45.7 & 35.6 & 13.8 & 54.8 \\
\multicolumn{1}{c|}{} & DeiT-T\cite{deit} & 34.6 & 59.5 & 41.0 & 54.5 & 34.7 & 51.7 & \textbf{61.3} & 38.7 & 44.5 & \underline{59.7} & 47.1 & 45.4 & 33.6 & 53.2 & 53.1 \\
\multicolumn{1}{c|}{} & DeiT-B\cite{deit} & 33.8 & 62.2 & 40.7 & \underline{59.2} & 34.5 & 48.6 & \textbf{60.8} & 36.1 & 45.4 & 58.9 & 42.1 & 43.9 & 34.4 & 53.0 & 52.2 \\
\multicolumn{1}{c|}{\multirow{-6}{*}{\textbf{Metaformer Set}}} & \cellcolor[HTML]{D9D9D9}\textit{\textbf{Average}} & \cellcolor[HTML]{D9D9D9}34.3 & \cellcolor[HTML]{D9D9D9}41.6 & \cellcolor[HTML]{D9D9D9}41.1 & \cellcolor[HTML]{D9D9D9}\underline{58.2} & \cellcolor[HTML]{D9D9D9}34.4 & \cellcolor[HTML]{D9D9D9}51.9 & \cellcolor[HTML]{D9D9D9}\textbf{61.4} & \cellcolor[HTML]{D9D9D9}27.3 & \cellcolor[HTML]{D9D9D9}45.3 & \cellcolor[HTML]{D9D9D9}40.8 & \cellcolor[HTML]{D9D9D9}45.8 & \cellcolor[HTML]{D9D9D9}45.3 & \cellcolor[HTML]{D9D9D9}34.9 & \cellcolor[HTML]{D9D9D9}37.2 & \cellcolor[HTML]{D9D9D9}54.3 \\
\midrule
\multicolumn{1}{c|}{} & PreActResNet-18\cite{andriushchenko2020understanding} & 3.0 & 3.2 & 2.9 & 3.1 & 3.0 & 3.6 & 3.0 & 3.2 & 3.1 & 2.9 & 3.6 & \underline{4.8} & 2.6 & 3.3 & \textbf{6.4} \\
\multicolumn{1}{c|}{} & WideResNet-28-10\cite{carmon2019unlabeled} & 2.2 & 2.4 & 2.0 & 2.2 & 2.2 & 2.6 & 2.1 & 2.4 & 2.1 & 2.0 & 2.4 & \underline{4.6} & 1.7 & 2.5 & \textbf{6.1} \\
\multicolumn{1}{c|}{} & WideResNet-28-10\cite{sehwag2020hydra}  & 2.4 & 2.5 & 2.2 & 2.5 & 2.4 & 2.9 & 2.2 & 2.5 & 2.4 & 2.3 & 2.8 & \underline{4.7} & 1.8 & 2.7 & \textbf{6.4} \\
\multicolumn{1}{c|}{} & WideResNet-28-10\cite{wang2019improving} & 2.4 & 2.5 & 2.2 & 2.3 & 2.3 & 2.7 & 2.2 & 2.5 & 2.3 & 2.2 & 2.7 & \underline{4.5} & 1.8 & 2.5 & \textbf{6.0} \\
\multicolumn{1}{c|}{} & WideResNet-34-20\cite{rice2020overfitting} & 3.0 & 3.2 & 2.9 & 3.1 & 3.0 & 3.4 & 3.0 & 3.2 & 2.9 & 2.9 & 3.4 & \underline{5.1} & 2.5 & 3.3 & \textbf{6.5} \\
\multicolumn{1}{c|}{} & WideResNet-34-10\cite{zhang2019theoretically} & 2.7 & 2.9 & 2.6 & 2.8 & 2.7 & 3.2 & 2.8 & 2.9 & 2.7 & 2.6 & 3.1 & \underline{4.9} & 2.3 & 3.0 & \textbf{6.5} \\
\multicolumn{1}{c|}{} & ResNet-50\cite{engstrom2019adversarial} & 2.4 & 2.6 & 2.2 & 2.5 & 2.4 & 2.9 & 2.4 & 2.7 & 2.4 & 2.3 & 2.9 & \underline{4.9} & 2.0 & 2.7 & \textbf{7.0} \\
\multicolumn{1}{c|}{} & WideResNet-34-10\cite{huang2020self} & 2.7 & 3.0 & 2.6 & 2.8 & 2.8 & 3.1 & 2.7 & 3.0 & 2.7 & 2.7 & 3.1 & \underline{4.8} & 2.4 & 3.0 & \textbf{6.3} \\
\multicolumn{1}{c|}{\multirow{-9}{*}{\textbf{ConvNet(AT) Set}}} & \cellcolor[HTML]{D9D9D9}\textit{\textbf{Average}} & \cellcolor[HTML]{D9D9D9}2.6 & \cellcolor[HTML]{D9D9D9}2.8 & \cellcolor[HTML]{D9D9D9}2.4 & \cellcolor[HTML]{D9D9D9}2.6 & \cellcolor[HTML]{D9D9D9}2.6 & \cellcolor[HTML]{D9D9D9}3.0 & \cellcolor[HTML]{D9D9D9}2.5 & \cellcolor[HTML]{D9D9D9}2.8 & \cellcolor[HTML]{D9D9D9}2.5 & \cellcolor[HTML]{D9D9D9}2.4 & \cellcolor[HTML]{D9D9D9}3.0 & \cellcolor[HTML]{D9D9D9}\underline{4.8} & \cellcolor[HTML]{D9D9D9}2.1 & \cellcolor[HTML]{D9D9D9}2.9 & \cellcolor[HTML]{D9D9D9}\textbf{6.4} \\
\rowcolor[HTML]{D9D9D9}
\midrule
\multicolumn{2}{c|}{\cellcolor[HTML]{D9D9D9}\textit{\textbf{Overall Average}}} & \cellcolor[HTML]{D9D9D9}27.6 & \cellcolor[HTML]{D9D9D9}35.3 & \cellcolor[HTML]{D9D9D9}32.4 & \cellcolor[HTML]{D9D9D9}40.8 & \cellcolor[HTML]{D9D9D9}27.6 & \cellcolor[HTML]{D9D9D9}39.7 & \cellcolor[HTML]{D9D9D9}\textbf{43.2} & \cellcolor[HTML]{D9D9D9}27.0 & \cellcolor[HTML]{D9D9D9}32.8 & \cellcolor[HTML]{D9D9D9}37.0 & \cellcolor[HTML]{D9D9D9}37.2 & \cellcolor[HTML]{D9D9D9}32.1 & \cellcolor[HTML]{D9D9D9}27.5 & \cellcolor[HTML]{D9D9D9}33.1 & \cellcolor[HTML]{D9D9D9}\underline{42.1}
\\
\bottomrule
\end{tabular}%
\end{threeparttable}}
\end{table*}

\subsection{Composition with Input-Transformation\bluetwo{-}Based Attacks}
Following the combination experiments on ImageNet, we conduct similar tests on CIFAR-10 by integrating DRAP with several well-known input-transformation\bluetwo{-}based attacks and optimization-based attacks. Due to computational constraints, we omit the combination of PGN and SIA. The experimental setup follows the untargeted protocol as described earlier.
Table \ref{cifar_composite} summarizes the results on CIFAR-10. Consistent with the findings on ImageNet, integrating DRAP with input-transformation\bluetwo{-}based attacks significantly enhances their base performance, leading to superior attack success rates across both normal and adversarially trained models. Notably, our combined method achieves the highest attack performance in every combination. It also outperforms all other methods on the most challenging ConvNet(AT) Set. Furthermore, it demonstrates competitive performance on the ConvNet Set and Metaformer Set, maintaining a strong balance across different target model categories.
These results validate the scalability and adaptability of DRAP on CIFAR-10. Not only does it excel as a standalone approach, but it also amplifies the effectiveness of input-\bluetwo{transformation-based} methods, confirming its potential for achieving state-of-the-art adversarial transferability.

\begin{table}[h]
\centering
\caption{Attack success rates on CIFAR-10 dataset (\%, $\uparrow$) of MI, PI, VT, RAP, PGN, CWA, and DRAP when integrated with DI, TI, Admix, and SIA, respectively. The indentation denotes combination. The results are averaged on each model set.}
\label{cifar_composite}
\scalebox{0.85}{
\resizebox{\columnwidth}{!}{%
\begin{tabular}{@{}l|ccc|c@{}}
\toprule
\multicolumn{1}{c|}{\textbf{Attack}} & \textbf{\begin{tabular}[c]{@{}c@{}}ConvNet\\ Set\end{tabular}} & \textbf{\begin{tabular}[c]{@{}c@{}}Metaformer\\ Set\end{tabular}} & \textbf{\begin{tabular}[c]{@{}c@{}}CNN (AT)\\ Set\end{tabular}} & \textit{\textbf{\begin{tabular}[c]{@{}c@{}}Overall\\ Average\end{tabular}}} \\ \midrule 
DI2-FGSM & 86.7 & 88.8 & 18.6 & 60.0 \\
\quad + MI & 88.7 & 91.7 & 19.8 & 61.9 \\
\quad + PI & 89.0 & 91.9 & 19.9 & 62.1 \\
\quad + VT & 89.2 & 94.3 & 19.6 & 62.6 \\
\quad + RAP & 86.3 & 82.0 & 19.3 & 58.4 \\
\quad + PGN & \textbf{89.8} & \textbf{92.3} & 20.1 & 62.6 \\
\quad + CWA & 88.0 & 86.1 & 26.8 & 63.0 \\
\textbf{\quad + DRAP} & 89.0 & 83.2 & \textbf{30.0} & \textbf{63.9} \\ \midrule
TI-FGSM & 81.9 & 79.6 & 18.0 & 55.8 \\
\quad + MI & 83.7 & 81.7 & 18.7 & 57.2 \\
\quad + PI & 84.6 & 78.9 & 18.8 & 56.9 \\
\quad + VT & 87.1 & 89.2 & 18.6 & 60.2 \\
\quad + RAP & 85.7 & 74.6 & 18.5 & 56.1 \\
\quad + PGN & \textbf{89.9} & \textbf{93.0} & 20.1 & 62.7 \\
\quad + CWA & 82.7 & 70.2 & 24.8 & 56.4 \\
\textbf{\quad + DRAP} & 89.5 & 81.5 & \textbf{29.4} & \textbf{63.5} \\ \midrule
Admix & 86.9 & 87.5 & 18.3 & 59.6 \\
\quad + MI & 87.8 & 88.8 & 19.0 & 60.5 \\
\quad + PI & 87.2 & 87.2 & 19.0 & 59.9 \\
\quad + VT & 88.9 & \textbf{93.2} & 18.9 & 62.0 \\
\quad + RAP & 89.3 & 81.7 & 18.7 & 59.1 \\
\quad + PGN & \textbf{89.8} & 92.3 & 20.0 & 62.5 \\
\quad + CWA & 87.6 & 81.9 & 25.9 & 61.5 \\
\textbf{\quad + DRAP} & 90.0 & 83.1 & \textbf{29.0} & \textbf{63.9} \\ \midrule
SIA & 89.0 & 94.2 & 18.3 & 62.0 \\
\quad + MI & 90.5 & \textbf{96.5} & 19.5 & 63.6 \\
\quad + PI & 90.4 & 96.3 & 19.5 & 63.5 \\
\quad + VT & 89.9 & 95.2 & 19.3 & 63.0 \\
\quad + RAP & 87.7 & 88.4 & 19.0 & 60.4 \\
\quad + CWA & 89.6 & 88.7 & 26.2 & 64.0 \\
\textbf{\quad + DRAP} & \textbf{90.5} & 84.5 & \textbf{28.2} & \textbf{64.2} \\ \bottomrule
\end{tabular}%
}
}
\end{table}

\section{Implementation Details}

\begin{algorithm}[t]\label{algo:flat_rap}
	\caption{Flat-RAP algorithm} 
            \begin{algorithmic}[1]
            \STATE \textbf{Require}: benign data $(\bm{x}, y)$, perturbation budget $\gamma$, surrogate model distributions $\left\{P_{\mathcal{S}_i}\right\}_{i=1}^{I}$, number of samples within one distribution $n$, late start iteration number $n_{LS}$, inner iteration number $T$, step size $\beta,\alpha$, decay factor $\mu$.
        \STATE Initialize $\hat{\bm{x}}_0 \leftarrow \bm{x},\bm{m}\leftarrow 0$;
		\FOR {$j=0,...,K-1$}
                \FOR {$i=0,...,I-1$}
                \STATE Sample a surrogate model $\w_i$ from $P_{\mathcal{S}_i}$;
                \ENDFOR
                
                \IF{$j \geq n_{LS}$}
                \STATE \# Inner maximization
                \STATE Initialize $\bm{\epsilon} \leftarrow 0$;
                \FOR {$t=0,...,T-1$}
                \STATE Calculate $\bm{g}=\nabla_{\bm{\epsilon}}\ell\left(\frac{1}{I}\sum_{i=0}^{I-1}f\left(\hat{\bm{x}}_j+\bm{\epsilon}, \w_i\right), y\right)$;
                \STATE Update $\bm{\epsilon}=\bm{\epsilon}+\beta \cdot \operatorname{sign}(\bm{g})$;
                \ENDFOR
                \ENDIF
                
                \STATE \# Outer minimization
                \STATE Calculate $\bm{g}=\nabla_{\hat{\bm{x}}}\ell\left(\frac{1}{I}\sum_{i=0}^{I-1}f\left(\hat{\bm{x}}_j+\bm{\epsilon}, \w_i\right), y\right) ;$
                \STATE Update momentum by ${\bm{m}}=\mu \cdot {\bm{m}}+\frac{\bm{g}}{\|\bm{g}\|_1}$;
                \STATE Update $\hat{\bm{x}}_{j+1}$ by $\hat{\bm{x}}_{j+1}=\Pi_{\gamma}\left(\hat{\bm{x}}_j-\alpha \cdot {\operatorname{sign}{(\bm{m})}}\right)$;
		\ENDFOR
            \STATE \textbf{Return} $\hat{\bm{x}}_{K}$.
	\end{algorithmic} 
\end{algorithm}

\begin{algorithm}[t]\label{algo:flat_cwa}
	\caption{Flat-CWA algorithm} 
            \begin{algorithmic}[1]
            \STATE \textbf{Require}: benign data $(x, y)$, perturbation budget $\gamma$, surrogate model distributions $\left\{P_{\mathcal{S}_i}\right\}_{i=1}^{I}$, number of samples within one distribution $n$, inner iteration number $T$, step size $r, \beta,\alpha$, decay factor $\mu$.
        \STATE Initialize $\hat{\bm{x}}_0 \leftarrow \bm{x},\bm{m}\leftarrow 0$;
		\FOR {$j=0,...,K-1$}
                \FOR {$i=0,...,I-1$}
                \STATE Sample a surrogate model $\w_i$ from $P_{\mathcal{S}_i}$;
                \ENDFOR
                
                \STATE \# Inner maximization
                \STATE Calculate $\bm{g}=\nabla_{\bm{x}} \ell\left( \frac{1}{I}\sum_{i=0}^{I-1} f\left(\hat{\bm{x}}_j, \bm{w}_i\right), y \right)$;
                \STATE Update $\hat{\bm{x}}_j$ by $\hat{\bm{x}}_j^0=\Pi_{\gamma}\left(\hat{\bm{x}}_j+r \cdot \operatorname{sign}(\bm{g})\right)$;
                
                \STATE \# Outer minimization
                \FOR {$i=0,...,I-1$}
				    \STATE Calculate ${g}=\nabla_{\bm{x}}\ell\left(f\left(\hat{\bm{x}}_j^{i}, \w_i\right), y\right) ;$
                    \STATE Update momentum by ${\bm{m}}=\mu \cdot {\bm{m}}+\frac{\bm{g}}{\|\bm{g}\|_2}$;
                    \STATE Update $\hat{\bm{x}}_j^{i+1}$ by $\hat{\bm{x}}_j^{i+1}=\Pi_{\gamma}\left(\hat{\bm{x}}_j^{i}-\beta \cdot {\bm{m}}\right)$;
			    \ENDFOR
                
			    \STATE Calculate the update $\bm{g}=\hat{\bm{x}}_j^{I}-\hat{\bm{x}}_j$;
                \STATE update $\hat{\bm{x}}_{j+1}$ by $\hat{\bm{x}}_{j+1}=\Pi_{\gamma}\left(\hat{\bm{x}}_j+\alpha \cdot \operatorname{sign}(\bm{g})\right)$;
		\ENDFOR
            \STATE \textbf{Return} $\hat{\bm{x}}_{K}$.
	\end{algorithmic} 
\end{algorithm}

\subsection{\blue{Runtime and Memory Measurement Setup}}\label{app: runtime_setting}
\blue{All runtime and memory experiments reported in Figure \ref{fig:iter}(b) and Table \ref{table:memory} are conducted on a server equipped with 8 NVIDIA GeForce RTX 3090, each with 24 GB of memory. When running attack algorithms to generate AEs, the mini-batch size is fixed to 200 for all methods, except for PGN, for which we use a smaller batch size of 80 to avoid out-of-memory errors. All other hyper-parameters strictly follow the untargeted ImageNet protocol described in \bluetwo{Section} \ref{Sec: Main Results}.}

\subsection{Flatness Optimization Algorithms}\label{app: flat}
The details of using strategies from RAP and CWA to optimize the loss sharpness over diverse surrogate models are shown in Algorithm \ref{algo:flat_rap} and Algorithm \ref{algo:flat_cwa}, respectively. We stick to their original algorithms but substitute the surrogate models seen at each iteration $j$ with a batch of models from the diverse model set. Specifically, we sample one model weight from every model architecture to compose one batch. When fusing the gradients of multiple models, we use the logit\bluetwo{-}ensemble strategy as suggested in RAP and CWA. For hyper-parameters, we set perturbation budget $\gamma=4/255$ with step size $\alpha=2/255$, $n=40$ and follow the optimal settings reported in their papers to set their own hyper-parameters. In Flat-RAP, we set step size $\beta=\alpha$ and $n_{LS}=5$, which is \bluetwo{the} same as in DRAP. In Flat-CWA, we set decay factor $\mu=1$, step sizes $\beta=50$, $r = \gamma/15$.

\section{Additional Experiments on ImageNet}

\begin{table}[t]
\centering
\caption{\blue{Untargeted attack success rates (\%,$\uparrow$) on ImageNet dataset (continuation of Table \ref{tab: main}). $^*$ denotes surrogate models are ResNet-50 and ResNet-50(AT). $^\ddagger$ denotes surrogate models are ResNet-50 and ViT-B.}}
\bluetable
\label{tab: main_continued}
\resizebox{0.5\textwidth}{!}{%
\begin{threeparttable}
\begin{tabular}{@{}cc|llll|ll@{}}
\toprule
\multicolumn{2}{c|}{\textbf{Target Model Set}} & \begin{tabular}[c]{@{}c@{}}FIA$^{*}$\\ \cite{fia}\end{tabular} & \begin{tabular}[c]{@{}c@{}}NAA$^{*}$\\ \cite{naa}\end{tabular} & \begin{tabular}[c]{@{}c@{}}GhostNet$^{*}$\\ \cite{ghostnet}\end{tabular} & \multicolumn{1}{c|}{\textbf{DRAP$^{*}$}} & \begin{tabular}[c]{@{}c@{}}Bayesian$^{\ddagger}$\\ \cite{bayesian}\end{tabular} & \multicolumn{1}{c}{\textbf{DRAP$^{\ddagger}$}} \\ \midrule
\multicolumn{1}{c|}{} & AlexNet\cite{alexnet} & 46.7 & {\ul 47.8} & 42.1 & \textbf{66.0} & 55.3 & \textbf{63.6} \\
\multicolumn{1}{c|}{} & VGG-16-BN\cite{vgg} & 42.3 & {\ul 60.9} & 40.3 & \textbf{76.2} & 73.0 & \textbf{82.3} \\
\multicolumn{1}{c|}{} & DenseNet-201\cite{densenet} & 35.3 & {\ul 57.9} & 32.1 & \textbf{78.5} & 72.0 & \textbf{82.1} \\
\multicolumn{1}{c|}{} & GoogLeNet\cite{googlenet} & 31.0 & {\ul 47.3} & 27.3 & \textbf{75.6} & 63.3 & \textbf{75.2} \\
\multicolumn{1}{c|}{} & ShuffleNetV2\cite{shufflenet} & 42.2 & {\ul 51.9} & 37.0 & \textbf{82.4} & 74.8 & \textbf{84.1} \\
\multicolumn{1}{c|}{} & MobileNetV2\cite{mobilenetv2} & 42.5 & {\ul 59.5} & 36.6 & \textbf{84.4} & 78.5 & \textbf{87.2} \\
\multicolumn{1}{c|}{} & MobileNetV3-L\cite{mobilenetv3} & 31.4 & {\ul 43.5} & 24.8 & \textbf{74.8} & 69.0 & \textbf{79.5} \\
\multicolumn{1}{c|}{} & MNASNet\cite{mnasnet} & 39.0 & {\ul 55.6} & 33.6 & \textbf{84.0} & 76.4 & \textbf{86.3} \\
\multicolumn{1}{c|}{} & EfficientNet\cite{efficientnet} & 29.6 & {\ul 41.7} & 25.0 & \textbf{60.4} & 53.8 & \textbf{64.7} \\
\multicolumn{1}{c|}{} & ConvNeXt-L\cite{convnet} & 8.0 & {\ul 18.7} & 7.3 & \textbf{17.0} & 18.6 & \textbf{26.2} \\
\multicolumn{1}{c|}{\multirow{-11}{*}{\textbf{\begin{tabular}[c]{@{}c@{}}ConvNet\\ Set\end{tabular}}}} & \cellcolor[HTML]{D9D9D9}\textit{\textbf{Average}} & \cellcolor[HTML]{D9D9D9}34.8 & \cellcolor[HTML]{D9D9D9}{\ul 48.5} & \cellcolor[HTML]{D9D9D9}30.6 & \cellcolor[HTML]{D9D9D9}\textbf{69.9} & \cellcolor[HTML]{D9D9D9}63.5 & \cellcolor[HTML]{D9D9D9}\textbf{73.1} \\ \midrule
\multicolumn{1}{c|}{} & ViT-S\cite{vit} & 8.5 & {\ul 12.7} & 7.0 & \textbf{18.0} & 27.0 & \textbf{40.1} \\
\multicolumn{1}{c|}{} & DeiT-S\cite{deit} & 9.8 & {\ul 14.7} & 9.0 & \textbf{23.5} & 43.8 & \textbf{55.8} \\
\multicolumn{1}{c|}{} & PoolFormer-S\cite{poolformer} & 17.4 & {\ul 33.7} & 16.4 & \textbf{48.6} & 48.1 & \textbf{60.1} \\
\multicolumn{1}{c|}{} & TNT-S\cite{tnt} & 8.0 & {\ul 16.0} & 7.8 & \textbf{21.1} & 33.5 & \textbf{44.2} \\
\multicolumn{1}{c|}{} & Swin-S\cite{swin} & 6.6 & {\ul 12.0} & 5.4 & \textbf{12.5} & 13.6 & \textbf{19.3} \\
\multicolumn{1}{c|}{} & XCiT-S\cite{xcit} & 9.2 & {\ul 13.2} & 8.6 & \textbf{15.6} & 15.3 & \textbf{21.8} \\
\multicolumn{1}{c|}{} & CaiT-S\cite{cait} & 4.1 & {\ul 7.0} & 3.6 & \textbf{10.5} & 11.1 & \textbf{17.7} \\
\multicolumn{1}{c|}{\multirow{-8}{*}{\textbf{\begin{tabular}[c]{@{}c@{}}Metaformer\\ Set\end{tabular}}}} & \cellcolor[HTML]{D9D9D9}\textit{\textbf{Average}} & \cellcolor[HTML]{D9D9D9}9.1 & \cellcolor[HTML]{D9D9D9}{\ul 15.6} & \cellcolor[HTML]{D9D9D9}8.3 & \cellcolor[HTML]{D9D9D9}\textbf{21.4} & \cellcolor[HTML]{D9D9D9}27.5 & \cellcolor[HTML]{D9D9D9}\textbf{37.0} \\ \midrule
\multicolumn{1}{c|}{} & RaWideResNet-101-2\cite{peng2023robust} & {\ul 17.3} & 17.0 & 15.5 & \textbf{24.6} & 16.9 & \textbf{17.3} \\
\multicolumn{1}{c|}{} & WideResNet-50-2\cite{salman2020adversarially} & {\ul 22.9} & 21.7 & 20.5 & \textbf{33.0} & 21.4 & \textbf{22.9} \\
\multicolumn{1}{c|}{} & ResNet-50\cite{wong2020fast} & {\ul 40.8} & 40.4 & 39.1 & \textbf{51.3} & 39.9 & \textbf{41.2} \\
\multicolumn{1}{c|}{} & ConvNeXt-L\cite{liu2024comprehensive} & {\ul 10.6} & 10.5 & 9.9 & \textbf{15.6} & 10.4 & \textbf{11.1} \\
\multicolumn{1}{c|}{} & ConvNeXt-B\cite{liu2024comprehensive} & {\ul 11.0} & 10.4 & 9.5 & \textbf{16.3} & 10.1 & \textbf{11.2} \\
\multicolumn{1}{c|}{} & ConvNeXt-L-ConvStem\cite{singh2024revisiting} & {\ul 10.0} & 9.9 & 9.7 & \textbf{14.4} & 10.2 & \textbf{10.8} \\
\multicolumn{1}{c|}{} & ConvNeXt-B-ConvStem\cite{singh2024revisiting} & {\ul 11.2} & 11.1 & 10.6 & \textbf{17.9} & 11.3 & \textbf{11.8} \\
\multicolumn{1}{c|}{} & Inc-v3$_{\textit{ens3}}$\cite{tf} & 8.5 & {\ul 11.2} & 7.9 & \textbf{18.4} & 11.7 & \textbf{17.2} \\
\multicolumn{1}{c|}{} & Inc-v3$_{\textit{ens4}}$\cite{tf} & 10.9 & {\ul 14.0} & 9.6 & \textbf{21.3} & 15.1 & \textbf{20.4} \\
\multicolumn{1}{c|}{} & IncRes-v2$_{\textit{ens}}$\cite{tf} & 3.8 & {\ul 5.8} & 3.0 & \textbf{10.0} & 6.0 & \textbf{9.2} \\
\multicolumn{1}{c|}{\multirow{-11}{*}{\textbf{\begin{tabular}[c]{@{}c@{}}ConvNet(AT)\\ Set\end{tabular}}}} & \cellcolor[HTML]{D9D9D9}\textit{\textbf{Average}} & \cellcolor[HTML]{D9D9D9}14.7 & \cellcolor[HTML]{D9D9D9}{\ul 15.2} & \cellcolor[HTML]{D9D9D9}13.5 & \cellcolor[HTML]{D9D9D9}\textbf{22.3} & \cellcolor[HTML]{D9D9D9}15.3 & \cellcolor[HTML]{D9D9D9}\textbf{17.3} \\ \midrule
\multicolumn{1}{c|}{} & Swin-B\cite{liu2024comprehensive} & 11.3 & {\ul 11.6} & 10.8 & \textbf{15.6} & 11.6 & \textbf{12.3} \\
\multicolumn{1}{c|}{} & Swin-L\cite{liu2024comprehensive} & {\ul 9.8} & 9.6 & 9.3 & \textbf{13.4} & 9.9 & \textbf{10.5} \\
\multicolumn{1}{c|}{} & XCiT-L\cite{debenedetti2023light} & {\ul 15.2} & 14.3 & 13.6 & \textbf{23.0} & 14.8 & \textbf{15.7} \\
\multicolumn{1}{c|}{} & ViT-B-ConvStem\cite{singh2024revisiting} & {\ul 11.4} & 11.2 & 10.7 & \textbf{15.9} & 11.2 & \textbf{11.7} \\
\multicolumn{1}{c|}{\multirow{-5}{*}{\textbf{\begin{tabular}[c]{@{}c@{}}Metaformer(AT)\\ Set\end{tabular}}}} & \cellcolor[HTML]{D9D9D9}\textit{\textbf{Average}} & \cellcolor[HTML]{D9D9D9}{\ul 11.9} & \cellcolor[HTML]{D9D9D9}11.7 & \cellcolor[HTML]{D9D9D9}11.1 & \cellcolor[HTML]{D9D9D9}\textbf{17.0} & \cellcolor[HTML]{D9D9D9}11.9 & \cellcolor[HTML]{D9D9D9}\textbf{12.6} \\ \midrule
\rowcolor[HTML]{D9D9D9} 
\multicolumn{2}{c|}{\cellcolor[HTML]{D9D9D9}\textit{\textbf{Overall Average}}} & 19.6 & {\ul 25.6} & 17.5 & \textbf{36.8} & 33.1 & \textbf{39.1} \\ \bottomrule
\end{tabular}
\end{threeparttable}}
\end{table}

\subsection{\blue{Comparison to Feature-based and Model-tuning-based Attacks}}\label{app: compare_f_m}

\blue{Table \ref{tab: main_continued} is a continuation of Table \ref{tab: main} and further compares DRAP with feature-based attacks FIA \cite{fia} and NAA \cite{naa}) and model-tuning-based attacks GhostNet \cite{ghostnet} and Bayesian attack \cite{bayesian}). As shown in Table \ref{tab: main_continued}, DRAP outperforms these additional baselines.}

\blue{In addition, we specify the experimental protocols of these comparisons in this subsection. Different from input-transformation-based, optimization-based and model-fusion-based attacks, feature-based and model-tuning-based attacks are typically tightly coupled to the surrogate architecture. Reproducing these methods under our main surrogate set of five architectures used in Table \ref{tab: main} would require substantial architecture-specific choices of hyper-parameters that are not specified in original papers, as well as non-trivial additional training cost from fine-tuning surrogate models. Therefore, to ensure faithful reproduction while keeping the overhead tractable, we restrict the surrogate architectures to two models when comparing with these baselines.}

\blue{For FIA, NAA and GhostNet, we use ResNet-50 \cite{resnet} and ResNet-50(AT) \cite{resnetat} as surrogates, indicated by the superscript $^{*}$ in Table \ref{tab: main_continued}. For FIA and NAA, the architecture-specific hyper-parameter is the feature layer. Following the selection principle mentioned in the original papers ``\textit{By contrast, middle layers have well-separated class representations and they are not highly correlated to the model architecture, thus middle layers are the best choice to be attacked for better transferability. \dots Based on this conclusion, we first select a few middle layers for each source model and determine the final attacked layer according to the empirical results.}''\cite{fia}, we set the feature layer as Conv4\_6 for FIA and Conv3\_4 for NAA on both ResNet-50 and ResNet-50(AT). For GhostNet, we follow the original setting and set the magnitude of erosion $\Lambda=0.22$ for both ResNet-50 and ResNet-50(AT).}

\blue{For the Bayesian attack, we use ResNet-50 \cite{resnet} and ViT-B \cite{vit} as surrogates, indicated by the superscript $^{\ddagger}$ in Table \ref{tab: main_continued}. We directly use the fine-tuned ResNet-50 SWAG model provided in the official repository, and fine-tune a ViT-B SWAG model with a learning rate of 0.001, weight decay of $1{e}{-4}$, and set $\gamma = 0.05 /\left\|\boldsymbol{\Delta} \mathbf{w}^*\right\|_2$.}

\blue{For DRAP, we set the number of surrogate components $I=2$ to match the two-surrogate setting above. Accordingly, we set $n_{\text{iter}} = I \times n = 80$. For a fair comparison, we set the number of iterations of all compared methods in Table \ref{tab: main_continued} to 80, while keeping the rest of the settings consistent with Table \ref{tab: main}.}

\subsection{Full ImageNet Targeted Results}\label{app: full}
In Table \ref{tab:main_t_full}, we present targeted attack results on ImageNet dataset, broken down to each target \bluetwo{model,} for the methods evaluated in Table \ref{tab:main_t}.

\begin{table*}[t]
\centering
\caption{{Targeted attack success rates (\%,$\uparrow$) on ImageNet dataset.} \textbf{Bold} denotes the best results and \underline{underlined} denotes the second best results.}
\label{tab:main_t_full}
\resizebox{\textwidth}{!}{%
\begin{tabular}{@{}cc|cccccccccccccccccccc@{}}
\toprule
\multicolumn{2}{c|}{\textbf{Target Model Set}} & \begin{tabular}[c]{@{}c@{}}I-FGSM\\ \cite{ifgsm}\end{tabular} & \begin{tabular}[c]{@{}c@{}}DI2-FGSM\\ \cite{difgsm}\end{tabular} & \begin{tabular}[c]{@{}c@{}}SI-FGSM\\ \cite{nisi}\end{tabular} & \begin{tabular}[c]{@{}c@{}}Admix\\ \cite{admix}\end{tabular} & \begin{tabular}[c]{@{}c@{}}TI-FGSM\\ \cite{ti}\end{tabular} & \begin{tabular}[c]{@{}c@{}}SSA\\ \cite{ssa}\end{tabular} & \begin{tabular}[c]{@{}c@{}}SIA\\ \cite{sia}\end{tabular} & \begin{tabular}[c]{@{}c@{}}MI-FGSM\\ \cite{mifgsm}\end{tabular} & \begin{tabular}[c]{@{}c@{}}PI-FGSM\\ \cite{pifgsm}\end{tabular} & \begin{tabular}[c]{@{}c@{}}VT-FGSM\\ \cite{vt}\end{tabular} & \begin{tabular}[c]{@{}c@{}}PGN\\ \cite{pgn}\end{tabular} & \blue{\begin{tabular}[c]{@{}c@{}}TPA\\ \cite{tpa}\end{tabular}} & \blue{\begin{tabular}[c]{@{}c@{}}FEM\\ \cite{fem}\end{tabular}} & \blue{\begin{tabular}[c]{@{}c@{}}MEF\\ \cite{mef}\end{tabular}} & \begin{tabular}[c]{@{}c@{}}SVRE\\ \cite{svre}\end{tabular} & \blue{\begin{tabular}[c]{@{}c@{}}AdaEA\\ \cite{adaea}\end{tabular}} & \blue{\begin{tabular}[c]{@{}c@{}}SMER\\ \cite{smer}\end{tabular}} & \begin{tabular}[c]{@{}c@{}}CWA\\ \cite{cwa}\end{tabular} & \begin{tabular}[c]{@{}c@{}}RAP\\ \cite{rap}\end{tabular} & \textbf{DRAP} \\ \midrule
\multicolumn{1}{c|}{} & AlexNet\cite{alexnet} & 0.0 & 0.1 & 0.0 & 0.1 & 0.1 & 2.1 & 1.7 & 0.0 & 0.2 & 0.0 & 3.2 & 0.1 & 1.1 & 3.0 & 0.6 & {\ul 18.9} & 5.0 & 12.9 & 0.3 & \textbf{30.4} \\
\multicolumn{1}{c|}{} & VGG-16-BN\cite{vgg} & 9.0 & 38.1 & 13.7 & 20.1 & 13.2 & 32.9 & \textbf{93.3} & 7.0 & 13.4 & 10.9 & 50.7 & 2.1 & 6.9 & 52.5 & 16.3 & 21.5 & 18.9 & 21.9 & 29.0 & {\ul 80.0} \\
\multicolumn{1}{c|}{} & DenseNet-201\cite{densenet} & 10.4 & 48.6 & 34.1 & 24.0 & 18.4 & 50.5 & \textbf{94.7} & 13.6 & 24.8 & 14.8 & 62.7 & 3.2 & 9.2 & 60.9 & 26.4 & 47.5 & 28.8 & 53.4 & 28.4 & {\ul 90.7} \\
\multicolumn{1}{c|}{} & GoogLeNet\cite{googlenet} & 1.1 & 13.5 & 9.0 & 4.1 & 3.1 & 19.6 & {\ul 61.4} & 2.2 & 5.0 & 3.1 & 31.9 & 0.6 & 5.6 & 39.3 & 7.3 & 40.1 & 17.1 & 36.9 & 9.4 & \textbf{79.3} \\
\multicolumn{1}{c|}{} & ShuffleNetV2\cite{shufflenet} & 0.3 & 3.9 & 2.9 & 1.5 & 0.7 & 7.9 & 27.7 & 1.7 & 2.2 & 0.8 & 21.3 & 0.6 & 4.5 & 24.9 & 3.0 & {\ul 38.2} & 14.0 & 33.3 & 4.5 & \textbf{75.7} \\
\multicolumn{1}{c|}{} & MobileNetV2\cite{mobilenetv2} & 3.6 & 19.8 & 9.0 & 10.5 & 6.2 & 27.2 & {\ul 77.3} & 3.5 & 7.2 & 4.3 & 41.6 & 1.4 & 8.9 & 44.2 & 12.5 & 37.3 & 21.4 & 37.4 & 16.4 & \textbf{86.3} \\
\multicolumn{1}{c|}{} & MobileNetV3-L\cite{mobilenetv3} & 0.9 & 8.0 & 4.6 & 3.6 & 2.2 & 36.3 & {\ul 47.2} & 2.1 & 3.8 & 2.3 & 29.7 & 0.7 & 5.8 & 35.0 & 19.7 & 40.3 & 19.6 & 40.1 & 6.4 & \textbf{81.7} \\
\multicolumn{1}{c|}{} & MNASNet\cite{mnasnet} & 3.0 & 16.2 & 7.5 & 7.7 & 5.2 & 20.9 & {\ul 73.8} & 3.8 & 6.8 & 4.8 & 40.5 & 1.9 & 7.3 & 44.4 & 9.4 & 39.0 & 20.1 & 38.5 & 14.4 & \textbf{87.1} \\
\multicolumn{1}{c|}{} & EfficientNet\cite{efficientnet} & 0.9 & 7.3 & 4.3 & 1.6 & 1.1 & 12.7 & {\ul 44.5} & 0.9 & 1.8 & 1.5 & 21.5 & 0.2 & 4.2 & 31.3 & 5.7 & 34.7 & 16.2 & 31.9 & 3.3 & \textbf{72.3} \\
\multicolumn{1}{c|}{} & ConvNeXt-L\cite{convnet} & 24.6 & 64.6 & 31.3 & 39.4 & 31.8 & 87.9 & \textbf{98.7} & 17.9 & 32.1 & 26.7 & 70.7 & 4.6 & 9.7 & 74.6 & 84.4 & 28.0 & 51.4 & 53.2 & 33.1 & {\ul 92.0} \\
\multicolumn{1}{c|}{\multirow{-11}{*}{\textbf{\begin{tabular}[c]{@{}c@{}}ConvNet\\ Set\end{tabular}}}} & \cellcolor[HTML]{D9D9D9}\textit{\textbf{Average}} & \cellcolor[HTML]{D9D9D9}5.4 & \cellcolor[HTML]{D9D9D9}22.0 & \cellcolor[HTML]{D9D9D9}11.6 & \cellcolor[HTML]{D9D9D9}11.3 & \cellcolor[HTML]{D9D9D9}8.2 & \cellcolor[HTML]{D9D9D9}29.8 & \cellcolor[HTML]{D9D9D9}{\ul 62.0} & \cellcolor[HTML]{D9D9D9}5.3 & \cellcolor[HTML]{D9D9D9}9.7 & \cellcolor[HTML]{D9D9D9}6.9 & \cellcolor[HTML]{D9D9D9}37.4 & \cellcolor[HTML]{D9D9D9}1.5 & \cellcolor[HTML]{D9D9D9}6.3 & \cellcolor[HTML]{D9D9D9}41.0 & \cellcolor[HTML]{D9D9D9}18.5 & \cellcolor[HTML]{D9D9D9}34.6 & \cellcolor[HTML]{D9D9D9}21.3 & \cellcolor[HTML]{D9D9D9}36.0 & \cellcolor[HTML]{D9D9D9}14.5 & \cellcolor[HTML]{D9D9D9}\textbf{77.6} \\ \midrule
\multicolumn{1}{c|}{} & ViT-S\cite{vit} & 0.1 & 4.5 & 0.3 & 0.4 & 0.2 & 10.4 & {\ul 35.2} & 0.1 & 0.4 & 0.1 & 7.3 & 0.0 & 0.8 & 18.8 & 6.2 & 20.8 & 14.4 & 24.6 & 1.1 & \textbf{59.4} \\
\multicolumn{1}{c|}{} & DeiT-S\cite{deit} & 0.4 & 6.6 & 1.2 & 0.8 & 0.4 & 14.9 & {\ul 37.8} & 0.7 & 1.1 & 0.6 & 13.5 & 0.0 & 1.3 & 21.5 & 8.4 & 26.7 & 16.5 & 29.9 & 1.5 & \textbf{71.6} \\
\multicolumn{1}{c|}{} & PoolFormer-S\cite{poolformer} & 2.4 & 41.6 & 8.3 & 8.7 & 6.4 & 49.0 & \textbf{89.5} & 5.3 & 8.9 & 6.1 & 45.8 & 1.4 & 6.0 & 53.0 & 29.9 & 30.9 & 23.3 & 34.8 & 15.5 & {\ul 82.4} \\
\multicolumn{1}{c|}{} & TNT-S\cite{tnt} & 0.2 & 8.4 & 1.2 & 0.5 & 0.3 & 16.3 & {\ul 47.1} & 0.2 & 1.1 & 0.5 & 14.9 & 0.1 & 1.1 & 26.1 & 11.8 & 23.6 & 15.8 & 27.5 & 2.3 & \textbf{72.0} \\
\multicolumn{1}{c|}{} & Swin-S\cite{swin} & 0.0 & 4.5 & 0.3 & 0.3 & 0.2 & 6.5 & {\ul 31.7} & 0.3 & 0.3 & 0.5 & 4.0 & 0.0 & 0.3 & 12.3 & 7.8 & 8.4 & 6.9 & 12.9 & 1.0 & \textbf{37.4} \\
\multicolumn{1}{c|}{} & XCiT-S\cite{xcit} & 0.0 & 8.8 & 0.0 & 0.4 & 0.1 & 10.1 & {\ul 26.4} & 0.2 & 0.2 & 0.1 & 3.8 & 0.0 & 0.3 & 5.3 & 4.9 & 8.8 & 4.7 & 9.5 & 0.5 & \textbf{30.2} \\
\multicolumn{1}{c|}{} & CaiT-S\cite{cait} & 0.1 & 7.0 & 0.3 & 0.4 & 0.1 & 8.1 & {\ul 22.6} & 0.0 & 0.5 & 0.0 & 3.8 & 0.0 & 0.3 & 7.2 & 4.3 & 13.2 & 7.3 & 13.8 & 0.1 & \textbf{41.5} \\
\multicolumn{1}{c|}{\multirow{-8}{*}{\textbf{\begin{tabular}[c]{@{}c@{}}Metaformer\\ Set\end{tabular}}}} & \cellcolor[HTML]{D9D9D9}\textit{\textbf{Average}} & \cellcolor[HTML]{D9D9D9}0.5 & \cellcolor[HTML]{D9D9D9}11.6 & \cellcolor[HTML]{D9D9D9}1.7 & \cellcolor[HTML]{D9D9D9}1.6 & \cellcolor[HTML]{D9D9D9}1.1 & \cellcolor[HTML]{D9D9D9}16.5 & \cellcolor[HTML]{D9D9D9}{\ul 41.5} & \cellcolor[HTML]{D9D9D9}1.0 & \cellcolor[HTML]{D9D9D9}1.8 & \cellcolor[HTML]{D9D9D9}1.1 & \cellcolor[HTML]{D9D9D9}13.3 & \cellcolor[HTML]{D9D9D9}0.2 & \cellcolor[HTML]{D9D9D9}1.4 & \cellcolor[HTML]{D9D9D9}20.6 & \cellcolor[HTML]{D9D9D9}10.5 & \cellcolor[HTML]{D9D9D9}18.9 & \cellcolor[HTML]{D9D9D9}12.7 & \cellcolor[HTML]{D9D9D9}21.9 & \cellcolor[HTML]{D9D9D9}3.1 & \cellcolor[HTML]{D9D9D9}\textbf{56.4} \\ \midrule
\multicolumn{1}{c|}{} & RaWideResNet-101-2\cite{peng2023robust} & 0.0 & 0.0 & 0.0 & 0.0 & 0.0 & 0.0 & 0.0 & 0.0 & 0.0 & 0.0 & 0.0 & 0.0 & 0.0 & 0.0 & 0.0 & {\ul 5.8} & 2.2 & 3.3 & 0.1 & \textbf{7.5} \\
\multicolumn{1}{c|}{} & WideResNet-50-2\cite{salman2020adversarially} & 0.0 & 0.0 & 0.0 & 0.0 & 0.0 & 0.2 & 0.0 & 0.0 & 0.1 & 0.0 & 0.1 & 0.0 & 0.0 & 0.4 & 0.2 & {\ul 12.6} & 3.8 & 5.9 & 0.2 & \textbf{12.3} \\
\multicolumn{1}{c|}{} & ResNet-50\cite{wong2020fast} & 0.0 & 0.0 & 0.0 & 0.0 & 0.0 & 0.1 & 0.0 & 0.0 & 0.0 & 0.0 & 0.1 & 0.0 & 0.0 & 0.1 & 0.0 & {\ul 3.2} & 1.2 & 1.3 & 0.0 & \textbf{4.2} \\
\multicolumn{1}{c|}{} & ConvNeXt-L\cite{liu2024comprehensive} & 0.0 & 0.0 & 0.0 & 0.0 & 0.0 & 0.2 & 0.0 & 0.0 & 0.1 & 0.0 & 0.0 & 0.0 & 0.1 & 0.3 & 0.1 & {\ul 9.0} & 2.3 & 4.3 & 0.2 & \textbf{8.2} \\
\multicolumn{1}{c|}{} & ConvNeXt-B\cite{liu2024comprehensive} & 0.0 & 0.0 & 0.0 & 0.0 & 0.0 & 0.1 & 0.0 & 0.0 & 0.2 & 0.0 & 0.0 & 0.0 & 0.1 & 0.1 & 0.0 & {\ul 8.2} & 3.0 & 4.4 & 0.1 & \textbf{9.7} \\
\multicolumn{1}{c|}{} & ConvNeXt-L-ConvStem\cite{singh2024revisiting} & 0.0 & 0.0 & 0.0 & 0.0 & 0.0 & 0.2 & 0.0 & 0.0 & 0.1 & 0.0 & 0.0 & 0.0 & 0.1 & 0.3 & 0.1 & {\ul 8.1} & 2.0 & 3.7 & 0.2 & \textbf{8.0} \\
\multicolumn{1}{c|}{} & ConvNeXt-B-ConvStem\cite{singh2024revisiting} & 0.0 & 0.0 & 0.0 & 0.0 & 0.0 & 0.3 & 0.0 & 0.0 & 0.1 & 0.0 & 0.1 & 0.0 & 0.0 & 0.2 & 0.1 & {\ul 9.3} & 2.9 & 5.0 & 0.1 & \textbf{9.1} \\
\multicolumn{1}{c|}{} & Inc-v3$_{\textit{ens3}}$\cite{tf} & 0.0 & 0.1 & 0.1 & 0.0 & 0.0 & 0.8 & 0.3 & 0.0 & 0.0 & 0.0 & 0.2 & 0.0 & 0.1 & 3.4 & 0.4 & {\ul 25.5} & 6.2 & 13.1 & 0.0 & \textbf{26.4} \\
\multicolumn{1}{c|}{} & Inc-v3$_{\textit{ens4}}$\cite{tf} & 0.0 & 0.0 & 0.0 & 0.0 & 0.0 & 0.9 & 0.2 & 0.0 & 0.0 & 0.0 & 0.2 & 0.0 & 0.0 & 2.3 & 0.4 & {\ul 27.2} & 6.0 & 13.3 & 0.1 & \textbf{27.0} \\
\multicolumn{1}{c|}{} & IncRes-v2$_{\textit{ens}}$\cite{tf} & 0.0 & 0.0 & 0.0 & 0.0 & 0.0 & 0.3 & 0.2 & 0.0 & 0.0 & 0.0 & 0.3 & 0.0 & 0.0 & 0.5 & 0.1 & {\ul 23.7} & 5.5 & 9.2 & 0.0 & \textbf{21.6} \\
\multicolumn{1}{c|}{\multirow{-11}{*}{\textbf{\begin{tabular}[c]{@{}c@{}}ConvNet(AT)\\ Set\end{tabular}}}} & \cellcolor[HTML]{D9D9D9}\textit{\textbf{Average}} & \cellcolor[HTML]{D9D9D9}0.0 & \cellcolor[HTML]{D9D9D9}0.0 & \cellcolor[HTML]{D9D9D9}0.0 & \cellcolor[HTML]{D9D9D9}0.0 & \cellcolor[HTML]{D9D9D9}0.0 & \cellcolor[HTML]{D9D9D9}0.3 & \cellcolor[HTML]{D9D9D9}0.1 & \cellcolor[HTML]{D9D9D9}0.0 & \cellcolor[HTML]{D9D9D9}0.1 & \cellcolor[HTML]{D9D9D9}0.0 & \cellcolor[HTML]{D9D9D9}0.1 & \cellcolor[HTML]{D9D9D9}0.0 & \cellcolor[HTML]{D9D9D9}0.0 & \cellcolor[HTML]{D9D9D9}0.8 & \cellcolor[HTML]{D9D9D9}0.1 & \cellcolor[HTML]{D9D9D9}{\ul 13.3} & \cellcolor[HTML]{D9D9D9}3.5 & \cellcolor[HTML]{D9D9D9}6.4 & \cellcolor[HTML]{D9D9D9}0.1 & \cellcolor[HTML]{D9D9D9}\textbf{13.4} \\ \midrule
\multicolumn{1}{c|}{} & Swin-B\cite{liu2024comprehensive} & 0.0 & 0.0 & 0.0 & 0.0 & 0.0 & 0.1 & 0.0 & 0.0 & 0.0 & 0.0 & 0.0 & 0.0 & 0.0 & 0.1 & 0.1 & {\ul 6.9} & 2.4 & 3.5 & 0.2 & \textbf{7.0} \\
\multicolumn{1}{c|}{} & Swin-L\cite{liu2024comprehensive} & 0.0 & 0.0 & 0.0 & 0.0 & 0.0 & 0.0 & 0.0 & 0.0 & 0.0 & 0.0 & 0.0 & 0.0 & 0.0 & 0.1 & 0.0 & {\ul 8.1} & 2.5 & 4.2 & 0.2 & \textbf{8.9} \\
\multicolumn{1}{c|}{} & XCiT-L\cite{debenedetti2023light} & 0.0 & 0.0 & 0.0 & 0.0 & 0.0 & 0.2 & 0.0 & 0.1 & 0.1 & 0.0 & 0.1 & 0.0 & 0.0 & 0.3 & 0.0 & \textbf{19.1} & 4.9 & 13.1 & 0.2 & {\ul 17.6} \\
\multicolumn{1}{c|}{} & ViT-B-ConvStem\cite{singh2024revisiting} & 0.0 & 0.0 & 0.0 & 0.0 & 0.0 & 0.0 & 0.0 & 0.0 & 0.0 & 0.0 & 0.1 & 0.0 & 0.0 & 0.0 & 0.0 & {\ul 11.6} & 3.6 & 9.1 & 0.1 & \textbf{12.1} \\
\multicolumn{1}{c|}{\multirow{-5}{*}{\textbf{\begin{tabular}[c]{@{}c@{}}Metaformer(AT)\\ Set\end{tabular}}}} & \cellcolor[HTML]{D9D9D9}\textit{\textbf{Average}} & \cellcolor[HTML]{D9D9D9}0.0 & \cellcolor[HTML]{D9D9D9}0.0 & \cellcolor[HTML]{D9D9D9}0.0 & \cellcolor[HTML]{D9D9D9}0.0 & \cellcolor[HTML]{D9D9D9}0.0 & \cellcolor[HTML]{D9D9D9}0.1 & \cellcolor[HTML]{D9D9D9}0.0 & \cellcolor[HTML]{D9D9D9}0.0 & \cellcolor[HTML]{D9D9D9}0.0 & \cellcolor[HTML]{D9D9D9}0.0 & \cellcolor[HTML]{D9D9D9}0.1 & \cellcolor[HTML]{D9D9D9}0.0 & \cellcolor[HTML]{D9D9D9}0.0 & \cellcolor[HTML]{D9D9D9}0.1 & \cellcolor[HTML]{D9D9D9}0.0 & \cellcolor[HTML]{D9D9D9}\textbf{11.4} & \cellcolor[HTML]{D9D9D9}3.4 & \cellcolor[HTML]{D9D9D9}{\ul 7.5} & \cellcolor[HTML]{D9D9D9}0.2 & \cellcolor[HTML]{D9D9D9}\textbf{11.4} \\ \midrule
\rowcolor[HTML]{D9D9D9} 
\multicolumn{2}{c|}{\cellcolor[HTML]{D9D9D9}\textit{\textbf{Overall Average}}} & 1.8 & 9.7 & 4.1 & 4.0 & 2.9 & 13.4 & {\ul 29.4} & 1.9 & 3.6 & 2.5 & 15.1 & 0.5 & 2.4 & 18.1 & 8.4 & 21.2 & 11.3 & 19.5 & 5.4 & \textbf{43.5} \\ \bottomrule
\end{tabular}
}
\end{table*}

\begin{table}[!t]
\centering
\bluetable
\ContinuedFloat
\caption{Targeted attack success rates (\%,$\uparrow$) on ImageNet dataset (continued). FIA and NAA are not applicable to targeted attacks. $^*$ denotes surrogate models are ResNet-50 and ResNet-50(AT). $^\ddagger$ denotes surrogate models are ResNet-50 and ViT-B.}
\resizebox{0.5\textwidth}{!}{%
\begin{tabular}{@{}cc|cccc|cc@{}}
\toprule
\multicolumn{2}{c|}{\textbf{Target Model Set}} & \begin{tabular}[c]{@{}c@{}}FIA$^{*}$\\ \cite{fia}\end{tabular} &\begin{tabular}[c]{@{}c@{}}NAA$^{*}$\\ \cite{naa}\end{tabular} & \begin{tabular}[c]{@{}c@{}}GhostNet$^{*}$\\ \cite{ghostnet}\end{tabular} & \multicolumn{1}{c|}{\textbf{DRAP$^{*}$}} & \begin{tabular}[c]{@{}c@{}}Bayesian$^{\ddagger}$\\ \cite{bayesian}\end{tabular} & \multicolumn{1}{c}{\textbf{DRAP$^{\ddagger}$}} \\ \midrule
\multicolumn{1}{c|}{} & AlexNet\cite{alexnet} &  &  & 0.0 & 26.5 & 1.1 & 2.9 \\
\multicolumn{1}{c|}{} & VGG-16-BN\cite{vgg} &  &  & 2.4 & 71.5 & 58.9 & 56.2 \\
\multicolumn{1}{c|}{} & DenseNet-201\cite{densenet} &  &  & 4.3 & 91.1 & 73.1 & 72.5 \\
\multicolumn{1}{c|}{} & GoogLeNet\cite{googlenet} &  &  & 0.5 & 81.1 & 48.8 & 51.6 \\
\multicolumn{1}{c|}{} & ShuffleNetV2\cite{shufflenet} &  &  & 0.2 & 74.0 & 32.1 & 36.6 \\
\multicolumn{1}{c|}{} & MobileNetV2\cite{mobilenetv2} &  &  & 1.6 & 83.6 & 54.9 & 62.1 \\
\multicolumn{1}{c|}{} & MobileNetV3-L\cite{mobilenetv3} &  &  & 0.3 & 70.2 & 26.8 & 38.5 \\
\multicolumn{1}{c|}{} & MNASNet\cite{mnasnet} &  &  & 1.2 & 81.6 & 50.4 & 58.9 \\
\multicolumn{1}{c|}{} & EfficientNet\cite{efficientnet} &  &  & 0.0 & 59.0 & 23.1 & 21.7 \\
\multicolumn{1}{c|}{} & ConvNeXt-L\cite{convnet} &  &  & 0.4 & 49.6 & 28.5 & 23.4 \\
\multicolumn{1}{c|}{\multirow{-11}{*}{\textbf{\begin{tabular}[c]{@{}c@{}}ConvNet\\ Set\end{tabular}}}} & \cellcolor[HTML]{D9D9D9}\textit{\textbf{Average}} &  &  & \cellcolor[HTML]{D9D9D9}1.1 & \cellcolor[HTML]{D9D9D9}68.8 & \cellcolor[HTML]{D9D9D9}39.8 & \cellcolor[HTML]{D9D9D9}42.4 \\ \cmidrule(r){1-2} \cmidrule(l){5-8} 
\multicolumn{1}{c|}{} & ViT-S\cite{vit} &  &  & 0.0 & 25.9 & 5.9 & 15.1 \\
\multicolumn{1}{c|}{} & DeiT-S\cite{deit} &  &  & 0.0 & 34.8 & 18.2 & 30.7 \\
\multicolumn{1}{c|}{} & PoolFormer-S\cite{poolformer} &  &  & 0.3 & 65.2 & 37.3 & 40.3 \\
\multicolumn{1}{c|}{} & TNT-S\cite{tnt} &  &  & 0.0 & 33.5 & 15.0 & 21.4 \\
\multicolumn{1}{c|}{} & Swin-S\cite{swin} &  &  & 0.0 & 15.3 & 4.3 & 3.9 \\
\multicolumn{1}{c|}{} & XCiT-S\cite{xcit} &  &  & 0.0 & 13.4 & 1.9 & 2.2 \\
\multicolumn{1}{c|}{} & CaiT-S\cite{cait} &  &  & 0.0 & 20.3 & 3.4 & 3.5 \\
\multicolumn{1}{c|}{\multirow{-8}{*}{\textbf{\begin{tabular}[c]{@{}c@{}}Metaformer\\ Set\end{tabular}}}} & \cellcolor[HTML]{D9D9D9}\textit{\textbf{Average}} &  &  & \cellcolor[HTML]{D9D9D9}0.0 & \cellcolor[HTML]{D9D9D9}29.8 & \cellcolor[HTML]{D9D9D9}12.3 & \cellcolor[HTML]{D9D9D9}16.7 \\ \cmidrule(r){1-2} \cmidrule(l){5-8} 
\multicolumn{1}{c|}{} & RaWideResNet-101-2\cite{peng2023robust} &  &  & 0.0 & 4.1 & 0.0 & 0.0 \\
\multicolumn{1}{c|}{} & WideResNet-50-2\cite{salman2020adversarially} &  &  & 0.0 & 10.5 & 0.0 & 0.0 \\
\multicolumn{1}{c|}{} & ResNet-50\cite{wong2020fast} &  &  & 0.0 & 3.7 & 0.0 & 0.0 \\
\multicolumn{1}{c|}{} & ConvNeXt-L\cite{liu2024comprehensive} &  &  & 0.0 & 5.0 & 0.0 & 0.0 \\
\multicolumn{1}{c|}{} & ConvNeXt-B\cite{liu2024comprehensive} &  &  & 0.0 & 5.6 & 0.0 & 0.0 \\
\multicolumn{1}{c|}{} & ConvNeXt-L-ConvStem\cite{singh2024revisiting} &  &  & 0.0 & 4.3 & 0.0 & 0.0 \\
\multicolumn{1}{c|}{} & ConvNeXt-B-ConvStem\cite{singh2024revisiting} &  &  & 0.0 & 5.6 & 0.0 & 0.0 \\
\multicolumn{1}{c|}{} & Inc-v3$_{\textit{ens3}}$\cite{tf} &  &  & 0.0 & 20.5 & 0.0 & 0.2 \\
\multicolumn{1}{c|}{} & Inc-v3$_{\textit{ens4}}$\cite{tf} &  &  & 0.0 & 20.9 & 0.0 & 0.0 \\
\multicolumn{1}{c|}{} & IncRes-v2$_{\textit{ens}}$\cite{tf} &  &  & 0.0 & 14.2 & 0.1 & 0.1 \\
\multicolumn{1}{c|}{\multirow{-11}{*}{\textbf{\begin{tabular}[c]{@{}c@{}}ConvNet(AT)\\ Set\end{tabular}}}} & \cellcolor[HTML]{D9D9D9}\textit{\textbf{Average}} &  &  & \cellcolor[HTML]{D9D9D9}0.0 & \cellcolor[HTML]{D9D9D9}9.4 & \cellcolor[HTML]{D9D9D9}0.0 & \cellcolor[HTML]{D9D9D9}0.0 \\ \cmidrule(r){1-2} \cmidrule(l){5-8} 
\multicolumn{1}{c|}{} & Swin-B\cite{liu2024comprehensive} &  &  & 0.0 & 3.5 & 0.0 & 0.0 \\
\multicolumn{1}{c|}{} & Swin-L\cite{liu2024comprehensive} &  &  & 0.0 & 4.6 & 0.0 & 0.0 \\
\multicolumn{1}{c|}{} & XCiT-L\cite{debenedetti2023light} &  &  & 0.0 & 7.2 & 0.0 & 0.0 \\
\multicolumn{1}{c|}{} & ViT-B-ConvStem\cite{singh2024revisiting} &  &  & 0.0 & 5.4 & 0.0 & 0.0 \\
\multicolumn{1}{c|}{\multirow{-5}{*}{\textbf{\begin{tabular}[c]{@{}c@{}}Metaformer(AT)\\ Set\end{tabular}}}} & \cellcolor[HTML]{D9D9D9}\textit{\textbf{Average}} &  &  & \cellcolor[HTML]{D9D9D9}0.0 & \cellcolor[HTML]{D9D9D9}5.2 & \cellcolor[HTML]{D9D9D9}0.0 & \cellcolor[HTML]{D9D9D9}0.0 \\ \cmidrule(r){1-2} \cmidrule(l){5-8} 
\multicolumn{2}{c|}{\cellcolor[HTML]{D9D9D9}\textit{\textbf{Overall Average}}} & \multirow{-36}{*}{$\backslash$} & \multirow{-36}{*}{$\backslash$} & \cellcolor[HTML]{D9D9D9}0.4 & \cellcolor[HTML]{D9D9D9}32.6 & \cellcolor[HTML]{D9D9D9}15.6 & \cellcolor[HTML]{D9D9D9}17.5 \\ \bottomrule
\end{tabular}
}
\end{table}

\subsection{\blue{Comparison under More Settings in Untargeted Scenario}}\label{app:large_budget}

\blue{As stated in the Implementation Details of Section \ref{Sec: Main Results}, we set the default untargeted experimental protocol as $\gamma=4/255$ and $n_{iter}=200$ for non-momentum-based attacks as well as DRAP and $n_{iter}=10$ for momentum-based attacks to avoid the performance degradation with more iterations. While conventional settings typically employ a larger budget $\gamma=16/255$ and \bluetwo{fewer} number of iterations $n_{iter}=10$, our default setting could evaluate attacks under a more challenging threat model and avoid under-estimating the best achievable performance of many transfer-based attacks as well as DRAP by stopping too early. Additionally, we provide comparison results under some conventional settings to further validate the generality of DRAP.}

\begin{table*}[h]
\centering
\caption{\blue{Detection results in terms of AUROC obtained by EPS-AD \cite{zhang2023detecting} on AEs generated by different transfer-based attacks with three perturbation budgets $\gamma\in\{4/255,8/255,16/255\}$. Keeping hyper-parameters consistent with the configuration in its original paper, we train its deep‑kernel MMD on ResNet‑50 and test it on AEs crafted on an adversarially trained XCiT \cite{debenedetti2023light}. The last column reports the average AUROC across all attacks.}}
\label{tab:auroc}
\resizebox{0.8\textwidth}{!}{%
\bluetable
\begin{tabular}{@{}cl|cccccccccc|c@{}}
\toprule
\multicolumn{2}{c|}{\textbf{AUROC}} & \begin{tabular}[c]{@{}c@{}}I-FGSM\\ \cite{ifgsm}\end{tabular} & \begin{tabular}[c]{@{}c@{}}DI2-FGSM\\ \cite{difgsm}\end{tabular} & \begin{tabular}[c]{@{}c@{}}SI-FGSM\\ \cite{nisi}\end{tabular} & \begin{tabular}[c]{@{}c@{}}TI-FGSM\\ \cite{ti}\end{tabular} & \begin{tabular}[c]{@{}c@{}}SIA\\ \cite{sia}\end{tabular} & \begin{tabular}[c]{@{}c@{}}MI-FGSM\\ \cite{mifgsm}\end{tabular} & \begin{tabular}[c]{@{}c@{}}PI-FGSM\\ \cite{pifgsm}\end{tabular} &  \begin{tabular}[c]{@{}c@{}}VT-FGSM\\ \cite{vt}\end{tabular} & \begin{tabular}[c]{@{}c@{}}PGN\\ \cite{pgn}\end{tabular} & DRAP & \textit{\textbf{AVERAGE}} \\ \midrule
\multicolumn{2}{c|}{$\gamma=4/255$} & 0.9099 & 0.8478 & 0.8668 & 0.8398 & 0.7891 & 0.8731 & 0.8138 & 0.9009 & 0.8944 & 0.8842 & 0.8620 \\ \midrule
\multicolumn{2}{c|}{$\gamma=8/255$} & 0.9530 & 0.9503 & 0.9331 & 0.9227 & 0.9441 & 0.9341 & 0.9384 & 0.9500 & 0.9477 & 0.9416 & 0.9415 \\ \midrule
\multicolumn{2}{c|}{$\gamma=16/255$} & 0.9455 & 0.9730 & 0.9233 & 0.9391 & 0.9724 & 0.9355 & 0.9517 & 0.9679 & 0.9340 & 0.9060 & 0.9448 \\ \bottomrule
\end{tabular}
}
\end{table*}

\subsubsection{\blue{Justification of the default setting}}

\blue{Earlier transfer-based attacks were commonly evaluated at the perturbation norm budget $\gamma=16/255$ \cite{ti, nisi, mifgsm}, but the community has rapidly moved to more challenging budgets. Recent transfer-based attacks and benchmarks shift to $\gamma=8/255$ \cite{blackboxbench, tabench, bd, taig, lrs} and even $4/255$ \cite{bd, taig}. This shift reflects that $16/255$ has become relatively easy for strong attacks, while $4/255$ offers a more challenging threat model where improving transferability is non‑trivial. Thus, we adopt $\gamma=4/255$ as our main setting.}

\blue{On the defense side, adversarial defenses \cite{fat,song2024regional} have been extensively studied at $\gamma=8/255$ for many years and have already achieved strong performance \cite{hu2019new, moayeri2021sample,defensesurvey}. Subsequent detection methods primarily focus on $\gamma=4/255$ as the main threat model \cite{dong2022random,zhang2023detecting}. To stay consistent with this line of work, we therefore use $\gamma=4/255$ as our main budget \cite{cwa}.}

\blue{As supplementary evidence from the defense side, to validate the above statement that current detection methods have already achieved very strong performance on threat models with $\gamma=16/255$ and $8/255$, we additionally conducted a detection experiment using a recent strong detector EPS‑AD \cite{zhang2023detecting} under its most challenging setting of \textit{Detecting on Unseen and Transferable Attacks}. Table \ref{tab:auroc} reports AUROC under $\gamma \in \{4/255, 8/255, 16/255\}$. We find that, even in this challenging scenario, the detector achieves an average AUROC of 0.94 for $\gamma=16/255$ and $8/255$, whereas the AUROC is notably lower for $\gamma=4/255$. This indicates that AEs with a large budget are highly detectable with the current detector, while AEs with a small budget remain more challenging. This observation motivates our focus on $\gamma=4/255$.}

\begin{table*}[t]
\centering
\caption{\blue{Untargeted attack success rates (\%, $\uparrow$) on ImageNet dataset under perturbation budget $\gamma = 16/255$. The AEs are crafted from the five surrogate models (ResNet-50, ConvNeXt-T, ViT-B, ResNet-50(AT), and XCiT-S(AT)), against 31 target models from four prototypes (normally and adversarially trained convnets and metaformers). All baseline attacks use 10 iterations, while DRAP is additionally evaluated with 50, 100, 150, and 200 iterations. \textbf{Bold} denotes the best results and \underline{underlined} denotes the second best results.}}
\label{tab:main_16_255_1}
\resizebox{\textwidth}{!}{%
\bluetable
\begin{tabular}{cc|ccccccccccccccccccc|ccccc}
\toprule
\multicolumn{2}{c|}{\textbf{Target Model Set}} & \begin{tabular}[c]{@{}c@{}}I-FGSM\\ \cite{ifgsm}\end{tabular} & \begin{tabular}[c]{@{}c@{}}DI2\\ \cite{difgsm}\end{tabular} & \begin{tabular}[c]{@{}c@{}}SI\\ \cite{nisi}\end{tabular} & \begin{tabular}[c]{@{}c@{}}Admix\\ \cite{admix}\end{tabular} & \begin{tabular}[c]{@{}c@{}}TI\\ \cite{ti}\end{tabular} & \begin{tabular}[c]{@{}c@{}}SSA\\ \cite{ssa}\end{tabular} & \begin{tabular}[c]{@{}c@{}}SIA\\ \cite{sia}\end{tabular} & \begin{tabular}[c]{@{}c@{}}MI\\ \cite{mifgsm}\end{tabular} & \begin{tabular}[c]{@{}c@{}}PI\\ \cite{pifgsm}\end{tabular} & \begin{tabular}[c]{@{}c@{}}VT\\ \cite{vt}\end{tabular} & \begin{tabular}[c]{@{}c@{}}PGN\\ \cite{pgn}\end{tabular} & \begin{tabular}[c]{@{}c@{}}CWA\\ \cite{cwa}\end{tabular} & \begin{tabular}[c]{@{}c@{}}SVRE\\ \cite{svre}\end{tabular} & \begin{tabular}[c]{@{}c@{}}RAP\\ \cite{rap}\end{tabular} & \begin{tabular}[c]{@{}c@{}}TPA\\ \cite{tpa}\end{tabular} & \begin{tabular}[c]{@{}c@{}}FEM\\ \cite{fem}\end{tabular} & \begin{tabular}[c]{@{}c@{}}MEF\\ \cite{mef}\end{tabular} & \begin{tabular}[c]{@{}c@{}}AdaEA\\ \cite{adaea}\end{tabular} & \begin{tabular}[c]{@{}c@{}}SMER\\ \cite{smer}\end{tabular} & \textbf{\begin{tabular}[c]{@{}c@{}}DRAP\\ (10 iter)\end{tabular}} & \textbf{\begin{tabular}[c]{@{}c@{}}DRAP\\ (50 iter)\end{tabular}} & \textbf{\begin{tabular}[c]{@{}c@{}}DRAP\\ (100 iter)\end{tabular}} & \textbf{\begin{tabular}[c]{@{}c@{}}DRAP\\ (150 iter)\end{tabular}} & \textbf{\begin{tabular}[c]{@{}c@{}}DRAP\\ (200 iter)\end{tabular}} \\ \midrule
\multicolumn{1}{c|}{} & AlexNet & 58.3 & 62.3 & 64.8 & 66.8 & 61.7 & 84.9 & 83.1 & 78.4 & 80.6 & 57.6 & 87.7 & 92.4 & 84.5 & 52.9 & 82.1 & 89.4 & 91.0 & 80.9 & 93.9 & 89.5 & 97.3 & 98.1 & {\ul 98.2} & \textbf{98.4} \\
\multicolumn{1}{c|}{} & VGG-16-BN & 88.4 & 93.6 & 94.4 & 95.2 & 90.3 & 98.6 & 99.0 & 95.6 & 97.1 & 82.9 & 97.8 & 98.0 & 98.0 & 67.1 & 95.5 & 99.2 & 97.8 & 95.0 & 96.2 & 87.6 & 98.9 & {\ul 99.4} & \textbf{99.6} & \textbf{99.6} \\
\multicolumn{1}{c|}{} & DenseNet-201 & 81.8 & 92.6 & 93.8 & 94.8 & 85.7 & 98.6 & 99.1 & 94.2 & 97.4 & 77.8 & 98.3 & 97.7 & 98.3 & 62.1 & 95.5 & 99.2 & 98.6 & 94.4 & 96.7 & 90.1 & 99.6 & {\ul 99.9} & \textbf{100.0} & \textbf{100.0} \\
\multicolumn{1}{c|}{} & GoogLeNet & 64.5 & 80.2 & 83.2 & 83.5 & 69.8 & 96.7 & 97.2 & 86.9 & 91.3 & 67.3 & 95.2 & 95.1 & 93.0 & 52.1 & 88.9 & 97.7 & 96.0 & 88.6 & 94.5 & 89.7 & 98.9 & {\ul 99.3} & {\ul 99.3} & \textbf{99.5} \\
\multicolumn{1}{c|}{} & ShuffleNetV2 & 72.1 & 82.4 & 84.5 & 84.7 & 75.7 & 96.2 & 97.0 & 87.2 & 89.9 & 70.6 & 96.0 & 96.3 & 93.5 & 59.1 & 90.2 & 95.8 & 97.0 & 90.0 & 96.4 & 93.7 & 99.4 & {\ul 99.8} & {\ul 99.8} & \textbf{99.9} \\
\multicolumn{1}{c|}{} & MobileNetV2 & 82.5 & 92.3 & 91.8 & 94.6 & 86.0 & 98.3 & 98.9 & 93.7 & 96.1 & 80.1 & 98.2 & 97.6 & 97.6 & 65.8 & 94.1 & 98.7 & 98.3 & 94.4 & 96.7 & 92.0 & 99.7 & {\ul 99.9} & \textbf{100.0} & {\ul 99.9} \\
\multicolumn{1}{c|}{} & MobileNetV3-L & 68.7 & 85.6 & 84.9 & 85.8 & 74.4 & 97.6 & 97.9 & 88.4 & 92.3 & 68.8 & 96.7 & 97.4 & 94.6 & 55.3 & 90.0 & 97.5 & 97.7 & 91.5 & 96.8 & 91.1 & 99.6 & 99.6 & {\ul 99.7} & \textbf{99.8} \\
\multicolumn{1}{c|}{} & MNASNet & 80.9 & 91.0 & 92.4 & 92.7 & 83.7 & 98.2 & 98.9 & 93.3 & 95.8 & 77.3 & 98.2 & 97.9 & 97.5 & 62.9 & 94.1 & 98.4 & 98.4 & 93.2 & 96.6 & 92.3 & {\ul 99.8} & \textbf{99.9} & \textbf{99.9} & \textbf{99.9} \\
\multicolumn{1}{c|}{} & EfficientNet & 60.9 & 80.5 & 78.6 & 76.2 & 67.8 & 95.3 & 96.5 & 82.2 & 87.9 & 64.2 & 95.4 & 95.2 & 91.4 & 50.1 & 84.6 & 96.1 & 95.4 & 84.6 & 95.2 & 87.0 & 98.7 & {\ul 99.4} & \textbf{99.6} & \textbf{99.6} \\
\multicolumn{1}{c|}{} & ConvNeXt-L & 85.8 & 92.6 & 92.4 & 95.1 & 84.6 & 96.4 & 98.6 & 94.6 & 96.7 & 80.5 & 95.2 & 98.2 & 98.2 & 59.4 & 94.4 & 98.6 & 94.8 & 92.9 & 97.4 & 76.2 & 98.7 & {\ul 99.3} & {\ul 99.3} & \textbf{99.6} \\
\multicolumn{1}{c|}{\multirow{-11}{*}{\textbf{\begin{tabular}[c]{@{}c@{}}ConvNet\\ Set\end{tabular}}}} & \cellcolor[HTML]{D9D9D9}\textit{\textbf{Average}} & \cellcolor[HTML]{D9D9D9}74.4 & \cellcolor[HTML]{D9D9D9}85.3 & \cellcolor[HTML]{D9D9D9}86.1 & \cellcolor[HTML]{D9D9D9}86.9 & \cellcolor[HTML]{D9D9D9}78.0 & \cellcolor[HTML]{D9D9D9}96.1 & \cellcolor[HTML]{D9D9D9}96.6 & \cellcolor[HTML]{D9D9D9}89.5 & \cellcolor[HTML]{D9D9D9}92.5 & \cellcolor[HTML]{D9D9D9}72.7 & \cellcolor[HTML]{D9D9D9}95.9 & \cellcolor[HTML]{D9D9D9}96.6 & \cellcolor[HTML]{D9D9D9}94.7 & \cellcolor[HTML]{D9D9D9}58.7 & \cellcolor[HTML]{D9D9D9}90.9 & \cellcolor[HTML]{D9D9D9}97.1 & \cellcolor[HTML]{D9D9D9}96.5 & \cellcolor[HTML]{D9D9D9}90.6 & \cellcolor[HTML]{D9D9D9}96.0 & \cellcolor[HTML]{D9D9D9}88.9 & \cellcolor[HTML]{D9D9D9}99.1 & \cellcolor[HTML]{D9D9D9}{\ul 99.5} & \cellcolor[HTML]{D9D9D9}{\ul 99.5} & \cellcolor[HTML]{D9D9D9}\textbf{99.6} \\ \midrule
\multicolumn{1}{c|}{} & ViT-S & 34.7 & 59.5 & 45.3 & 49.3 & 43.6 & 78.6 & 89.4 & 66.2 & 71.2 & 37.1 & 83.8 & 89.7 & 77.5 & 28.1 & 67.3 & 85.6 & 89.3 & 70.1 & 87.7 & 69.2 & 94.8 & 95.8 & {\ul 96.8} & \textbf{97.4} \\
\multicolumn{1}{c|}{} & DeiT-S & 41.2 & 63.8 & 52.6 & 57.4 & 49.3 & 83.0 & 89.4 & 72.0 & 74.8 & 45.5 & 87.8 & 92.1 & 80.6 & 36.4 & 72.2 & 86.9 & 90.8 & 76.3 & 91.0 & 79.3 & 96.3 & 97.3 & {\ul 97.5} & \textbf{97.9} \\
\multicolumn{1}{c|}{} & PoolFormer-S & 69.3 & 87.1 & 82.0 & 84.7 & 74.7 & 95.5 & 97.5 & 88.3 & 90.6 & 68.8 & 94.4 & 95.5 & 93.7 & 51.1 & 87.7 & 97.2 & 94.9 & 87.2 & 94.1 & 81.7 & 98.1 & 98.8 & {\ul 98.8} & \textbf{99.0} \\
\multicolumn{1}{c|}{} & TNT-S & 44.1 & 73.9 & 58.2 & 64.9 & 53.9 & 88.2 & 94.9 & 74.3 & 78.5 & 49.2 & 90.0 & 92.1 & 85.4 & 34.8 & 74.3 & 90.4 & 90.6 & 77.8 & 89.6 & 76.4 & 95.5 & 96.8 & {\ul 97.3} & \textbf{97.6} \\
\multicolumn{1}{c|}{} & Swin-S & 29.8 & 60.6 & 41.6 & 49.1 & 34.7 & 71.8 & \textbf{90.5} & 60.0 & 65.7 & 34.4 & 75.2 & 82.6 & 72.9 & 25.2 & 58.3 & 80.3 & 83.3 & 64.0 & 78.7 & 56.5 & 84.8 & 86.9 & 88.2 & {\ul 89.5} \\
\multicolumn{1}{c|}{} & XCiT-S & 27.6 & 55.5 & 33.6 & 37.6 & 32.8 & 61.5 & \textbf{86.0} & 45.9 & 50.7 & 32.1 & 67.3 & 71.8 & 58.0 & 23.3 & 47.6 & 68.7 & 74.0 & 53.4 & 71.3 & 53.2 & 77.6 & 79.2 & 78.0 & {\ul 80.7} \\
\multicolumn{1}{c|}{} & CaiT-S & 17.2 & 41.8 & 23.0 & 25.1 & 23.5 & 49.3 & 78.4 & 38.7 & 43.4 & 22.0 & 58.5 & 68.5 & 49.9 & 15.5 & 40.6 & 61.3 & 65.8 & 45.7 & 68.8 & 49.4 & 76.6 & 79.3 & {\ul 79.8} & \textbf{80.7} \\
\multicolumn{1}{c|}{\multirow{-8}{*}{\textbf{\begin{tabular}[c]{@{}c@{}}Metaformer\\ Set\end{tabular}}}} & \cellcolor[HTML]{D9D9D9}\textit{\textbf{Average}} & \cellcolor[HTML]{D9D9D9}37.7 & \cellcolor[HTML]{D9D9D9}63.2 & \cellcolor[HTML]{D9D9D9}48.0 & \cellcolor[HTML]{D9D9D9}52.6 & \cellcolor[HTML]{D9D9D9}44.6 & \cellcolor[HTML]{D9D9D9}75.4 & \cellcolor[HTML]{D9D9D9}89.4 & \cellcolor[HTML]{D9D9D9}63.6 & \cellcolor[HTML]{D9D9D9}67.8 & \cellcolor[HTML]{D9D9D9}41.3 & \cellcolor[HTML]{D9D9D9}79.6 & \cellcolor[HTML]{D9D9D9}84.6 & \cellcolor[HTML]{D9D9D9}74.0 & \cellcolor[HTML]{D9D9D9}30.6 & \cellcolor[HTML]{D9D9D9}64.0 & \cellcolor[HTML]{D9D9D9}81.5 & \cellcolor[HTML]{D9D9D9}84.1 & \cellcolor[HTML]{D9D9D9}67.8 & \cellcolor[HTML]{D9D9D9}83.0 & \cellcolor[HTML]{D9D9D9}66.5 & \cellcolor[HTML]{D9D9D9}89.1 & \cellcolor[HTML]{D9D9D9}90.6 & \cellcolor[HTML]{D9D9D9}{\ul 90.9} & \cellcolor[HTML]{D9D9D9}\textbf{91.8} \\ \midrule
\multicolumn{1}{c|}{} & RaWideResNet-101-2 & 23.2 & 23.4 & 23.4 & 25.7 & 25.4 & 34.8 & 28.7 & 31.3 & 32.2 & 23.0 & 42.1 & 62.7 & 42.9 & 19.7 & 36.8 & 36.9 & 41.7 & 39.8 & 68.2 & 56.3 & 69.8 & 72.8 & {\ul 73.6} & \textbf{74.2} \\
\multicolumn{1}{c|}{} & WideResNet-50-2 & 29.4 & 29.3 & 30.4 & 32.5 & 32.9 & 42.4 & 35.0 & 39.0 & 40.3 & 28.7 & 50.2 & 72.2 & 51.8 & 24.8 & 45.6 & 45.7 & 50.1 & 48.4 & 78.6 & 69.1 & 82.4 & 84.5 & {\ul 85.5} & \textbf{85.8} \\
\multicolumn{1}{c|}{} & ResNet-50 & 44.7 & 45.2 & 46.5 & 47.3 & 47.7 & 55.6 & 50.2 & 50.2 & 51.2 & 44.4 & 59.6 & 73.8 & 57.9 & 41.4 & 57.3 & 54.2 & 58.7 & 56.0 & 78.9 & 75.2 & 84.0 & 85.5 & {\ul 86.3} & \textbf{86.8} \\
\multicolumn{1}{c|}{} & ConvNeXt-L & 14.9 & 15.6 & 16.0 & 17.6 & 17.0 & 23.3 & 19.7 & 24.0 & 24.7 & 15.1 & 30.9 & 52.8 & 28.9 & 13.4 & 26.6 & 25.9 & 32.5 & 29.3 & 58.4 & 45.8 & 58.4 & 61.1 & {\ul 62.4} & \textbf{63.1} \\
\multicolumn{1}{c|}{} & ConvNeXt-B & 15.9 & 16.4 & 15.8 & 19.0 & 17.7 & 25.2 & 20.5 & 23.3 & 24.5 & 16.0 & 33.0 & 57.5 & 31.6 & 12.8 & 28.0 & 28.4 & 35.0 & 31.5 & 62.1 & 48.9 & 63.7 & 65.7 & {\ul 66.1} & \textbf{66.9} \\
\multicolumn{1}{c|}{} & ConvNeXt-L-ConvStem & 14.6 & 14.8 & 14.6 & 16.9 & 16.2 & 22.2 & 18.5 & 22.3 & 22.8 & 13.8 & 30.5 & 53.2 & 28.0 & 11.9 & 25.8 & 26.8 & 32.6 & 29.7 & 59.1 & 47.4 & 59.6 & 60.7 & {\ul 62.4} & \textbf{62.8} \\
\multicolumn{1}{c|}{} & ConvNeXt-B-ConvStem & 16.6 & 17.2 & 16.9 & 18.9 & 18.3 & 26.0 & 21.1 & 25.2 & 25.9 & 16.4 & 34.3 & 56.7 & 32.3 & 13.8 & 28.8 & 29.6 & 35.2 & 32.7 & 61.6 & 49.4 & 62.5 & 65.2 & {\ul 67.0} & \textbf{67.1} \\
\multicolumn{1}{c|}{} & Inc-v3$_{\textit{ens3}}$ & 19.6 & 34.7 & 27.4 & 29.6 & 25.0 & 65.0 & 63.6 & 40.1 & 45.4 & 27.5 & 72.7 & 77.2 & 54.1 & 17.1 & 37.7 & 63.9 & 80.3 & 48.9 & 80.2 & 65.5 & 86.2 & 87.6 & {\ul 89.1} & \textbf{90.6} \\
\multicolumn{1}{c|}{} & Inc-v3$_{\textit{ens4}}$ & 22.9 & 37.4 & 31.6 & 32.2 & 26.3 & 62.8 & 62.9 & 43.1 & 46.7 & 29.3 & 72.9 & 76.0 & 53.7 & 20.4 & 39.7 & 62.0 & 81.0 & 50.6 & 79.2 & 66.2 & 84.1 & {\ul 88.6} & {\ul 88.6} & \textbf{90.0} \\
\multicolumn{1}{c|}{} & IncRes-v2$_{\textit{ens}}$ & 12.1 & 24.6 & 18.7 & 20.7 & 17.8 & 50.6 & 47.9 & 29.8 & 33.3 & 17.1 & 64.7 & 65.3 & 40.9 & 11.1 & 26.8 & 48.1 & 72.7 & 37.9 & 74.6 & 55.8 & 75.7 & 79.4 & {\ul 81.0} & \textbf{82.5} \\
\multicolumn{1}{c|}{\multirow{-11}{*}{\textbf{\begin{tabular}[c]{@{}c@{}}ConvNet\\ (AT) Set\end{tabular}}}} & \cellcolor[HTML]{D9D9D9}\textit{\textbf{Average}} & \cellcolor[HTML]{D9D9D9}21.4 & \cellcolor[HTML]{D9D9D9}25.9 & \cellcolor[HTML]{D9D9D9}24.1 & \cellcolor[HTML]{D9D9D9}26.0 & \cellcolor[HTML]{D9D9D9}24.4 & \cellcolor[HTML]{D9D9D9}40.8 & \cellcolor[HTML]{D9D9D9}36.8 & \cellcolor[HTML]{D9D9D9}32.8 & \cellcolor[HTML]{D9D9D9}34.7 & \cellcolor[HTML]{D9D9D9}23.1 & \cellcolor[HTML]{D9D9D9}49.1 & \cellcolor[HTML]{D9D9D9}64.7 & \cellcolor[HTML]{D9D9D9}42.2 & \cellcolor[HTML]{D9D9D9}18.6 & \cellcolor[HTML]{D9D9D9}35.3 & \cellcolor[HTML]{D9D9D9}42.2 & \cellcolor[HTML]{D9D9D9}52.0 & \cellcolor[HTML]{D9D9D9}40.5 & \cellcolor[HTML]{D9D9D9}70.1 & \cellcolor[HTML]{D9D9D9}58.0 & \cellcolor[HTML]{D9D9D9}72.6 & \cellcolor[HTML]{D9D9D9}75.1 & \cellcolor[HTML]{D9D9D9}{\ul 76.2} & \cellcolor[HTML]{D9D9D9}\textbf{77.0} \\ \midrule
\multicolumn{1}{c|}{} & Swin-B & 16.3 & 16.4 & 15.9 & 18.7 & 17.9 & 24.0 & 20.3 & 23.3 & 23.9 & 16.0 & 31.4 & 55.5 & 30.3 & 13.9 & 26.3 & 27.7 & 32.9 & 31.1 & 60.2 & 48.4 & 61.0 & 63.2 & {\ul 64.5} & \textbf{64.6} \\
\multicolumn{1}{c|}{} & Swin-L & 14.1 & 14.6 & 14.3 & 15.9 & 15.6 & 20.9 & 18.0 & 20.7 & 21.4 & 14.1 & 28.8 & 50.5 & 26.6 & 11.5 & 24.0 & 24.3 & 30.7 & 29.0 & 57.2 & 44.5 & 56.1 & 59.1 & {\ul 60.5} & \textbf{61.1} \\
\multicolumn{1}{c|}{} & XCiT-L & 23.4 & 23.5 & 24.0 & 26.7 & 25.9 & 35.6 & 28.4 & 34.1 & 34.4 & 22.5 & 45.0 & 72.9 & 45.7 & 18.6 & 38.8 & 38.9 & 44.2 & 43.8 & 76.3 & 62.4 & 77.9 & 81.6 & {\ul 83.2} & \textbf{83.8} \\
\multicolumn{1}{c|}{} & ViT-B-ConvStem & 15.9 & 16.0 & 15.8 & 18.5 & 17.7 & 25.0 & 20.2 & 23.7 & 23.8 & 15.2 & 33.7 & 60.9 & 32.8 & 12.6 & 29.6 & 29.4 & 34.5 & 33.8 & 65.2 & 52.5 & 66.5 & 70.1 & {\ul 71.0} & \textbf{72.1} \\
\multicolumn{1}{c|}{\multirow{-5}{*}{\textbf{\begin{tabular}[c]{@{}c@{}}Metaformer\\ (AT) Set\end{tabular}}}} & \cellcolor[HTML]{D9D9D9}\textit{\textbf{Average}} & \cellcolor[HTML]{D9D9D9}17.4 & \cellcolor[HTML]{D9D9D9}17.6 & \cellcolor[HTML]{D9D9D9}17.5 & \cellcolor[HTML]{D9D9D9}20.0 & \cellcolor[HTML]{D9D9D9}19.3 & \cellcolor[HTML]{D9D9D9}26.4 & \cellcolor[HTML]{D9D9D9}21.7 & \cellcolor[HTML]{D9D9D9}25.5 & \cellcolor[HTML]{D9D9D9}25.9 & \cellcolor[HTML]{D9D9D9}17.0 & \cellcolor[HTML]{D9D9D9}34.7 & \cellcolor[HTML]{D9D9D9}60.0 & \cellcolor[HTML]{D9D9D9}33.9 & \cellcolor[HTML]{D9D9D9}14.2 & \cellcolor[HTML]{D9D9D9}29.7 & \cellcolor[HTML]{D9D9D9}30.1 & \cellcolor[HTML]{D9D9D9}35.6 & \cellcolor[HTML]{D9D9D9}34.4 & \cellcolor[HTML]{D9D9D9}64.7 & \cellcolor[HTML]{D9D9D9}52.0 & \cellcolor[HTML]{D9D9D9}65.4 & \cellcolor[HTML]{D9D9D9}68.5 & \cellcolor[HTML]{D9D9D9}{\ul 69.8} & \cellcolor[HTML]{D9D9D9}\textbf{70.4} \\ \midrule
\rowcolor[HTML]{D9D9D9} 
\multicolumn{2}{c|}{\cellcolor[HTML]{D9D9D9}\textit{\textbf{Overall Average}}} & 41.7 & 52.4 & 48.7 & 50.9 & 45.6 & 64.6 & 66.0 & 57.1 & 59.7 & 42.4 & 69.2 & 78.9 & 65.2 & 33.7 & 59.0 & 67.2 & 71.5 & 62.0 & 80.7 & 69.1 & 83.9 & 85.6 & {\ul 86.2} & \textbf{86.8} \\ \bottomrule
\end{tabular}
}
\end{table*}

\begin{table*}[!t]
\ContinuedFloat
\centering
\caption{\blue{Untargeted attack success rates (\%, $\uparrow$) on ImageNet dataset under perturbation budget $\gamma = 16/255$ (continued). $^*$ denotes surrogate models are ResNet-50 and ResNet-50(AT). $^\ddagger$ denotes surrogate models are ResNet-50 and ViT-B.}}
\resizebox{0.7\textwidth}{!}{%
\bluetable
\begin{tabular}{@{}cc|ccccc|ccc@{}}
\toprule
\multicolumn{2}{c|}{\textbf{Target Model Set}} & \begin{tabular}[c]{@{}c@{}}FIA$^{*}$\\ \cite{fia}\end{tabular} & \begin{tabular}[c]{@{}c@{}}NAA$^{*}$\\ \cite{naa}\end{tabular} & \begin{tabular}[c]{@{}c@{}}GhostNet$^{*}$\\ \cite{ghostnet}\end{tabular} & \textbf{\begin{tabular}[c]{@{}c@{}}DRAP$^{*}$\\ (10 iter)\end{tabular}} & \textbf{\begin{tabular}[c]{@{}c@{}}DRAP$^{*}$\\ (50 iter)\end{tabular}} & \begin{tabular}[c]{@{}c@{}}Bayesian$^{\ddagger}$\\ \cite{bayesian}\end{tabular} & \textbf{\begin{tabular}[c]{@{}c@{}}DRAP$^{\ddagger}$\\ (10 iter)\end{tabular}} & \textbf{\begin{tabular}[c]{@{}c@{}}DRAP$^{\ddagger}$\\ (50 iter)\end{tabular}} \\ \midrule
\multicolumn{1}{c|}{} & AlexNet & 66.9 & 70.3 & 53.6 & {\ul 92.7} & \textbf{97.4} & 82.1 & {\ul 85.2} & \textbf{92.7} \\
\multicolumn{1}{c|}{} & VGG-16-BN & 84.1 & 93.1 & 77.8 & {\ul 94.4} & \textbf{99.4} & {\ul 98.1} & 92.3 & \textbf{99.2} \\
\multicolumn{1}{c|}{} & DenseNet-201 & 82.6 & 94.2 & 75.0 & {\ul 95.6} & \textbf{99.8} & {\ul 98.3} & 91.0 & \textbf{99.6} \\
\multicolumn{1}{c|}{} & GoogLeNet & 72.7 & 88.0 & 56.3 & {\ul 95.2} & \textbf{99.2} & {\ul 96.3} & 89.9 & \textbf{98.2} \\
\multicolumn{1}{c|}{} & ShuffleNetV2 & 79.0 & 87.9 & 63.1 & {\ul 96.6} & \textbf{99.6} & {\ul 98.2} & 91.9 & \textbf{99.0} \\
\multicolumn{1}{c|}{} & MobileNetV2 & 83.6 & 93.3 & 73.9 & {\ul 96.9} & \textbf{99.7} & {\ul 98.8} & 93.0 & \textbf{99.4} \\
\multicolumn{1}{c|}{} & MobileNetV3-L & 70.1 & 85.6 & 52.7 & {\ul 94.3} & \textbf{99.3} & {\ul 96.6} & 88.0 & \textbf{98.2} \\
\multicolumn{1}{c|}{} & MNASNet & 83.0 & 90.5 & 69.0 & {\ul 96.5} & \textbf{99.8} & {\ul 98.3} & 92.9 & \textbf{99.6} \\
\multicolumn{1}{c|}{} & EfficientNet & 60.2 & 80.3 & 44.3 & {\ul 90.0} & \textbf{98.1} & {\ul 91.5} & 80.4 & \textbf{93.3} \\
\multicolumn{1}{c|}{} & ConvNeXt-L & 29.5 & {\ul 63.4} & 23.7 & 57.7 & \textbf{80.6} & {\ul 68.9} & 52.1 & \textbf{69.7} \\
\multicolumn{1}{c|}{\multirow{-11}{*}{\textbf{\begin{tabular}[c]{@{}c@{}}ConvNet\\ Set\end{tabular}}}} & \cellcolor[HTML]{D9D9D9}\textit{\textbf{Average}} & \cellcolor[HTML]{D9D9D9}71.2 & \cellcolor[HTML]{D9D9D9}84.7 & \cellcolor[HTML]{D9D9D9}58.9 & \cellcolor[HTML]{D9D9D9}{\ul 91.0} & \cellcolor[HTML]{D9D9D9}\textbf{97.3} & \cellcolor[HTML]{D9D9D9}{\ul 92.7} & \cellcolor[HTML]{D9D9D9}85.7 & \cellcolor[HTML]{D9D9D9}\textbf{94.9} \\ \midrule
\multicolumn{1}{c|}{} & ViT-S & 23.6 & 43.8 & 14.2 & {\ul 63.0} & \textbf{83.5} & {\ul 80.0} & 67.3 & \textbf{84.4} \\
\multicolumn{1}{c|}{} & DeiT-S & 26.2 & 47.3 & 17.0 & {\ul 71.4} & \textbf{86.4} & \textbf{90.3} & 75.0 & {\ul 89.5} \\
\multicolumn{1}{c|}{} & PoolFormer-S & 49.4 & 78.5 & 41.0 & {\ul 83.4} & \textbf{95.6} & {\ul 90.1} & 74.9 & \textbf{91.0} \\
\multicolumn{1}{c|}{} & TNT-S & 24.2 & 53.5 & 16.2 & {\ul 67.0} & \textbf{83.1} & \textbf{87.2} & 66.8 & {\ul 82.8} \\
\multicolumn{1}{c|}{} & Swin-S & 15.3 & 42.3 & 11.3 & {\ul 44.1} & \textbf{64.8} & {\ul 54.4} & 44.0 & \textbf{56.8} \\
\multicolumn{1}{c|}{} & XCiT-S & 17.3 & 33.9 & 12.8 & {\ul 40.5} & \textbf{57.7} & \textbf{44.6} & 32.4 & {\ul 41.2} \\
\multicolumn{1}{c|}{} & CaiT-S & 10.3 & 25.2 & 6.8 & {\ul 39.3} & \textbf{57.5} & {\ul 43.4} & 32.4 & \textbf{44.7} \\
\multicolumn{1}{c|}{\multirow{-8}{*}{\textbf{\begin{tabular}[c]{@{}c@{}}Metaformer\\ Set\end{tabular}}}} & \cellcolor[HTML]{D9D9D9}\textit{\textbf{Average}} & \cellcolor[HTML]{D9D9D9}23.8 & \cellcolor[HTML]{D9D9D9}46.4 & \cellcolor[HTML]{D9D9D9}17.0 & \cellcolor[HTML]{D9D9D9}{\ul 58.4} & \cellcolor[HTML]{D9D9D9}\textbf{75.5} & \cellcolor[HTML]{D9D9D9}{\ul 70.0} & \cellcolor[HTML]{D9D9D9}56.1 & \cellcolor[HTML]{D9D9D9}\textbf{70.1} \\ \midrule
\multicolumn{1}{c|}{} & RaWideResNet-101-2 & 26.1 & 23.5 & 18.2 & {\ul 51.5} & \textbf{63.9} & 20.4 & {\ul 23.7} & \textbf{25.7} \\
\multicolumn{1}{c|}{} & WideResNet-50-2 & 35.2 & 32.2 & 23.9 & {\ul 67.4} & \textbf{82.2} & 27.1 & {\ul 30.1} & \textbf{33.4} \\
\multicolumn{1}{c|}{} & ResNet-50 & 49.1 & 48.1 & 41.6 & {\ul 78.0} & \textbf{84.6} & 44.1 & {\ul 46.6} & \textbf{50.0} \\
\multicolumn{1}{c|}{} & ConvNeXt-L & 16.5 & 14.5 & 10.9 & {\ul 38.7} & \textbf{48.6} & 13.9 & {\ul 15.4} & \textbf{16.2} \\
\multicolumn{1}{c|}{} & ConvNeXt-B & 17.9 & 15.6 & 11.7 & {\ul 41.8} & \textbf{54.6} & 14.2 & {\ul 15.5} & \textbf{16.4} \\
\multicolumn{1}{c|}{} & ConvNeXt-L-ConvStem & 15.6 & 14.0 & 10.8 & {\ul 39.6} & \textbf{49.8} & 13.1 & {\ul 14.9} & \textbf{16.3} \\
\multicolumn{1}{c|}{} & ConvNeXt-B-ConvStem & 18.4 & 15.7 & 12.2 & {\ul 43.1} & \textbf{53.2} & 15.0 & {\ul 16.2} & \textbf{17.9} \\
\multicolumn{1}{c|}{} & Inc-v3$_{\textit{ens3}}$ & 17.0 & 37.7 & 11.5 & {\ul 65.5} & \textbf{81.9} & \textbf{43.6} & 37.5 & {\ul 39.1} \\
\multicolumn{1}{c|}{} & Inc-v3$_{\textit{ens4}}$ & 22.3 & 41.0 & 14.1 & {\ul 63.7} & \textbf{79.9} & \textbf{43.2} & 38.8 & {\ul 37.6} \\
\multicolumn{1}{c|}{} & IncRes-v2$_{\textit{ens}}$ & 13.2 & 27.1 & 7.0 & {\ul 53.3} & \textbf{68.1} & \textbf{27.5} & 26.3 & {\ul 24.6} \\
\multicolumn{1}{c|}{\multirow{-11}{*}{\textbf{\begin{tabular}[c]{@{}c@{}}ConvNet\\ (AT) Set\end{tabular}}}} & \cellcolor[HTML]{D9D9D9}\textit{\textbf{Average}} & \cellcolor[HTML]{D9D9D9}23.1 & \cellcolor[HTML]{D9D9D9}26.9 & \cellcolor[HTML]{D9D9D9}16.2 & \cellcolor[HTML]{D9D9D9}{\ul 54.3} & \cellcolor[HTML]{D9D9D9}\textbf{66.7} & \cellcolor[HTML]{D9D9D9}26.2 & \cellcolor[HTML]{D9D9D9}{\ul 26.5} & \cellcolor[HTML]{D9D9D9}\textbf{27.7} \\ \midrule
\multicolumn{1}{c|}{} & Swin-B & 17.4 & 15.4 & 12.3 & {\ul 40.9} & \textbf{50.0} & 14.5 & {\ul 16.2} & \textbf{17.0} \\
\multicolumn{1}{c|}{} & Swin-L & 14.8 & 13.8 & 10.1 & {\ul 36.3} & \textbf{46.1} & 12.4 & {\ul 13.9} & \textbf{14.9} \\
\multicolumn{1}{c|}{} & XCiT-L & 24.0 & 21.3 & 15.5 & {\ul 53.5} & \textbf{67.3} & 19.5 & {\ul 22.7} & \textbf{25.0} \\
\multicolumn{1}{c|}{} & ViT-B-ConvStem & 15.9 & 15.2 & 11.7 & {\ul 42.5} & \textbf{55.8} & 13.8 & {\ul 16.1} & \textbf{18.7} \\
\multicolumn{1}{c|}{\multirow{-5}{*}{\textbf{\begin{tabular}[c]{@{}c@{}}Metaformer\\ (AT) Set\end{tabular}}}} & \cellcolor[HTML]{D9D9D9}\textit{\textbf{Average}} & \cellcolor[HTML]{D9D9D9}18.0 & \cellcolor[HTML]{D9D9D9}16.4 & \cellcolor[HTML]{D9D9D9}12.4 & \cellcolor[HTML]{D9D9D9}{\ul 43.3} & \cellcolor[HTML]{D9D9D9}\textbf{54.8} & \cellcolor[HTML]{D9D9D9}15.1 & \cellcolor[HTML]{D9D9D9}{\ul 17.2} & \cellcolor[HTML]{D9D9D9}\textbf{18.9} \\ \midrule
\rowcolor[HTML]{D9D9D9} 
\multicolumn{2}{c|}{\cellcolor[HTML]{D9D9D9}\textit{\textbf{Overall Average}}} & 38.1 & 48.6 & 29.7 & {\ul 65.6} & \textbf{77.0} & {\ul 56.1} & 51.1 & \textbf{57.8} \\ \bottomrule
\end{tabular}
}
\end{table*}

\subsubsection{\blue{Additional experimental results under $\gamma=16/255$}}

\blue{To validate the generality of DRAP to conventional settings, we conduct comparisons under $\gamma=16/255$ with iterations of 10 and step size of $1.6/255$, while keeping the rest of the protocol consistent with the untargeted setting in Section \ref{Sec: Main Results}. The results are reported in Table \ref{tab:main_16_255_1}. We find that, when all methods use 10 iterations, DRAP achieves competitive performance compared with baselines. Moreover, DRAP relies on surrogate model diversity, which grows with the number of iterations as discussed in Section \ref{alb_sec: dis} Q2-1. With only 10 iterations, DRAP uses only 2 samples per surrogate component (our default is $n=40$), which severely limits the within‑distribution diversity that DRAP is designed to exploit. When we increase the number of iterations for DRAP to 50 (\ie, $n=10$) and beyond, DRAP becomes the best‑performing attack, showing that our method remains effective at $\gamma=16/255$.}

\subsubsection{\blue{Additional experimental results under $\gamma=8/255$}}

\begin{figure*}[t]
 \setlength{\abovecaptionskip}{-0.1cm}
\setlength{\belowcaptionskip}{-0.1cm}
 \centering
\includegraphics[width=0.85\linewidth]{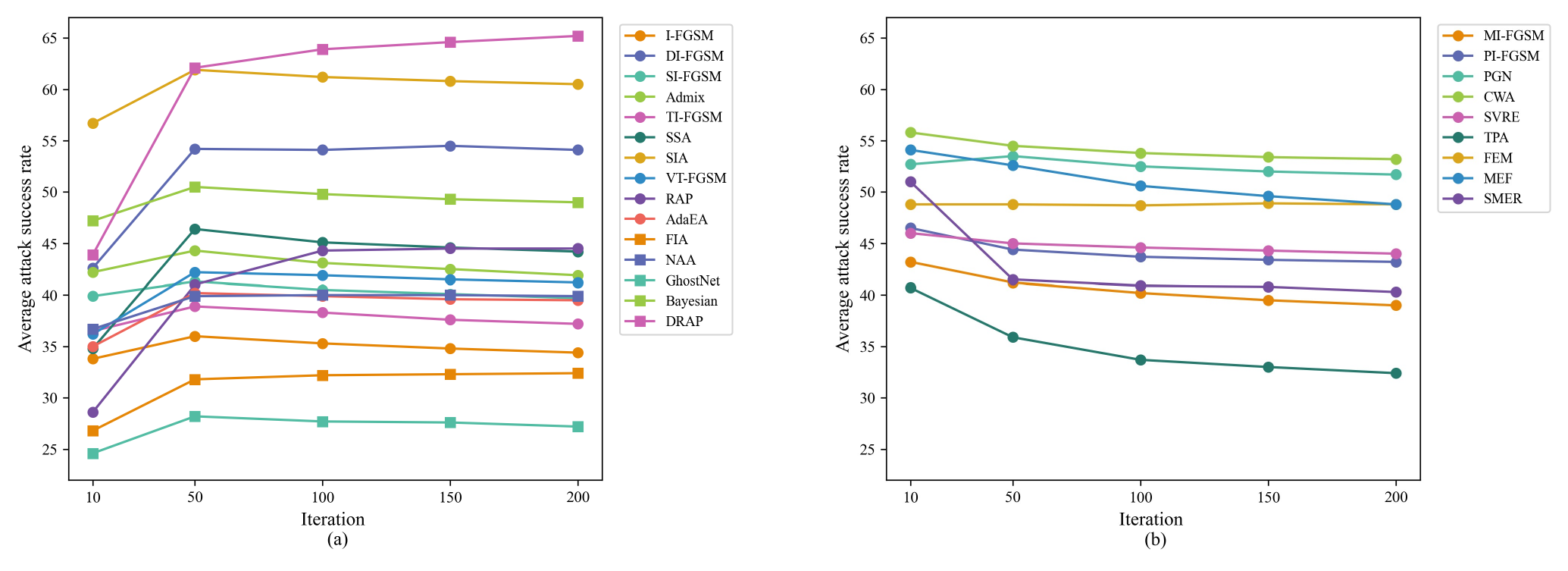}
 \caption{\blue{Average untargeted attack success rate over all 31 target models \wrt~different \bluetwo{numbers} of iterations under perturbation budget $\gamma=8/255$. \textbf{(a)} Non-momentum-based attacks except DRAP. Their default number of iterations $n_{iter}$ is set to 200, since they benefit from more than 10 iterations. \textbf{(b)} Momentum-based attacks which accumulate gradients across iterations. Their default number of iterations $n_{iter}$ is set to 10 as suggested in their original papers, since their performance generally deteriorates for additional rounds.}}\label{fig:8_255_iter}
 \vspace{-5pt}
\end{figure*}

\blue{We further conduct comparisons under $\gamma=8/255$ with step size of $0.8/255$ and varying iterations for all methods. In this experiment, in addition to validating the effectiveness of DRAP at $\gamma=8/255$, we further show that we use more iterations to ensure convergence of non-momentum-based attacks including DRAP, while carefully ensuring fairness for momentum-based attacks. The results are shown in Figure \ref{fig:8_255_iter}. The key observations are:
\begin{itemize}[leftmargin=10pt,itemsep=1pt,topsep=0pt]
    \item In Figure \ref{fig:8_255_iter}, we first observe a similar trend as in $\gamma=16/255$. DRAP achieves competitive attack performance at 10 iterations. Moreover, with more iterations for all methods, DRAP consistently outperforms others at every iteration number, showing that our method is also effective at $\gamma=8/255$.
    \item In Figure \ref{fig:8_255_iter}(a), besides DRAP, for data‑based, feature‑based, and part of optimization‑based and model-based methods without momentum strategy, the attack success rate increases with more iterations. The optimal performance is typically obtained with 50 iterations and plateaus with further iterations. This indicates that these attacks also benefit from larger iterations, and evaluating only at 10 iterations would underestimate their best achievable performance.
    \item However, in \bluetwo{Figure} \ref{fig:8_255_iter}(b), for these momentum‑based methods, increasing iterations beyond 10 generally degrades performance, as discussed in prior benchmarks \cite{eval, blackboxbench}. Thus, as already stated in the Implementation Details of Section \ref{Sec: Main Results}, we follow the original settings and keep iterations of these methods at 10 to avoid unfairness. 
\end{itemize}
}

\section{Additional Ablative Study}

\begin{table*}[h]
\centering
\caption{The number of gradient calculations $N_g$ required in different methods. Aside from DRAP, one update direction calculation of others requires I queries of model in surrogate ensemble. We use the same untargeted experimental protocol as in Section \ref{Sec: Main Results} ($I=5$). For RAP, the late-start $K_{LS}$ is proportionally scaled from original 100 (for 400 iterations) to 50 for our 200-iteration setting.}
\label{tab:cost}
\resizebox{0.9\textwidth}{!}{%
\begin{tabular}{@{}c|c|c|c|c|c@{}}
\toprule
\textbf{Attack} & \textbf{\# of gradient calculations ($N_g$) vs. $n_{iter}$} & \textbf{Hyper-parameters} & \textbf{$N_g(n_{iter}=25)$} & \blue{\textbf{$N_g(n_{iter}=50)$}} & \blue{\textbf{$N_g(n_{iter}=100)$}} \\ \midrule
MI-FGSM & $n_{iter}\times I$  & $\backslash$ & 125 & 250 & 500 \\ \midrule
PI-FGSM & $n_{iter}\times I$  & $\backslash$ & 125 & 250 & 500 \\ \midrule
RAP & $\left\{\begin{array}{l} n_{iter}\times I,\ n_{iter}<K_{LS}\\K_{LS}\times I+(n_{iter}-K_{LS})\times (T+1)\times I,n_{iter}\geq K_{LS}\\\end{array}\right.$ & $K_{LS}=\left\{\begin{array}{l}0,n_{iter} \leq 50\\50,n_{iter} > 50\\\end{array}\right.,\ T=10$ & 1375 & 2750 & 3000 \\ \midrule
CWA & $n_{iter}\times 2I$ & $\backslash$ & 250 & 500 & 1000 \\ \midrule
PGN & $n_{iter}\times N \times 2I$ & $N=20$ & 5000 & 10000 & 20000 \\ \midrule
DRAP & $\left\{\begin{array}{l} n_{iter},\ n_{iter}<n_{LS}\times I\\n_{LS}\times I+(n_{iter}-n_{LS}\times I)\times(T+1),n_{iter}\geq n_{LS}\times I\\\end{array}\right.$ & $n_{LS}=\left\{\begin{array}{l}0,n_{iter}/I \leq 5\\5,n_{iter}/I > 5\\\end{array}\right.,T=5$ & 150 &175 & 475 \\ \bottomrule
\end{tabular}%
}
\end{table*}

\subsection{Choice of Architecture within Prototype}\label{app:addresult_conv}

\begin{table}[t]
\centering
\caption{\blue{Average attack success rates (\%, $\uparrow$) of DRAP with ResNet-50 substituted by other convnet architectures.}}
\label{tab:change_arch}
\resizebox{8.5cm}{!}{%
\bluetable
\begin{tabular}{@{}cl|cccc@{}}
\toprule
\multicolumn{1}{l}{}                               &              & \textbf{\begin{tabular}[c]{@{}c@{}}ConvNet\\ Set\end{tabular}} & \textbf{\begin{tabular}[c]{@{}c@{}}Metaformer\\ Set\end{tabular}} & \textbf{\begin{tabular}[c]{@{}c@{}}ConvNet\\ (AT) Set\end{tabular}} & \textbf{\begin{tabular}[c]{@{}c@{}}Metaformer\\ (AT) Set\end{tabular}} \\ \midrule
\multicolumn{2}{c|}{ResNet-50}                                    & 80.3                                                  & 42.6                                                     & 24.9                                                       & 20.2                                                          \\ \midrule
\multicolumn{1}{c|}{\multirow{3}{*}{Substitution}} & VGG-19-BN    & 77.7                                                  & 40.8                                                     & 24.4                                                       & 20.3                                                          \\
\multicolumn{1}{c|}{}                              & Inception-V3 & 75.1                                                  & 42.6                                                     & 24.9                                                       & 20.3                                                          \\
\multicolumn{1}{c|}{}                              & DenseNet-121 & 80.1                                                  & 42.6                                                     & 24.9                                                       & 20.2                                                          \\ \bottomrule
\end{tabular}
}
\end{table}

In DRAP, we improve between-distribution diversity by choosing surrogate distributions on model weights of architectures from diverse prototypes. \emph{Does the choice of model architecture within a prototype \bluetwo{has} as significant an impact on improving transferability as the prototypes themselves, as validated in Q2-2?} To investigate this, we substitute the surrogate model ResNet-50 in default protocol in Section \ref{Sec: Main Results} with other ConvNet architectures such as VGG-19-BN, Inception-V3 and DenseNet-121 and report the attack success rate in Table \ref{tab:change_arch}. We can observe that DRAP is less sensitive to different convnets. This is because they give rise to a similar loss landscape to that of ResNet-50 from the adversarial perspective. The results demonstrate that the specific choice of model architecture within a prototype is less important to transferability than the between-distribution diversity among prototypes.

\blue{To further explore this question, we consider hybrid architectures, such as ConvNeXt-T, a modernized convnet which adopts design and training strategies popularized by vision transformers. 
Keeping configurations for the other two surrogate prototypes (adversarially trained convnet and metaformer) the same as in Section \ref{Sec: Main Results}, here we construct three different surrogate configurations for the normally trained prototype: (1) using ResNet-50, a traditional convnet, and ViT-B as separate convnet and metaformer prototypes, (2) using a single ConvNeXt-T to represent the whole normal prototype, (3) using ConvNeXt-T and ViT-B as separate convnet and metaformer prototypes. 
As reported in Table \ref{tab:convnext}, the first two configurations yield comparable performance on each target set. This indicates that ConvNeXt-T exhibits adversarial vulnerabilities that are simultaneously similar to those of normally trained convnets and metaformers. In resource-limited scenarios, such hybrid architectures can therefore be used to cover both the convnet and metaformer with a single surrogate component. Furthermore, combining ConvNeXt-T with an additional metaformer ViT-B leads to significantly higher attack success rates on both normal convnet and metaformer target sets than combining ResNet-50 with ViT-B, as hybrid models provide more diverse adversarial vulnerabilities than a traditional convnet.}

\begin{table}[t]
\centering
\caption{\blue{Average attack success rates (\%, $\uparrow$) of DRAP with different representations of the normally trained prototype.}}
\label{tab:convnext}
\resizebox{8.5cm}{!}{%
\bluetable
\begin{tabular}{@{}l|cccc@{}}
\toprule
\multicolumn{1}{l|}{\textbf{\begin{tabular}[c]{@{}l@{}}Components in\\ normal prototype\end{tabular}}} & \textbf{\begin{tabular}[c]{@{}c@{}}ConvNet\\ Set\end{tabular}} & \textbf{\begin{tabular}[c]{@{}c@{}}Metaformer\\ Set\end{tabular}} & \textbf{\begin{tabular}[c]{@{}c@{}}ConvNet\\ (AT) Set\end{tabular}} & \textbf{\begin{tabular}[c]{@{}c@{}}Metaformer\\ (AT) Set\end{tabular}}  \\ \midrule
ResNet-50+ViT-B   & 67.4   & 30.5   & 24.9   & 20.5      \\
ConvNeXt-T  & 67.6   & 29.2   & 23.9   & 20.5    \\
ConvNeXt-T+ViT-B  & 71.4   & 37.8   & 24.0   & 20.4    \\
\bottomrule
\end{tabular}
}
\end{table}

\subsection{Effect of Hyper-parameters}\label{app:addresult_para}


\begin{figure*}[t]
 \setlength{\abovecaptionskip}{-0.1cm}
\setlength{\belowcaptionskip}{-0.1cm}
 \centering
\includegraphics[width=\linewidth]{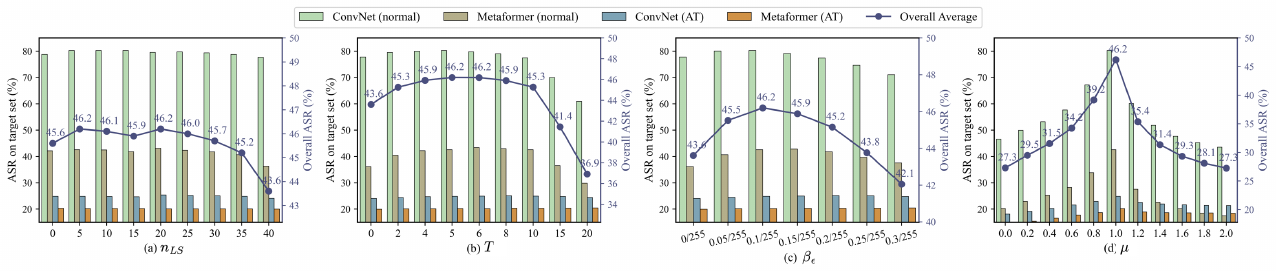}
 \caption{\blue{Ablation study on hyper-parameters of DRAP. We evaluate the effect of late start iteration number $n_{LS}$, inner iteration number $T$, inner step size $\beta_{\bm{\epsilon}}$ and decay factor $\mu$ on the attack success rates of DRAP.}}\label{fig:para_sensiti}
\end{figure*}

\blue{In this subsection, we provide an ablation study of DRAP on its hyper-parameters. In particular, we study the effect of the late start iteration number $n_{LS}$, the inner iteration number $T$, the inner step size $\beta_{\bm{\epsilon}}$, and the decay factor $\mu$. The results are shown in Figure \ref{fig:para_sensiti}. Note that the effect of the number of samples within one surrogate component $n$ has already been discussed in Section \ref{alb_sec: dis} Q2-1.}

\textbf{Late start iteration number $n_{LS}$.} As the late start iteration number $n_{LS}$ decides when the sharpness penalty begins to take effect during the optimization process, we range $n_{LS}$ from 0 to $n=40$. When $n_{LS}=0$, the sharpness penalty is active throughout the entire optimization process. As $n_{LS}$ increases, the influence of the sharpness penalty gradually weakens, eventually vanishing at $n_{LS}=n$, at which point DRAP reduces to a diverse-model-ensemble attack. As shown in Figure \ref{fig:para_sensiti}(a), for a fixed number of iterations used to update AE, attacks with more iterations penalizing sharpness can effectively improve the attack performance over $n_{LS}=n$. The peak performance is observed at approximately $n_{LS}=5$, validating the effectiveness of the late-start strategy.

\blue{\textbf{Inner iteration number $T$ and inner step size $\beta_{\bm{\epsilon}}$.} The hyper-parameters $T$ and $\beta_{\bm{\epsilon}}$ jointly control how far the reverse perturbation inner loop explores from the current AE when finding the local worst-case point. As observed in Figures \ref{fig:para_sensiti}(b) and (c), starting from $T=0$ or $\beta_{\bm{\epsilon}}=0$, \ie, \bluetwo{with} no reverse step, increasing either of them enables the inner maximization to explore a larger neighborhood around the current AE, which gradually enforces local flatness and improves transferability. However, when $T$ or $\beta_{\bm{\epsilon}}$ becomes too large, the worst-case can drift too far away from the current AE, so minimizing the loss at that point no longer effectively regularizes the sharpness around the current AE, leading to a decline in attack performance.} 

\blue{\textbf{Decay factor $\mu$.} As discussed in Section \ref{sec:Find a Flat Optimum from a Diverse Set of Surrogates}, due to the diversity in the loss landscapes of different surrogate models in DRAP, accumulating a velocity vector in the gradient across iterations helps to stabilize the optimization path. Figure \ref{fig:para_sensiti}(d) shows that the best performance is achieved at $\mu=1.0$, which corresponds to equally aggregating gradients from all previous surrogate models to perform the current update.}

}

\vfill

\end{document}